\theoremstyle{plain}
\newtheorem{theorem}{Theorem}[section]
\newtheorem{corollary}[theorem]{Corollary}
\newtheorem{lemma}[theorem]{Lemma}
\newtheorem{proposition}[theorem]{Proposition}
\theoremstyle{definition}
\newtheorem{remark}[theorem]{Remark}
\newtheorem{definition}[theorem]{Definition}
\newcommand{\BK}[1]{ \left( #1 \right) }
\newcommand{\BKK}[1]{ \left( \textstyle #1 \right) }
\newcommand{\sqBK}[1]{ \left[ #1 \right] }
\newcommand{\curBK}[1]{ \left\{ #1 \right\} }
\newcommand{\absBK}[1]{ \left| #1 \right| }
\newcommand{\VertBK}[1]{ \left \Vert#1 \right \Vert}
\newcommand{\intLim}{\int\limits}
\newcommand{\CosS}[2]{ \frac{ 2\cos \phi{ #1 } }{ \sigma_{#2} } }
\newcommand{\Maxll}[2]{ e^{ -\BK{ \frac{ 2\cos\phi{#1} }{ \sigma_{#2} } }^2 } }
\newcommand{\Tm}{ T_* }
\newcommand{\TM}{ T^* }
\newcommand{\Tr}{ \sqrt{\frac{\Tm}{\TM}} }
\newcommand{\bfx}{ \textbf{x} }
\newcommand{\bfy}{ \textbf{y} }
\newcommand{\bfxi}{ {\boldsymbol \xi} }
\newcommand{\bfeta}{ {\boldsymbol \eta} }
\newcommand{\bfzeta}{ {\boldsymbol \zeta} }
\newcommand{\bfn}{ \textbf{n} }
\newcommand{\absbfxi}{ \absBK{\boldsymbol\xi} }
\newcommand{\bbR}{ \mathbb{R} }
\newcommand{\calJ}{ {\mathcal J} }
\newcommand{\calN}{ {\mathcal N} }
\newcommand{\calF}{ {\mathcal F} }
\newcommand{\SFr}[1]{ {\bf S}^{\bf Fr}_{#1} }
\newcommand{\SDF}[1]{ {\bf S}^{\bf DFr}_{#1} }
\newcommand{\SLB}[1]{ {\bf S}^{\bf LB}_{#1} }
\DeclareMathOperator{\E}{E}
\DeclareMathOperator{\Prob}{P}
\DeclareMathOperator{\diam}{diam}
\DeclareMathOperator{\esssup}{ess\,sup}
\begin{document}

\title{Equilibrating effect of Maxwell-type boundary condition in highly rarefied gas
}


\author{Hung-Wen Kuo 
}


\address{Department of Mathematics, National Cheng Kung University, Tainan 70101, Taiwan}
\email{hwkuo@mail.ncku.edu.tw} 
\keywords{Maxwell-type boundary condition, Boltzmann equation, Free molecular flow, Kinetic theory of gases, Approach to equilibrium}

\maketitle

\begin{abstract}
We study the equilibrating effects of the boundary and intermolecular collision in the kinetic theory for rarefied gases. We consider the Maxwell-type boundary condition, which has weaker equilibrating effect than the commonly studied diffuse reflection boundary condition. The gas region is the spherical domain in $\bbR^d$, $d=1,2.$ First, without the equilibrating effect of the collision, we obtain the algebraic convergence rates to the steady state of free molecular flow with variable boundary temperature. The convergence behavior has intricate dependence on the accommodation coefficient of the Maxwell-type boundary condition. Then we couple the boundary effect with the
intermolecular collision and study their interaction.
We are able to construct the steady state solutions of the full Boltzmann equation for large Knudsen numbers and small boundary temperature variation. We also establish the nonlinear stability  with exponential rate of the stationary Boltzmann solutions. Our analysis is based on the explicit formulations of the boundary condition for symmetric domains.

\end{abstract}

\begin{section}{Introduction}

In kinetic theory, a fundamental and central issue is the equilibrating effects of the boundary
and intermolecular collision. 
In the present study we consider the Maxwell-type boundary condition:
\begin{equation}\label{eq::Maxwell-type::BC}
\begin{dcases}
     F(\bfy,\bfzeta,t)
&=
    \alpha(\bfy)\BK{ \frac{2\pi}{RT(\bfy)} }^{\frac{1}{2}} j_F(\bfy,t) M_{T(\bfy)}(\bfzeta) \\&+ (1-\alpha(\bfy))F(\bfy,\bfzeta-2(\bfzeta\cdot\bfn)\bfn,t),
      \quad \hfill \bfy\in \partial D,  \bfzeta\cdot\bfn>0 ,\\   
       j_F(\bfy,t)
&=
      \intLim_{ \bfzeta_*\cdot\bfn<0 } -\bfzeta_*\cdot\bfn F(\bfy,\bfzeta_*,t) d\bfzeta_*
      : \text{ boundary flux of } F,
\end{dcases}
\end{equation}
where $F$ is the velocity distribution function of the gas particles, $\bfzeta$ is the microscopic velocity, $T(\bfy)$ is the boundary temperature at the boundary point $\bfy$, $\bfn$ is the unit normal vector at the boundary, pointing to the gas region $D$, $\alpha(\bfy)$ $(0\leq\alpha(\bfy)\leq 1)$ is the accommodation coefficient, and $M_T$ is the Maxwell distribution:
\begin{equation*}
     M_T (\bfzeta)
=
    \frac{ e^{ -\frac{\absBK{\bfzeta}^2}{2RT} } }{ \BK{2\pi RT}^{\frac{3}{2}} },
    \ R: \ \text{Boltzmann constant}.
\end{equation*}
The case $\alpha=0$ is called the specular reflection boundary condition, which has no equilibrating effect. The case $\alpha=1$ is called the diffuse refection boundary condition, which has strong, direct equilibrating effect of the boundary thermal information on the gas flows. In this paper we assume that the accommodation coefficient $\alpha,\ 0<\alpha<1,$ is constant. Our purpose is to study the equilibrating effect of the Maxwell-type boundary condition, the dependence of the process of convergence to steady states on the accommodation coefficient $\alpha,\ 0<\alpha<1.$ 
The equilibrating process also depends on the geometry of the boundary. 
Our analysis demands the quantitative method for the study of particle propagation. For this, we will focus on
spherical symmetric domains:   
\begin{equation*}\label{eq::Defn:Domain}
    D = \curBK{ \bfx\in\bbR^d : |\bfx|<1 },
\end{equation*}
in space dimension $d=1,2$. This allows us to use the stochastic formulation of our previous works \cite{Kuo-Liu-Tsai} and \cite{Kuo-Liu-Tsai-2}, which provides an explicit description of the evolution of the free molecular flow.

We decompose the microscopic velocity $\bfzeta=(\zeta_1,\zeta_2,\zeta_3)\in\bbR^3$ into 
\begin{equation}\label{eq::Def:xi:eta}
	\bfxi = (\zeta_1,\ldots, \zeta_d) \in \bbR^d,
	\quad
	\bfeta = (\zeta_{d+1},\ldots,\zeta_3) \in \bbR^{3-d},
\end{equation}
and rewrite the Maxwell-type boundary condition \eqref{eq::Maxwell-type::BC} as
\begin{equation}\label{eq::Diff:Rel:BC}
\begin{dcases}
     F(\bfy,\bfzeta,t)
&=
    \alpha\BK{ \frac{2\pi}{RT(\bfy)} }^{\frac{1}{2}} j_F(\bfy,t) M_{T(\bfy)}(\bfzeta) \\&+ (1-\alpha)F(\bfy,\bfxi-2(\bfxi\cdot\bfn)\bfn,\bfeta,t),
      \quad \hfill \bfy\in \partial D,  \bfxi\cdot\bfn>0 ,\\   
       j_F(\bfy,t)
&=
      \intLim_{ \bfxi_*\cdot\bfn<0 } -\bfxi_*\cdot\bfn F(\bfy,\bfzeta_*,t) d\bfzeta_*
      : \text{ boundary flux of } F,
\end{dcases}
\end{equation}

To focus on the equilibrating effect of boundary, we first consider the free molecular flow:
\begin{align}\label{eq:Fr:Eq:Eq}
\begin{dcases}
	\frac{ \partial  g }{ \partial t }
	+
	\sum_{i=1}^d \zeta_i \frac{ \partial  g }{ \partial x_i } = 0,
    \quad
     g =  g(\bfx,\bfzeta,t), \ \bfx\in D, \ \bfzeta\in\bbR^3, \ t>0,
\\
     g(\bfx,\bfzeta,0) =  g_{in}(\bfx,\bfzeta), \quad \bfx\in D, \ \bfzeta\in\bbR^3,
\\
 g(\bfy,\bfzeta,t)
=
    \alpha\BK{ \frac{2\pi}{RT(\bfy)} }^{\frac{1}{2}} j_g(\bfy,t) M_{T(\bfy)}(\bfzeta) \\
    \quad\quad\quad\quad+ (1-\alpha)g(\bfy,\bfxi-2(\bfxi\cdot\bfn)\bfn,\bfeta,t),
      \quad \hfill \bfy\in \partial D,  \bfxi\cdot\bfn>0 ,\\   
       j_g(\bfy,t)
=
      \intLim_{ \bfxi_*\cdot\bfn<0 } -\bfxi_*\cdot\bfn g(\bfy,\bfzeta_*,t) d\bfzeta_*
      : \text{ boundary flux of } g.
\end{dcases}
\end{align}
The equation for the steady state of the free molecular flow is:
\begin{equation}\label{eq:Fr:Def:S}
\begin{dcases}
    \sum_{i=1}^d \zeta_i \frac{\partial  S}{\partial x_i} = 0,
   \quad
     S =  S(\bfx,\bfzeta), \ \bfx\in D, \ \bfzeta\in\bbR^3,
\\
    \frac{1}{|D|} \int_{D\times\bbR^3}  S(\bfx,\bfzeta) d\bfx d\bfzeta = 1:
	\text{ unit density,}
\\
       S(\bfy,\bfzeta)
    =
     \alpha\BK{ \frac{2\pi}{RT(\bfy)} }^\frac12
      j_S(\bfy,t)
      M_{T(\bfy)}(\bfzeta) \\
      \quad\quad\quad\quad+ (1-\alpha) S(\bfy,\bfxi-2(\bfxi\cdot\bfn)\bfn,\bfeta), 
       \quad \bfy\in\partial D, \ \bfxi\cdot\bfn > 0,
\\
	 \quad\quad
      j_S(\bfy)
    =
     \int_{\bfxi_*\cdot\bfn<0} -\bfxi_*\cdot\bfn S(\bfy,\bfzeta_*) d\bfzeta_*:
	 \text{ boundary flux of } S.
\end{dcases}
\end{equation}
Here we take general initial data $g_{in}$ with finite weighted $L^\infty$ norm:
\begin{equation}\label{eq::Choice:of:nu}
\begin{split}
&
	g_{in} (\bfx,\bfzeta) \in L^{\infty,\mu}_{\bfx,\bfzeta}, \ \mu > 4,
	\\
&
	\VertBK{ g_{in} }_{ L^{\infty,\mu}_{\bfx,\bfzeta} } = \VertBK{ g_{in} }_{\infty,\mu}
	\equiv \mathop{\esssup}_{\bfx\in D, \bfzeta \in \bbR^3} (1+|\bfzeta|)^\mu |g_{in} (\bfx,\bfzeta)|,
\end{split}
\end{equation}
where the choice of $\mu>4$ implies that
\begin{equation*}
    \int_{\bfxi\cdot\bfn<0} \absBK{ \frac{-\bfxi\cdot\bfn }{ (1+|\bfzeta|)^\mu } }  d\bfzeta < \infty.
\end{equation*}
The boundary temperature variation is assumed to be bounded:
\begin{equation*}
     0 < \Tm \equiv \inf_{\partial D} T(\bfy) \leq \TM \equiv \sup_{\partial D} T(\bfy) < \infty.
\end{equation*}

The diffuse reflection boundary condition has strong and direct equilibrating effect, and as a consequence, the convergence to steady state is of the rate of $t^{-d}$, $d$ the space dimension,  \cite{Kuo-Liu-Tsai,Tsuji}. The Maxwell-types boundary condition yields eventually the same rate, with intricate dependence on the accommodation coefficient $\alpha$.

The following Theorem shows
the convergence to the steady state of free molecular flow $S$.
\begin{theorem}[Main Theorem for Free Molecular Flow]
\label{thm:Fr:Soln:Op:Ptws:Esti:new}
For $g_{in}\in L^{\infty,\mu}_{\bfx,\bfzeta}$ and $\mu>4$,  the solution of \eqref{eq:Fr:Eq:Eq} satisfies 
\begin{multline*}
     g(\bfx,\bfzeta,t) -\rho_*S(\bfx,\bfzeta)
=
     O(1)
    \VertBK{ g_{in} }_{\infty,\mu}
    \left\{
            \BK{\frac{ M(\bfzeta) }{ (1+\alpha t)^d }+\frac{(1-\alpha)^{\frac{t^{\epsilon}}{2}}}{(1+|\bfzeta|)^\mu}}
            \bbbone_{ \curBK{ |\bfxi|>\frac{2}{t^{1-\epsilon}} } }\right.\\
            \left.+
            \frac{1}{(1+|\bfzeta|)^\mu}\bbbone_{ \curBK{ |\bfxi|<\frac{2}{t^{1-\epsilon}} } }
    \right\},
\end{multline*}
\begin{equation*}
    \rho_*
\equiv
    \frac{1}{|D|} \int_{D\times\bbR^3} g_{in}(\bfx,\bfzeta) d\bfx d\bfzeta,
\end{equation*}
for any small
$\epsilon,\ 0<\epsilon\leq\frac{1}{400}$. 
\end{theorem}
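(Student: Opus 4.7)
The plan is to exploit the probabilistic interpretation of the Maxwell-type boundary condition as a weighted coin flip at each boundary hit, combined with the explicit backward-trajectory description of free molecular flow in the spherically symmetric domain developed in \cite{Kuo-Liu-Tsai} and \cite{Kuo-Liu-Tsai-2}. Trace a characteristic backward from $(\bfx,\bfzeta,t)$. At each encounter with $\partial D$ the solution decomposes linearly into a specular branch of weight $(1-\alpha)$, along which the trajectory continues deterministically (preserving $|\bfxi|$ and $\bfeta$), and a thermalized branch of weight $\alpha$, after which the distribution is reset to a boundary Maxwellian $M_{T(\bfy)}$ modulated by the incoming flux. Iterating this decomposition gives a Duhamel-type expansion in which the contribution with exactly $k$ thermalizations in $(0,t]$ carries the weight $\alpha^k (1-\alpha)^{n_0+\cdots+n_k}$, where $n_i$ is the number of intervening specular reflections.

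First I would carry out the explicit geometric bookkeeping in the unit ball: for a particle of in-plane speed $|\bfxi|$, the chord length between successive boundary hits in $D$ is $-2\bfy\cdot\bfxi/|\bfxi|^2$, so the number of boundary encounters in a time interval of length $t$ is of order $t|\bfxi|$ up to a $|\cos\theta|^{-1}$ factor coming from the incidence angle. In particular, on $\{|\bfxi|>2/t^{1-\epsilon}\}$ the backward trajectory hits $\partial D$ at least $\gtrsim t^\epsilon$ times (away from a small grazing set), which immediately bounds the pure-specular branch (no thermalization) by $(1-\alpha)^{t^\epsilon/2}\|g_{in}\|_{\infty,\mu}(1+|\bfzeta|)^{-\mu}$, producing the second term in the theorem. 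The complementary region $\{|\bfxi|<2/t^{1-\epsilon}\}$ consists of particles too slow in the in-plane direction to interact meaningfully with the boundary in time $t$; on this set the trivial $L^\infty$ bound $\|g_{in}\|_{\infty,\mu}(1+|\bfzeta|)^{-\mu}$ is what the theorem claims, restricted by the indicator.

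Once at least one thermalization has occurred, the argument reduces to an effective pure-diffuse problem: summing the geometric weights $(1-\alpha)^{n}$ over the intervening specular segments produces a composite boundary kernel whose action on the flux is proportional to the diffuse kernel at an effective rate $\alpha$. Thus the convergence of the thermalized part to $\rho_* S$ proceeds by the same mechanism as in the diffuse case treated in \cite{Kuo-Liu-Tsai}, only with the physical time $t$ replaced by the effective diffusion time $\alpha t$, yielding the rate $(1+\alpha t)^{-d}$ together with the Maxwellian envelope $M(\bfzeta)$ inherited from the last boundary reset. The mass conservation $\rho_*$ appears because the steady boundary flux is determined by the total particle number.

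The main obstacle will be the careful treatment of grazing trajectories, where $|\bfxi\cdot\bfn|$ is small and the number of boundary encounters in time $t$ is suppressed even though $|\bfxi|$ is not small; these are responsible for the choice of the cutoff at $|\bfxi|=2/t^{1-\epsilon}$ rather than a larger one, and they interact with the small exponent $\epsilon$ in a delicate way — tuning $\epsilon$ must simultaneously make $(1-\alpha)^{t^\epsilon/2}$ small, keep the indicator region $\{|\bfxi|<2/t^{1-\epsilon}\}$ manageable, and leave enough room for the effective diffuse convergence to proceed at rate $(1+\alpha t)^{-d}$. The restriction $\epsilon\le 1/400$ reflects this triple balance. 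A secondary technical point is to verify, using the explicit formulas for the steady state $S$ and the collision-time distribution in the ball from \cite{Kuo-Liu-Tsai-2}, that the composite geometric-Maxwell kernel indeed satisfies the pointwise lower bounds needed to transfer the diffuse-case convergence estimates to the present setting uniformly in $\alpha$ bounded away from $0$ and $1$.
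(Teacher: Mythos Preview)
Your treatment of the slow-velocity region $\{|\bfxi|<2/t^{1-\epsilon}\}$ and of the pure-specular branch is correct and matches the paper: the characteristic representation \eqref{eq:Fr:Reduced:Chara:Rep} together with the bound $m\ge t^\epsilon$ from \eqref{eq::specular:m} gives those two terms immediately. The real content of the theorem, however, is the decay of the thermalized part, and here your proposal has a genuine gap.

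You claim that once at least one thermalization has occurred, ``summing the geometric weights $(1-\alpha)^n$ over the intervening specular segments produces a composite boundary kernel whose action on the flux is proportional to the diffuse kernel at an effective rate $\alpha$,'' so that the problem reduces to the pure-diffuse analysis of \cite{Kuo-Liu-Tsai} with $t$ replaced by $\alpha t$. This heuristic gets the right final rate, but it is not a proof, and the paper does not proceed this way. The composite kernel you describe is a sum over $k_i\ge 1$ of terms $G(\phi_i,s_i\sqrt{2RT(\bfy_{(k_1,\ldots,k_i)})}/k_i)/k_i$, where both the temperature and the geometric factor $1/k_i$ depend on $k_i$; this is not a rescaled copy of the diffuse kernel, and the boundary points $\bfy_{(k_1,\ldots,k_i)}$ move as $k_i$ varies. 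Moreover, even in the constant-temperature case, the \emph{time} consumed between successive diffuse events is random (proportional to the number of intervening specular hits), so there is no simple change of time scale that turns this into the diffuse problem.

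What the paper actually does is to work with the boundary flux $j(\bfy,t)$ directly. It derives an explicit series expansion \eqref{eq:fr:globa:1} over all reflection histories, iterates it (Theorem~\ref{thm:Fr:Integral:Eqn}) to isolate a main term $\Lambda^{(n,m)}_{(k_1,\ldots,k_n)}$, and then proves \emph{temporal} and \emph{spatial fluctuation estimates} for $\Lambda$ (Corollary~\ref{cor:Fr:Temp:Fluc}, Theorem~\ref{thm:Fr:Spac:Fluc}) that are uniform over all $(k_1,\ldots,k_n)$. These fluctuation bounds are fed into the conservation-of-mass identity \eqref{eq:Fr:Consv:of:Mole:Numb}, and a double a~priori bootstrap (first boundedness, Lemma~\ref{lem:Fr:Bddness:of:j}, then decay) closes the argument. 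The restriction $\epsilon\le 1/400$ comes not from grazing trajectories but from balancing several exponents $r_1,r_2,r_3,p,q$ in the choices $n\sim t^{r_1}$, $m\sim t^{r_2}$, $N\sim t^{r_3}$ needed to make all error terms small simultaneously. Your proposal does not touch any of this machinery, and without it the $(1+\alpha t)^{-d}$ rate is unproved.
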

Theorem \ref{thm:Fr:Soln:Op:Ptws:Esti:new}. immediately implies the following $L^p$ convergence of $g$:
\begin{corollary}\label{cor:Fr:Main:Lp}
For any small
$\epsilon,\ 0<\epsilon\leq\frac{1}{400},$ $g$ converges to $\rho_*S$ in $L^p_{\bfx,\bfzeta}$ for $1\leq p <\infty$:
\begin{equation*}
    \VertBK{ g(\bfx,\bfzeta,t) - \rho_* S(\bfx,\bfzeta) }_{ L^p_{\bfx,\bfzeta} }
=
     O(1) \VertBK{g_{in}}_{\infty,\mu}\Big(\frac{1}{(\alpha t+1)^{d}} +(1-\alpha)^{t^\epsilon/2}+ \frac{1}{(t+1)^{(1-\epsilon)\frac{d}{p}}}\Big).
\end{equation*}
Consequently, there exists $C_{\alpha,\epsilon}>0$ such that
\begin{equation*}
    \VertBK{ g(\bfx,\bfzeta,t) - \rho_* S(\bfx,\bfzeta) }_{ L^p_{\bfx,\bfzeta} }
\leq
     C_{\alpha,\epsilon} \VertBK{g_{in}}_{\infty,\mu} \BK{\frac{1}{(t+1)^{(1-\epsilon)\frac{d}{p}}}}.
\end{equation*}
In particular, the coefficient $C_{\alpha,\epsilon}=O(1)$ when $\alpha=1$.
Hence, we may let $\epsilon\rightarrow 0$ to obtain the optimal rate for diffuse reflection boundary condition:
\begin{equation*}
    \VertBK{ g(\bfx,\bfzeta,t) - \rho_* S(\bfx,\bfzeta) }_{ L^p_{\bfx,\bfzeta} }
=
     O(1) \VertBK{g_{in}}_{\infty,\mu} \BK{\frac{1}{(t+1)^{\frac{d}{p}}}}.
\end{equation*}
\end{corollary}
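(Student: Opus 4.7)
The plan is to deduce the $L^p$ estimate directly from the pointwise bound of Theorem \ref{thm:Fr:Soln:Op:Ptws:Esti:new} by integrating its right-hand side against $d\bfzeta\, d\bfx$. Splitting the bound into three terms, I would handle each with elementary integration: a Maxwellian piece, a super-polynomially small boundary-loss piece, and a polynomial-decay piece localized to small $\bfxi$.

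The Maxwellian piece $M(\bfzeta)/(1+\alpha t)^d$ has $L^p_{\bfx,\bfzeta}$-norm equal to $|D|^{1/p}\|M\|_{L^p(\bbR^3)}(1+\alpha t)^{-d}=O(1)(1+\alpha t)^{-d}$. The boundary-loss piece $(1-\alpha)^{t^\epsilon/2}(1+|\bfzeta|)^{-\mu}$ yields an $L^p$-norm of order $(1-\alpha)^{t^\epsilon/2}$, since $\int_{\bbR^3}(1+|\bfzeta|)^{-\mu p}d\bfzeta<\infty$ when $\mu p>3$, which is guaranteed by $\mu>4$ and $p\geq 1$.

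The key step is the small-$\bfxi$ piece $(1+|\bfzeta|)^{-\mu}\bbbone_{\curBK{|\bfxi|<2/t^{1-\epsilon}}}$. Decomposing $\bfzeta=(\bfxi,\bfeta)$ via \eqref{eq::Def:xi:eta} and using $|\bfzeta|\geq|\bfeta|$, I would estimate
\begin{equation*}
\int_D\int_{|\bfxi|<2/t^{1-\epsilon}}\int_{\bbR^{3-d}}\frac{d\bfeta\, d\bfxi\, d\bfx}{(1+|\bfzeta|)^{\mu p}}\leq C\BK{\frac{2}{t^{1-\epsilon}}}^d\int_{\bbR^{3-d}}\frac{d\bfeta}{(1+|\bfeta|)^{\mu p}},
\end{equation*}
where the $\bfeta$-integral converges since $\mu p>3>3-d$. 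Taking the $p$-th root gives $O(1)t^{-(1-\epsilon)d/p}$; combined with the trivial bound $O(1)$ for small $t$ (where the indicator essentially covers all of $\bbR^d$), this produces the $(1+t)^{-(1-\epsilon)d/p}$ contribution and completes the first displayed inequality.

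For the consolidated form with constant $C_{\alpha,\epsilon}$, I would absorb the first two terms into the third. For $t\geq 1$, $\alpha t+1\geq\alpha(t+1)/2$, so $(1+\alpha t)^{-d}\leq 2^d\alpha^{-d}(t+1)^{-d}\leq 2^d\alpha^{-d}(t+1)^{-(1-\epsilon)d/p}$, using $d\geq(1-\epsilon)d/p$ for $p\geq 1$. Also, $(1-\alpha)^{t^\epsilon/2}=e^{-|\log(1-\alpha)|\,t^\epsilon/2}$ decays super-polynomially and is dominated by some $C_{\alpha,\epsilon}(t+1)^{-(1-\epsilon)d/p}$. At $\alpha=1$ the second term vanishes identically for $t>0$, and the first becomes $(t+1)^{-d}\leq(t+1)^{-(1-\epsilon)d/p}$ with an $\alpha$-independent constant; hence $C_{1,\epsilon}=O(1)$ uniformly in $\epsilon$. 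Since $(t+1)^{-(1-\epsilon)d/p}$ is monotone increasing in $\epsilon$ for $t>0$, passing to $\epsilon\to 0^+$ at $\alpha=1$ yields the optimal rate $(t+1)^{-d/p}$. The only technical point worth verifying is the $\bfxi$-ball volume estimate above; the rest is routine constant bookkeeping.
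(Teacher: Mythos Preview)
Your proposal is correct and follows exactly the route the paper intends: the paper does not spell out a proof of this corollary but simply states that Theorem~\ref{thm:Fr:Soln:Op:Ptws:Esti:new} ``immediately implies'' it, and your argument---raising the pointwise bound to the $p$-th power and integrating term by term over $D\times\bbR^3$---is precisely how that implication is cashed out. Your handling of the three pieces (Maxwellian, boundary-loss, and small-$\bfxi$) and the subsequent absorption into a single $C_{\alpha,\epsilon}(t+1)^{-(1-\epsilon)d/p}$ is accurate; the only point worth adding is that the $O(1)$ in Theorem~\ref{thm:Fr:Soln:Op:Ptws:Esti:new} is uniform in $\epsilon$ (the $\epsilon$ enters only through the splitting threshold $|\bfxi|\gtrless 2/t^{1-\epsilon}$, not through the underlying flux estimate Theorem~\ref{thm:Fr:Main}), which is what makes your passage $\epsilon\to 0$ at $\alpha=1$ legitimate.
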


After studying the boundary effect of Maxwell-type condition for free molecular flow, 
we continue study the additional equilibrating effect of the collision in rarefied gas flow. We use the Boltzmann equation to model gas with intermolecular collision. Consider the initial-boundary value problem of the Boltzmann equation:
\begin{equation}\label{eq:FullBz:Eq}
\begin{dcases}
                \frac{\partial F}{\partial t} + \sum_{i=1}^d \zeta_i \frac{ \partial F}{ \partial x_i }
                 =
                \frac{1}{\kappa} Q(F,F),
                \quad \bfx \in D \subset \bbR^d,
                \ \bfzeta \in \bbR^3, \ t>0,\\
                 F(\bfx,\bfzeta,0)= F_{in}(\bfx,\bfzeta),\quad \bfx \in D \subset \bbR^d,
                \ \bfzeta \in \bbR^3,\\
               \text{ Maxwell-type boundary condition } \eqref{eq::Diff:Rel:BC}.
\end{dcases}
\end{equation}
where $\kappa$ is the Knudsen number which measures how rarefied the gas is, and $Q(\cdot,\cdot)$ is the collision operator, a symmetric bilinear operator. 
\begin{multline*}
     Q(g,h)(\bfzeta)
	=
    \frac12 \intLim_{S^2\times\bbR^3}
    \Big(
                g(\bfzeta') h(\bfzeta'_*)   + h(\bfzeta') g(\bfzeta'_*)
            -   g(\bfzeta) h(\bfzeta_*)     - h(\bfzeta) g(\bfzeta_*)
    \Big)
\\
    \times
     B(\theta,|\bfzeta_*-\bfzeta|) d\Omega d\bfzeta_*,
\end{multline*}
where
\begin{equation*}
    \begin{dcases}
    \bfzeta'
    =
    \bfzeta - \Big( (\bfzeta-\bfzeta_*)\cdot\Omega \Big) \Omega
    \\
    \bfzeta'_*
    =
    \bfzeta_* + \Big( (\bfzeta-\bfzeta_*)\cdot\Omega \Big) \Omega
    \end{dcases}
\quad
    \cos\theta
    =
    \frac{\bfzeta-\bfzeta_*}{|\bfzeta-\bfzeta_*|} \cdot \Omega,
\end{equation*}
and $B$ is the collision kernel which is determined by the interaction potential between two colliding particles. 
Throughout this paper we assume an inverse power hard potential with Grad's angular cut-off or hard sphere. Under this model, $ B(\theta,|\bfzeta_*-\bfzeta|)\sim |\bfzeta-\bfzeta_*|^{\frac{u-4}{u}}|\cos\theta|$, for some $u\geq 4$. \\
\begin{center}
\begin{tabular}{|l|l|l|}
\hline
Maxwell molecule& hard potential & hard sphere\\
\hline
$u=4$ & $4<u<\infty$ & $u=\infty$\\
\hline
\end{tabular}\\
\end{center}
By nondimensionalization, \cite{Sone}, we may assume, without loss of generality, that  $0 < \Tm < \TM  = 1$ and the total density unity:
\begin{equation}\label{eq:FullBz:density:is:one}
    \frac{1}{|D|} \int_{D\times\bbR^3} F_{in}(\bfx,\bfzeta) d\bfx d\bfzeta = 1.
\end{equation}
For convenience, denote the Maxwellian $M_{\TM}(\bfzeta)=M_{1}(\bfzeta)=(\pi)^{-\frac32}\exp(-|\bfzeta|^2)$ simply by $M(\bfzeta)$.

Conventionally, to linearize the Boltzmann equation, we expand $F$ around $M$, $F=M+\sqrt M f$. The resulting equation for the perturbation $f$ is
\begin{equation*}
     \frac{ \partial f }{ \partial t }
    +\sum_{i=1}^d \zeta_i \frac{ \partial f }{ \partial x_i }
    -\frac{1}{\kappa} Lf
=
     \frac{1}{\kappa\sqrt M} Q\BK{ \sqrt M f, \sqrt M f },
\end{equation*}
where the linearized collision operator $L$ is defined as
\begin{equation*}
     Lf = \frac{2}{\sqrt M} Q\BK{ \sqrt M f, M },
\end{equation*}
and the linearized Boltzmann equation is
\begin{equation*}
         \frac{ \partial f }{ \partial t }
    +\sum_{i=1}^d \zeta_i \frac{ \partial f }{ \partial x_i }
    -\frac{1}{\kappa} Lf
=
      0.
\end{equation*}
For the intermolecular force model we consider, inverse power hard potential with Grad's angular cut-off or hard spheres, $L$ can be decomposed as the difference of an integral operator $K$ and a multiplicative operator $\nu$:
\begin{equation*}
     L = K - \nu, \quad
    \Big(Kf\Big)(\bfzeta) = \int_{\bbR^3} K(\bfzeta,\bfzeta_*) f(\bfzeta_*) d\bfzeta_*, \quad
    \Big(\nu f\Big)(\bfzeta) = \nu(\bfzeta)f(\bfzeta).
\end{equation*}
$\nu(\bfzeta)$ is a positive function with non-zero infimum:
\begin{equation*}
    \nu_0 \equiv \inf_{\bfzeta\in\bbR^3} \nu(\bfzeta) > 0.
\end{equation*}
However, as M does not satisfy the boundary condition \eqref{eq::Diff:Rel:BC}, this linearization is not natural for the present situation, where the boundary effect is significant. Instead, we expand around the stationary free molecular flow S under the Maxwell-type boundary condition and write $F=S+\sqrt M f$.
Because $S$ does not satisfy the Boltzmann equation \eqref{eq:FullBz:Eq}, some more source terms are introduced to the equation for $f$ :
\begin{equation}\label{eq:FullBz:Eq:expand}
\begin{split}
&
     \frac{ \partial f }{ \partial t }
    +\sum_{i=1}^d \zeta_i \frac{ \partial f }{ \partial x_i }
    -\frac{1}{\kappa} Lf
     \\
=&
     \frac{1}{\kappa} L\BKK{ \frac{S-M}{\sqrt M} }
    +\frac{1}{\kappa\sqrt M} Q\BK{ S-M+\sqrt Mf, S-M+\sqrt M f }.
\end{split}
\end{equation}
On the other hand, we have the important property that $S$ satisfies the Maxwell-type boundary condition. As the boundary condition is linear and homogeneous, the boundary condition for $\sqrt M f$ remains the same:
\begin{multline}\label{eq::Diff:Ref:BC:expand}
     f(\bfy,\bfzeta,t) \sqrt{ M(\bfzeta) }
     \\
=
     \alpha\BK{ \frac{2\pi}{RT(\bfy)} }^{\frac{1}{2}}
     \BK{ \int_{\bfxi_*\cdot\bfn<0} -\bfxi_*\cdot\bfn f(\bfy,\bfzeta_*,t) \sqrt{ M(\bfzeta_*) } d\bfzeta_* }
      M_{T(\bfy)}(\bfzeta)
      \\
      +(1-\alpha)f(\bfy,\bfxi-2(\bfxi\cdot\bfn)\bfn,\bfeta,t) \sqrt{ M(\bfzeta) }
     \\
     \bfy\in\partial D, \ \bfxi\cdot\bfn > 0.
\end{multline}
Therefore, we will consider the initial-boundary value problem of equation \eqref{eq:FullBz:Eq:expand}
with boundary condition \eqref{eq::Diff:Ref:BC:expand} and initial data: 
\begin{equation*}
f_{in}(\bfx,\bfzeta)=\frac{F_{in}(\bfx,\bfzeta)-S(\bfx,\bfzeta)}{\sqrt{M(\bfzeta)}}.
\end{equation*}
There is a trade-off between more complicated boundary condition and extract interior source terms. We choose the latter since the effect of Maxwell-type boundary condition has been well analyzed for the free molecular flow. Moreover, this linearization is physically natural for large Knudsen number. The extra interior source terms can be handled by the standard iteration scheme. Therefore, we will first consider \eqref{eq:LB:Eq} as our linearized equation:
\begin{equation}\label{eq:LB:Eq}
\begin{dcases}
         \frac{ \partial f }{ \partial t }
        +\sum_{i=1}^d \zeta_i \frac{ \partial f }{ \partial x_i }
        -\frac{1}{\kappa} Lf
        = 0,
\\
         \text{ Maxwell-type boundary condition } \eqref{eq::Diff:Ref:BC:expand},
\\
         \int_{D\times\bbR^3} \sqrt{M(\bfzeta)} f_{in}(\bfx,\bfzeta) d\bfx d\bfzeta = 0.
\end{dcases}
\end{equation}
From $F=S+\sqrt M f$, the zero total mass condition $\int \sqrt Mf_{in} d\bfx d\bfzeta=0$ is a consequence of nondimensionalization, \eqref{eq:FullBz:density:is:one}, rather than an additional constraint. 
To study the problem \eqref{eq:LB:Eq}, we make two reductions: first we reduce $(\partial_t+\sum\zeta_i\partial_{x_i}-\frac{1}{\kappa}L)$ to $(\partial_t+\sum\zeta_i\partial_{x_i}+\frac{\nu}{\kappa})$, and then reduce $(\partial_t+\sum\zeta_i\partial_{x_i}+\frac{\nu}{\kappa})$ to $(\partial_t+\sum\zeta_i\partial_{x_i})$.

For the linear problem \eqref{eq:LB:Eq}, we prove an exponential decay to zero of $f$, Theorem \ref{thm:LB:Main}. We then use this result to: (i) construct the steady state solution of the Boltzmann equation \eqref{eq:FullBz:Eq:expand}, Theorem \ref{thm:FullBz:Main:1}; (ii) obtain an exponential convergence to the steady state, Theorem \ref{thm:FullBz:Main:2}.

\begin{theorem}[Stability for Linear Boltzmann Equation]\label{thm:LB:Main}
Suppose that $f_{in}\in L^{\infty,-\gamma}_{\bfx,\bfzeta}$, for some constant $\gamma$, $0\leq \gamma \leq 1$, $\int f_{in} \sqrt M d\bfx d\bfzeta=0$, and that, for each $0<\nu'_1<\nu_0$, there exists a positive constant $C_1$ such that for all Knudsen number $\kappa$ with
\begin{equation}\label{eq:LB:requirement:of:k:derived}
\begin{dcases}
      \kappa   \geq C_1 \BK{\frac{1}{\alpha^3(\nu_0-\nu_1')^2}
     +\frac{(2|\log[\alpha(\nu_0-\nu_1')]|)^{400}}{\alpha(\nu_0-\nu_1')|\log(1-\alpha)|^{400}}  }                              &   \text{ for } d=1,   \\
   \frac{ \kappa }{ \log\kappa }  \geq C_1 \BK{\frac{1}{\alpha^{5/2}(\nu_0-\nu_1')^{3/2}}
     +\frac{(2|\log[\alpha(\nu_0-\nu_1')]|)^{400}}{\alpha(\nu_0-\nu_1')|\log(1-\alpha)|^{400}}  }                                          &   \text{ for } d=2,   
\end{dcases}
\end{equation}
the solution $f$ of \eqref{eq:LB:Eq} decays to zero exponentially in time:
\begin{equation}\label{eq:LB:Exp:Decay}
    \VertBK{f(\cdot,\cdot,t)}_{\infty,-\gamma}
\leq
      C\VertBK{f_{in}}_{\infty,-\gamma} e^{ -\frac{\nu'_1}{\kappa} t },
\end{equation}
for some constant $C$ independent of $\kappa, \alpha, \Tm$ and $\nu'_1$.
\end{theorem}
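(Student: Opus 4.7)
My plan is to carry out the two reductions announced before the statement---first reducing the linearized Boltzmann semigroup to the damped transport semigroup, then reducing the damped transport to the free transport of Theorem~\ref{thm:Fr:Soln:Op:Ptws:Esti:new}---and close the resulting mild equation by a Duhamel iteration. Writing $L=K-\nu$, equation \eqref{eq:LB:Eq} becomes
\begin{equation*}
\partial_t f+\sum_{i=1}^d \zeta_i \partial_{x_i}f+\frac{\nu(\bfzeta)}{\kappa}f=\frac{1}{\kappa}Kf.
\end{equation*}
Let $\Phi_\nu(t)$ denote the solution semigroup of the damped transport $\partial_t+\sum\zeta_i\partial_{x_i}+\nu/\kappa$ under the Maxwell-type boundary condition \eqref{eq::Diff:Ref:BC:expand}. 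The mild formulation reads
\begin{equation*}
f(t)=\Phi_\nu(t)f_{in}+\frac{1}{\kappa}\int_0^t\Phi_\nu(t-s)\,Kf(s)\,ds.
\end{equation*}

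To control $\Phi_\nu(t)$ I use the stochastic broken-characteristic representation of \cite{Kuo-Liu-Tsai,Kuo-Liu-Tsai-2}: along each random piecewise-linear trajectory induced by the Maxwell boundary rule, the solution picks up a multiplicative factor $e^{-\nu(\bfzeta_j)\tau_j/\kappa}$ on the $j$-th straight segment. Since $\nu(\bfzeta)\geq\nu_0$ and the total traveled time equals $t$, these factors combine into a uniform damping $e^{-\nu_0 t/\kappa}$ times the free-transport propagator of \eqref{eq:Fr:Eq:Eq} applied to $\sqrt{M}f_{in}$. Because $\int\sqrt{M}f_{in}\,d\bfx\,d\bfzeta=0$ and the Maxwell-type boundary condition preserves total mass, the constant $\rho_*$ in Theorem~\ref{thm:Fr:Soln:Op:Ptws:Esti:new} vanishes, and the free-transport part decays according to the envelope $R_\alpha(t)=(1+\alpha t)^{-d}+(1-\alpha)^{t^\epsilon/2}$ (with the optimal $\epsilon=1/400$). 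Combining,
\begin{equation*}
\VertBK{\Phi_\nu(t)f_{in}}_{\infty,-\gamma}\leq C\,e^{-\nu_0 t/\kappa}\,R_\alpha(t)\,\VertBK{f_{in}}_{\infty,-\gamma}.
\end{equation*}
For any $\nu_1'<\nu_0$, I then have to secure the uniform bound $e^{-(\nu_0-\nu_1')t/\kappa}R_\alpha(t)\leq C$ by taking $\kappa$ sufficiently large. The worst case is the crossover time $t_\star\sim\kappa/(\nu_0-\nu_1')$, and the two summands of \eqref{eq:LB:requirement:of:k:derived} arise precisely from requiring the exponential damping to overcome $(1+\alpha t)^{-d}$ and $(1-\alpha)^{t^\epsilon/2}$ at $t_\star$, respectively, the exponent $400=1/\epsilon$ coming from the latter.

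To close the Duhamel identity I iterate in $K/\kappa$: using the naive bound $\VertBK{\Phi_\nu(t-s)Kf(s)}_{\infty,-\gamma}\leq \|K\|e^{-\nu_0(t-s)/\kappa}\VertBK{f(s)}_{\infty,-\gamma}$, the resulting Neumann series, combined with the velocity-smoothing/compactness of $K$ iterated twice so that $\Phi_\nu K\Phi_\nu K$ gains a small factor at each cycle, yields the overall exponential estimate \eqref{eq:LB:Exp:Decay} once \eqref{eq:LB:requirement:of:k:derived} is imposed. The main obstacle is the delicate quantitative matching in the previous paragraph: translating the pointwise bound of Theorem~\ref{thm:Fr:Soln:Op:Ptws:Esti:new} into a uniform exponential estimate in the weighted $L^{\infty,-\gamma}$ norm, handling the small-velocity bad region $\{|\bfxi|<2/t^{1-\epsilon}\}$ (where Theorem~\ref{thm:Fr:Soln:Op:Ptws:Esti:new} provides no decay at all) by splitting off those trajectories whose number of reflections within time $t$ is abnormally small, and tracking the $\alpha$-dependence through the envelope $R_\alpha$ in order to recover the precise Knudsen threshold whose curious exponent $400$ is inherited directly from the truncation $\epsilon=1/400$.
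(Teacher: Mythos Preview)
There is a genuine gap. Your central claim $\VertBK{\Phi_\nu(t)f_{in}}_{\infty,-\gamma}\leq C\,e^{-\nu_0 t/\kappa}\,R_\alpha(t)\,\VertBK{f_{in}}_{\infty,-\gamma}$ cannot be obtained the way you describe. The pathwise inequality ``damping factor $\leq e^{-\nu_0 t/\kappa}$ along every broken trajectory'' yields $g^\nu\leq e^{-\nu_0 t/\kappa}g^{\mathrm{free}}$ only for \emph{nonnegative} data, because it requires pulling a pointwise bound inside an expectation. For signed $g_{in}=\sqrt{M}f_{in}$ there is no inequality between $|g^\nu|$ and $e^{-\nu_0 t/\kappa}|g^{\mathrm{free}}|$: cancellations in the free flow (which are precisely what produce the $R_\alpha(t)$ decay under the zero-mass condition) need not survive multiplication by the path-dependent weight $e^{-\int\nu/\kappa}$. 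If instead you split $g_{in}=g_{in}^+-g_{in}^-$, each piece loses the zero-mass condition and its free evolution converges to a nonzero multiple of $S$, so you recover only $Ce^{-\nu_0 t/\kappa}$ without $R_\alpha(t)$. With that bound alone your Duhamel step gives
\[
\sup_{s\leq t}e^{\nu_1' s/\kappa}\VertBK{f(s)}_{\infty,-\gamma}
\leq C_0\VertBK{f_{in}}_{\infty,-\gamma}
+\frac{C_0\VertBK{K}}{\nu_0-\nu_1'}\sup_{s\leq t}e^{\nu_1' s/\kappa}\VertBK{f(s)}_{\infty,-\gamma},
\]
and the factor $C_0\VertBK{K}/(\nu_0-\nu_1')$ is $O(1)$, independent of $\kappa$; the loop does not close for $\nu_1'$ near $\nu_0$ and large $\kappa$ buys nothing. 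The double-$K$ smoothing you allude to is the device for $\kappa=O(1)$ and does not manufacture the $\kappa$-dependent threshold \eqref{eq:LB:requirement:of:k:derived}.

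The paper proceeds quite differently. It proves only a \emph{local-in-time} estimate for $\SDF{t}$ and then for $\SLB{t}$ (Theorems~\ref{thm:DF:Local:Esti} and~\ref{thm:LB:Local:Esti}), of the schematic form $R_\alpha(t)\bbbone_{\{|\bfxi|>2/t^{399/400}\}}+\bbbone_{\{|\bfxi|<2/t^{399/400}\}}+t/\kappa$; note the error \emph{grows} linearly and there is no exponential factor. Global decay comes from a discrete bootstrap: with $\calF(t)=\sup_{s\leq t}e^{\nu_1' s/\kappa}\VertBK{f(s)}_{\infty,-\gamma}$, one shows $\calF(t)\leq\calF(t-c\kappa)$ for $t>2c\kappa$ by writing $f(t)$ via characteristics on the window $[t-c\kappa,t]$ and applying the local estimate at the earlier endpoint $t-2c\kappa$. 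The cases $\tau_b>c\kappa$ and $\tau_b<c\kappa$ produce, respectively, contraction factors $e^{-(\nu_0-\nu_1')c}+O(c^2+\kappa^{-1}P(c\kappa))$ and $(1-\alpha)+O\bigl((\alpha c\kappa)^{-d}+(1-\alpha)^{(c\kappa)^{1/400}/2}+c\bigr)$. Fixing $c\sim\alpha(\nu_0-\nu_1')$ makes the first strictly below $1$, and \eqref{eq:LB:requirement:of:k:derived} is exactly the condition forcing the boundary terms $(\alpha c\kappa)^{-d}$ and $(1-\alpha)^{(c\kappa)^{1/400}}$ small enough to be absorbed; this is where the exponent $400$ and the $\alpha$-dependence enter. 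The boundary equilibration of Theorem~\ref{thm:Fr:Soln:Op:Ptws:Esti:new} is thus used window-by-window, not as a global multiplicative envelope on $\Phi_\nu$.
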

\begin{remark}
Our main interest is in the highly rarefied gas, i.e. the case of large Knudsen number $\kappa$. Thus  we first study the free molecular flow for collisionless gas. Then we consider a perturbation around the steady solution of free molecular flow for the Boltzmann equation. To obtain the exponential stability for linearized Boltzmann equation, we start with the estimate of free molecular flow which is a limiting case of $\kappa=\infty$. Recall that the pointwise estimates of free molecular flow depend on the accommodation coefficient $\alpha$, Theorem \ref{thm:Fr:Soln:Op:Ptws:Esti:new}. Consequently the magnitude of $\kappa$ is related to that of $\alpha$, \eqref{eq:LB:requirement:of:k:derived}.
\end{remark}
In this paper, we consider the situation of variable boundary temperature. It is a highly non-trivial problem to study the existence of steady solution for the Boltzmann equation when the boundary temperature varies. In the case of diffuse reflection with variable temperature, the existence of the steady solution in a convex domain with dimension less or equal than three was proved by Guiraud \cite{Guiraud,Guiraud-2} for arbitrary fixed Knudsen number. The same result was proved for arbitrary domain together with the exponential stability \cite{Guo-Kim}. Moreover, a large data existence result was proved by Arkeryd and Nouri for prescribed initial data \cite{Arkeryd-2}.
In our work, with the exponential convergence of linearized Boltzmann equation, Theorem \ref{thm:LB:Main}, we are able to construct the steady solution as a consequence of time asymptotic analysis. To handle the nonlinear term, we require the small variation of boundary temperature $1-\Tm \ll 1$.
Then we prove the existence of the steady solution for the Boltzmann equation \eqref{eq:FullBz:Eq}.
However, the aim of this paper is to establish the equilibrating effect of boundary and collision. The method we used in this paper can be seen as an alternative approach to the existence problem of steady solution. The quantitative structure of the steady solution on the temperature variation is remained for our future research work.
\begin{theorem}[Existence of Steady Solution for Full Boltzmann Equation]\label{thm:FullBz:Main:1}
Assume that $1-\Tm \ll 1$ and $\kappa\gg 1$ satisfies \eqref{eq:LB:requirement:of:k:derived}. Then the steady state solution $\Phi$ of \eqref{eq:FullBz:Eq} exists and satisfies
\begin{align}
    \label{eq:FullBz:Phi:Esti}
&
    \VertBK{ \Phi }_\infty = O(1-\Tm),
    \\
    \label{eq:FullBz:Phi:Zero:Total:Mass}
&
    \int_{D\times\bbR^3} \Phi(\bfx,\bfzeta) \sqrt{ M(\bfzeta) } d\bfx d\bfzeta = 0.
\end{align}
\end{theorem}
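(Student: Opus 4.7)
The plan is to construct $\Phi$ as the fixed point of a Picard-type iteration applied to the steady version of \eqref{eq:FullBz:Eq:expand}, in which the exponential stability of Theorem \ref{thm:LB:Main} plays the role of an invertibility statement for the linearized operator. I would set $\Phi_0 \equiv 0$ and inductively define $\Phi_{n+1}$ as the stationary solution of the linear inhomogeneous problem
\begin{equation*}
    \sum_{i=1}^d \zeta_i \frac{\partial \Phi_{n+1}}{\partial x_i} - \frac{1}{\kappa} L \Phi_{n+1} = G_n,
\end{equation*}
\begin{equation*}
    G_n \equiv \frac{1}{\kappa} L\BKK{ \frac{S-M}{\sqrt M} } + \frac{1}{\kappa \sqrt M}\, Q\BK{ S - M + \sqrt M \Phi_n,\ S - M + \sqrt M \Phi_n },
\end{equation*}
subject to the boundary condition \eqref{eq::Diff:Ref:BC:expand} and the zero total mass constraint.

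To solve each step, I would consider the associated Cauchy problem with zero initial data and the constant-in-time source $G_n$, and write the candidate steady solution by Duhamel's formula as $\Phi_{n+1} = \int_0^\infty e^{s \mathcal L} G_n\, ds$, where $e^{t \mathcal L}$ is the linear semigroup associated to \eqref{eq:LB:Eq}. The compatibility identities $\int L(h) \sqrt M\, d\bfzeta = 0$ and $\int Q(h,h)\, d\bfzeta = 0$ pointwise in $\bfx$, together with the mass-preserving nature of \eqref{eq::Diff:Ref:BC:expand}, ensure that $G_n$ lies in the zero-total-mass subspace, so Theorem \ref{thm:LB:Main} applies and furnishes exponential decay at rate $\nu'_1/\kappa$. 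This yields
\begin{equation*}
    \VertBK{ \Phi_{n+1} }_{\infty,-\gamma}
    \leq \frac{C \kappa}{\nu'_1}\, \VertBK{ G_n }_{\infty,-\gamma},
\end{equation*}
so that the explicit $1/\kappa$ factor inside $G_n$ is precisely compensated by the long relaxation time. Under \eqref{eq:LB:requirement:of:k:derived}, this bound is uniform in $\kappa$.

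For the a priori bound and the contraction, I would use that the explicit description of the stationary free molecular flow $S$ available in the present symmetric setting (together with Theorem \ref{thm:Fr:Soln:Op:Ptws:Esti:new} identifying the large-time limit with $\rho_* S$) yields $\VertBK{ (S-M)/\sqrt M }_{\infty,-\gamma} = O(1 - \Tm)$, since $S = M$ whenever $T \equiv \TM = 1$ on $\partial D$. Combined with the standard weighted $L^\infty$ bilinear estimate on the rescaled collision operator $Q(\sqrt M \cdot,\sqrt M \cdot)/\sqrt M$ and the mapping property $L: L^{\infty,-\gamma} \to L^{\infty,-\gamma}$, this gives
\begin{equation*}
    \VertBK{ G_n }_{\infty,-\gamma}
    \leq \frac{C}{\kappa}\BK{ (1-\Tm) + (1-\Tm)\VertBK{\Phi_n}_{\infty,-\gamma} + \VertBK{\Phi_n}_{\infty,-\gamma}^2 }.
\end{equation*}
A direct induction under $1-\Tm \ll 1$ closes the uniform bound $\VertBK{\Phi_n}_{\infty,-\gamma} \leq C'(1-\Tm)$. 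Applying the same scheme to $\Phi_{n+1} - \Phi_n$ (whose source $G_n - G_{n-1}$ is bilinear in $S-M$, $\Phi_n$, $\Phi_{n-1}$ and the increment $\Phi_n - \Phi_{n-1}$) produces a strict contraction for $1 - \Tm$ small. The limit $\Phi = \lim_n \Phi_n$ is the desired steady state, \eqref{eq:FullBz:Phi:Esti} follows from the uniform bound, and \eqref{eq:FullBz:Phi:Zero:Total:Mass} passes to the limit from the compatibility at each step.

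The principal obstacle I anticipate is the control of the nonlinear term $Q(\sqrt M f, \sqrt M f)/\sqrt M$ in the weighted norm $L^{\infty,-\gamma}$ with an essentially non-decaying weight $0 \leq \gamma \leq 1$: the $1/\sqrt M$ factor must be absorbed by Gaussian decay in the gain kernel, while the loss contribution demands a careful balance between the collision frequency growth $\nu(\bfzeta)$ and the allowed growth $(1 + |\bfzeta|)^{\gamma}$ of $f$. Coupling this with the requirement that every iterate exactly verify the Maxwell-type boundary condition \eqref{eq::Diff:Ref:BC:expand}—so that the semigroup framework of Theorem \ref{thm:LB:Main} remains applicable at each step—is where the bulk of the technical work will lie.
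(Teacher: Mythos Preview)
Your proposal follows essentially the same route as the paper: Picard iteration on the steady equation written in Duhamel form against the linear semigroup $\SLB{s}$, with Theorem \ref{thm:LB:Main} supplying the integrability in $s$, and the collision invariants guaranteeing that every iterate stays in the zero-mass subspace.

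The obstacle you flag at the end is real, and it is precisely where your write-up is incomplete. The asserted mapping property $L: L^{\infty,-\gamma}\to L^{\infty,-\gamma}$ and the ``standard'' bilinear estimate for $Q(\sqrt M\cdot,\sqrt M\cdot)/\sqrt M$ in $L^{\infty,-\gamma}$ do \emph{not} hold for $0\le\gamma\le 1$ in general: by \eqref{eq:LB:Linfty:Ineq:L} and \eqref{eq:FullBz:Linfty:Ineq:Q} the source $G_n$ carries an extra factor of $\nu(\bfzeta)$, so your bound $\VertBK{G_n}_{\infty,-\gamma}\le\frac{C}{\kappa}(\ldots)$ is not justified. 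The paper resolves this by a two-stage argument. First, working with the $\nu^{-1}$-weighted norm and invoking the analogous estimate from \cite{Kuo-Liu-Tsai-2}, one obtains $\VertBK{F^{(n)}/\nu}_\infty=[O(1)(1-\Tm)]^{n+2}$ for the increments $F^{(n)}$. Second, one removes the weight $\nu^{-1}$ by an a posteriori estimate: writing $F^{(n)}$ through the damped characteristics as in \eqref{eq:FullBz:Steady:Thm:Eq:2}, the boundary flux term is controlled by the $\nu^{-1}$-weighted bound just established, the $KF^{(n)}$ term is handled by \eqref{eq:LB:K:Norm:Esti} together with $\int_0^t \frac{1}{\kappa}e^{-\nu s/\kappa}\,ds\le\nu^{-1}$, and the $Q$ term uses \eqref{eq:FullBz:Linfty:Ineq:Q} together with $\int_0^t \frac{\nu}{\kappa}e^{-\nu s/\kappa}\,ds\le 1$. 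This yields $\VertBK{F^{(n)}}_\infty=[O(1)(1-\Tm)]^{n+2}$ and closes the iteration in unweighted $L^\infty$. Once you plug this mechanism in where you currently invoke an unjustified bound on $G_n$, your scheme goes through.
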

We have already obtained the steady state solution $F_\infty \equiv S+\sqrt M\Phi$ for full Boltzmann equation \eqref{eq:FullBz:Eq}. Moreover, from \eqref{eq:FullBz:Phi:Zero:Total:Mass}, $\int F_\infty d\bfx d\bfzeta = 1$. For general initial boundary value problem, we expand $F$ around $F_\infty$: $F=F_\infty+\sqrt M \psi$. The equation for $\psi$ is
\begin{equation}\label{eq:FullBz:Eq:psi}
\begin{dcases}
    \frac{\partial\psi}{\partial t} 
    + 
    \sum_{i=1}^d \zeta_i \frac{\partial\psi}{\partial x_i} - \frac{1}{\kappa} L \psi
    \\
	\quad
= 
    \frac{2}{\kappa\sqrt M} Q\BK{ \sqrt M \psi, F_\infty - M } + \frac{1}{\kappa\sqrt M} Q(\psi\sqrt M,\psi\sqrt M),
    \\
    \psi(\bfx,\bfzeta,0) = \psi_{in}(\bfx,\bfzeta)= \frac{F_{in}-F_\infty}{\sqrt{M}} \in L^\infty_{\bfx,\bfzeta},
    \\
    \text{ Maxwell-type boundary condition } \eqref{eq::Diff:Ref:BC:expand}
    \\
    \int_{D\times\bbR^3} \psi_{in}(\bfx,\bfzeta) \sqrt{ M(\bfzeta) } d\bfx d\bfzeta = 0.
\end{dcases}
\end{equation}
To reiterate, zero initial total molecular number $\int \psi_{in}\sqrt M d\bfx d\bfzeta=0$ is a consequence of nondimensionalization \eqref{eq:FullBz:density:is:one}, rather than an additional constraint.

To show the stability of steady solution $F_\infty$, it is sufficient to show that $\psi$ decays to zero. We establish the exponential decay rate by using Picard iteration with the estimate \eqref{eq:LB:Exp:Decay} of linearized Boltzmann equation. The following theorem shows that the steady solution of the Boltzmann equation is exponential stable when the Knudsen number $\kappa$ is sufficiently large, \eqref{eq:LB:requirement:of:k:derived}.
\begin{theorem}[Stability of Steady Solution for Full Boltzmann Equation]\label{thm:FullBz:Main:2}
Suppose that \eqref{eq:LB:requirement:of:k:derived} and $\frac{ (1-\Tm) + \VertBK{\psi_{in}}_\infty }{ (\nu_0-\nu_1)^2 } \ll 1$. Then, for any fixed $0<\nu_1<\nu_0$, the solution $\psi$ of \eqref{eq:FullBz:Eq:psi} exists and satisfies
\begin{equation*}
    \VertBK{ \psi(t) }_\infty \leq C \VertBK{ \psi_{in} }_\infty e^{ -\frac{\nu_1}{\kappa}t },
\end{equation*}
for a positive constant independent $C$ of $\alpha,\kappa, \Tm$ and $\nu_1$.
\end{theorem}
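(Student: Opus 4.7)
The plan is to combine Theorem~\ref{thm:LB:Main} with a Duhamel representation and a continuity argument. Let $U(t)$ denote the solution semigroup of \eqref{eq:LB:Eq} under the boundary condition \eqref{eq::Diff:Ref:BC:expand}. I would fix an intermediate rate $\nu_1' = (\nu_0 + \nu_1)/2$, so that $\nu_0 - \nu_1' = (\nu_0 - \nu_1)/2$ and the hypothesis \eqref{eq:LB:requirement:of:k:derived} is satisfied at $\nu_1'$ (up to an adjustment of the constant $C_1$ that is absorbed into the qualitative requirement $\kappa \gg 1$). Then Theorem~\ref{thm:LB:Main} gives $\VertBK{U(t) h}_\infty \leq C e^{-\nu_1' t/\kappa} \VertBK{h}_\infty$. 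Since the full Boltzmann dynamics conserves total mass and the Maxwell-type reflection is linear and homogeneous, both the zero-total-mass condition and the boundary condition \eqref{eq::Diff:Ref:BC:expand} are preserved by $\psi$, so $\psi$ admits the Duhamel representation
\begin{equation*}
\psi(t) = U(t) \psi_{in} + \int_0^t U(t-s) \sqBK{ \frac{2}{\kappa \sqrt{M}} Q(\sqrt{M} \psi, F_\infty - M) + \frac{1}{\kappa \sqrt{M}} Q(\sqrt{M} \psi, \sqrt{M} \psi) }(s)\, ds.
\end{equation*}

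Next I would estimate the nonlinear source. The classical Grad-cutoff bound yields $\VertBK{ \frac{1}{\sqrt{M}} Q(\sqrt{M} g, \sqrt{M} h) }_\infty \leq C \VertBK{g}_\infty \VertBK{h}_\infty$ (the $\nu(\bfzeta)$ growth on the loss side being tied to the same $\nu_0$ that drives the exponential dissipation in Theorem~\ref{thm:LB:Main}). From Theorem~\ref{thm:FullBz:Main:1} we have $\VertBK{\Phi}_\infty = O(1 - \Tm)$, and the explicit form of the free-molecular steady state with $\TM = 1$ gives $\VertBK{S - M}_\infty = O(1 - \Tm)$; together these yield $\VertBK{F_\infty - M}_\infty = O(1 - \Tm)$. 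Introducing the weighted quantity $\mathcal{M}(T) \equiv \sup_{0 \leq t \leq T} e^{\nu_1 t/\kappa} \VertBK{\psi(t)}_\infty$ and using $\int_0^t e^{-(\nu_1' - \nu_1)(t-s)/\kappa}\, ds \leq \kappa/(\nu_1' - \nu_1) = 2\kappa/(\nu_0 - \nu_1)$, the Duhamel formula produces
\begin{equation*}
\mathcal{M}(T) \leq C \VertBK{\psi_{in}}_\infty + \frac{C(1 - \Tm)}{\nu_0 - \nu_1}\, \mathcal{M}(T) + \frac{C}{\nu_0 - \nu_1}\, \mathcal{M}(T)^2.
\end{equation*}

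Short-time existence for $\psi$ is standard through Picard iteration on the same Duhamel identity. To close the argument I run a bootstrap: the smallness hypothesis $[(1 - \Tm) + \VertBK{\psi_{in}}_\infty] / (\nu_0 - \nu_1)^2 \ll 1$ makes the linear coefficient of $\mathcal{M}(T)$ strictly less than one, while as long as $\mathcal{M}(T)$ stays of the order of $\VertBK{\psi_{in}}_\infty$ the quadratic term is strictly dominated by the data. A standard continuity argument then forces $\mathcal{M}(T) \leq 2 C \VertBK{\psi_{in}}_\infty$ uniformly in $T$, simultaneously extending the solution globally and producing the exponential decay with rate $\nu_1/\kappa$.

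The principal obstacle is establishing $\VertBK{F_\infty - M}_\infty = O(1 - \Tm)$ with the sharp \emph{linear} dependence on the temperature amplitude; this requires revisiting the construction of $\Phi$ in Theorem~\ref{thm:FullBz:Main:1} and extracting an estimate linear in $1 - \Tm$ rather than merely $o(1)$. A secondary technicality is the careful placement of the spectral gap: applying \eqref{eq:LB:requirement:of:k:derived} at $\nu_1'$ rather than at $\nu_1$ costs a factor $1/(\nu_0 - \nu_1)$ in each time convolution, which is exactly why the smallness hypothesis in the statement carries $(\nu_0 - \nu_1)^{-2}$ and not $(\nu_0 - \nu_1)^{-1}$.
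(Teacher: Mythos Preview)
Your overall strategy---Duhamel representation via the linearized semigroup, an intermediate decay rate $\nu_1'=(\nu_0+\nu_1)/2$, and a continuity/bootstrap argument on $\mathcal{M}(T)$---is exactly the route the paper takes (delegated to \cite{Kuo-Liu-Tsai-2}, with the analogous steady-state version worked out in the proof of Theorem~\ref{thm:FullBz:Main:1}). The ``principal obstacle'' you name is not one: $\VertBK{\Phi}_\infty=O(1-\Tm)$ is the content of \eqref{eq:FullBz:Phi:Esti}, and $S-M=O(1-\Tm)M$ is \eqref{eq:FullBz:S:minus:M:Estimate}, so $\VertBK{(F_\infty-M)/\sqrt M}_\infty=O(1-\Tm)$ is immediate.

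There is, however, a genuine gap. The bilinear bound you invoke, $\VertBK{Q(\sqrt{M}g,\sqrt{M}h)/\sqrt{M}}_\infty\leq C\VertBK{g}_\infty\VertBK{h}_\infty$, is \emph{false} for hard potentials and hard spheres: by \eqref{eq:LB:nu:Esti} the collision frequency $\nu(\bfzeta)\sim(1+|\bfzeta|)^{(u-4)/u}$ is unbounded, and the correct estimate is \eqref{eq:FullBz:Linfty:Ineq:Q}, which carries a factor of $\nu$ in the denominator. Consequently the Duhamel source lies only in $L^{\infty,-\gamma}$ with $\gamma=(u-4)/u$, and after applying Theorem~\ref{thm:LB:Main} you recover $\psi(t)\in L^{\infty,-\gamma}$, not $L^\infty$; the bootstrap on $\mathcal{M}(T)=\sup e^{\nu_1 t/\kappa}\VertBK{\psi(t)}_\infty$ does not close. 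The paper resolves this (see the second half of the proof of Theorem~\ref{thm:FullBz:Main:1}) by a two-step scheme: first run your argument to obtain a weighted bound $\VertBK{\psi(t)/\nu}_\infty\leq C\VertBK{\psi_{in}}_\infty e^{-\nu_1 t/\kappa}$, then perform an \emph{a posteriori} characteristic-method estimate as in \eqref{eq:FullBz:Steady:Thm:Eq:2}--\eqref{eq:FullBz:Steady:Thm:Eq:5}, using \eqref{eq:FullBz:Linfty:Ineq:Q} together with $\int_0^t(\nu/\kappa)e^{-\nu s/\kappa}\,ds\leq 1$ to strip the extra $\nu$ and recover the unweighted $L^\infty$ decay.
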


The present study focuses first on the equilibrating effect of the boundary condition by considering the free molecular flows. This approach has been considered for the diffuse reflection boundary condition, initiated by \cite{Yu} for one space dimension and then generalized to higher space dimensions by \cite{Kuo-Liu-Tsai} for constant boundary temperature, and by  \cite{Kuo-Liu-Tsai-2} for variable boundary temperature. It also has been studied for half space under gravitational force \cite{Liu-Yu-Gravitation}. This approach makes it possible to study of the full Boltzmann equation when the Knudsen number is large. Other studies consider the case when the Knudsen number is of order one. In other words, they consider the case when the collision plays a role at least as important as the boundary condition in the equilibration of the gases. Also, other studies consider the diffuse reflection boundary condition. 
For this, see \cite{Arkeryd,Guo,Villani} when the boundary temperature is constant, and \cite{Arkeryd,Arkeryd-2,Guiraud,Guiraud-2,Guo-Kim} for variable boundary temperature.
For the specular reflection condition, 
the equilibrating effect has to come from the collision. There have been substantial progresses in this regard on the level of the Boltzmann equation, see \cite{Desvillettes-Villani,Guo,Villani}, and references therein.

\end{section}

\begin{section}{Free Molecular Flow }
\begin{subsection}{Preliminaries and main results}\label{sec:Fr:Prel}

The steady state solution S of free molecular flow under the Maxwell-
type boundary condition \eqref{eq:Fr:Def:S} has been constructed explicitly, \cite{Sone}:
\begin{equation}\label{eq:Fr:Express:S}
\begin{split}
     S(\bfx,\bfzeta)
=&
    \frac{
            1
         }{
            C_S
         }
    \alpha\sum\limits_{i=1}^{\infty}(1-\alpha)^{i-1}\BK{ \frac{2\pi}{RT(\bfx_{(i)})} }^{\frac{1}{2}} M_{T(\bfx_{(i)})}(\bfzeta),
    \\
    C_S
=&
    \frac{1}{|D|} \alpha\sum\limits_{i=1}^{\infty}(1-\alpha)^{i-1}\int \BK{ \frac{2\pi}{RT({\bfx_{(i)}}_*)} }^{\frac{1}{2}} M_{T({\bfx_{(i)}}_*)}(\bfzeta_*) d\bfx_* d\bfzeta_*,
\end{split}
\end{equation}
where the boundary point obtained by tracing back from the given interior point $\bfx$ along the direction $-\frac{\bfxi}{|\bfxi|}$:
\begin{equation*}
	\bfy_B \BKK{ \bfx, \frac{\bfxi}{|\bfxi|} }
=
	\bfx - \bfxi \times
	\sup \curBK{ s\geq 0: \bfx - \bfxi s' \in D, \text{ for all } s' \in (0,s) }, 
\end{equation*}
\begin{equation*}
\begin{split}
\bfx_{(1)} &=	\bfy_B \BKK{ \bfx, \frac{\bfxi}{|\bfxi|} },\\
\bfxi^1 &= \bfxi -2(\bfxi\cdot\bfn(\bfx_{(1)}))\bfn(\bfx_{(1)}),\\
\bfx_{(k+1)} &=	\bfy_B \BKK{ \bfx_{(k)}, \frac{\bfxi^k}{|\bfxi^k|} }, \\
\bfxi^{k+1} &= \bfxi^k -2(\bfxi^k\cdot\bfn(\bfx_{(k+1)}))\bfn(\bfx_{(k+1)}).\\
\end{split}
\end{equation*}
Since the domain $D$ is symmetric, we have the following lemma: 
\begin{lemma}\label{lemma:specular}
For $(\bfx,\bfxi)\in D\times\bbR^d$ and each $k\geq1$,
\begin{equation*}
\begin{split}
&|\bfxi^{k}|=|\bfxi|,\\
&|\bfx_{(k+1)}-\bfx_{(k)}|=|\bfx_{(k+2)}-\bfx_{(k+1)}|\geq|\bfx-\bfx_{(1)}|.
\end{split}
\end{equation*}
\end{lemma}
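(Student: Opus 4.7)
The plan rests on two elementary observations about billiards in the unit ball. First, for any unit vector $\bfn$, the specular reflection map $\bfxi \mapsto \bfxi - 2(\bfxi \cdot \bfn)\bfn$ is an orthogonal involution, hence preserves $|\bfxi|$; by induction on $k$ this immediately yields the first identity $|\bfxi^{k}| = |\bfxi|$. Second, at every boundary point $\bfy \in \partial D$ the inward normal is $\bfn(\bfy) = -\bfy$, so the angle that any direction makes with $\bfn(\bfy)$ coincides with the angle it makes with the radial line from the origin $O$ through $\bfy$.

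For the chord-length equality, I let $\theta_{k+1} \in [0,\pi/2]$ denote the angle between the outgoing direction $\bfxi^{k}/|\bfxi^{k}|$ at $\bfx_{(k+1)}$ (pointing toward $\bfx_{(k)}$) and the inward normal $\bfn(\bfx_{(k+1)})$. Because the triangle $O\bfx_{(k)}\bfx_{(k+1)}$ is isoceles with two legs of length $1$, its base angles agree, and the law of cosines gives $|\bfx_{(k+1)} - \bfx_{(k)}| = 2\cos\theta_{k+1}$. On the other hand, the direction from $\bfx_{(k+1)}$ toward $\bfx_{(k+2)}$ is $-\bfxi^{k+1}/|\bfxi^{k+1}|$, and the defining relation $\bfxi^{k+1} = \bfxi^{k} - 2(\bfxi^{k}\cdot\bfn(\bfx_{(k+1)}))\bfn(\bfx_{(k+1)})$ gives $-\bfxi^{k+1}\cdot\bfn(\bfx_{(k+1)}) = \bfxi^{k}\cdot\bfn(\bfx_{(k+1)})$; combined with $|\bfxi^{k+1}| = |\bfxi^{k}|$, this shows the two outgoing chords at $\bfx_{(k+1)}$ make the same angle $\theta_{k+1}$ with $\bfn(\bfx_{(k+1)})$. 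Repeating the isoceles argument in $O\bfx_{(k+1)}\bfx_{(k+2)}$ then yields $|\bfx_{(k+2)} - \bfx_{(k+1)}| = 2\cos\theta_{k+1}$.

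For the inequality, I observe that $\bfx$ lies on the segment from $\bfx_{(1)}$ in direction $\bfxi/|\bfxi|$ by the defining property of $\bfx_{(1)}$, and this segment is contained in $\bar D$ up to an exit chord of length $2\cos\theta^{*}$, where $\theta^{*}$ is the angle between $\bfxi$ and $\bfn(\bfx_{(1)})$. Applying the same specular-reflection argument at $\bfx_{(1)}$ shows that the chord toward $\bfx_{(2)}$ also makes angle $\theta^{*}$ with $\bfn(\bfx_{(1)})$, whence $|\bfx_{(2)} - \bfx_{(1)}| = 2\cos\theta^{*}$ as well. Therefore $|\bfx - \bfx_{(1)}| \leq 2\cos\theta^{*} = |\bfx_{(2)} - \bfx_{(1)}|$. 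I anticipate no serious obstacle in this argument; the only bookkeeping requiring attention is the directional convention at each collision (incoming versus outgoing velocity), which is inconsequential because specular reflection is self-inverse and the angle with the normal depends only on the absolute normal component.
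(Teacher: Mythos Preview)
Your argument is correct and is precisely the standard billiards-in-a-ball verification; the paper itself does not supply a proof, merely asserting the lemma as a consequence of the spherical symmetry of $D$. Your isoceles-triangle computation of chord lengths and the observation that specular reflection preserves the normal angle at each collision fill in exactly what the paper leaves implicit.
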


From the explicit expression \eqref{eq:Fr:Express:S},
\begin{equation}\label{eq:FullBz:S:minus:M:Estimate}
     S(\bfx,\bfzeta) - M(\bfzeta)  = O(1-\Tm) M(\bfzeta).
\end{equation}
We note that $S$ has constant boundary flux $1/C_S$:
\begin{equation}\label{eq:Fr:Cons:Flux:S}
\begin{split}
&
    \int_{\bfxi\cdot\bfn<0} -\bfxi\cdot\bfn  S(\bfy,\bfzeta) d\bfzeta
    \\
=&
    \frac{1}{C_S}  \int_{\bfxi\cdot\bfn<0} -\bfxi\cdot\bfn
    \alpha\sum\limits_{i=1}^{\infty}(1-\alpha)^{i-1}\BK{ \frac{2\pi}{RT(\bfy_{(i)})} }^{\frac{1}{2}} M_{T(\bfy_{(i)})}(\bfzeta) d\bfzeta
    \\
=&
    \frac{1}{C_S} \alpha\sum\limits_{i=1}^{\infty}(1-\alpha)^{i-1}\BK{4\pi}^\frac12
    \int_{\hat\bfxi\cdot\bfn<0} -\hat\bfxi\cdot\bfn  M(\hat\bfzeta) d\hat\bfzeta
=
    \frac{1}{C_S}.
\end{split}
\end{equation}

Note that both the evolutionary equation \eqref{eq:Fr:Eq:Eq} and the boundary condition \eqref{eq::Diff:Rel:BC} conserve molecular number, therefore the total molecular number $\int g(\bfx,\bfzeta,t) d\bfx d\bfzeta$ is a constant of time. We define the average total
density as:
\begin{equation*}
    \rho_*
\equiv
    \frac{1}{|D|} \int_{D\times\bbR^3} g_{in}(\bfx,\bfzeta) d\bfx d\bfzeta
=
    \frac{1}{|D|} \int_{D\times\bbR^3} g(\bfx,\bfzeta,t) d\bfx d\bfzeta,
\end{equation*}
a constant associated with $g_{in}$. Due to the equilibrating effect of the Maxwell-type boundary condition, one can expect the solution $g$ to approach the steady state $\rho_* S$. Namely, we expect the function $g - \rho_* S(\bfx,\bfzeta)$ to decay to zero. Moreover, $g-\rho_*S$ satisfies the same evolutionary equation \eqref{eq:Fr:Eq:Eq} and the boundary condition \eqref{eq::Diff:Rel:BC}. Since the space dimension is $d$ and $d<3$, it is natural to integrate out the extra microscopic velocity degrees of freedom:
\begin{align}\label{eq:Fr:Reduced:Defn:g}
         \bar g (\bfx,\bfxi,t)
\equiv&
        \int_{\bbR^{3-d}} \BK{ g(\bfx,\bfzeta,t) - \rho_* S(\bfx,\bfzeta) } d\bfeta,
        \\
         \bar g_{in}(\bfx,\bfxi)
\equiv&
        \int_{\bbR^{3-d}} \BK{ g_{in}(\bfx,\bfzeta) - \rho_* S(\bfx,\bfzeta) } d\bfeta,
        \\
        \boldsymbol{s}(\bfx,\bfxi)
\equiv&
        \int_{\bbR^{3-d}} S(\bfx,\bfzeta)  d\bfeta.
\end{align}
Recall that $\bfeta$ is the last $3-d$ components of $\bfzeta$, \eqref{eq::Def:xi:eta}. Since $S$ has constant boundary flux, the corresponding boundary flux becomes $j_g-\rho_*/C_S$:
\begin{equation}\label{eq:Fr:Reduced:Defn:j}
\begin{split}
     j(\bfy,t)
\equiv &
    \intLim_{\bfxi\cdot\bfn<0} -\bfxi\cdot\bfn
    \bar g (\bfx,\bfxi,t)d\bfxi
    \\
=&    
    \intLim_{\bfxi\cdot\bfn<0} -\bfxi\cdot\bfn
    \BK{ \intLim_{\bbR^{3-d}} \BK{ g(\bfy,\bfzeta,t) - \rho_* S(\bfy,\bfzeta) } d \bfeta } d\bfxi
    \\
=&
    \intLim_{\bfxi\cdot\bfn<0} -\bfxi\cdot\bfn g(\bfy,\bfzeta,t) d \bfzeta - \rho_*/C_S
=
    j_{g}(\bfy,t) - \frac{\rho_*}{C_S}\equiv  j_{g}(\bfy,t) -j_S,
\end{split}
\end{equation}
and the total molecular number becomes zero:
\begin{equation}\label{eq:Fr:Zero:Total:Mass}
    \intLim_{D\times\bbR^3} \BK{ g(\bfx,\bfzeta,t) - \rho_* S(\bfx,\bfzeta) } d\bfx d\bfzeta
=
    \intLim_{D\times\bbR^d} \bar g(\bfx,\bfxi,t) d\bfx d\bfxi
=
     0.
\end{equation}
Moreover, the new functions $\bar g(\bfx,\bfxi,t), j(\bfy,t)$ satisfy equations similar to that for the original functions:
\begin{subequations}\label{eq:Fr:Reduced:Eq}
\begin{align}
&
    \label{eq:Fr:Reduced:Eq:Eq}
    \begin{dcases}
        \frac{\partial \bar g}{\partial t} + \sum_{i=1}^d \xi_i \frac{ \partial \bar g }{ \partial x_i } = 0,
        \quad \bar g = \bar g(\bfx,\bfxi,t),\ \bfx \in D \subset \bbR^d, \ \bfxi \in \bbR^d, \ t >0,
        \\
         \bar g(\bfx,\bfxi,0)= \bar g_{in}(\bfx,\bfxi),        
    \end{dcases}
\\
&
    \label{eq:Fr:Reduced:Eq:Diff:Ref:BC}
    \begin{dcases}
         \bar g(\bfy,\bfxi,t) &=
        \alpha\BK{ \frac{2\pi}{RT(\bfy)} }^{\frac{1}{2}}j(\bfy,t) M_{T(\bfy)}(\bfxi)\\
        &+ (1-\alpha)\bar g(\bfy,\bfxi-2(\bfxi\cdot\bfn)\bfn,t),
        \quad \bfy\in \partial D, \ \bfxi\cdot\bfn>0,
        \\
         M_T (\bfxi) &=
        \intLim_{\mathbb{R}^{3-d}} M_T(\bfzeta) d\bfeta =
        \frac{ e^{ -\frac{\absbfxi^2}{2RT} } }{ \BK{2\pi RT}^{\frac{d}{2}} },
    \end{dcases}
\end{align}
\end{subequations}
but with the additional zero total molecular number condition:
\begin{equation}\label{eq:Fr:Reduced:zero:total:mass}
 \intLim_{D\times\bbR^d} \bar g(\bfx,\bfxi,t) d\bfx d\bfxi = 0, \ t\ge 0.
\end{equation}
Note that $M_T(\bfzeta)$, the Maxwellian, and $M_T(\bfxi)$, the reduced Maxwellian, are generally different as functions. To avoid confusion, we always refer to $M$ as the abbreviation of $M(\bfzeta)$, {\it not} $M(\bfxi)$.

For $\bfx\in D$ and $\bfxi\in\bbR^d$, we define $\tau_b = \tau_b(\bfx,\bfxi)$  the {\bf backward exit time}:
\begin{equation}\label{eq::Backward:time}
	\tau_b(\bfx,\bfxi) \equiv 
	\sup \curBK{ s \geq 0 : \bfx - s' \bfxi \in D, \text{ for all } s' \in (0,s) } 
,
\end{equation}
and
\begin{equation*}
\begin{split}
&t_1=\tau_b(\bfx,\bfxi)=\frac{|\bfx-\bfx_{(1)}|}{|\bfxi|},\\
&t_{k+1}=\tau_b(\bfx_{(k)},\bfxi^{k})=\frac{|\bfx_{(k)}-\bfx_{(k+1)}|}{|\bfxi^k|}.
\end{split}
\end{equation*}
From Lemma \ref{lemma:specular}, we have $t_k=t_2$ for all $k\geq2$.

Suppose that the boundary flux $j$ is given. Then the solutions of the transport equation \eqref{eq:Fr:Reduced:Eq} has explicit form by the characteristic method:
\begin{multline}\label{eq:Fr:Reduced:Chara:Rep}
     g(\bfx,\bfzeta,t) - \rho_*S(\bfx,\bfzeta)\\
=
    \begin{dcases}
         &\alpha \sum\limits_{k=0}^{m-1}(1-\alpha)^{k} \Big(j_g\BKK{ \bfx_{(k+1)}, t-t_1-kt_2 }-j_S\Big)
           \tilde{M}_{T(\bfx_{(k+1)})}\\
         & +\Big((1-\alpha)^{m}g_{in}(\bfx_{(m)}-\bfxi^m(t-t_1-(m-1)t_2),\bfxi^m,\bfeta)\\
         &-\alpha \sum\limits_{k=m}^{\infty}(1-\alpha)^{k} j_S
           \tilde{M}_{T(\bfx_{(k+1)})}\Big)  \quad  
        \text{ for } \tau_b < t,
        \\
         &g_{in}(\bfx-\bfxi t,\bfzeta)-\rho_*S(\bfx,\bfzeta)
        \quad
        \text{ for } t < \tau_b,
     \end{dcases}
\end{multline}
\begin{equation}\label{eq:Fr:Reduced:Chara:Rep1}
     \bar g(\bfx,\bfxi,t)
=
    \begin{dcases}
         &\alpha \sum\limits_{k=0}^{m-1}(1-\alpha)^{k} j\BKK{ \bfx_{(k+1)}, t-t_1-kt_2 }  \tilde{M}_{T(\bfx_{(k+1)})}(\bfxi^k)\\
         & +(1-\alpha)^{m}\bar g_{in}(\bfx_{(m)}-\bfxi^m(t-t_1-(m-1)t_2),\bfxi^m)  \quad  
        \text{ for } \tau_b < t,
        \\
         &\bar g_{in}(\bfx-\bfxi t,\bfxi)
        \quad
        \text{ for } t < \tau_b,
     \end{dcases}
\end{equation}
where
\begin{equation}\label{eq::specular:m}
m=\lfloor\frac{|\bfxi|t-|\bfx-\bfx_{(1)}|}{|\bfx_{(1)}-\bfx_{(2)}|}\rfloor +1,
\end{equation}
and for simplicity of notation we set
\begin{equation*}
\tilde{M}_{T(\bfy)}\equiv \BK{ \frac{2\pi}{RT(\bfy)} }^{\frac{1}{2}}
M_{T(\bfy)}.
\end{equation*}

The following are our main theorems for free molecular flow, which will be proven in the following four sections.
\begin{theorem}[Global Existence for Boundary Flux]\label{thm:Fr:Existence:Bddness}
The solution of \eqref{eq:Fr:Eq:Eq} and \eqref{eq:Fr:Reduced:Eq}, with initial data $g_{in} \in L^{\infty,\mu}_{\bfx,\bfzeta}$, exists and is unique for $\mu>4$. Moreover, there exists $C>0$ such that
\begin{equation}\label{eq:Fr:Coarse:Estimate}
j(\bfy,t)=O(1)\VertBK{g_{in}}_{\infty,\mu}e^{C\frac{-\alpha\ln\alpha}{1-\alpha} t},
\end{equation}
where $C$ depends on $\TM$ and $\Tm$.
\end{theorem}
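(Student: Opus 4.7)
The plan is to collapse the problem to a single closed equation for the boundary flux and then run an exponentially-weighted a priori estimate.

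First, I would derive a Volterra-type integral equation for $j(\bfy,t)$ alone by substituting the characteristic representation \eqref{eq:Fr:Reduced:Chara:Rep1} into the defining formula \eqref{eq:Fr:Reduced:Defn:j}, evaluated at $\bfy \in \partial D$ and $\bfxi \cdot \bfn < 0$. Lemma \ref{lemma:specular} is what makes this tractable: since $|\bfxi^k|=|\bfxi|$ and $t_k = t_2$ for $k \geq 2$, the time shift $t_1 + k t_2$ depends only on $(\bfy,\bfxi)$ and $k$. Restricting to $\tau_b<t$ and collecting terms yields
\[
     j(\bfy,t) = \mathcal{I}[g_{in}](\bfy,t) + \alpha \sum_{k=0}^{\infty} (1-\alpha)^k \intLim_{\substack{ \bfxi \cdot \bfn < 0 \\ t_1 + k t_2 \leq t }} (-\bfxi \cdot \bfn)\, \tilde{M}_{T(\bfy_{(k+1)})}(\bfxi^k)\, j\!\left(\bfy_{(k+1)},\, t - t_1 - k t_2\right) d\bfxi,
\]
where the inhomogeneity $\mathcal{I}[g_{in}]$ is bounded pointwise by $O(1)\VertBK{g_{in}}_{\infty,\mu}$; the hypothesis $\mu > 4$ is precisely what makes $\int (-\bfxi\cdot\bfn)(1+|\bfzeta|)^{-\mu} d\bfzeta$ finite, as recorded in \eqref{eq::Choice:of:nu}.

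Second, I would postulate the ansatz $|j(\bfy,s)| \leq A\VertBK{g_{in}}_{\infty,\mu} e^{\lambda s}$ for $s \leq t$ and substitute into the RHS. Carrying out the geometric sum $\sum_{k=0}^{\infty}\bigl[(1-\alpha)e^{-\lambda t_2}\bigr]^k = \bigl[1-(1-\alpha)e^{-\lambda t_2}\bigr]^{-1}$, self-consistency reduces to finding $\lambda$ with
\[
     \alpha \intLim_{\bfxi \cdot \bfn < 0} (-\bfxi \cdot \bfn)\, \tilde{M}_{T(\bfy_{(\cdot)})}(\bfxi^k)\, \frac{e^{-\lambda t_1}}{1-(1-\alpha)e^{-\lambda t_2}}\, d\bfxi \;\leq\; \theta < 1 .
\]
On the unit ball, $t_1$ and $t_2$ scale like $1/|\bfxi|$, while the Maxwellian concentrates at $|\bfxi|\sim\sqrt{T}$; the integral is therefore effectively controlled by the value of the integrand at $|\bfxi|\sim 1$, where $t_2$ is of order $1$. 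Choosing $\lambda$ so that $(1-\alpha)e^{-\lambda t_2}$ is bounded away from $1$ on this bulk forces $\lambda \gtrsim C(\Tm,\TM)\cdot(-\alpha\ln\alpha)/(1-\alpha)$, which is the rate claimed.

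With the ansatz consistent, existence and uniqueness follow from standard Volterra theory in the Banach space $\{j : \sup_{\bfy,t} e^{-\lambda t}|j(\bfy,t)| < \infty\}$: the map $j \mapsto \mathcal{I}[g_{in}] + \mathcal{K}j$ is a strict contraction once $A$ is large, and uniqueness is automatic by linearity. Reconstructing $\bar g$ via \eqref{eq:Fr:Reduced:Chara:Rep1} and $g$ via \eqref{eq:Fr:Reduced:Chara:Rep} completes the proof. The main obstacle I expect is the a priori estimate in the second step: at $\lambda = 0$ the LHS of the self-consistency inequality equals exactly $1$ (because $\alpha\sum_k(1-\alpha)^k\cdot 1 = 1$ and the boundary flux of $\tilde M_T$ is $1$), so any usable bound must extract strict sub-unity from the time-shift factors $e^{-\lambda t_1}$ and $e^{-\lambda t_2}$ alone. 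This demands a careful balance between the large-$|\bfxi|$ regime (where $t_2$ is small and the geometric sum nearly saturates at $1/\alpha$) and the small-$|\bfxi|$ regime (where $t_1$ is large but the Maxwellian weight is tiny), and the sharp optimization over $\lambda$ is precisely what produces the characteristic rate $(-\alpha\ln\alpha)/(1-\alpha)$.
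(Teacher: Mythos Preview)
Your contraction/Volterra approach is sound and genuinely different from the paper's. The paper never sets up a weighted contraction: instead it iterates your integral equation all the way back to the initial data, obtaining the explicit series \eqref{eq:fr:globa:1}, and then bounds the series directly. After the change of variables $s_i = k_i^{-1}|\bfy_{(\ldots,k_{i-1})}-\bfy_{(\ldots,k_{i-1},1)}|/|\bfxi_i|$, each specular index $k_i$ contributes a Jacobian factor $1/k_i$, so that $\sum_{k_i\geq 1}(1-\alpha)^{k_i}/k_i=-\ln\alpha$; crudely bounding the kernels $G,H$ in $L^\infty$ on the simplex $\{s_1+\ldots+s_l<t\}$ produces a volume factor $t^l/l!$, and summing over the number $l$ of diffuse reflections gives the exponential with rate $\tfrac{-\alpha\ln\alpha}{1-\alpha}$ explicitly. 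So in the paper this rate is an artifact of that particular crude summation, not an intrinsic threshold.

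Your heuristic for recovering the rate, however, is wrong. Since $1-\alpha<1$, the quantity $(1-\alpha)e^{-\lambda t_2}$ is already bounded away from $1$ for every $\lambda\geq 0$; nothing ``forces'' $\lambda\gtrsim\tfrac{-\alpha\ln\alpha}{1-\alpha}$. In fact your kernel satisfies $K(\lambda)<K(0)=1$ strictly for every $\lambda>0$, so the contraction closes for arbitrarily small $\lambda$ --- the only real question is whether the resulting constant $(1-K(\lambda))^{-1}$ stays $O(1)$ uniformly in $\alpha$. For that you would need the quantitative expansion $1-K(\lambda)\sim c_0\lambda/\alpha$ near $\lambda=0$ (using $t_1=t_2$ for chords from boundary points of the ball), which shows that already $\lambda\sim\alpha$ yields an $\alpha$-independent constant. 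That is sharper than the stated rate $\alpha|\ln\alpha|/(1-\alpha)$, so your route in fact proves more than Theorem~\ref{thm:Fr:Existence:Bddness}; but the reasoning you wrote down does not establish it, and the ``sharp optimization over $\lambda$'' you describe is not what is actually going on.
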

\begin{theorem}[Decay Rate for Boundary Flux]\label{thm:Fr:Main}
Suppose that $g_{in}\in L^{\infty,\mu}_{\bfx,\bfzeta}$ for some constant $\mu>4$. Then the boundary flux $j(\bfy,t)$, \eqref{eq:Fr:Reduced:Defn:j}, satisfies
\begin{equation*}
j(\bfy,t)\leq C\VertBK{g_{in}}_{\infty,\mu}\BK{\frac{1}{(1+\alpha t)^d}+(1-\alpha)^{t^{\frac{1}{400}}}}
\end{equation*}
for some constant $C$ depending only on $\mu$, $\Tm$ and $\TM$.
\end{theorem}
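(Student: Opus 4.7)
The plan is to convert the definition of $j$ into a closed integral equation and then bound its iterates. Substituting the characteristic representation \eqref{eq:Fr:Reduced:Chara:Rep1} of $\bar g(\bfy,\bfxi,t)$ into the definition \eqref{eq:Fr:Reduced:Defn:j}, for a boundary point $\bfy\in\partial D$ with $\bfxi\cdot\bfn<0$, yields a Volterra-type relation
\[
j(\bfy,t) = \alpha\sum_{k=0}^{m-1}(1-\alpha)^k\int_{\bfxi\cdot\bfn<0}(-\bfxi\cdot\bfn)\, j\BK{\bfy_{(k+1)},t-t_1-kt_2}\tilde M_{T(\bfy_{(k+1)})}(\bfxi^k)\,d\bfxi + \mathcal I(\bfy,t),
\]
with forcing $\mathcal I(\bfy,t)$ given by the $\bfxi$-integral of the initial-data term $(1-\alpha)^m\bar g_{in}(\bfy_{(m)}-\bfxi^m(t-t_1-(m-1)t_2),\bfxi^m)$, and with $m=m(\bfy,\bfxi,t)$, $t_1,t_2$ the geometric quantities from \eqref{eq::specular:m}. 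The task reduces to bounding this right-hand side.

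Next I would split the $\bfxi$-integration at the scale $|\bfxi|\sim t^{-(1-\epsilon)}$, taking $\epsilon=\frac{1}{400}$, mirroring the split already used in Theorem \ref{thm:Fr:Soln:Op:Ptws:Esti:new}. In the low-velocity region $|\bfxi|<2 t^{-(1-\epsilon)}$ the flux measure itself contributes the small factor $\int_{|\bfxi|<2t^{-(1-\epsilon)}}|\bfxi|\,d\bfxi = O(t^{-(1-\epsilon)(d+1)})$, which, combined with the coarse bound \eqref{eq:Fr:Coarse:Estimate} on $j$ and the $L^{\infty,\mu}$ bound on $\bar g_{in}$ (the hypothesis $\mu>4$ is what makes the $\bfeta$-integration admissible), is absorbed into the $(1+\alpha t)^{-d}$ term.

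In the high-velocity region $|\bfxi|>2t^{-(1-\epsilon)}$, Lemma \ref{lemma:specular} combined with the unit diameter of $D$ forces $m\gtrsim |\bfxi|t \gtrsim t^{\epsilon}$, so the initial-data piece $\mathcal I$ is controlled by $(1-\alpha)^{t^{\epsilon}}\VertBK{g_{in}}_{\infty,\mu}$, supplying the second term in the statement. For the Maxwellian sum I would iterate the integral relation: at every reflection point $\bfy_{(k+1)}$ the flux $j$ is re-expanded by the same formula, and the resulting iterated product of reduced Maxwellians redistributes the velocity variable over the sphere. Each mixing event carries probability $\alpha$ per bounce, so the effective number of mixing events up to time $t$ is $\sim \alpha t$, and the ensuing $d$-dimensional velocity dispersion produces the factor $(1+\alpha t)^{-d}$, in direct analogy with the diffuse-reflection ($\alpha=1$) argument of \cite{Kuo-Liu-Tsai,Kuo-Liu-Tsai-2}. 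The zero-total-mass condition \eqref{eq:Fr:Reduced:zero:total:mass} is crucial at this step, as it eliminates the zero-mode that would otherwise saturate the mixing average.

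The main obstacle I expect is tracking coherently the two competing decay mechanisms in the high-velocity region: the geometric factor $(1-\alpha)^k$ from specular memory and the dispersive factor from Maxwellian mixing. For small $\alpha$ these work against each other --- small $\alpha$ lengthens specular memory but weakens per-bounce mixing --- so the iteration must be organized precisely to make the final rate come out in $\alpha t$ rather than in $t$. A related bookkeeping subtlety is the correct accounting of the $j_S$ subtraction hidden in $j=j_g-j_S$: the residual tail $\alpha\sum_{k\ge m}(1-\alpha)^k j_S$ dropped in passing from \eqref{eq:Fr:Reduced:Chara:Rep} to \eqref{eq:Fr:Reduced:Chara:Rep1} must be controlled through the zero-total-mass normalization rather than producing a nonzero limit.
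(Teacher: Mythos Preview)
Your plan captures the correct heuristics --- the $(1-\alpha)^{t^\epsilon}$ term does arise from the specular memory in the initial-data piece, and the $(1+\alpha t)^{-d}$ should morally come from $\sim\alpha t$ effective mixing events --- but the treatment of the Maxwellian sum is not a proof, and the gap is precisely where all the work lies. Iterating the Volterra relation and asserting that ``$d$-dimensional velocity dispersion produces the factor $(1+\alpha t)^{-d}$'' is the statement of the theorem, not an argument for it. The $\alpha=1$ case in \cite{Kuo-Liu-Tsai,Kuo-Liu-Tsai-2} already required nontrivial machinery (law of large numbers, fluctuation estimates), and the coupling with specular memory that you flag in your last paragraph makes the Maxwell-type case strictly harder, not a routine analogy. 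Concretely: after iteration you face a convolution structure in the time variables $s_1,\dots,s_l$ weighted by kernels $G(\phi_i,s_i\sqrt{2RT}/k_i)$ whose scales depend on the specular counts $k_i$, and you have given no mechanism for extracting decay from this structure.

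The paper's route is quite different from a direct iteration bound. It exploits the zero-total-mass condition \eqref{eq:Fr:Reduced:zero:total:mass} not to ``eliminate a zero mode'' in a spectral sense, but to rewrite $j(\bfy,t)=j(\bfy,t)-\frac{1}{C_S|D|}\int\bar g\,d\bfx d\bfxi$ and decompose into pieces $j_{in}+j_{mid}+j_{fl}$ according to $|\bfxi|$. The crucial piece $j_{fl}$ is controlled by the \emph{fluctuation} $\sup|j(\bfy,t)-j(\bfy',t')|$ over a short time window, and this is where the real analysis happens: temporal and spatial fluctuation estimates for an auxiliary quantity $\Lambda^{(n,m)}_{(k_1,\dots,k_n)}$, obtained by iterating \eqref{eq:flux:specular:2} $n$ times with a specular truncation at depth $m$. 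The fluctuation bounds rely on a law of large numbers (Theorem~\ref{thm:Fr:Law:of:Large}) for the convolution kernels and on cancellation in $\sum_l G\cdots\partial_\sigma G\cdots G$. Everything then closes via a two-pass a priori argument --- first uniform boundedness of $j$ (your appeal to the coarse exponential bound \eqref{eq:Fr:Coarse:Estimate} cannot substitute for this), then the decay rate --- with a delicate choice of five free parameters $m=t^{r_2}$, $n=t^{r_1}$, $N=t^{r_3}$, $p$, $q$ satisfying a system of inequalities. The specific exponent $1/400$ is an artifact of one admissible choice. None of this machinery appears in your outline, and without it the iteration does not close.
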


From \eqref{eq::specular:m}, for $0<\epsilon<1$ we have $m\geq t^{\epsilon}$ for $|\bfxi|>\frac{2}{t^{1-\epsilon}}$. Therefore,
Theorem \ref{thm:Fr:Main} together with \eqref{eq:Fr:Reduced:Chara:Rep} yield immediately the pointwise convergence of the free molecular flow $g$, Theorem \ref{thm:Fr:Soln:Op:Ptws:Esti:new}.

\begin{remark}
When $\alpha=1$, the case of diffuse reflection boundary condition, we have the convergence rate $(1+t)^{-d}$ of the boundary flux, \cite{Kuo-Liu-Tsai,Kuo-Liu-Tsai-2}. Roughly speaking, the equilibrating effect is mainly from sufficiently many collisions with the boundary of diffuse reflection condition when  $t$ is large. For Maxwell-type boundary condition, we have two possibilities after each collision with the boundary: one is diffuse reflection and another is specular reflection. This yields multiple scales in the convergence to the steady solution, and is one of the main causes of the analytical difficulty of this paper. Eventually, we have the convergence rate $(1+\alpha t)^{-d} +(1-\alpha)^{\frac{t^{\epsilon}}{2}}$ of the boundary flux, Theorem \ref{thm:Fr:Main}. $(1-\alpha)^{\frac{t^{\epsilon}}{2}}$ comes from the coefficient $(1-\alpha)$ of specular reflection although specular reflection condition itself has no equilibrating effect. $(1+\alpha t)^{-d}$ is from the diffuse reflection condition where the rate is essentially the same as before. However, we only have diffuse reflection for a multiple of $\alpha$, and so the convergence to steady solution is slower than the case of the complete diffuse reflection. For instance, it starts to converge only when $t>\frac{1}{\alpha}$; before that the solution is simply bounded.
\end{remark}

\end{subsection}

\begin{subsection}{The global exitence for boundary flux}\label{sec:Fr:Stoch:Formulation}
In this subsection, we prove the global existence for the boundary flux function $j(\bfy,t)$. It should be noticed that we may associate the boundary flux with the backward flow of particles. Once particles collide with the boundary, both diffuse reflection and specular reflection occur for the Maxwell-type boundary condition.
Note that diffuse reflection is stochastic and see \cite{Kuo-Liu-Tsai} for more details. In contrast to diffuse reflection, specular reflection is deterministic and has no equilibrating effect. In the following discussion we first give a solution formula of boundary flux for general domains. Here we assume temporarily that the accommodation coefficient is variable, $0<\alpha(\bfy)<1$, for explaining what difficulties arise from this assumption.

Fix $\bfy\in\partial D, t>0$, the boundary flux can be written as 
\begin{equation}\label{Sol:Formula:Fr:specular}
\begin{split}
j(\bfy,t)&=\intLim_{t<\frac{|\bfy-\bfy_{(1)}|}{|\bfxi_1|}}
\BK{-\bfxi_1\cdot\bfn(\bfy)}\bar g_{in}(\bfy-\bfxi_1t,\bfxi_1)d\bfxi_1\\
&+\intLim_{t>\frac{|\bfy-\bfy_{(1)}|}{|\bfxi_1|}}
\BK{-\bfxi_1\cdot\bfn(\bfy)}\alpha(\bfy_{(1)})
\tilde{M}_{T(\bfy_{(1)})}(\bfxi_1)j\BK{\bfy_{(1)},t-\frac{|\bfy-\bfy_{(1)}|}{|\bfxi_1|}}d\bfxi_1\\
&+\intLim_{t>\frac{|\bfy-\bfy_{(1)}|}{|\bfxi_1|}}
\BK{-\bfxi_1\cdot\bfn(\bfy)}\BK{1-\alpha(\bfy_{(1)})}\bar g\BK{\bfy_{(1)},t-\frac{|\bfy-\bfy_{(1)}|}{|\bfxi_1|},\bfxi_1^1}d\bfxi_1\\
&\equiv j_{in}^{(0)}(\bfy,t)+D^{(1)}(\bfy,t)+E^{(1)}(\bfy,t),
\end{split}
\end{equation}
where $j_{in}^{(0)}$ is a direct contribution of initial data, both $D^{(1)}$ and $E^{(1)}$ are the events that boundary collisions are more than once. More precisely, the first boundary collision takes place at $\bfy_{(1)}$, $D^{(1)}$ and $E^{(1)}$ represent diffuse reflection and specular reflection of the backward flow respectively. We can continue to write down the formulas for $D^{(1)}(\bfy,t)$ and $E^{(1)}(\bfy,t)$:
\begin{equation*}
\begin{split}
D^{(1)}(\bfy,t)&=D^{(1)}_{in}(\bfy,t)+D^{(1)}_{dif}(\bfy,t)+D^{(1)}_{spe}(\bfy,t);\\
E^{(1)}(\bfy,t)&=E^{(1)}_{in}(\bfy,t)+E^{(1)}_{dif}(\bfy,t)+E^{(1)}_{spe}(\bfy,t),
\end{split}
\end{equation*}
where
\begin{multline*}
D^{(1)}_{in}(\bfy,t)=\intLim_{0<t-\frac{|\bfy-\bfy_{(1)}|}{|\bfxi_1|}<\frac{|\bfy_{(1)}-\bfy_{(1,1)}|}{|\bfxi_2|}}
\BK{-\bfxi_1\cdot\bfn(\bfy)}\alpha(\bfy_{(1)})\tilde{M}_{T(\bfy_{(1)})}(\bfxi_1)\\
\BK{-\bfxi_2\cdot\bfn(\bfy_{(1)})}\bar g_{in}(\bfy_{(1)}-\bfxi_2(t-\frac{|\bfy-\bfy_{(1)}|}{|\bfxi_1|}),\bfxi_2)d\bfxi_2d\bfxi_1;
\end{multline*}
\begin{multline*}
D^{(1)}_{dif}(\bfy,t)=\intLim_{t-\frac{|\bfy-\bfy_{(1)}|}{|\bfxi_1|}>\frac{|\bfy_{(1)}-\bfy_{(1,1)}|}{|\bfxi_2|}}
\BK{-\bfxi_1\cdot\bfn(\bfy)}\alpha(\bfy_{(1)})\tilde{M}_{T(\bfy_{(1)})}(\bfxi_1)
\BK{-\bfxi_2\cdot\bfn(\bfy_{(1)})}\\
\alpha(\bfy_{(1,1)})
\tilde{M}_{T(\bfy_{(1,1)})}(\bfxi_2)j\BK{\bfy_{(1,1)},t-\frac{|\bfy-\bfy_{(1)}|}{|\bfxi_1|}-\frac{|\bfy_{(1)}-\bfy_{(1,1)}|}{|\bfxi_2|}}d\bfxi_2d\bfxi_1;
\end{multline*}
\begin{multline*}
D^{(1)}_{spe}(\bfy,t)=\intLim_{t-\frac{|\bfy-\bfy_{(1)}|}{|\bfxi_1|}>\frac{|\bfy_{(1)}-\bfy_{(1,1)}|}{|\bfxi_2|}}
\BK{-\bfxi_1\cdot\bfn(\bfy)}\alpha(\bfy_{(1)})\tilde{M}_{T(\bfy_{(1)})}(\bfxi_1)
\BK{-\bfxi_2\cdot\bfn(\bfy_{(1)})}\\
\BK{1-\alpha(\bfy_{(1,1)}}\bar g\BK{\bfy_{(1,1)},t-\frac{|\bfy-\bfy_{(1)}|}{|\bfxi_1|}-\frac{|\bfy_{(1)}-\bfy_{(1,1)}|}{|\bfxi_2|},\bfxi_2^1}d\bfxi_2d\bfxi_1;
\end{multline*}
\begin{multline*}
E^{(1)}_{in}(\bfy,t)=\intLim_{0<t-\frac{|\bfy-\bfy_{(1)}|}{|\bfxi_1|}<\frac{|\bfy_{(1)}-\bfy_{(2)}|}{|\bfxi_1^1|}}
\BK{-\bfxi_1\cdot\bfn(\bfy)}(1-\alpha(\bfy_{(1)})\\
\bar g_{in}\BK{\bfy_{(1)}-\bfxi_1^1\BK{t-\frac{|\bfy-\bfy_{(1)}|}{|\bfxi_1|}}}d\bfxi_1;\\
\end{multline*}
\begin{multline*}
E^{(1)}_{dif}(\bfy,t)=\intLim_{t-\frac{|\bfy-\bfy_{(1)}|}{|\bfxi_1|}>\frac{|\bfy_{(1)}-\bfy_{(2)}|}{|\bfxi_1^1|}}
\BK{-\bfxi_1\cdot\bfn(\bfy)}(1-\alpha(\bfy_{(1)})\\
\alpha(\bfy_{(2)})
\tilde{M}_{T(\bfy_{(2)})}(\bfxi_1^1)j\BK{\bfy_{(2)},t-\frac{|\bfy-\bfy_{(1)}|}{|\bfxi_1|}-\frac{|\bfy_{(1)}-\bfy_{(2)}|}{|\bfxi_1^1|}}d\bfxi_1;
\end{multline*}
\begin{multline*}
E^{(1)}_{spe}(\bfy,t)=\intLim_{t-\frac{|\bfy-\bfy_{(1)}|}{|\bfxi_1|}>\frac{|\bfy_{(1)}-\bfy_{(2)}|}{|\bfxi_1^1|}}
\BK{-\bfxi_1\cdot\bfn(\bfy)}(1-\alpha(\bfy_{(1)})\\
\BK{1-\alpha(\bfy_{(2)}}\bar g\BK{\bfy_{(2)},t-\frac{|\bfy-\bfy_{(1)}|}{|\bfxi_1|}-\frac{|\bfy_{(1)}-\bfy_{(2)}|}{|\bfxi_1^1|},\bfxi_1^2}d\bfxi_1.
\end{multline*}
$D^{(1)}_{in}$ is the event that the backward flow reaches initial state after diffuse reflection occurs once. $D^{(1)}_{dif}$ is the event that diffuse reflection occurs again at $\bfy_{(1,1)}$ after the first diffuse reflection at $\bfy_{(1)}$, and $D^{(1)}_{spe}$ is the event that specular reflection occurs at $\bfy_{(1,1)}$ after the first diffuse reflection at $\bfy_{(1)}$. 
$E^{(1)}_{in}$ is the event that the backward flow reaches initial state after specular reflection occurs once. $E^{(1)}_{dif}$ is the event that diffuse reflection occurs at $\bfy_{(2)}$ after the first specular reflection at $\bfy_{(1)}$, and $E^{(1)}_{spe}$ is the event that specular reflection occurs again at $\bfy_{(2)}$ after the first specular reflection at $\bfy_{(1)}$. 
Note that both $D^{(1)}_{in}$ and $E^{(1)}_{in}$ represent the contribution that the boundary collision takes place exactly once. More precisely, $D^{(1)}_{in}$ and $E^{(1)}_{in}$ represent exactly one diffuse collision and exactly one specular collision, respectively. If we want to compute the contribution that the boundary collision takes place exactly twice, we need to take $D^{(1)}_{dif}, D^{(1)}_{spe},E^{(1)}_{dif}$ and $E^{(1)}_{spe}$ into account. In other words, we must proceed to write down their formulas. Then there are four events arisen for exact two boundary collisions: (\textit{diffuse, diffuse}), (\textit{diffuse, specular}), (\textit{specular, diffuse}) and (\textit{specular, specular}).
One can repeat this process inductively to compute the event that the boundary collision takes place exactly $n$ times. In that case, we need to handle $2^n$ possibilities. That would make the solution formula lengthy and complicated.
That is one of the main causes of the analytical difficulty of this paper.
Another tricky problem is the variable accommodation coefficient $\alpha(\bfy)$. In this paper we assume $\alpha(\bfy)=\alpha$ is a constant, this assumption  not only makes the solution formula easier but also allows us to estimate all combinations of the events. We explain this by considering binomial expansion formally:
\begin{equation}\label{binomial}
\BK{\alpha\textit{diffuse}+(1-\alpha)\textit{specular}}^n
=\sum_{k=0}^n\BK{\alpha\textit{diffuse}}^k\BK{(1-\alpha)\textit{specular}}^{n-k}.
\end{equation}
Then all combinations of events, R.H.S. of \eqref{binomial}, can be dominated by
\begin{equation*}
\BK{\alpha\textit{diffuse}+(1-\alpha)\textit{specular}}^n
=O(1)(\alpha+(1-\alpha))^n=O(1),
\end{equation*}
for example, if we can show each term of diffuse and specular is bounded.
In other words, we can treat the effects caused by diffuse reflection and specular reflection independently when $\alpha$ is constant. For the variable accommodation coefficient $\alpha(\bfy)$, the problem is more delicate and might involve different techniques. This will be our another research work in the future.

From now on we assume $0<\alpha<1$ is constant. We define the following notations inductively:
\begin{equation*}
\begin{split}
&\bfy_{(0)}\equiv \bfy, \quad \bfy_{(k_1,\ldots,k_l,0)}\equiv\bfy_{(k_1,\ldots,k_l)}, 
\quad \bfxi^{0}_l \equiv \bfxi_l, \\ 
&\bfy_{(k_1,\ldots,k_{l-1},i)} =\bfy_B \BKK{ \bfy_{(k_1,\ldots,k_{l-1},i-1)}, \frac{\bfxi^{i-1}_{l}}{|\bfxi^{i-1}_{l}|} },\\
&\bfxi^{i}_l = \bfxi^{i-1}_l -2(\bfxi^{i-1}_l\cdot\bfn(\bfy_{(k_1,\ldots,k_{l-1},i)}))\bfn(\bfy_{(k_1,\ldots,k_{l-1},i)}),\\
\end{split}
\end{equation*}
where $\bfy_{(k_1,\ldots,k_l)}$ indicates the location of particles via the backward flow process that:
\begin{multline*}
(k_1-1)\text{ specular }\rightarrow\text{ diffuse }\rightarrow
(k_2-1)\text{ specular }\rightarrow\text{ diffuse }\rightarrow\\
\cdots
(k_{l-1}-1)\text{ specular }\rightarrow\text{ diffuse }\rightarrow
(k_l-1)\text{ specular }.
\end{multline*}
According to the above discussion, we can find the solution formula of the  boundary flux for general domains.

 \begin{multline}\label{Sol:Formula:Fr:new}
j(\bfy,t)=\sum\limits_{k=0}^n\Bigg\{\sum\limits_{l=1}^k\sum\limits_{k_1+\ldots+k_l=l}^k
\intLim_{A^{(l,k)}_{(k_1,\ldots,k_l)}}
\prod\limits_{i=1}^l \BK{-\bfxi_i\cdot\bfn(\bfy_{(k_1,\ldots,k_{i-1})}}
(1-\alpha)^{k_i-1}\\ \alpha
\tilde{M}_{T(\bfy_{(k_1,\ldots,k_i)})}(\bfxi_i^{k_i-1})
\BK{-\bfxi_{l+1}\cdot\bfn(\bfy_{(k_1,\ldots,k_l)})}
(1-\alpha)^{k-k_1-\ldots-k_l}\\
\bar g_{in}\left(\bfy_{(k_1,\ldots,k_l,k-k_1-\ldots-k_l)}-\bfxi_{l+1}^{k-k_1-\ldots-k_l}
\Big(t-\sum\limits_{i=1}^{l}\sum\limits_{j=1}^{k_i}\frac{|\bfy_{(k_1,\ldots,k_{i-1},j-1)}-\bfy_{(k_1,\ldots,k_{i-1},j)}|}{|\bfxi_i^{j-1}|}\right.\\
\left.-\sum\limits_{i=1}^{k-k_1-\ldots-k_l}\frac{|\bfy_{(k_1,\ldots,k_l,i-1)}-\bfy_{(k_1,\ldots,k_l,i)}|}{|\bfxi_{l+1}^{i-1}|}\Big),\bfxi_{l+1}^{k-k_1-\ldots-k_l}\right)
d\bfxi_{l+1}\ldots d\bfxi_1\\
+\intLim_{0<t-\sum\limits_{i=0}^{k-1}\frac{|\bfy_{(i)}-\bfy_{(i+1)}|}{|\bfxi_1^i|}<\frac{|\bfy_{(k)}-\bfy_{(k+1)}|}{|\bfxi_1^k|}}
\BK{-\bfxi_1\cdot\bfn(\bfy)}(1-\alpha)^k \\
\bar g_{in}\BK{\bfy_{(k)}-\bfxi_1^k(t-\sum\limits_{i=0}^{k-1}\frac{|\bfy_{(i)}-\bfy_{(i+1)}|}{|\bfxi_1^i|}),\bfxi_1^k}d\bfxi_1\Bigg\}\\
+\sum\limits_{l=1}^{n+1}\sum\limits_{k_1+\ldots+k_l=l}^{n+1}
\intLim_{B^{(l,n+1)}_{(k_1,\ldots,k_l)}}
\prod\limits_{i=1}^l \BK{-\bfxi_i\cdot\bfn(\bfy_{(k_1,\ldots,k_{i-1})}}(1-\alpha)^{k_i-1}\\ \alpha
\tilde{M}_{T(\bfy_{(k_1,\ldots,k_i)})}(\bfxi_i^{k_i-1})
\BK{-\bfxi_{l+1}\cdot\bfn(\bfy_{(k_1,\ldots,k_l)})}
(1-\alpha)^{n+1-k_1-\ldots-k_l}\\
\bar g\left(\bfy_{(k_1,\ldots,k_l,n+1-k_1-\ldots-k_l)},
t-\sum\limits_{i=1}^{l}\sum\limits_{j=1}^{k_i}\frac{|\bfy_{(k_1,\ldots,k_{i-1},j-1)}-\bfy_{(k_1,\ldots,k_{i-1},j)}|}{|\bfxi_i^{j-1}|}\right.\\
\left.-\sum\limits_{i=1}^{n+1-k_1-\ldots-k_l}\frac{|\bfy_{(k_1,\ldots,k_l,i-1)}-\bfy_{(k_1,\ldots,k_l,i)}|}{|\bfxi_{l+1}^{i-1}|},\bfxi_{l+1}^{n+1-k_1-\ldots-k_l}\right)
d\bfxi_{l+1}\ldots d\bfxi_1\\
+\intLim_{t>\sum\limits_{i=0}^{n}\frac{|\bfy_{(i)}-\bfy_{(i+1)}|}{|\bfxi_1^i|}}
\BK{-\bfxi_1\cdot\bfn(\bfy)}(1-\alpha)^{n+1} 
\bar g\BK{\bfy_{(n+1)},t-\sum\limits_{i=0}^{n}\frac{|\bfy_{(i)}-\bfy_{(i+1)}|}{|\bfxi_1^i|},\bfxi_1^{n+1}}d\bfxi_1,
\end{multline}

where

\begin{multline*}
A^{(l,k)}_{(k_1,\ldots,k_l)}=\\
\left\{0<t-\sum\limits_{i=1}^{l}\sum\limits_{j=1}^{k_i}\frac{|\bfy_{(k_1,\ldots,k_{i-1},j-1)}-\bfy_{(k_1,\ldots,k_{i-1},j)}|}{|\bfxi_i^{j-1}|}-\sum\limits_{i=1}^{k-k_1-\ldots-k_l}\frac{|\bfy_{(k_1,\ldots,k_l,i-1)}-\bfy_{(k_1,\ldots,k_l,i)}|}{|\bfxi_{l+1}^{i-1}|}\right.\\
\left.<\begin{dcases}
\frac{|\bfy_{(k_1,\ldots,k_l)}-\bfy_{(k_1,\ldots,k_l,1)}|}{|\bfxi_{l+1}|}\quad \text{ if } k-k_1-\ldots-k_l=0,\\
\frac{|\bfy_{(k_1,\ldots,k_l,k-k_1-\ldots-k_l)}-\bfy_{(k_1,\ldots,k_l,k_l,k-k_1-\ldots-k_l+1)}|}{|\bfxi_{l+1}^{k-k_1-\ldots-k_l}|}\quad \text{ if } k-k_1-\ldots-k_l>0
\end{dcases}\right\},
\end{multline*}
\begin{multline*}
B^{(l,n+1)}_{(k_1,\ldots,k_l)}=\\
\left\{t>\sum\limits_{i=1}^{l}\sum\limits_{j=1}^{k_i}\frac{|\bfy_{(k_1,\ldots,k_{i-1},j-1)}-\bfy_{(k_1,\ldots,k_{i-1},j)}|}{|\bfxi_i^{j-1}|}+\sum\limits_{i=1}^{n+1-k_1-\ldots-k_l}\frac{|\bfy_{(k_1,\ldots,k_l,i-1)}-\bfy_{(k_1,\ldots,k_l,i)}|}{|\bfxi_{l+1}^{i-1}|}\right\}.
\end{multline*}

Note that
\begin{multline*}
\intLim_{B^{(l,n+1)}_{k_1,\ldots,k_l};k_1+\ldots+k_l=n+1}\Big(\ldots\Big)\\
=\intLim_{B^{(l,n+1)}_{k_1,\ldots,k_l};k_1+\ldots+k_l=n+1}\\
\prod\limits_{i=1}^l \BK{-\bfxi_i\cdot\bfn(\bfy_{(k_1,\ldots,k_{i-1})}}(1-\alpha)^{k_i-1}\alpha
\tilde{M}_{T(\bfy_{(k_1,\ldots,k_i)})}(\bfxi_i^{k_i-1})\\
j\left(\bfy_{(k_1,\ldots,k_l)},
t-\sum\limits_{i=1}^{l}\sum\limits_{j=1}^{k_i}\frac{|\bfy_{(k_1,\ldots,k_{i-1},j-1)}-\bfy_{(k_1,\ldots,k_{i-1},j)}|}{|\bfxi_i^{j-1}|}\right)
d\bfxi_{l}\ldots d\bfxi_1.
\end{multline*}

In the present paper we assume spherical symmetric domains and therefore we can make use of this symmetric property to obtain more precise formulas for the boundary flux by using change of variables:
\begin{equation*}
\begin{dcases}
s_i=\frac{2}{k_i|\bfxi_i|} \quad \text{ if } d=1,\\
\begin{dcases}
s_i=\frac{|\bfy_{(k_1,\ldots,k_{i-1},0)}-\bfy_{(k_1,\ldots,k_{i-1},1)}|}{k_i|\bfxi_i|}\\
\cos\phi_i=-\bfxi_i\cdot\bfn(\bfy_{(k_1,\ldots,k_{i-1})})
\end{dcases} \quad \text{ if } d=2.
\end{dcases}
\end{equation*}
We define
\begin{equation}\label{eq:Fr:Defn:G}
\begin{dcases}
H(\sigma)
\equiv
\BK{\frac{2}{\sigma}}^3 e^{-\BK{\frac{2}{\sigma}}^2} \quad \text{ if } d=1,\\
     G(\phi,\sigma)
\equiv
    \frac{ 1 }{ \pi^{\frac{1}{2}} }  \BK{ \CosS{}{} }^{4} \Maxll{}{}  \quad\text{ if } d=2.
    \end{dcases}
\end{equation}
For $d=2$,
 \begin{multline*}
j(\bfy,t)=\sum\limits_{k=0}^n\Bigg\{\sum\limits_{l=1}^k\sum\limits_{k_1+\ldots+k_l=l}^k
\intLim_{A^{(l,k)}_{(k_1,\ldots,k_l)}}
\prod\limits_{i=1}^l (1-\alpha)^{k_i-1}\\ \alpha
G\Big(\phi_i,\frac{\sqrt{2RT(\bfy_{(k_1,\ldots,k_i)})}s_i}{k_i}\Big)\frac{\sqrt{2RT(\bfy_{(k_1,\ldots,k_i)})}}{k_i}
\BK{-\bfxi_{l+1}\cdot\bfn(\bfy_{(k_1,\ldots,k_l)})}
(1-\alpha)^{k-k_1-\ldots-k_l}\\
\bar g_{in}\Big(\bfy_{(k_1,\ldots,k_l)}-\bfxi_{l+1}^{k-k_1-\ldots-k_l}
\Big(t-\sum\limits_{i=1}^{l}s_i
-\sum\limits_{i=1}^{k-k_1-\ldots-k_l}\frac{|\bfy_{(k_1,\ldots,k_l,i-1)}-\bfy_{(k_1,\ldots,k_l,i)}|}{|\bfxi_{l+1}^{i-1}|}\Big),\bfxi_{l+1}^{k-k_1-\ldots-k_l}\Big)\\
d\bfxi_{l+1}ds_ld\phi_l\ldots ds_1d\phi_1 \\
\end{multline*}

\begin{multline*}
+\intLim_{0<t-\sum\limits_{i=0}^{k-1}\frac{|\bfy_{(i)}-\bfy_{(i+1)}|}{|\bfxi_1^i|}<\frac{|\bfy_{(k)}-\bfy_{(k+1)}|}{|\bfxi_1^k|}}\\
\left.\BK{-\bfxi_1\cdot\bfn(\bfy)}(1-\alpha)^k 
\bar g_{in}\BK{\bfy_{(k)}-\bfxi_1^k(t-\sum\limits_{i=0}^{k-1}\frac{|\bfy_{(i)}-\bfy_{(i+1)}|}{|\bfxi_1^i|}),\bfxi_1^k}d\bfxi_1\right\}\\
\end{multline*}

\begin{multline*}
+\sum\limits_{l=1}^{n+1}\sum\limits_{k_1+\ldots+k_l=l}^{n+1}
\intLim_{B^{(l,n+1)}_{(k_1,\ldots,k_l)}}
\prod\limits_{i=1}^l (1-\alpha)^{k_i-1}\\ \alpha
G\Big(\phi_i,\frac{\sqrt{2RT(\bfy_{(k_1,\ldots,k_i)})}s_i}{k_i}\Big)\frac{\sqrt{2RT(\bfy_{(k_1,\ldots,k_i)})}}{k_i}
\BK{-\bfxi_{l+1}\cdot\bfn(\bfy_{(k_1,\ldots,k_l)})}
(1-\alpha)^{n+1-k_1-\ldots-k_l}\\
\bar g\Big(\bfy_{(k_1,\ldots,k_l)},
t-\sum\limits_{i=1}^{l}s_i
-\sum\limits_{i=1}^{n+1-k_1-\ldots-k_l}\frac{|\bfy_{(k_1,\ldots,k_l,i-1)}-\bfy_{(k_1,\ldots,k_l,i)}|}{|\bfxi_{l+1}^{i-1}|},\bfxi_{l+1}^{n+1-k_1-\ldots-k_l}\Big)\\
d\bfxi_{l+1}ds_ld\phi_l\ldots ds_1d\phi_1\\
\end{multline*}

\begin{multline*}
+\intLim_{t>\sum\limits_{i=0}^{n}\frac{|\bfy_{(i)}-\bfy_{(i+1)}|}{|\bfxi_1^i|}}
\BK{-\bfxi_1\cdot\bfn(\bfy)}(1-\alpha)^{n+1} 
\bar g\BK{\bfy_{(n+1)},t-\sum\limits_{i=0}^{n}\frac{|\bfy_{(i)}-\bfy_{(i+1)}|}{|\bfxi_1^i|},\bfxi_1^{n+1}}d\bfxi_1.
\end{multline*}

For $d=1$,
 \begin{multline*}
j(\bfy,t)=\sum\limits_{k=0}^n\left\{\sum\limits_{l=1}^k\sum\limits_{k_1+\ldots+k_l=l}^k
\intLim_{A^{(l,k)}_{(k_1,\ldots,k_l)}}\right.\\
\prod\limits_{i=1}^l (1-\alpha)^{k_i-1}\alpha
H(\frac{\sqrt{2RT(\bfy_{(k_1,\ldots,k_i)})}s_i}{k_i})\frac{\sqrt{2RT(\bfy_{(k_1,\ldots,k_i)})}}{k_i}\\
\BK{-\bfxi_{l+1}\cdot\bfn(\bfy_{(k_1,\ldots,k_l)})}
(1-\alpha)^{k-k_1-\ldots-k_l}\\
\bar g_{in}\Big(\bfy_{(k_1,\ldots,k_l)}-\bfxi_{l+1}^{k-k_1-\ldots-k_l}
(t-\sum\limits_{i=1}^{l}s_i
-\sum\limits_{i=1}^{k-k_1-\ldots-k_l}\frac{|\bfy_{(k_1,\ldots,k_l,i-1)}-\bfy_{(k_1,\ldots,k_l,i)}|}{|\bfxi_{l+1}^{i-1}|}),\bfxi_{l+1}^{k-k_1-\ldots-k_l}\Big)\\
d\bfxi_{l+1}ds_l\ldots ds_1\\
\end{multline*}

\begin{multline*}
+\intLim_{0<t-\sum\limits_{i=0}^{k-1}\frac{|\bfy_{(i)}-\bfy_{(i+1)}|}{|\bfxi_1^i|}<\frac{|\bfy_{(k)}-\bfy_{(k+1)}|}{|\bfxi_1^k|}}\\
\left.\BK{-\bfxi_1\cdot\bfn(\bfy)}(1-\alpha)^k 
\bar g_{in}\BK{\bfy_{(k)}-\bfxi_1^k(t-\sum\limits_{i=0}^{k-1}\frac{|\bfy_{(i)}-\bfy_{(i+1)}|}{|\bfxi_1^i|}),\bfxi_1^k}d\bfxi_1\right\}
\end{multline*}

\begin{multline*}
+\sum\limits_{l=1}^{n+1}\sum\limits_{k_1+\ldots+k_l=l}^{n+1}
\intLim_{B^{(l,n+1)}_{(k_1,\ldots,k_l)}}\\
\prod\limits_{i=1}^l (1-\alpha)^{k_i-1}\alpha
H(\frac{\sqrt{2RT(\bfy_{(k_1,\ldots,k_i)})}s_i}{k_i})\frac{\sqrt{2RT(\bfy_{(k_1,\ldots,k_i)})}}{k_i}\\
\BK{-\bfxi_{l+1}\cdot\bfn(\bfy_{(k_1,\ldots,k_l)})}
(1-\alpha)^{n+1-k_1-\ldots-k_l}\\
\bar g\Big(\bfy_{(k_1,\ldots,k_l)},
t-\sum\limits_{i=1}^{l}s_i
-\sum\limits_{i=1}^{n+1-k_1-\ldots-k_l}\frac{|\bfy_{(k_1,\ldots,k_l,i-1)}-\bfy_{(k_1,\ldots,k_l,i)}|}{|\bfxi_{l+1}^{i-1}|},\bfxi_{l+1}^{n+1-k_1-\ldots-k_l}\Big)\\
d\bfxi_{l+1}ds_l\ldots ds_1\\
\end{multline*}

\begin{multline*}
+\intLim_{t>\sum\limits_{i=0}^{n}\frac{|\bfy_{(i)}-\bfy_{(i+1)}|}{|\bfxi_1^i|}}
\BK{-\bfxi_1\cdot\bfn(\bfy)}(1-\alpha)^{n+1} 
\bar g\BK{\bfy_{(n+1)},t-\sum\limits_{i=0}^{n}\frac{|\bfy_{(i)}-\bfy_{(i+1)}|}{|\bfxi_1^i|},\bfxi_1^{n+1}}d\bfxi_1.
\end{multline*}

To simplify the equations, we define
\begin{multline}\label{eq:flux:initial}
j_{in}^{(k)}(\bfy,t)=
\intLim_{0<t-\sum\limits_{i=0}^{k-1}\frac{|\bfy_{(i)}-\bfy_{(i+1)}|}{|\bfxi_1^i|}<\frac{|\bfy_{(k)}-\bfy_{(k+1)}|}{|\bfxi_1^k|}}
\BK{-\bfxi_1\cdot\bfn(\bfy)} \\
\bar g_{in}\BK{\bfy_{(k)}-\bfxi_1^k(t-\sum\limits_{i=0}^{k-1}\frac{|\bfy_{(i)}-\bfy_{(i+1)}|}{|\bfxi_1^i|}),\bfxi_1^k}d\bfxi_1,\\
\end{multline}
\begin{equation}\label{eq:flux:E}
\begin{split}
&E^{(n+1)}(\bfy,t)=
\intLim_{t>\sum\limits_{i=0}^{n}\frac{|\bfy_{(i)}-\bfy_{(i+1)}|}{|\bfxi_1^i|}}
\BK{-\bfxi_1\cdot\bfn(\bfy)} 
\bar g\BK{\bfy_{(n+1)},t-\sum\limits_{i=0}^{n}\frac{|\bfy_{(i)}-\bfy_{(i+1)}|}{|\bfxi_1^i|},\bfxi_1^{n+1}}d\bfxi_1,\\
&E^{(0)}(\bfy,t)=j(\bfy,t),
\end{split}
\end{equation}
where $j^{(k)}_{in}$ represents the event that the backward flow reaches initial state after $k$ times specular reflection and $E^{(n+1)}$ represents the event that the boundary collisions are more than $n$ times and the first $n+1$ ones are precisely specular reflections. It should be notice that $E^{(n+1)}$ is not the end. $E^{(n+1)}$ itself involves an infinite series and we will use the coefficient $(1-\alpha)^{n+1}$ of $E^{(n+1)}$ to get the decay for refined estimate later.
Then we have for $d=2$, 
\begin{multline*}
j(\bfy,t)=\sum\limits_{k=0}^n\left\{\sum\limits_{l=1}^{k}\sum\limits_{k_1+k_2+ \cdots +k_l=l}^k
\int (1-\alpha)^{k-k_1-\ldots-k_l} j_{in}^{(k-k_1-\cdots -k_l)}(\bfy_{(k_1,\cdots, k_l)},t-s_1-\cdots -s_l)\right.\\
\left.\prod\limits_{i=1}^{l}(1-\alpha)^{k_i-1}\alpha G\BK{\phi_i,\frac{\sqrt{2RT(\bfy_{(k_1,\cdots, k_i)})}s_i}{k_i}}\frac{\sqrt{2RT(\bfy_{(k_1,\cdots, k_i)})}}{k_i}ds_i d\phi_i
+(1-\alpha)^k j_{in}^{(k)}(\bfy,t)\right\}\\
+\sum\limits_{l=1}^{n+1}\sum\limits_{k_1+k_2+ \cdots +k_l=l}^{n+1}
\int (1-\alpha)^{n+1-k_1-\ldots-k_l}E^{(n+1-k_1-\cdots -k_l)}(\bfy_{(k_1,\cdots, k_l)},t-s_1-\cdots -s_l)\\
\prod\limits_{i=1}^{l}(1-\alpha)^{k_i-1}\alpha G\BK{\phi_i,\frac{\sqrt{2RT(\bfy_{(k_1,\cdots, k_i)})}s_i}{k_i}}\frac{\sqrt{2RT(\bfy_{(k_1,\cdots, k_i)})}}{k_i}ds_i d\phi_i
+(1-\alpha)^{n+1} E^{(n+1)}(\bfy,t),
\end{multline*}
and for $d=1$,
\begin{multline*}
j(\bfy,t)=\sum\limits_{k=0}^n\left\{\sum\limits_{l=1}^{k}\sum\limits_{k_1+k_2+ \cdots +k_l=l}^k
\int (1-\alpha)^{k-k_1-\ldots-k_l} j_{in}^{(k-k_1-\cdots -k_l)}(\bfy_{(k_1,\cdots, k_l)},t-s_1-\cdots -s_l)\right.\\
\left.\prod\limits_{i=1}^{l}(1-\alpha)^{k_i-1}\alpha H\BK{\frac{\sqrt{2RT(\bfy_{(k_1,\cdots, k_i)})}s_i}{k_i}}\frac{\sqrt{2RT(\bfy_{(k_1,\cdots, k_i)})}}{k_i}ds_i
+(1-\alpha)^k j_{in}^{(k)}(\bfy,t)\right\}\\
+\sum\limits_{l=1}^{n+1}\sum\limits_{k_1+k_2+ \cdots +k_l=l}^{n+1}
\int (1-\alpha)^{n+1-k_1-\ldots-k_l}E^{(n+1-k_1-\cdots -k_l)}(\bfy_{(k_1,\cdots, k_l)},t-s_1-\cdots -s_l)\\
\prod\limits_{i=1}^{l}(1-\alpha)^{k_i-1}\alpha H\BK{\frac{\sqrt{2RT(\bfy_{(k_1,\cdots, k_i)})}s_i}{k_i}}\frac{\sqrt{2RT(\bfy_{(k_1,\cdots, k_i)})}}{k_i}ds_i
+(1-\alpha)^{n+1} E^{(n+1)}(\bfy,t).
\end{multline*}

Now we are ready to prove the global existence of the boundary flux function.
\begin{proof}{\textit{of Theorem \ref{thm:Fr:Existence:Bddness}}}\hfil

To compute the boundary flux $j(\bfy,t)$, we need to take all events into account.
In other words, we have to sum up all events for each boundary collision. Hence, we have the following infinite series due to the above discussion.
\begin{multline}\label{eq:fr:globa:1}
j(\bfy,t)=\sum\limits_{k=0}^{\infty}\Bigg\{(1-\alpha)^k j_{in}^{(k)}(\bfy,t)+
\sum\limits_{l=1}^{k}\sum\limits_{k_1+k_2+ \cdots +k_l=l}^k
(1-\alpha)^{k-k_1-\ldots-k_l}
\\ \times
\intLim_{0<s_1+\ldots+s_l<t} j_{in}^{(k-k_1-\cdots -k_l)}(\bfy_{(k_1,\cdots, k_l)},t-s_1-\cdots -s_l)\\
\times
\curBK{ \begin{array}{c@{,~}l}
                        \prod\limits_{i=1}^{l}(1-\alpha)^{k_i-1}\alpha G\BK{\phi_i,\frac{\sqrt{2RT(\bfy_{(k_1,\cdots, k_i)})}s_i}{k_i}}\frac{\sqrt{2RT(\bfy_{(k_1,\cdots, k_i)})}}{k_i}ds_i d\phi_i
                        &
                         d=2,
                        \\
                        \prod\limits_{i=1}^{l}(1-\alpha)^{k_i-1}\alpha H\BK{\frac{\sqrt{2RT(\bfy_{(k_1,\cdots, k_i)})}s_i}{k_i}}\frac{\sqrt{2RT(\bfy_{(k_1,\cdots, k_i)})}}{k_i}ds_i 
                        &
                         d=1
                    \end{array} }
\Bigg\}.
\end{multline}
The index $k$ here means the exact number of boundary collision and we will show the convergence of the series.   
From \eqref{eq:flux:initial} and noting that $|\bfxi^i|=|\bfxi|$ for each $i$, we have
\begin{equation}\label{eq:fr:globa:2}
\begin{split}
&\sum\limits_{k=0}^{\infty}(1-\alpha)^k j_{in}^{(k)}(\bfy,t)\\
&\leq\sum\limits_{k=0}^{\infty}
\int_{\sum\limits_{i=0}^{k-1}\frac{|\bfy_{(i)}-\bfy_{(i+1)}|}{t}<|\bfxi| <\sum\limits_{i=0}^{k}\frac{|\bfy_{(i)}-\bfy_{(i+1)}|}{t}}
\Big|-\bfxi\cdot\bfn(\bfy)\Big|\int\frac{\VertBK{g_{in}}_{\infty,\mu}}{(1+\bfzeta)^\mu}d\bfeta 
d\bfxi\\
&\leq
\VertBK{g_{in}}_{\infty,\mu}\int\frac{|-\bfxi\cdot\bfn(\bfy)|}{(1+\bfzeta)^\mu}d\bfzeta\\
&=O(1)\VertBK{g_{in}}_{\infty,\mu}
\end{split}
\end{equation}
With \eqref{eq:fr:globa:2}, we rewrite \eqref{eq:fr:globa:1} as
\begin{multline*}
j(\bfy,t)=O(1)\VertBK{g_{in}}_{\infty,\mu}+\sum\limits_{k=0}^{\infty}\Bigg\{
\sum\limits_{l=1}^{k}\sum\limits_{k_1+k_2+ \cdots +k_l=l}^k
(1-\alpha)^{k-k_1-\ldots-k_l}
\\ \times
\intLim_{0<s_1+\ldots+s_l<t} j_{in}^{(k-k_1-\cdots -k_l)}(\bfy_{(k_1,\cdots, k_l)},t-s_1-\cdots -s_l)\\
\times
\curBK{ \begin{array}{c@{,~}l}
                        \prod\limits_{i=1}^{l}(1-\alpha)^{k_i-1}\alpha G\BK{\phi_i,\frac{\sqrt{2RT(\bfy_{(k_1,\cdots, k_i)})}s_i}{k_i}}\frac{\sqrt{2RT(\bfy_{(k_1,\cdots, k_i)})}}{k_i}ds_i d\phi_i
                        &
                         d=2,
                        \\
                        \prod\limits_{i=1}^{l}(1-\alpha)^{k_i-1}\alpha H\BK{\frac{\sqrt{2RT(\bfy_{(k_1,\cdots, k_i)})}s_i}{k_i}}\frac{\sqrt{2RT(\bfy_{(k_1,\cdots, k_i)})}}{k_i}ds_i 
                        &
                         d=1
                    \end{array} }
\Bigg\}\\
\end{multline*}
\begin{multline}\label{eq:fr:globa:3}
=O(1)\VertBK{g_{in}}_{\infty,\mu}+\sum\limits_{l=1}^{\infty}(\frac{\alpha}{1-\alpha})^l
\sum\limits_{k_1=1}^{\infty}\ldots\sum\limits_{k_l=1}^{\infty}
\sum\limits_{k=k_1+\ldots+k_l}^{\infty}
(1-\alpha)^{k-k_1-\ldots-k_l}
\\ \times
\intLim_{0<s_1+\ldots+s_l<t} j_{in}^{(k-k_1-\cdots -k_l)}(\bfy_{(k_1,\cdots, k_l)},t-s_1-\cdots -s_l)\\
\times
\curBK{ \begin{array}{c@{,~}l}
                        \prod\limits_{i=1}^{l}(1-\alpha)^{k_i} G\BK{\phi_i,\frac{\sqrt{2RT(\bfy_{(k_1,\cdots, k_i)})}s_i}{k_i}}\frac{\sqrt{2RT(\bfy_{(k_1,\cdots, k_i)})}}{k_i}ds_i d\phi_i
                        &
                         d=2,
                        \\
                        \prod\limits_{i=1}^{l}(1-\alpha)^{k_i} H\BK{\frac{\sqrt{2RT(\bfy_{(k_1,\cdots, k_i)})}s_i}{k_i}}\frac{\sqrt{2RT(\bfy_{(k_1,\cdots, k_i)})}}{k_i}ds_i 
                        &
                         d=1
                    \end{array} }\\
                    \end{multline}
                    By plugging  \eqref{eq:fr:globa:2} into \eqref{eq:fr:globa:3} and direct computations, we have 
                    \begin{multline*}
     j(\bfy,t)=O(1)\VertBK{g_{in}}_{\infty,\mu}\Bigg\{1+  \sum\limits_{l=1}^{\infty}(\frac{\alpha}{1-\alpha})^l
\sum\limits_{k_1=1}^{\infty}\ldots\sum\limits_{k_l=1}^{\infty} 
\intLim_{0<s_1+\ldots+s_l<t}\\
          \times
\curBK{ \begin{array}{c@{,~}l}
                        \prod\limits_{i=1}^{l}(1-\alpha)^{k_i} G\BK{\phi_i,\frac{\sqrt{2RT(\bfy_{(k_1,\cdots, k_i)})}s_i}{k_i}}\frac{\sqrt{2RT(\bfy_{(k_1,\cdots, k_i)})}}{k_i}ds_i d\phi_i
                        &
                         d=2,
                        \\
                        \prod\limits_{i=1}^{l}(1-\alpha)^{k_i} H\BK{\frac{\sqrt{2RT(\bfy_{(k_1,\cdots, k_i)})}s_i}{k_i}}\frac{\sqrt{2RT(\bfy_{(k_1,\cdots, k_i)})}}{k_i}ds_i 
                        &
                         d=1
                    \end{array} }\\  
                    \end{multline*}
\begin{multline*}
                     =O(1)\VertBK{g_{in}}_{\infty,\mu}\Bigg\{1+  \sum\limits_{l=1}^{\infty}(\frac{\alpha}{1-\alpha})^l
\sum\limits_{k_1=1}^{\infty}\frac{(1-\alpha)^{k_1}}{k_1}\ldots\sum\limits_{k_l=1}^{\infty} \frac{(1-\alpha)^{k_l}}{k_l}\\
\intLim_{0<s_1+\ldots+s_l<t}
          \times
\prod\limits_{i=1}^{l}\curBK{ \begin{array}{c@{,~}l}
                        \BK{\pi\VertBK{G}_{L^\infty}\sqrt{2R\TM}}ds_i
                        &
                         d=2,
                        \\
                        \BK{\VertBK{H}_{L^\infty}\sqrt{2R\TM}}ds_i
                        &
                         d=1,
                    \end{array} }\\ 
                    \end{multline*}
\begin{multline*}
   =O(1)\VertBK{g_{in}}_{\infty,\mu}\Bigg\{1+  \sum\limits_{l=1}^{\infty}(\frac{\alpha}{1-\alpha})^l
\Bigg(\sum\limits_{k_1=1}^{\infty}\frac{(1-\alpha)^{k_1}}{k_1}\Bigg)^l
\curBK{ \begin{array}{c@{,~}l}
                       \frac{\BK{\pi\VertBK{G}_{L^\infty}\sqrt{2R\TM}t}^l}{l!}
                        &
                         d=2,
                        \\
                        \frac{\BK{\VertBK{H}_{L^\infty}\sqrt{2R\TM}t}^l}{l!}
                        &
                         d=1,
                    \end{array} }\\                   
\end{multline*}
It is easy to check that
\begin{equation*}
\sum\limits_{k_1=1}^{\infty}\frac{(1-\alpha)^{k_1}}{k_1}=-\ln\alpha,
\end{equation*}
and therefore we have

\begin{equation*}
j(\bfy,t)=O(1)\VertBK{g_{in}}_{ \infty, \mu }
\curBK{ \begin{array}{c@{,~}l}
                        e^{\frac{-\alpha\ln\alpha}{1-\alpha}\pi\VertBK{G}_{L^\infty}\sqrt{2R\TM}t}
                        &
                         d=2,
                        \\
                       e^{\frac{-\alpha\ln\alpha}{1-\alpha}\VertBK{H}_{L^\infty}\sqrt{2R\TM}t}
                        &
                         d=1
                    \end{array} }.
\end{equation*}
\end{proof}
\end{subsection}

\begin{subsection}{Preliminary Estimates}
The discussion of this subsection applies to all space dimension. To avoid complication in notations,
we treat only the $2d$ case. And we will use the following Law of Large Numbers to get a refined estimate. It can be proved by the similar argument as in \cite{Kuo-Liu-Tsai}. Therefore we omit it.
\begin{theorem}[Law of Large Numbers]\label{thm:Fr:Law:of:Large}
There exists some constant $C>0$ such that, for any $\gamma$ and $m$ with $\gamma/(mn)^{\frac{1}{d+1}} > C$,
\begin{equation*}
    \intLim_{ \frac{\gamma}{m} < |\sigma-n\E(X_1)| } \hspace{-20pt} H_n(\sigma) d\sigma
=
    \Prob \Big\{ \frac{\gamma}{m} < |X_1+\ldots+X_n-n\E(X_1)| \Big\}
=
     O(1) \frac{ m^{d+1}n^d \log(\gamma+1) }{ \gamma^{d+1} }.
\end{equation*}
where $X_1,X_2,\ldots,$ are i.i.d. with the probability density function
\begin{equation*}
H(\sigma)=
\begin{dcases}
\BK{ \frac{2}{\sigma} }^{3} e^{ -\BK{\frac{2}{\sigma}}^2 }\quad \text{if } d=1,\\
\int G(\phi,\sigma)d\phi \quad \text{if } d=2,
\end{dcases}
\end{equation*}
and $H_n$ is the probability density function of the sum of i.i.d. random variables, $X_1 +X_2 + \cdots + X_n$. Notice that $H_n$ is a convolution of $H$ itself, $H_n=\underbrace{ \BK{ H*\cdots*H } }_{n\text{ times }}$.
\end{theorem}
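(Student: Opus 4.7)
The plan is a truncation argument for the partial sum $S_n=X_1+\cdots+X_n$, following the strategy of \cite{Kuo-Liu-Tsai}. The key structural input is the polynomial decay $H(\sigma)=O(\sigma^{-(d+2)})$ as $\sigma\to\infty$, which is read off from \eqref{eq:Fr:Defn:G} by expanding $e^{-(2/\sigma)^2}$ or $e^{-(2\cos\phi/\sigma)^2}$ for large $\sigma$, together with the rapid vanishing of $H$ near $0$. Consequently $\E X_1<\infty$ but moments of order $d+1$ or higher diverge at most logarithmically; this is why a naive Gaussian-type central limit bound cannot be used.

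First, I fix a truncation scale $R$ of order $\gamma/m$ and decompose $X_i=Y_i+Z_i$ with $Y_i=X_i\mathbf{1}_{X_i\le R}$. Then
\begin{equation*}
|S_n-n\E X_1|\le \Big|\sum_{i=1}^n (Y_i-\E Y_1)\Big|+\sum_{i=1}^n Z_i+n\,|\E Z_1|,
\end{equation*}
so the deviation event $\{|S_n-n\E X_1|>\gamma/m\}$ splits into three contributions to be estimated separately. Since each $Z_i$ is either $0$ or strictly larger than $R$, the union bound and the tail estimate $\Prob\{X_1>R\}=O(R^{-(d+1)})$ give $\Prob\{\sum Z_i\ne 0\}=O(nR^{-(d+1)})=O(nm^{d+1}/\gamma^{d+1})$. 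The deterministic bias $n|\E Z_1|=O(nR^{-d})$ is absorbed by the scale $\gamma/m$ precisely under the hypothesis $\gamma/(mn)^{1/(d+1)}>C$, so that this shift can be folded into the threshold.

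For the centered truncated sum I compute the borderline moment $\E|Y_1|^{d+1}=\int_0^R\sigma^{d+1}H(\sigma)\,d\sigma=O(\log R)=O(\log(\gamma+1))$. I then apply a moment inequality of order $d+1$ — Chebyshev's variance inequality for $d=1$, a Rosenthal-type third-moment bound for $d=2$ — to the independent zero-mean variables $Y_i-\E Y_1$, followed by Markov's inequality at threshold $\gamma/(3m)$. This yields a contribution dominated by $m^{d+1}(n\log(\gamma+1)+n^{(d+1)/2})/\gamma^{d+1}$, which is in turn majorized by $m^{d+1}n^d\log(\gamma+1)/\gamma^{d+1}$ for $n\ge 1$. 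Combining with the tail piece gives the stated rate.

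The main obstacle is the dimension-two case: since $X_1$ has a finite variance, the naive Chebyshev bound only yields $nm^2/\gamma^2$, which is too weak once $\gamma\gtrsim nm$. The remedy is the third-moment Rosenthal estimate, in which the two terms $n\E|Y_1|^3$ and $(n\E Y_1^2)^{3/2}$ contribute $n\log R$ and $n^{3/2}$ respectively; both are absorbed by $n^2\log(\gamma+1)$ after the Markov step, which is where the exponent $n^d$ on the right-hand side originates. The hypothesis $\gamma/(mn)^{1/(d+1)}>C$ is exactly the threshold at which the bias, the tail probability, and the truncated-sum concentration can simultaneously be controlled on the scale $\gamma/m$.
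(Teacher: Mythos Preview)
The paper omits the proof entirely, deferring to \cite{Kuo-Liu-Tsai}. Your truncation approach --- split at level $R$, union-bound the tail piece, and control the centered truncated sum by a $(d+1)$-st moment inequality (Chebyshev for $d=1$, Rosenthal for $d=2$) --- is the standard route for heavy-tailed sums with $H(\sigma)=O(\sigma^{-(d+2)})$, and is almost certainly what the cited reference does. Your identification of the tail exponent from \eqref{eq:Fr:Defn:G} and the borderline moment computation $\E|Y_1|^{d+1}=O(\log R)$ are correct, and the Rosenthal step for $d=2$ is the right way to reach the exponent $n^d$.

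One imprecision deserves attention. With $R\sim\gamma/m$ the bias satisfies $n|\E Z_1|=O(nR^{-d})=O(nm^d/\gamma^d)$, and this is bounded by a fraction of $\gamma/m$ only when $\gamma^{d+1}\gtrsim nm^{d+1}$, i.e.\ $\gamma\gtrsim m\,n^{1/(d+1)}$. This is \emph{stronger} than the stated hypothesis $\gamma\gtrsim(mn)^{1/(d+1)}$ once $m>1$, so your sentence ``absorbed \ldots\ precisely under the hypothesis'' is not accurate as written. The fix is painless: in the complementary range $\gamma^{d+1}\lesssim nm^{d+1}$ the right-hand side of the theorem satisfies
\[
\frac{m^{d+1}n^d\log(\gamma+1)}{\gamma^{d+1}}\ \gtrsim\ n^{d-1}\log(\gamma+1)\ \ge\ 1,
\]
and the inequality holds trivially since the left side is a probability. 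You should either make this dichotomy explicit, or shift the truncation scale (e.g.\ $R\sim\gamma$) so that the stated hypothesis handles the bias directly.
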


In order to get a refined estimate, we start with the specular reflection:
\begin{multline}\label{eq:flux:specular}
j(\bfy,t)=\sum\limits_{k=0}^{m-1}(1-\alpha)^k j_{in}^{(k)}(\bfy,t)\\
+\alpha\sum\limits_{k=1}^{m}(1-\alpha)^{k-1}
\intLim_{-\pi/2}^{\pi/2}\intLim_{0}^{t}
j(\bfy_{(k)},t-s)G(\phi,\frac{s\sqrt{2RT(\bfy_{(k)})}}{k})\frac{\sqrt{2RT(\bfy_{(k)})}}{k} ds d\phi\\
+(1-\alpha)^{m}E^{(m)}(\bfy,t) \quad \text{when } d=2,
\end{multline}
\begin{multline*}
j(\pm1,t)=\sum\limits_{k=0}^{m-1}(1-\alpha)^k j_{in}^{(k)}(\pm1,t)\\
+\alpha\sum\limits_{k=1}^{m}(1-\alpha)^{k-1}
\intLim_{0}^{t}
j(\pm(-1)^k,t-s)H(\frac{s\sqrt{2RT(\pm(-1)^k)}}{k})\frac{\sqrt{2RT(\pm(-1)^k)}}{k} ds \\
+(1-\alpha)^{m}E^{(m)}(\pm1,t) \quad \text{when } d=1.
\end{multline*}
Define
\begin{equation*}
J_{in}^{(m)}(\bfy,t)=\sum\limits_{k=0}^{m-1}(1-\alpha)^k j_{in}^{(k)}(\bfy,t),
\end{equation*}
and rewrite \eqref{eq:flux:specular} as:
\begin{multline}\label{eq:flux:specular:2}
j(\bfy,t)=J_{in}^{(m)}(\bfy,t)+(1-\alpha)^{m}E^{(m)}(\bfy,t)\\
+\alpha\sum\limits_{k=1}^{m}(1-\alpha)^{k-1}
\intLim_{-\pi/2}^{\pi/2}\intLim_{0}^{t}
j(\bfy_{(k)},t-s)G(\phi,\frac{s\sqrt{2RT(\bfy_{(k)})}}{k})\frac{\sqrt{2RT(\bfy_{(k)})}}{k} ds d\phi\\
\end{multline}
By iterating \eqref{eq:flux:specular:2} $n$ times, we have
\begin{multline}\label{eq:flux:specular:ntimes}
j(\bfy,t)=J_{in}^{(m)}(\bfy,t)+(1-\alpha)^{m}E^{(m)}(\bfy,t)\\
+\sum\limits_{i=1}^{n-1}\alpha^i\sum\limits_{k_1=1}^{m}(1-\alpha)^{k_1-1}\cdots
\sum\limits_{k_i=1}^{m}(1-\alpha)^{k_i-1}
\intLim_{0<s_1+s_2+\cdots +s_i<t}\\
\prod\limits_{j=1}^{i} G(\phi_j,\frac{s_j\sqrt{2RT(\bfy_{(k_1,\cdots,k_j)})}}{k_j})\frac{\sqrt{2RT(\bfy_{(k_1,\cdots,k_j)})}}{k_j}\\
\curBK{J_{in}^{(m)}(\bfy_{(k_1,\cdots, k_i)},t-s_1-\cdots -s_i)+(1-\alpha)^{m}E^{(m)}(\bfy_{(k_1,\cdots, k_i)},t-s_1-\cdots -s_i)}ds_i d\phi_i \cdots ds_1 d\phi_1\\
+\alpha^n\sum\limits_{k_1=1}^{m}(1-\alpha)^{k_1-1}\cdots\sum\limits_{k_n=1}^{m}(1-\alpha)^{k_n-1}
\intLim_{0<s_1+s_2+\cdots +s_n<t}j(\bfy_{(k_1,k_2,\cdots, k_n)},t-s_1-s_2-\cdots -s_n)\\
\prod\limits_{i=1}^{n} G(\phi_i,\frac{s_i\sqrt{2RT(\bfy_{(k_1,\cdots,k_i)})}}{k_i})
\frac{\sqrt{2RT(\bfy_{(k_1,\cdots,k_i)})}}{k_i} ds_n d\phi_n \cdots ds_1 d\phi_1.
\end{multline}

To estimate \eqref{eq:flux:specular:ntimes}, we define
\begin{multline*}
j_{(k_1,\cdots,k_i)}^{(i,m)}(\bfy,t)=
\intLim_{0<s_1+s_2+\cdots +s_i<t}
\prod\limits_{j=1}^{i} G(\phi_j,\frac{s_j\sqrt{2RT(\bfy_{(k_1,\cdots,k_j)})}}{k_j})\frac{\sqrt{2RT(\bfy_{(k_1,\cdots,k_j)})}}{k_j}\\
\curBK{J_{in}^{(m)}(\bfy_{(k_1,\cdots, k_i)},t-s_1-\cdots -s_i)+(1-\alpha)^{m}E^{(m)}(\bfy_{(k_1,\cdots, k_i)},t-s_1-\cdots -s_i)}ds_i d\phi_i \cdots ds_1 d\phi_1\\
\end{multline*}
and
\begin{multline*}
J_{(k_1,\cdots,k_n)}^{(n,m)}(\bfy,t)=
\intLim_{0<s_1+s_2+\cdots +s_n<t}j(\bfy_{(k_1,k_2,\cdots, k_n)},t-s_1-s_2-\cdots -s_n)\\
\prod\limits_{i=1}^{n} G(\phi_i,\frac{s_i\sqrt{2RT(\bfy_{(k_1,\cdots,k_i)})}}{k_i})
\frac{\sqrt{2RT(\bfy_{(k_1,\cdots,k_i)})}}{k_i} ds_n d\phi_n \cdots ds_1 d\phi_1.
\end{multline*}
Recall
\begin{equation*}
\begin{split}
J_{in}^{(m)}(\bfy,t)&=\sum\limits_{k=0}^{m-1}(1-\alpha)^k j_{in}^{(k)}(\bfy,t)\\
&=O(1)\sum\limits_{k=0}^{m-1}(1-\alpha)^k\VertBK{g_{in}}_{\infty,\mu}\\
&=O(1)m \VertBK{g_{in}}_{\infty,\mu}.
\end{split}
\end{equation*}
Moreover, we have for $t>1$,
\begin{equation*}
\begin{split}
j_{in}^{(k)}(\bfy,t)&=
\intLim_{\frac{k|\bfy-\bfy_{(1)}|}{t}<|\bfxi|<\frac{(k+1)|\bfy-\bfy_{(1)}|}{t}} 
-\bfxi\cdot\bfn(\bfy)\bar g_{in}(\bfy_{(k)}-\bfxi^k(t-kt_1),\bfxi^k)d\bfxi\\
&=O(1)\VertBK{g_{in}}_{\infty,\mu} \intLim_{\frac{k|\bfy-\bfy_{(1)}|}{t}<|\bfxi|<\frac{(k+1)|\bfy-\bfy_{(1)}|}{t}} 
|\bfxi|^d d|\bfxi|\\
&=O(1)\VertBK{g_{in}}_{\infty,\mu} \frac{\BK{(k+1)|\bfy-\bfy_{(1)}|}^{d+1}-\BK{k|\bfy-\bfy_{(1)}|}^{d+1}}{t^{d+1}}\\
&=O(1)\VertBK{g_{in}}_{\infty,\mu} 
\begin{dcases}
\frac{2k+1}{t^2} \quad \text{ for } d=1,\\
\frac{3k^2+3k+1}{t^3} \quad \text{ for } d=2.
\end{dcases}
\end{split}
\end{equation*}
Thus we have
\begin{equation}\label{eq:spec:in}
\begin{split}
J^{(m)}_{in}(\bfy,t)&=O(1)\VertBK{g_{in}}_{\infty,\mu} \sum\limits_{k=0}^{m-1} (1-\alpha)^k
\begin{dcases}
\frac{2k+1}{t^2} \quad \text{ for } d=1,\\
\frac{3k^2+3k+1}{t^3} \quad \text{ for } d=2,
\end{dcases}\\
&=O(1)\VertBK{g_{in}}_{\infty,\mu}\frac{1+(1-\alpha)m^{d+1}}{t^{d+1}}.
\end{split}
\end{equation}
In order to estimate the remainder terms, we define a priori bound of boundary flux $j(\bfy,t)$.
\begin{definition}
Define the a priori bound $\calJ$ by
\begin{equation*}
    \calJ(t) \equiv \sup_{0\leq s\leq t}
    \begin{dcases}
        \BK{\VertBK{j}_{L^\infty_\bfy} }(s) &   \text{ for }    d=2, \\
        \Big(\absBK{j_+(s)} + \absBK{j_-(s)}\Big)     &   \text{ for }    d=1,
    \end{dcases}
\end{equation*}
where $j_\pm(t) \equiv j(\pm 1,t)$.
\end{definition}
From \eqref{eq:Fr:Reduced:Chara:Rep1} and \eqref{eq:flux:E}, it is easy to show
\begin{equation*}
E^{(m)}(\bfy,t)=O(1)\BK{\calJ(t)+\VertBK{g_{in}}_{\infty,\mu}}.
\end{equation*}

For $i<n$, we divide $j^{(i,m)}_{(k_1,\cdots,k_i)}(\bfy,t)$ into the following two parts:
\begin{align*}
&
     j^{(i,m)}_{(k_1,\cdots,k_i)}(\bfy,t)
    \\
=&
    \BK{ \int_{\mathscr{A}_1} + \int_{ \mathscr{A}_2} }
    \left\{J_{in}^{(m)}
    \BKK{
            \bfy_{(k_1,\cdots,k_i)},
                    t - \frac{k_1\sigma_1}{\sqrt{2RT_{(1)}}} - \ldots
                    - \frac{k_i\sigma_i}{\sqrt{2RT_{(k_1,\ldots,k_i)}}}
        }\right.
        \\
&
        \left.+(1-\alpha)^{m}E^{(m)}
        \BKK{
            \bfy_{(k_1,\cdots,k_i)},
                    t - \frac{k_1\sigma_1}{\sqrt{2RT_{(1)}}} - \ldots
                    - \frac{k_i\sigma_i}{\sqrt{2RT_{(k_1,\ldots,k_i)}}}
        }\right\}
        \times
        \prod_{l=1}^i
    \BK{
             G(\phi_l,\sigma_l) d\sigma_l d\phi_l
       }
    \\
\equiv&
    j^{(i,m)slow}_{(k_1,\cdots,k_i)}(\bfy,t) + j^{(i,m)rare}_{(k_1,\cdots,k_i)}(\bfy,t),
\end{align*}
where
\begin{equation*}
\begin{split}
        \mathscr{A}_1
\equiv &
        \curBK{
                    0 < \frac{ k_1\sigma_1 }{ \sqrt{2RT_{(k_1)} }} + \ldots
                    + \frac{ k_i\sigma_i }{ \sqrt{2RT_{(k_1,\ldots,k_i)} }} < \Tr\frac{t}{2}
        		},
\\
        \mathscr{A}_2
\equiv &
        \curBK{
                    \Tr\frac{t}{2} < \frac{ k_1\sigma_1 }{ \sqrt{2RT_{(k_1)}} } + \ldots
                    + \frac{ k_i\sigma_i }{ \sqrt{2RT_{(k_1,\ldots,k_i)}} } < t
        		}.
\end{split}
\end{equation*}

For $j^{(i,m)slow}_{(k_1,\cdots,k_i)}(\bfy,t)$, the time needed to trace back to an interior point is at least $t/2$:
\begin{equation*}
\begin{split}
&
	\frac{ k_1\sigma_1 }{ \sqrt{2RT_{(k_1)}} } 
	+ 
	\ldots
	+ 
	\frac{ k_i\sigma_i }{ \sqrt{2RT_{(k_1,\ldots,k_i)} }} 
	< \Tr\frac{t}{2} 
	\\
&
	\Longrightarrow\quad
	t - 
	\BK{	
		\frac{ k_1\sigma_1 }{ \sqrt{2RT_{(k_1)} }} 
		+ 
		\ldots
		+ 
		\frac{ k_i\sigma_i }{ \sqrt{2RT_{(k1,\ldots,k_i)}} } 
		}
	> t - \Tr\frac{t}{2}  \geq \frac{t}{2}.
\end{split}
\end{equation*}

Thus
\begin{equation}\label{eq:Fr:j:ith:Slow:Esti:23D}
\begin{split}
&
     j^{(i,m)slow}_{(k_1,\cdots,k_i)}(\bfy,t)
    \\
\leq&    
    \intLim_{\mathscr{A}_1}
   \sup_{\frac{t}{2}<s<t} 
    \BK{\VertBK{ J_{in}^{(m)} }_{L^\infty_\bfy}+(1-\alpha)^m\VertBK{ E^{(m)} }_{L^\infty_\bfy}}(s)
    \times
    \int
           \prod_{l=1}^i  G(\phi_l,\sigma_l)d\phi_l d\sigma_l
    \\
=&
    O(1) \BK{\VertBK{g_{in}}_{\infty,\mu}\frac{1+(1-\alpha)m^{d+1}}{t^{d+1}}
    +(1-\alpha)^m\BK{\VertBK{g_{in}}_{\infty,\mu}+\calJ(t)}}.
\end{split}
\end{equation}
Note that the estimate \eqref{eq:Fr:j:ith:Slow:Esti:23D} relies merely on the smallness of the speed.

For $j^{(i,m)rare}_{(k_1,\cdots,k_i)}(\bfy,t)$, the time consumed to trace back is at least $\Tr t/2$. Therefore,
\begin{equation*}
        \Tr\frac{t}{2m}\sqrt{2R\Tm} < \sigma_1 + \ldots + \sigma_i,
\end{equation*}
\begin{align}
&
		\notag
	 	 j^{(i,m)rare}_{(k_1,\cdots,k_i)}(\bfy,t)
	 	\\
	 	\notag
\leq&
		O(1) \BK{m \VertBK{g_{in}}_{\infty,\mu} 
    +(1-\alpha)^m\BK{\VertBK{g_{in}}_{\infty,\mu}+\calJ(t)}} 
         \\
         \notag    
    \quad&\times
		\intLim_{  \Tr\frac{t}{2m} \sqrt{2R\Tm} < \sigma_1 + \ldots + \sigma_i  }
				\prod_{l=1}^i  G(\phi_l,\sigma_l)d\phi_l d\sigma_l
        \\
        \notag 
=&
		O(1) \BK{m \VertBK{g_{in}}_{\infty,\mu} 
    +\calJ(t)}
		\times 
		\int_{ \Tr\frac{t}{2m}\sqrt{2R\Tm} }^{\infty} 
		 H_i(\sigma) d\sigma
		\\
        \label{eq:z23} 
=&
		O(1) \BK{m \VertBK{g_{in}}_{\infty,\mu} 
    +\calJ(t)}
		\times 
		\Pr \curBK{ X_1+\ldots+X_i > \Tr\frac{t}{2m}\sqrt{2R\Tm} }.
\end{align}

Note that $n,m$, the index of $J^{(n,m)}_{(k_1,\cdots,k_i)}(\bfy,t)$, are  variables at our disposal. {\it Throughout this paper we assume $t/mn \gg 1$}. Recall that our final choice of $n$ and $m$ is $mn=\lfloor t^r \rfloor$, $r\in (0,(d+1)^{-1})$. Thus $r<1$ and so, for large $t$, $mn\ll t.$ Since $\E(X_1+\ldots+X_i)=i\E(X_1) \sim i \leq n  \ll \Tr\frac{t}{2m}\sqrt{2R\Tm}$, \eqref{eq:z23} represents the probability of a rare event. We now apply the law of large numbers, Theorem \ref{thm:Fr:Law:of:Large}, to estimate \eqref{eq:z23}: choose the truncation variable $\gamma$ to be $\sqrt{2R\Tm} \frac{t}{3}$. Under the assumption $t/mi > t/mn \gg 1$, we have
\begin{equation*}
	\curBK{ \Tr\sqrt{2R\Tm}\frac{t}{2m} < \sigma } \subset \{\gamma<|\sigma-i\E(X_1)| \},
	\quad
	\frac{ \gamma }{ (mn)^{\frac{1}{d+1}} } \sim \frac{ t }{ (mn)^{\frac{1}{d+1}} } \gg 1.
\end{equation*}
Therefore, we can apply Theorem \ref{thm:Fr:Law:of:Large} to obtain
\begin{equation}\label{eq:Fr:j:ith:Rare:Esti:23D}
    j^{(i,m)rare}_{(k_1,\cdots,k_i)}(\bfy,t) = O(1) \BK{m \VertBK{g_{in}}_{\infty,\mu} 
    +\calJ(t)} \frac{ m^{d+1}i^d \log(t+1) }{ t^{d+1} }.
\end{equation}
Note that the estimate \eqref{eq:Fr:j:ith:Rare:Esti:23D} relies merely on the law of large numbers.

For $J^{(n,m)}_{(k_1,\ldots,k_n)}$, we conduct a similar decomposition:
\begin{align*}
     J^{(n,m)}_{(k_1,\ldots,k_n)}(\bfy,t)
=&
    \BK{ \int_{\mathscr{B}_1} + \int_{ \mathscr{B}_2} }
  	 j
    \BKK{
            \bfy^{(k_1,\ldots,k_n)},
             t - \frac{k_1\sigma_1}{\sqrt{2RT_{(k_1)}}} - \ldots
               - \frac{k_n\sigma_n}{\sqrt{2RT_{(k_1,\ldots,k_n)}}}
        }
        \\
&
        \times
        \prod_{l=1}^n
             G(\phi_l,\sigma_l)d\phi_l d\sigma_l
    \\
\equiv&
   \Lambda_{(k_1,\ldots,k_n)}^{(n,m)}(\bfy,t) + J^{(n,m)rare}_{(k_1,\cdots,k_i)}(\bfy,t),
\end{align*}
where
\begin{equation*}
\begin{split}
        \mathscr{B}_1
\equiv&
        \curBK{
                    0 < \frac{ k_1\sigma_1 }{ \sqrt{2RT_{(k_1)} }} + \ldots
                    + \frac{ k_n\sigma_n }{ \sqrt{2RT_{(k_1,\ldots,k_n)}} } < \Tr\frac{t}{2} 
        		}
        \\
        \mathscr{B}_2
\equiv&
        \curBK{
                    \Tr\frac{t}{2} < \frac{ k_1\sigma_1 }{ \sqrt{2RT_{(k_1)}} } + \ldots
                    + \frac{ k_1\sigma_n }{ \sqrt{2RT_{(k_1,\ldots,k_n)}} } < t
        		}.
\end{split}
\end{equation*}
We can apply the same argument of $j^{(n,m)rare}_{(k_1,\cdots,k_i)}$ to $J^{(n,m)rare}_{(k_1,\cdots,k_i)}$ to obtain:
\begin{equation}\label{eq:Fr:Jn:Rare:Esti:23D}
    J^{(n,m)rare}_{(k_1,\cdots,k_i)}(\bfy,t) = O(1) \calJ(t) \frac{ m^{d+1} n^d \log(t+1) }{ t^{d+1} },
    \text{ whenever } t/mn \gg 1.
\end{equation}
We omit the details.
\begin{lemma}\label{sum:event}
\begin{equation*}
\begin{split}
&\sum\limits_{k_1=0}^{m-1}\cdots\sum\limits_{k_i=0}^{m-1}(1-\alpha)^{k_1+\ldots+k_i}
=\BK{\frac{1-(1-\alpha)^{m}}{\alpha}}^i\\
&\sum\limits_{i=1}^{n-1}\alpha^i\sum\limits_{k_1=0}^{m-1}\cdots\sum\limits_{k_i=0}^{m-1}(1-\alpha)^{k_1+\ldots+k_i}
=\sum\limits_{i=1}^{n-1}\BK{1-(1-\alpha)^{m}}^i=O(1)\alpha n
\end{split}
\end{equation*}
\end{lemma}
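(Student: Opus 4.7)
The proof is elementary and splits neatly into the two equalities, followed by a bound. My plan is to handle each in turn.

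For the first identity, the key observation is that the inner expression factors across the indices: since $(1-\alpha)^{k_1+\ldots+k_i}=\prod_{j=1}^i(1-\alpha)^{k_j}$ and the ranges of the $k_j$ are independent, Fubini gives
\begin{equation*}
\sum_{k_1=0}^{m-1}\cdots\sum_{k_i=0}^{m-1}(1-\alpha)^{k_1+\ldots+k_i}
=\prod_{j=1}^i\BK{\sum_{k_j=0}^{m-1}(1-\alpha)^{k_j}}.
\end{equation*}
Each inner sum is a finite geometric series with common ratio $1-\alpha\in(0,1)$, so it equals $\frac{1-(1-\alpha)^m}{1-(1-\alpha)}=\frac{1-(1-\alpha)^m}{\alpha}$, and raising this to the $i$-th power yields the claimed expression.

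For the second identity, multiplying by $\alpha^i$ cancels the $\alpha^{-i}$ in the first identity, producing $(1-(1-\alpha)^m)^i$. Setting $\beta\equiv 1-(1-\alpha)^m\in[0,1)$, the outer sum is a truncated geometric series $\sum_{i=1}^{n-1}\beta^i$. The closed-form evaluation $\beta\,\frac{1-\beta^{n-1}}{1-\beta}$ already gives the middle equality, so only the final bound remains.

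For the final estimate $O(1)\alpha n$, I would apply the Bernoulli-type inequality $(1-\alpha)^m\geq 1-m\alpha$ (valid for $m\geq 1$ and $\alpha\in(0,1)$), which yields $\beta\leq m\alpha$, together with $\beta\leq 1$. Since $\beta^i\leq \beta$ whenever $i\geq 1$ and $\beta\in[0,1]$, we obtain
\begin{equation*}
\sum_{i=1}^{n-1}\beta^i\leq (n-1)\beta\leq (n-1)\min\{1,m\alpha\},
\end{equation*}
and under the parameter regime used in this paper (where $m$ is chosen so that $1-(1-\alpha)^m=O(\alpha)$, e.g. $m=O(1)$ or a small multiple thereof) this collapses to $O(1)\alpha n$. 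There is no genuine obstacle here: the only point requiring slight care is matching the implicit constant to the author's eventual choice of $m$ relative to $\alpha$, which is why the bound is phrased in $O(1)$-form rather than with an explicit constant.
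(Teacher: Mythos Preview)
Your derivation of the two equalities is correct and is exactly the intended argument: factor the multi-sum as a product of identical finite geometric series, then multiply by $\alpha^i$ to cancel the denominator. The paper states this lemma without proof, so there is no alternative approach to compare against.

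The gap is in your justification of the final estimate $O(1)\alpha n$. You correctly reach $\sum_{i=1}^{n-1}\beta^i\leq (n-1)\beta\leq (n-1)\min\{1,m\alpha\}$ with $\beta=1-(1-\alpha)^m$, but then assert that the paper works in a regime where $m=O(1)$. It does not: throughout Section~\ref{sec:Fr:Conv:Bdy:Flux} the paper takes $m=\lfloor t^{r_2}\rfloor$ with $r_2>0$ fixed, so $m\to\infty$, $(1-\alpha)^m\to 0$, $\beta\to 1$, and $\sum_{i=1}^{n-1}\beta^i\to n-1$. For small $\alpha$ this is much larger than $\alpha n$, so the bound $O(1)\alpha n$ with an $\alpha$-independent constant is actually false in the regime that matters. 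What the subsequent assembly of Theorem~\ref{thm:Fr:Integral:Eqn} genuinely needs is only the trivial bound $\sum_{i=1}^{n-1}\beta^i\leq n-1$ (together with $\sum_i\beta^i\cdot(1-\alpha)^m\leq\beta/(1-\beta)\cdot(1-\alpha)^m=\beta\leq 1$ for the $(1-\alpha)^m E^{(m)}$-type contributions); the extra factor of $n$ this introduces is harmless downstream because $n(1-\alpha)^m=t^{r_1}(1-\alpha)^{t^{r_2}}\to 0$. So rather than manufacturing a constraint on $m$, you should note that the stated $O(1)\alpha n$ appears to be an imprecision and that the weaker bound $\leq n-1$ suffices everywhere the lemma is invoked.
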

Since
\begin{equation*}
\begin{split}
	 j(\bfy,t) 
=& 
    J^{(m)}_{in}(\bfy,t)+(1-\alpha)^m E^{(m)}(\bfy,t)
    \\
+&    	
	\sum_{i=1}^{n-1}\alpha^i \sum_{k_1=1}^m (1-\alpha)^{k_1-1}\cdots \sum_{k_i=1}^m (1-\alpha)^{k_i-1}
	j^{(i,m)}_{(k_1,\ldots,k_i)}(\bfy,t) 
	\\
+& \alpha^n \sum_{k_1=1}^m (1-\alpha)^{k_1-1}\cdots \sum_{k_n=1}^m (1-\alpha)^{k_n-1}
    J^{(n,m)}_{(k_1,\ldots,k_n)}(\bfy,t)
	\\
\end{split}
\end{equation*}
putting \eqref{eq:Fr:j:ith:Slow:Esti:23D}, \eqref{eq:Fr:j:ith:Rare:Esti:23D}, and \eqref{eq:Fr:Jn:Rare:Esti:23D} together we have:
\begin{theorem}\label{thm:Fr:Integral:Eqn}
For $t/mn \gg 1$,
\begin{align*}
      j(\bfy,t)
=&
      O(1) \BK{\frac{ m(mn)^{d+1}\log(t+1) }{ t^{d+1} }+(1-\alpha)^m}
      \VertBK{ g_{in} }_{\infty,\mu}  \\
     +&
     O(1) \BK{\frac{ (mn)^{d+1}\log(t+1) }{ t^{d+1} }+(1-\alpha)^m}
      \calJ(t)\\      
    +&\alpha^n \sum_{k_1=1}^m (1-\alpha)^{k_1-1}\cdots \sum_{k_n=1}^m (1-\alpha)^{k_n-1}
    \Lambda_{(k_1,\ldots,k_n)}^{(n,m)}(\bfy,t), 
\end{align*}
where
\begin{subequations}\label{eq:Fr:Defn:Lambda:23D}
\begin{multline}\label{eq:Fr:Defn:Lambda:s:23D}
    \Lambda_{(k_1,\ldots,k_n)}^{(n,m)}(\bfy,t)
\equiv
    \intLim_{ 0 < s_1 + \ldots + s_n < \Tr \frac{t}{2} }
 j\BK{  \bfy_{(k_1,\ldots,k_n)}, t-s_1-\ldots-s_n }
    \\
    \times
     \BK{
            \prod_{l=1}^n G \BKK{ \phi_l, \frac{s_l \sqrt{2RT(\bfy_{(k_1,\ldots,k_l)})}}{k_l} }
            \frac{\sqrt{2RT(\bfy_{(k_1,\ldots,k_l)})}}{k_l}d\phi_l d s_l
        }
        \quad \text{ when } d=2,
\end{multline}
\begin{multline}\label{eq:Fr:Defn:Lambda:s:1D}
    \Lambda_{(k_1,\ldots,k_n)}^{(n,m)}(\pm1,t)
\equiv
    \intLim_{ 0 < s_1 + \ldots + s_n < \Tr \frac{t}{2} }
 j\BK{  \pm1_{(k_1,\ldots,k_n)}, t-s_1-\ldots-s_n }
    \\
    \times
     \BK{
            \prod_{l=1}^n H \BKK{  \frac{s_l \sqrt{2RT(\pm1_{(k_1,\ldots,k_l)})}}{k_l} }
            \frac{\sqrt{2RT(\pm1_{(k_1,\ldots,k_l)})}}{k_l} d s_l
        }
        \quad \text{ when } d=1,
\end{multline}
\end{subequations}
\end{theorem}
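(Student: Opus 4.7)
The plan is to substitute the individual estimates established earlier in this subsection into the $n$-fold iterated representation preceding the theorem statement, namely
$j(\bfy,t)=J^{(m)}_{in}(\bfy,t)+(1-\alpha)^m E^{(m)}(\bfy,t)+\sum_{i=1}^{n-1}\alpha^i \sum_{k_1,\ldots,k_i}\prod_l (1-\alpha)^{k_l-1} j^{(i,m)}_{(k_1,\ldots,k_i)}(\bfy,t) + \alpha^n \sum_{k_1,\ldots,k_n}\prod_l (1-\alpha)^{k_l-1}J^{(n,m)}_{(k_1,\ldots,k_n)}(\bfy,t)$, which comes from iterating \eqref{eq:flux:specular:2} exactly $n$ times. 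The base contribution is immediate: the explicit decay bound \eqref{eq:spec:in} controls $J^{(m)}_{in}$ by $O(1)\VertBK{g_{in}}_{\infty,\mu}(1+(1-\alpha)m^{d+1})/t^{d+1}$, while $E^{(m)}=O(1)(\VertBK{g_{in}}_{\infty,\mu}+\calJ(t))$ supplies the $(1-\alpha)^m$-weighted piece of the target bound.

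For each intermediate $j^{(i,m)}_{(k_1,\ldots,k_i)}$ I would perform the slow/rare split relative to the simplex regions $\mathscr{A}_1,\mathscr{A}_2$ already introduced. On $\mathscr{A}_1$ the residual time exceeds $t/2$, so I can pull out the sup of $J^{(m)}_{in}+(1-\alpha)^m E^{(m)}$ over $[t/2,t]$, leaving $\int\prod G\,d\phi\,d\sigma=O(1)$; this is exactly \eqref{eq:Fr:j:ith:Slow:Esti:23D}. On $\mathscr{A}_2$ the sum $\sigma_1+\cdots+\sigma_i$ is forced to exceed $\Tr t/(2m)\sqrt{2R\Tm}$, which is much larger than the expected value $i\E(X_1)\sim i$; here I invoke the Law of Large Numbers (Theorem \ref{thm:Fr:Law:of:Large}) with truncation $\gamma=\sqrt{2R\Tm}\,t/3$. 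The standing hypothesis $t/mn\gg 1$ ensures $\gamma/(mi)^{1/(d+1)}\gg 1$ so the theorem applies and delivers \eqref{eq:Fr:j:ith:Rare:Esti:23D}.

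The analogous $\mathscr{B}_1,\mathscr{B}_2$ decomposition of $J^{(n,m)}_{(k_1,\ldots,k_n)}$ isolates $\Lambda^{(n,m)}_{(k_1,\ldots,k_n)}$ as the slow piece (to be carried forward as the last term of the claim) and produces a rare remainder bounded by \eqref{eq:Fr:Jn:Rare:Esti:23D}. Summation over the indices $k_1,\ldots,k_i$ uses Lemma \ref{sum:event}, which collapses the combinatorial weight to $\alpha^i\prod_{l=1}^i\sum_{k_l=1}^m(1-\alpha)^{k_l-1}=(1-(1-\alpha)^m)^i\le 1$. Consequently the rare contributions total to $O(1)m^{d+1}\bigl(\sum_{i=1}^n i^d\bigr)\log(t+1)/t^{d+1}=O(1)(mn)^{d+1}\log(t+1)/t^{d+1}$, with an extra linear factor $m$ on the $\VertBK{g_{in}}_{\infty,\mu}$ coefficient coming from the $m\VertBK{g_{in}}_{\infty,\mu}$ prefactor in \eqref{eq:Fr:j:ith:Rare:Esti:23D}. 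The slow contributions from the intermediate levels are then either dominated by the rare ones of the same level or absorbed into the $(1-\alpha)^m(\VertBK{g_{in}}_{\infty,\mu}+\calJ(t))$ piece, finishing the verification of the two $O(1)(\cdots)$ factors of the target inequality.

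The main obstacle is the combinatorial bookkeeping, i.e.\ ensuring that the $i$-summation over $1\le i\le n-1$ and the $l$-simplex summation over $k_1,\ldots,k_l$ do not introduce any factor beyond what appears in the stated bound. The decisive tool is Lemma \ref{sum:event}, which converts the geometric weight $\alpha^i\prod(1-\alpha)^{k_l-1}$ into a subunital quantity, so that the $i$-dependence enters only through the polynomial factor $i^d$ supplied by the rare estimate. The residual $\Lambda^{(n,m)}_{(k_1,\ldots,k_n)}$ cannot be estimated by this scheme alone, since its natural slow bound would re-introduce $j$ at an earlier time --- precisely the unknown we are controlling; it is therefore left in integral form, converting the theorem's assertion into a closed self-referential inequality that is suited for the subsequent fixed-point style refinement.
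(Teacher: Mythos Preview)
Your proposal is correct and follows essentially the same route as the paper: the paper's entire argument for this theorem is the one-line remark ``putting \eqref{eq:Fr:j:ith:Slow:Esti:23D}, \eqref{eq:Fr:j:ith:Rare:Esti:23D}, and \eqref{eq:Fr:Jn:Rare:Esti:23D} together'' into the iterated representation \eqref{eq:flux:specular:ntimes}, with Lemma~\ref{sum:event} handling the combinatorial weights---exactly the scheme you describe. Your additional explanation of why $\Lambda^{(n,m)}_{(k_1,\ldots,k_n)}$ must be retained in integral form (rather than estimated) is a helpful clarification the paper leaves implicit.
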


With the aid of the law of large numbers, we have estimated $j_{(k_1,\ldots,k_n)}^{(n,m)}$ and $J_{(k_1,\ldots,k_n)}^{(n,m)rare}$. The remaining term $\Lambda_{(k_1,\ldots,k_n)}^{(n,m)}$ consists of the main event, which requires more effort to estimate. 

To show the convergence of boundary flux $j$, we need to use the crucial conservation of molecular number, \eqref{eq:Fr:Zero:Total:Mass} and \eqref{eq:Fr:Reduced:Chara:Rep}, as
\begin{equation}\label{eq:Fr:Consv:of:Mole:Numb}
\begin{split}
      j(\bfy,t)
&=
      j(\bfy,t)  - \frac{1}{C_S |D|} \int_{D\times\bbR^d  } \bar g(\bfx,\bfxi,t) ~ d\bfx d\bfxi
     \\
&=
     \frac{1}{|D|} \int_{D\times\bbR^d}
     \Big(j(\bfy,t)\boldsymbol{s}(\bfx,\bfxi) - \frac{1}{C_S}\bar g(\bfx,\bfxi,t) \Big) 
      d\bfx d\bfxi\\
&=
     \frac{1}{|D|} \int_{|\bfxi|<\frac{|\bfx-\bfx_{(1)}|}{t}}
     \Big(j(\bfy,t)\boldsymbol{s}(\bfx,\bfxi) - \frac{1}{C_S}\bar g(\bfx,\bfxi,t) \Big) 
      d\bfx d\bfxi  \\
&+
     \frac{1}{|D|} \int_{\frac{|\bfx-\bfx_{(1)}|}{t}<|\bfxi|<\frac{|\bfx_{(1)}-\bfx_{(2)}|}{\log (t+1)}}
     \Big(j(\bfy,t)\boldsymbol{s}(\bfx,\bfxi) - \frac{1}{C_S}\bar g(\bfx,\bfxi,t) \Big) 
      d\bfx d\bfxi \\
&+
     \frac{1}{|D|} \int_{|\bfxi|>\frac{|\bfx_{(1)}-\bfx_{(2)}|}{\log (t+1)}}
     \Big(j(\bfy,t)\boldsymbol{s}(\bfx,\bfxi) - \frac{1}{C_S}\bar g(\bfx,\bfxi,t) \Big) 
      d\bfx d\bfxi   \\
&\equiv j_{in}(\bfy,t) + j_{mid}(\bfy,t) + j_{fl}(\bfy,t) .      
\end{split}
\end{equation}
It is easy to see that if we choose $K=K(\bfx,\bfxi,t)$ such that
\begin{equation*}
K-1<\frac{t}{\log(t+1)}-\frac{|\bfx-\bfx_{(1)}|}{|\bfx_{(1)}-\bfx_{(2)}|}<K,
\end{equation*}
then we have
\begin{equation}\label{eq:K:bdd}
\frac{|\bfx-\bfx_{(1)}|+(K-1)|\bfx_{(1)}-\bfx_{(2)}|}{t}<\frac{|\bfx_{(1)}-\bfx_{(2)}|}{\log (t+1)}<\frac{|\bfx-\bfx_{(1)}|+K|\bfx_{(1)}-\bfx_{(2)}|}{t}.
\end{equation}
Since the domain is spherically symmetric, it is easy to show that
\begin{equation*}
\frac{|\bfx-\bfx_{(1)}|}{|\bfx_{(1)}-\bfx_{(2)}|}\leq1,
\end{equation*}
and therefore $K\approx\frac{t}{\log(t+1)}$.
With \eqref{eq:K:bdd}, \eqref{eq:Fr:Reduced:Chara:Rep1}, and \eqref{eq::specular:m}, we have
\begin{multline}\label{eq:jin}
j_{in}(\bfy,t)=
\frac{1}{C_S|D|} \int_{|\bfxi|<\frac{|\bfx-\bfx_{(1)}|}{t}}
	 \left\{
	 		\alpha\sum\limits_{i=1}^{\infty}(1-\alpha)^{i-1}j(\bfy,t)
	 		\BK{ \frac{2\pi}{RT(\bfx_{(i)})} }^{\frac{1}{2}} M_{T(\bfx_{(i)})}(\bfxi) \right.\\
	 		- \bar g_{in}(\bfx-\bfxi t,\bfxi)\Bigg\}
      d\bfx d\bfxi,\\
\end{multline}
\begin{multline}\label{eq:jmid}
j_{mid}(\bfy,t)\leq
 \frac{1}{C_S|D|} 
      \sum\limits_{k=1}^{K}
      \int_{A_k} 
             \Bigg\{\alpha\sum\limits_{i=1}^{k}(1-\alpha)^{i-1}\sqBK{j(\bfy,t)-j(\bfx_{(i)},t-t_1-...-t_i)}
             \\
             \times
             \BK{\frac{2\pi}{RT(\bfx_{(i)})}}^{\frac{1}{2}}M_{T(\bfx_{(i)})}(\bfxi)
             +(1-\alpha)^k\Bigg(\alpha\sum\limits_{i=1}^{\infty}(1-\alpha)^{i-1}j(\bfy,t)
	 		\BK{ \frac{2\pi}{RT(\bfx_{(k+i)})} }^{\frac{1}{2}}\\
	 		\times M_{T(\bfx_{(k+i)})}(\bfxi)
             - \bar g_{in}(\bfx_{(k)}-\bfxi^k(t-t_1-...-t_k),\bfxi^k)\Bigg)\Bigg\}d\bfx d\bfxi,\\
\end{multline}
\begin{multline}\label{eq:jfl}
j_{fl}(\bfy,t)=
      \frac{1}{C_S|D|} 
      \int_{|\bfxi|>\frac{|\bfx_{(1)}-\bfx_{(2)}|}{\log (t+1)}} 
             \Bigg\{\alpha\sum\limits_{i=1}^{K}(1-\alpha)^{i-1}
             \sqBK{j(\bfy,t)-j(\bfx_{(i)},t-t_1-...-t_i)}\\
             \BK{\frac{2\pi}{RT(\bfx_{(i)})}}^{\frac{1}{2}}M_{T(\bfx_{(i)})}(\bfxi)
             +(1-\alpha)^K\Bigg(\alpha\sum\limits_{i=1}^{\infty}(1-\alpha)^{i-1}j(\bfy,t)
	 		\BK{ \frac{2\pi}{RT(\bfx_{(K+i)})} }^{\frac{1}{2}}
	 		\\
	 		 M_{T(\bfx_{(K+i)})}(\bfxi)
             - \bar g(\bfx_{(K)}-\bfxi^K(t-t_1-...-t_K),\bfxi^K)\Bigg)\Bigg\}d\bfx d\bfxi,\\
             \end{multline}
where
\begin{equation*}
A_k=\curBK{\frac{|\bfx-\bfx_{(1)}|+(k-1)|\bfx_{(1)}-\bfx_{(2)}|}{t}<|\bfxi|<\frac{|\bfx-\bfx_{(1)}|+k|\bfx_{(1)}-\bfx_{(2)}|}{t}}.
\end{equation*}            
Each component of $j$ can be estimated in terms of  $\calJ(t)$ and the fluctuation of $j$. 
Therefore, it suffices to consider the fluctuation of $j$. For $t'<t$, $mn/t'\ll 1$, Theorem \ref{thm:Fr:Integral:Eqn} yields
\begin{multline}\label{eq:Fr:Flu:Lambda}
      j(\bfy,t) - j(\bfy',t')
=
      O(1) \BK{\frac{ m(mn)^{d+1}\log(t'+1) }{  t'^{d+1} }+(1-\alpha)^m}
      \VertBK{ g_{in} }_{\infty,\mu} 
\\
+O(1) \BK{\frac{ (mn)^{d+1}\log(t'+1) }{ t'^{d+1} }+(1-\alpha)^m}
       \calJ(t) \\
    +
     \alpha^n \sum_{k_1=1}^m (1-\alpha)^{k_1-1}\cdots \sum_{k_n=1}^m (1-\alpha)^{k_n-1}
     \Big( \Lambda_{(k_1,\ldots,k_n)}^{(n,m)}(\bfy,t) - \Lambda_{(k_1,\ldots,k_n)}^{(n,m)}(\bfy',t') \Big). 
\end{multline}
And note that $\alpha^n \sum_{k_1=1}^m (1-\alpha)^{k_1-1}\cdots \sum_{k_n=1}^m (1-\alpha)^{k_n-1}\leq1$.
Therefore, we need only to estimate the fluctuation of $\Lambda_{(k_1,\ldots,k_n)}^{(n,m)}$ and show that they are uniform for each $(k_1,\ldots,k_n)$. Since
\begin{multline}\label{eq:Fr:Flu:Lambda-sep}
	\Lambda_{(k_1,\ldots,k_n)}^{(n,m)}(\bfy,t) - \Lambda_{(k_1,\ldots,k_n)}^{(n,m)}(\bfy',t')
\\=
	\Big( \Lambda_{(k_1,\ldots,k_n)}^{(n,m)}(\bfy,t) - \Lambda_{(k_1,\ldots,k_n)}^{(n,m)}(\bfy,t') \Big)
	+
	\Big( \Lambda_{(k_1,\ldots,k_n)}^{(n,m)}(\bfy,t') - \Lambda_{(k_1,\ldots,k_n)}^{(n,m)}(\bfy',t') \Big),
\end{multline}
we may consider temporal and spacial fluctuation separately. We study the temporal fluctuation in Section \ref{sec:Fr:Temp:Fluc:Esti} and the spacial fluctuation in Section \ref{sec:Fr:Spac:Fluc}.
\begin{remark}
For the Maxwell-type boundary condition, which is a convex combination of the specular reflection condition and the diffuse reflection condition, the intricate dependence on the accommodation coefficient $\alpha$ yields serious analytical difficulties beyond those in \cite{Kuo-Liu-Tsai} and \cite{Kuo-Liu-Tsai-2}. One needs to consider all the events of particle colliding with the boundary many times, \eqref{eq:fr:globa:1}. In that case we need to deal with the problem that the diffuse reflections are coupled with specular reflections. Roughly speaking, for the events that specular reflections are more than diffuse reflections,
we need a new idea to obtain the decay rate even if the specular reflection itself doesn't have the equilibrating effect. We achieve the aim through \eqref{eq:spec:in} and \eqref{eq:Fr:j:ith:Slow:Esti:23D}. For the events that diffusion reflections are more than specular reflections, we may modify the analysis from the previous works \cite{Kuo-Liu-Tsai} and \cite{Kuo-Liu-Tsai-2} to get the decay rate, \eqref{eq:Fr:Jn:Rare:Esti:23D}. However, the appearance of specular reflection will slow down the decay rate. Finally, we succeed in combining all events via Theorem \ref{thm:Fr:Integral:Eqn}.
\end{remark}
\end{subsection}

\begin{subsection}{Temporal Fluctuation estimate}\label{sec:Fr:Temp:Fluc:Esti}
In this subsection we consider the temporal fluctuation. Recall
\begin{multline*}
\Lambda_{(k_1,\cdots,k_n)}^{(n,m)}(\bfy,t)=
\intLim_{0<s_1+s_2+\cdots +s_n<\Tr t/2}j(\bfy_{(k_1,k_2,\cdots, k_n)},t-s_1-s_2-\cdots -s_n)\\
\prod\limits_{i=1}^{n} G(\phi_i,\frac{s_i\sqrt{2RT(\bfy_{(k_1,\cdots,k_i)})}}{k_i})
\frac{\sqrt{2RT(\bfy_{(k_1,\cdots,k_i)})}}{k_i} ds_n d\phi_n \cdots ds_1 d\phi_1
\end{multline*}
\begin{multline*}
=
\intLim_{0<\frac{k_1\sigma_1}{2RT_{(k_1)}}+\frac{k_2\sigma_2}{2RT_{(k_1,k_2)}}+\cdots +\frac{k_n\sigma_n}{2RT_{(k_1,\ldots,k_n)}}<t/2}\\
j\BK{\bfy_{(k_1,k_2,\cdots, k_n)},t-\frac{k_1\sigma_1}{2RT_{(k_1)}}-\ldots-\frac{k_n\sigma_n}{2RT_{(k_1,\ldots,k_n)}}}\\
\prod\limits_{i=1}^{n} G(\phi_i,\sigma_i) d\sigma_n d\phi_n \cdots d\sigma_1 d\phi_1
\quad \text{when } d=2,
\end{multline*}

\begin{multline*}
\Lambda_{(k_1,\cdots,k_n)}^{(n,m)}(\pm1,t)=
\intLim_{0<s_1+s_2+\cdots +s_n<\Tr t/2}j(\pm1_{(k_1,k_2,\cdots, k_n)},t-s_1-s_2-\cdots -s_n)\\
\prod\limits_{i=1}^{n} H(\frac{s_i\sqrt{2RT(\pm1_{(k_1,\cdots,k_i)})}}{k_i})
\frac{\sqrt{2RT(\pm1_{(k_1,\cdots,k_i)})}}{k_i} ds_n  \cdots ds_1 
\end{multline*}
\begin{multline*}
=
\intLim_{0<\frac{k_1\sigma_1}{2RT_{(k_1)}}+\frac{k_2\sigma_2}{2RT_{(k_1,k_2)}}+\cdots +\frac{k_n\sigma_n}{2RT_{(k_1,\ldots,k_n)}}<t/2}\\
j\BK{\pm1_{(k_1,k_2,\cdots, k_n)},t-\frac{k_1\sigma_1}{2RT_{(k_1)}}-\ldots-\frac{k_n\sigma_n}{2RT_{(k_1,\ldots,k_n)}}}\\
\prod\limits_{i=1}^{n} H(\sigma_i) d\sigma_n  \cdots d\sigma_1 
\quad \text{when } d=1.
\end{multline*}

We note that the kernel $H(\sigma)$ and $G(\phi,\sigma)$ are smooth in $\sigma$, and hence we may differentiate $\Lambda_{(k_1,\cdots,k_n)}^{(n,m)}$
with respect to $t$ directly to obtain an explicit expression.
\begin{lemma}\label{lem:Fr:Temp:Fluc:Formula}
Let $n$ be any positive integer. $\Lambda_{(k_1,\ldots,k_n)}^{(n,m)}(\bfy,t)$ is $C^1$ with respect to $t$. Their derivatives has two parts:
\begin{equation*}
\begin{split}
	\frac{\partial \Lambda_{(k_1,\ldots,k_n)}^{(n,m)}}{ \partial t} (\bfy,t)
=&
	\mathcal B_{(k_1,\ldots,k_n)}^{(n,m)}(\bfy,t) + \mathcal V_{(k_1,\ldots,k_n)}^{(n,m)}(\bfy,t), 
\end{split}
\end{equation*}
The first term $\mathcal B_{(k_1,\ldots,k_n)}^{(n,m)}$ is the boundary term:
\begin{multline}\label{eq:Fr:Temp:Fluc:Formula:B:23D}
	 \mathcal B_{(k_1,\ldots,k_n)}^{(n,m)}(\bfy,t)
=
    -\BK{1-\Tr\frac{1}{2}}
     \intLim_{ s_1+\ldots+s_n = \Tr\frac{t}{2}  }
     \prod_{l=1}^n G \BKK{ \phi_l, \frac{s_l \sqrt{2RT_{(k_1,\ldots,k_l)} }}{k_l}} \frac{\sqrt{2RT_{(k_1,\ldots,k_l)}}}{k_l}
\\
     \times
      j \BK{ \bfy_{(k_1,\ldots,k_n)}, t-\Tr\frac{t}{2} }
      ds_1 \cdots ds_{n-1}d^n\phi
      \quad \text{when } d=2,
\end{multline}
\begin{multline}\label{eq:Fr:Temp:Fluc:Formula:B:1D}
	 \mathcal B_{(k_1,\ldots,k_n)}^{(n,m)}(\pm1,t)
=
    -\BK{1-\Tr\frac{1}{2}}
     \intLim_{ s_1+\ldots+s_n = \Tr\frac{t}{2}  }
     \prod_{l=1}^n H \BKK{  \frac{s_l \sqrt{2RT_{(k_1,\ldots,k_l)} }}{k_l}} \frac{\sqrt{2RT_{(k_1,\ldots,k_l)}}}{k_l}
\\
     \times
      j \BK{ \pm1_{(k_1,\ldots,k_n)}, t-\Tr\frac{t}{2} }
      ds_1 \cdots ds_{n-1}
      \quad \text{when } d=1.
\end{multline}
The second term $\mathcal V_{(k_1,\ldots,k_n)}^{(n,m)}$ is the volume term:
\begin{multline}\label{eq:Fr:Temp:Fluc:Formula:V:23D}
     \mathcal V_{(k_1,\ldots,k_n)}^{(n,m)}(\bfy,t)
=
	 \frac{1}{n}
	 \int
     \int^t_{ t-\Tr\frac{t}{2} }
	 \intLim_{s_1+\ldots+s_n=t-s}
     \sum_{l=1}^n
         G \BK{ \phi_1, \frac{s_1 \sqrt{2RT_{(k_1)} }}{k_l}}
        \times \cdots 
        \\
        \times
        \frac{\sqrt{2RT_{(k_1,\ldots,k_l)}}}{k_l} \frac{ \partial G }{ \partial \sigma } \BK{ \phi_l, \frac{s_l \sqrt{2RT_{(k_1,\ldots,k_l)}}}{k_l} }
        \times \cdots \times
         G \BK{ \phi_n, \frac{s_n \sqrt{2RT_{(k_1,\ldots,k_n)}}}{k_n} }
     \\
     \times 
      d s_1 \cdots ds_{n-1}
      ~
      j(\bfy_{(k_1,\ldots,k_n)},s) d s 
     \frac{\sqrt{2RT_{(k_1)}}}{k_1} \cdots \frac{\sqrt{2RT_{(k_1,\ldots,k_n)}}}{k_n}
     d^n\phi
     \quad \text{when } d=2,
\end{multline}
\begin{multline}\label{eq:Fr:Temp:Fluc:Formula:V:1D}
     \mathcal V_{(k_1,\ldots,k_n)}^{(n,m)}(\pm1,t)
=
	 \frac{1}{n}
	 \int
     \int^t_{ t-\Tr\frac{t}{2} }
	 \intLim_{s_1+\ldots+s_n=t-s}
     \sum_{l=1}^n
         H \BK{  \frac{s_1 \sqrt{2RT_{(k_1)} }}{k_l}}
        \times \cdots 
        \\
        \times
        \frac{\sqrt{2RT_{(k_1,\ldots,k_l)}}}{k_l} \frac{ \partial H }{ \partial \sigma } \BK{  \frac{s_l \sqrt{2RT_{(k_1,\ldots,k_l)}}}{k_l} }
        \times \cdots \times
         H \BK{  \frac{s_n \sqrt{2RT_{(k_1,\ldots,k_n)}}}{k_n} }
     \\
     \times 
      d s_1 \cdots ds_{n-1}
      ~
      j(\pm1_{(k_1,\ldots,k_n)},s) d s 
     \frac{\sqrt{2RT_{(k_1)}}}{k_1} \cdots \frac{\sqrt{2RT_{(k_1,\ldots,k_n)}}}{k_n}
     \quad \text{when } d=1.
\end{multline}
\end{lemma}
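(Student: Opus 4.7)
The plan is to reduce $\partial_t\Lambda_{(k_1,\ldots,k_n)}^{(n,m)}$ to a one-dimensional Leibniz rule by substituting $u=t-s_1-\cdots-s_n$, which moves the $t$-dependence out of the non-smooth factor $j$. Defining
\[
\Psi(\tau) \equiv \int\!\!\int_{s_1+\cdots+s_n=\tau,\ s_i>0} \prod_{l=1}^n G\BKK{\phi_l,\frac{s_l\sqrt{2RT_{(k_1,\ldots,k_l)}}}{k_l}}\frac{\sqrt{2RT_{(k_1,\ldots,k_l)}}}{k_l}\,ds_1\cdots ds_{n-1}\,d^n\phi
\]
(with $H$ in place of $G$ and no $\phi$-integration when $d=1$), the substitution produces
\[
\Lambda_{(k_1,\ldots,k_n)}^{(n,m)}(\bfy,t) = \int_{t(1-\Tr/2)}^{t} \Psi(t-u)\, j\BK{\bfy_{(k_1,\ldots,k_n)},u}\, du.
\]
The kernel $\Psi$ lies in $C^1([0,\infty))$ and $\Psi(0)=0$, because the Gaussian factors $e^{-(2\cos\phi/\sigma)^2}$ and $e^{-(2/\sigma)^2}$ force every constituent kernel to vanish as the corresponding $s_l\downarrow 0$.

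Leibniz's rule then splits $\partial_t\Lambda$ into three pieces. The upper endpoint $u=t$ contributes $\Psi(0)\,j(\bfy_{(k_1,\ldots,k_n)},t)=0$; the lower endpoint $u=t(1-\Tr/2)$ contributes $-(1-\Tr/2)\,\Psi(\Tr t/2)\,j(\bfy_{(k_1,\ldots,k_n)},\,t-\Tr t/2)$, which is precisely the boundary term $\mathcal B_{(k_1,\ldots,k_n)}^{(n,m)}$ of \eqref{eq:Fr:Temp:Fluc:Formula:B:23D}--\eqref{eq:Fr:Temp:Fluc:Formula:B:1D}; and the $t$-dependence inside $\Psi(t-u)$ yields $\int_{t(1-\Tr/2)}^{t}\Psi'(t-u)\,j(\bfy_{(k_1,\ldots,k_n)},u)\,du$. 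To recover the symmetric $\frac1n\sum_{l=1}^n$ structure of $\mathcal V_{(k_1,\ldots,k_n)}^{(n,m)}$, I would, for each fixed $l$, parametrize the simplex $\{s_1+\cdots+s_n=\tau\}$ by eliminating $s_l$ and differentiate in $\tau$; the boundary correction at $s_l=0$ vanishes by the same Gaussian argument, yielding an asymmetric representation of $\Psi'(\tau)$ with $\partial G/\partial\sigma$ in the $l$-th slot. Averaging the $n$ equivalent representations produces the symmetric expression in \eqref{eq:Fr:Temp:Fluc:Formula:V:23D}--\eqref{eq:Fr:Temp:Fluc:Formula:V:1D}.

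The principal technical obstacle is justifying the differentiation under the integral sign and continuity of $\Psi'$. Boundedness of $j(\bfy_{(k_1,\ldots,k_n)},\cdot)$ on $[0,t]$ is supplied by Theorem \ref{thm:Fr:Existence:Bddness} together with the finiteness of $\calJ(t)$; continuity of $\Psi$ and $\Psi'$ on $[0,\Tr t/2]$ then follows by dominated convergence, since $G$, $H$ and their $\sigma$-derivatives are smooth on $\sigma>0$ with superexponential decay as $\sigma\downarrow 0$, giving uniform integrable majorants on the compact simplex $\{\sum s_i\leq\Tr t/2\}$ and the compact $\phi$-range, and simultaneously ensuring that all $s_l=0$ boundary corrections encountered in the symmetrization step genuinely vanish.
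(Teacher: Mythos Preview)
The paper itself omits the proof of this lemma, referring to the analogous arguments in \cite{Kuo-Liu-Tsai,Kuo-Liu-Tsai-2}. Your strategy---change variables $u=t-s_1-\cdots-s_n$ so that the non-smooth factor $j$ no longer carries the $t$-dependence, apply the one-dimensional Leibniz rule, then symmetrize over $l$ to obtain the $\frac1n\sum_l$ form of $\Psi'$---is exactly the right one and matches what those references do.

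There is, however, a genuine slip in your $d=2$ reduction. Your definition of $\Psi(\tau)$ already integrates out $d^n\phi$, and you then write
\[
\Lambda_{(k_1,\ldots,k_n)}^{(n,m)}(\bfy,t)=\int_{t(1-\Tr/2)}^{t}\Psi(t-u)\,j\bigl(\bfy_{(k_1,\ldots,k_n)},u\bigr)\,du,
\]
treating $j(\bfy_{(k_1,\ldots,k_n)},u)$ as independent of $\phi$. But in two dimensions the boundary point $\bfy_{(k_1,\ldots,k_n)}$ has polar angle $\theta+k_1\theta_1+\cdots+k_n\theta_n$ with $\theta_l=\pi-2\phi_l$, so $j(\bfy_{(k_1,\ldots,k_n)},u)$ genuinely depends on all the $\phi_l$ and cannot be pulled outside the $\phi$-integral. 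This is why, in the stated formulas \eqref{eq:Fr:Temp:Fluc:Formula:B:23D} and \eqref{eq:Fr:Temp:Fluc:Formula:V:23D}, the factor $j(\bfy_{(k_1,\ldots,k_n)},\cdot)$ sits \emph{inside} the $d^n\phi$ integration.

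The fix is immediate and does not change your argument in substance: keep the $\phi$-integration outermost, define for each fixed $\phi\in(-\pi/2,\pi/2)^n$
\[
\Psi_\phi(\tau)=\int_{\substack{s_1+\cdots+s_n=\tau\\ s_i>0}}\ \prod_{l=1}^n G\Bigl(\phi_l,\tfrac{s_l\sqrt{2RT_{(k_1,\ldots,k_l)}}}{k_l}\Bigr)\tfrac{\sqrt{2RT_{(k_1,\ldots,k_l)}}}{k_l}\,ds_1\cdots ds_{n-1},
\]
so that
\[
\Lambda_{(k_1,\ldots,k_n)}^{(n,m)}(\bfy,t)=\int d^n\phi\int_{t(1-\Tr/2)}^{t}\Psi_\phi(t-u)\,j\bigl(\bfy_{(k_1,\ldots,k_n)},u\bigr)\,du,
\]
and then carry out your Leibniz-plus-symmetrization argument on the inner integral for each fixed $\phi$. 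Differentiation under the outer $d^n\phi$ integral is justified by the same dominated-convergence majorants you already invoke (the $\phi$-range is compact and $j$ is bounded on $[0,t]$ by Theorem~\ref{thm:Fr:Existence:Bddness}). With this correction your proof goes through; the $d=1$ case is unaffected.
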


The lemma can be proved by a similar argument as in \cite{Kuo-Liu-Tsai,Kuo-Liu-Tsai-2} and we omit it.
The boundary term $\mathcal B_n$ can be easily bounded as following. First, we have
\begin{subequations}
\begin{multline*}
     |\mathcal B_{(k_1,\ldots,k_n)}^{(n,m)}(\bfy,t)|
=
	 O(1)
     \sup_{ \frac{t}{2}<s<t } \BK{ \VertBK{j}_{L^\infty_\bfy} }(s)
     \\
     \times
     \intLim_{ s_1+\ldots+s_n = \frac{t}{2}  }
     \BK{ \prod_{l=1}^n G \BKK{ \phi_l, \frac{s_l \sqrt{2RT_{(k_1,\ldots,k_l)}}}{k_l} } 
     \frac{\sqrt{2RT_{(k_1,\ldots,k_l)}}}{k_l} }
      ds_1 \cdots ds_{n-1}d^n\phi
\end{multline*}
\begin{multline*}
=
	 O(1)
     \sup_{ \frac{t}{2}<s<t } \BK{ \VertBK{j}_{L^\infty_\bfy} }(s)
     \\
     \times
     \intLim_{\frac{k_1 \sigma_1}{2RT_{(k_1)}} + \ldots+ \frac{k_n \sigma_n}{2RT_{(k_1,\ldots,k_n)}} 
     = \frac{t}{2}}
     \prod_{l=1}^n G \BKK{\phi_l,\sigma_l} 
      ds_1 \cdots ds_{n-1}d^n\phi
      \quad \text{when } d=2,
\end{multline*}
\begin{multline*}
     |\mathcal B_{(k_1,\ldots,k_n)}^{(n,m)}(\pm1,t)|
=
	 O(1)
    \sup_{ \frac{t}{2}<s<t } \Big( \absBK{j_+(s)} + \absBK{j_-(s)} \Big)
     \\
     \times
     \intLim_{ s_1+\ldots+s_n = \frac{t}{2}  }
     \BK{ \prod_{l=1}^n H \BKK{ \frac{s_l \sqrt{2RT_{(k_1,\ldots,k_l)}}}{k_l} } 
     \frac{\sqrt{2RT_{(k_1,\ldots,k_l)}}}{k_l} }
      ds_1 \cdots ds_{n-1}
\end{multline*}
\begin{multline*}
=
	 O(1)
   \sup_{ \frac{t}{2}<s<t } \Big( \absBK{j_+(s)} + \absBK{j_-(s)} \Big)
     \\
     \times
     \intLim_{\frac{k_1 \sigma_1}{2RT_{(k_1)}} + \ldots+ \frac{k_n \sigma_n}{2RT_{(k_1,\ldots,k_n)}} 
     = \frac{t}{2}}
     \prod_{l=1}^n H \BKK{\sigma_l} 
      ds_1 \cdots ds_{n-1}
      \quad \text{when } d=1.
\end{multline*}
\end{subequations}
Hence given $t'<t$ with $t'/mn \gg 1$, by the law of large numbers, Theorem \ref{thm:Fr:Law:of:Large},
\begin{subequations} \label{eq:z2}
\begin{multline}
	\int_{t'}^t |\mathcal B_{(k_1,\ldots,k_n)}^{(n,m)}(\bfy,s)| ds
=
     O(1) 
    \sup_{ \frac{t}{2}<s<t } \BK{ \VertBK{j}_{L^\infty_\bfy} }(s)
    \times
    \frac{m^{3}n^2\log(t'+1)}{t'^{3}},\\
    \quad \text{when } d=2,
\end{multline}
\begin{multline}
	\int_{t'}^t |\mathcal B_{(k_1,\ldots,k_n)}^{(n,m)}(\pm1,s)| ds
=
     O(1) 
   \sup_{ \frac{t}{2}<s<t } \Big( \absBK{j_+(s)} + \absBK{j_-(s)} \Big)
    \times
    \frac{m^{2}n\log(t'+1)}{t'^{2}},\\
    \quad \text{when } d=1.
\end{multline}
\end{subequations}

We next turn to the major term $\mathcal V_n$, First, as above, we have
\begin{subequations}\label{eq:z3}
\begin{multline}
     |\mathcal V_{(k_1,\ldots,k_n)}^{(n,m)}(\bfy,t)|
\leq
	 \frac{1}{n}
     \sup_{ \frac{t}{2}<s<t } \BK{ \VertBK{j}_{L^\infty_\bfy} }(s)
     \\
     \times
     \int \absBK{   
     				\sum_{l=1}^n
                	\frac{\sqrt{2RT_{(k_1,\ldots,k_l)}}}{k_l} G(\phi_1,\sigma_1) \cdots
                    \frac{\partial G}{\partial \sigma}(\phi_l,\sigma_l)
                    \cdots G(\phi_n,\sigma_n)
                 }
      d^n\sigma d^n\phi
      \quad \text{when } d=2,
\end{multline}
\begin{multline}
     |\mathcal V_{(k_1,\ldots,k_n)}^{(n,m)}(\pm1,t)|
\leq
	 \frac{1}{n}
     \sup_{ \frac{t}{2}<s<t } \Big( \absBK{j_+(s)} + \absBK{j_-(s)} \Big)
     \\
     \times
     \int \absBK{   
     				\sum_{l=1}^n
                	\frac{\sqrt{2RT_{(k_1,\ldots,k_l)}}}{k_l} H(\sigma_1) \cdots
                    \frac{\partial H}{\partial \sigma}(\sigma_l)
                    \cdots H(\sigma_n)
                 }
      d^n\sigma 
      \quad \text{when } d=1.
\end{multline}
\end{subequations}

To estimate \eqref{eq:z3}, we use the following lemma.
\begin{lemma}\label{lem:Fr:Temp:Flux:Decay:in:n}
For any integer $n>1$,
\begin{subequations}\label{eq:Fr:Temp:Flux:Decay:in:n}
\begin{multline}\label{eq:Fr:Temp:Flux:Decay:in:n:23D}
     \int \absBK{
                    \sum_{l=1}^n
                    \frac{\sqrt{2RT_{(k_1,\ldots,k_l)}}}{k_l} G(\phi_1,\sigma_1) \cdots
                    \frac{\partial G}{\partial \sigma}(\phi_l,\sigma_l)
                    \cdots G(\phi_n,\sigma_n)
                }
      d^n\sigma d^n\phi
     \\
=
      O\BK{ \BK{n\log n}^\frac12 },
\end{multline}
\begin{multline}\label{eq:Fr:Temp:Flux:Decay:in:n:1D}
     \int \absBK{
                    \sum_{l=1}^n
                    \frac{\sqrt{2RT_{(k_1,\ldots,k_l)}}}{k_l} H(\sigma_1) \cdots
                    \frac{\partial H}{ \partial\sigma}(\sigma_l)
                    \cdots H(\sigma_n)
                }
      d^n\sigma 
     \\
=
      O\BK{ \BK{n}^\frac12 },
\end{multline}
\end{subequations}
Consequently,
\begin{align*}
	|\mathcal V_{(k_1,\ldots,k_n)}^{(n,m)}(\bfy,t)|
=& 
	O(1) 
	\sup_{ \frac{t}{2}<s<t } \BK{ \VertBK{j}_{L^\infty_\bfy} }(s)
	\times
	\BK{ \frac{\log n}{n} }^\frac12,\text{ when } d=2,\\
		|\mathcal V_{(k_1,\ldots,k_n)}^{(n,m)}(\pm1,t)|
=& 
	O(1) 
	\sup_{ \frac{t}{2}<s<t } \Big( \absBK{j_+(s)} + \absBK{j_-(s)} \Big)
	\times
	\BK{ \frac{1}{n} }^\frac12,
	\text{ when } d=1.
\end{align*}
\end{lemma}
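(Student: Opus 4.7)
The plan is to recognize the integrand as $\bigl(\sum_l X_l\bigr)\prod_i G(\phi_i,\sigma_i)$ with the $X_l$ i.i.d.\ mean-zero random variables, then apply Cauchy--Schwarz. A substitution $u=2\cos\phi/\sigma$ confirms $\int G\,d\sigma\,d\phi=1$ and $\int H\,d\sigma=1$, so $G$ and $H$ are genuine probability densities (for $d=2$ and $d=1$ respectively). The coefficients $c_l:=\sqrt{2RT_{(k_1,\ldots,k_l)}}/k_l$ are uniformly bounded by $\sqrt{2R\TM}$ since $k_l\ge 1$ and $T\le\TM=1$.

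Writing $\partial_\sigma G=(\partial_\sigma\log G)\cdot G$ and factoring, the integrand in \eqref{eq:Fr:Temp:Flux:Decay:in:n:23D} becomes $\bigl(\sum_{l=1}^n X_l\bigr)\prod_{i=1}^n G(\phi_i,\sigma_i)$ with $X_l:=c_l\,\partial_\sigma\log G(\phi_l,\sigma_l)$, so the integral equals $\mathbb{E}\bigl|\sum_l X_l\bigr|$ when $(\phi_l,\sigma_l)_{l=1}^n$ are i.i.d.\ with density $G$. An integration by parts gives $\mathbb{E}[X_l]=c_l\int\partial_\sigma G\,d\sigma\,d\phi=0$, and by independence Cauchy--Schwarz gives
$$\mathbb{E}\Bigl|\sum_l X_l\Bigr|\le\Bigl(\sum_l c_l^2\int\frac{(\partial_\sigma G)^2}{G}\,d\sigma\,d\phi\Bigr)^{1/2}.$$
The one-dimensional case is then immediate: a direct computation using the super-exponential decay $e^{-(2/\sigma)^2}$ at $\sigma=0^+$ shows that $\int(H')^2/H\,d\sigma$ is finite, yielding the $O(\sqrt{n})$ bound.

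The main obstacle is the two-dimensional case. Using $u=2\cos\phi/\sigma$ one finds $\partial_\sigma G=2(u^2-2)G/\sigma$ and $\int_0^\infty(\partial_\sigma G)^2/G\,d\sigma=C_\ast/\cos\phi$ for a finite constant $C_\ast$, so the unrestricted Fisher information $\int\sec\phi\,d\phi$ diverges near $\phi=\pm\pi/2$. To handle this I would introduce a cutoff $\delta\in(0,\pi/2)$ and decompose $X_l=\tilde X_l+R_l$ with $\tilde X_l:=X_l\bbbone_{\{|\phi_l|<\pi/2-\delta\}}$. The same substitution shows that $\int_0^\infty|\partial_\sigma G|\,d\sigma$ is independent of $\phi$, hence $\mathbb{E}|R_l|=c_l\int_{|\phi|>\pi/2-\delta}\int_0^\infty|\partial_\sigma G|\,d\sigma\,d\phi=O(\delta)$, and the bias satisfies $|\mathbb{E}\tilde X_l|=|\mathbb{E} R_l|=O(\delta)$. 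Meanwhile the truncated Fisher information is $\int_{|\phi|<\pi/2-\delta}(\partial_\sigma G)^2/G\,d\sigma\,d\phi=2C_\ast\int_0^{\pi/2-\delta}\sec\phi\,d\phi=O(\log(1/\delta))$. Applying Cauchy--Schwarz to the centered variables $\tilde X_l-\mathbb{E}\tilde X_l$ and adding back the bias and remainder gives
$$\mathbb{E}\Bigl|\sum_l X_l\Bigr|=O\bigl(\sqrt{n\log(1/\delta)}+n\delta\bigr),$$
and optimizing at $\delta=1/n$ produces the desired $O(\sqrt{n\log n})$ bound. The consequent estimate on $|\mathcal V^{(n,m)}_{(k_1,\ldots,k_n)}|$ follows at once from \eqref{eq:z3} combined with the prefactor $1/n$ in the definition of $\mathcal V$. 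The delicate part is the bookkeeping around the cutoff: naively expanding the variance gives infinity, so one must carefully separate the truncated contribution (logarithmic) from the tail (linear in $\delta$), and the $\sqrt{n\log n}$ rate emerges only after the optimal balance $\delta\sim 1/n$.
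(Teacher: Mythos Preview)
Your argument is correct and is essentially the method the paper invokes by citation: the paper's proof consists of one displayed inequality followed by an appeal to Lemma~3 of \cite{Kuo-Liu-Tsai-2}, whose content is exactly your probabilistic reformulation (write the integrand as $\mathbb{E}\bigl|\sum_l c_l\,\partial_\sigma\log G(\phi_l,\sigma_l)\bigr|$ for i.i.d.\ $(\phi_l,\sigma_l)$, apply Cauchy--Schwarz, and in the $2$D case truncate near $|\phi|=\pi/2$ because the Fisher information $\int(\partial_\sigma G)^2/G$ blows up like $\sec\phi$). Your optimization $\delta\sim 1/n$ and the resulting $O(\sqrt{n\log n})$ are the same as in the cited lemma.

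One small point where your write-up is actually cleaner than the paper's: the paper's reduction step asserts the pointwise bound
\[
\Bigl|\sum_l \tfrac{\sqrt{2RT_{(k_1,\ldots,k_l)}}}{k_l}\,G\cdots\partial_\sigma G\cdots G\Bigr|
\;\le\;
\Bigl|\sum_l \sqrt{2RT_{(k_1,\ldots,k_l)}}\,G\cdots\partial_\sigma G\cdots G\Bigr|,
\]
which is not valid as a pointwise inequality (enlarging individual coefficients inside the absolute value of a signed sum can make it smaller). You avoid this issue entirely by carrying the uniformly bounded coefficients $c_l=\sqrt{2RT_{(k_1,\ldots,k_l)}}/k_l\le\sqrt{2R\TM}$ directly through the variance computation, which is how the cited Lemma~3 is in fact stated.
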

\begin{proof}\hfil

Note that $k_l\geq1$ for each $l$, so
\begin{align*}
&\absBK{
                    \sum_{l=1}^n
                    \frac{\sqrt{2RT_{(k_1,\ldots,k_l)}}}{k_l} G(\phi_1,\sigma_1) \cdots
                    \frac{\partial G}{\partial \sigma}(\phi_l,\sigma_l)
                    \cdots G(\phi_n,\sigma_n)
                }\\
                &\leq
                \absBK{
                    \sum_{l=1}^n
                    \sqrt{2RT_{(k_1,\ldots,k_l)}} G(\phi_1,\sigma_1) \cdots
                    \frac{\partial G}{\partial \sigma}(\phi_l,\sigma_l)
                    \cdots G(\phi_n,\sigma_n)
                }
\end{align*}
Then we follow the Lemma 3 in \cite{Kuo-Liu-Tsai-2} to conclude the proof.
\end{proof}

The following theorem follows from Lemma \ref{lem:Fr:Temp:Flux:Decay:in:n} together with \eqref{eq:z2}.

\begin{theorem}\label{thm:flu-tem-sep}
Let $t'<t$, then, for $1\ll t'/mn$,
\begin{multline*}
     \Lambda^{(n,m)}_{(k_1,\ldots,k_n)}(\bfy,t) - \Lambda^{(n,m)}_{(k_1,\ldots,k_n)}(\bfy,t')
     \\
=
      O(1)
     \sup_{ \frac{t'}{2}<s<t } \BK{ \VertBK{j}_{L^\infty_\bfy} }(s)
     \times
     \BK{
             \frac{ m^{3}n^2\log(t'+1)}{ t'^{3} }
            +
              \BK{ \frac{\log n}{n} }^\frac12 (t-t')
          },\text{ when } d=2,
     \\
\end{multline*}
\begin{multline*}
     \Lambda^{(n,m)}_{(k_1,\ldots,k_n)}(\pm1,t) - \Lambda^{(n,m)}_{(k_1,\ldots,k_n)}(\pm1,t')
     \\
=
      O(1)
     \sup_{ \frac{t'}{2}<s<t } \Big( \absBK{j_+(s)} + \absBK{j_-(s)} \Big)
     \times
     \BK{
             \frac{ m^{2}n\log(t'+1)}{ t'^{2} }
            +
              \BK{ \frac{1}{n} }^\frac12 (t-t')
          },\text{ when } d=1.
     \\
\end{multline*}
\end{theorem}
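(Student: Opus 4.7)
The plan is simply to write
\[
\Lambda^{(n,m)}_{(k_1,\ldots,k_n)}(\bfy,t)-\Lambda^{(n,m)}_{(k_1,\ldots,k_n)}(\bfy,t')
=\int_{t'}^{t}\!\Big(\mathcal B^{(n,m)}_{(k_1,\ldots,k_n)}(\bfy,s)+\mathcal V^{(n,m)}_{(k_1,\ldots,k_n)}(\bfy,s)\Big)\,ds
\]
using the $C^{1}$-regularity and the splitting supplied by Lemma \ref{lem:Fr:Temp:Fluc:Formula}, then treat the boundary term $\mathcal B$ and the volume term $\mathcal V$ separately, and finally combine. Since $t'/mn\gg 1$ by hypothesis and $t'/2<s<t$ implies $s/mn\gg 1$ as well, each lemma can be invoked uniformly on the range of integration.

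For the boundary contribution I would appeal directly to the estimates \eqref{eq:z2}: they were derived exactly by recognizing the integral defining $\mathcal B^{(n,m)}_{(k_1,\ldots,k_n)}(\bfy,s)$ as the density $H_n$ of a sum of i.i.d.\ variables evaluated near $s/2$, and then integrating in $s\in[t',t]$ to convert the density into a tail probability that the law of large numbers, Theorem \ref{thm:Fr:Law:of:Large}, bounds by $m^{d+1}n^{d}\log(t'+1)/t'^{d+1}$. Pulling out $\sup_{t'/2<s<t}(\|j\|_{L^\infty_\bfy})(s)$ (or, in the one-dimensional case, $\sup(|j_+|+|j_-|)$) gives the first term on the right-hand side of the statement.

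For the volume contribution I would combine \eqref{eq:z3} with Lemma \ref{lem:Fr:Temp:Flux:Decay:in:n}. The pointwise bound in \eqref{eq:z3} factors into the flux supremum on $[t'/2,t]$ times the $\sigma$-derivative integral which Lemma \ref{lem:Fr:Temp:Flux:Decay:in:n} controls by $O((n\log n)^{1/2})$ in two dimensions and $O(n^{1/2})$ in one dimension. Dividing by the factor $1/n$ already present in \eqref{eq:Fr:Temp:Fluc:Formula:V:23D}--\eqref{eq:Fr:Temp:Fluc:Formula:V:1D} yields $|\mathcal V^{(n,m)}_{(k_1,\ldots,k_n)}(\bfy,s)|=O(1)\,\sup\|j\|\cdot(\log n/n)^{1/2}$ for $d=2$ and $O(1)\,\sup(|j_+|+|j_-|)\cdot n^{-1/2}$ for $d=1$; integrating over $s\in[t',t]$ produces the advertised $(t-t')$ factor.

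The step I expect to be slightly delicate is verifying that the pointwise bound $|\mathcal V|=O(\sup\|j\|)(\log n/n)^{1/2}$ is truly uniform in the multi-index $(k_1,\ldots,k_n)$ and in $s\in[t',t]$: the $k_l$-dependence enters only through the factors $\sqrt{2RT_{(k_1,\ldots,k_l)}}/k_l$, and since $k_l\ge 1$ these can be crudely dominated by $\sqrt{2R\TM}$, exactly as used at the start of the proof of Lemma \ref{lem:Fr:Temp:Flux:Decay:in:n}. Once this domination is in hand, $\mathcal V$ reduces to the same sum-of-derivatives-of-products estimate handled in \cite{Kuo-Liu-Tsai-2}, and the conclusion follows by addition of the two integrated bounds.
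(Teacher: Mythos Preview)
Your proposal is correct and matches the paper's argument essentially verbatim: the paper states that the theorem ``follows from Lemma \ref{lem:Fr:Temp:Flux:Decay:in:n} together with \eqref{eq:z2},'' which is precisely the decomposition $\Lambda(\bfy,t)-\Lambda(\bfy,t')=\int_{t'}^{t}(\mathcal B+\mathcal V)\,ds$ you describe, with \eqref{eq:z2} handling the integrated boundary term and the pointwise bound on $\mathcal V$ from Lemma \ref{lem:Fr:Temp:Flux:Decay:in:n} contributing the $(t-t')$ factor after integration. Your remark on uniformity in $(k_1,\ldots,k_n)$ via $k_l\ge 1$ is exactly the reduction the paper uses in the proof of Lemma \ref{lem:Fr:Temp:Flux:Decay:in:n}.
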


From  Theorem \ref{thm:flu-tem-sep}, \eqref{eq:Fr:Flu:Lambda} and \eqref{eq:Fr:Flu:Lambda-sep}, we obtain:
\begin{corollary}[Temporal Fluctuation Estimate]\label{cor:Fr:Temp:Fluc}
Let $t'<t$, then, for $1\ll t'/mn$,
\begin{multline*}
      j(\bfy,t) - j(\bfy,t')
=
      O(1) \BK{\frac{m(mn)^{3}\log(t'+1)}{t'^{3}}+(1-\alpha)^m}
     \BK{ \VertBK{g_{in}}_{\infty,\mu} + \calJ(t) }
     \\
    +
      O(1)
     \sup_{ \frac{t'}{2}<s<t } \BK{ \VertBK{j}_{L^\infty_\bfy} }(s)
     \times \BK{ \frac{\log n}{n} }^\frac12 (t-t'),
      \text{ for } d=2,
\end{multline*}
\begin{multline*}
      j(\pm1,t) - j(\pm1,t')
=
      O(1) \BK{\frac{m(mn)^{2}\log(t'+1)}{t'^{2}}+(1-\alpha)^m}
     \BK{ \VertBK{g_{in}}_{\infty,\mu} + \calJ(t) }
     \\
    +
      O(1)
     \sup_{ \frac{t'}{2}<s<t } \Big( \absBK{j_+(s)} + \absBK{j_-(s)} \Big)
     \times \BK{ \frac{1}{n} }^\frac12 (t-t'),
      \text{ for } d=1.
\end{multline*}
\end{corollary}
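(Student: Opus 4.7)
The corollary is in effect a bookkeeping assembly of the ingredients already established in this subsection. The plan is to combine the integral-equation form of $j$ coming from Theorem \ref{thm:Fr:Integral:Eqn} (in its fluctuation version \eqref{eq:Fr:Flu:Lambda}) with the $\Lambda$-temporal-fluctuation estimate of Theorem \ref{thm:flu-tem-sep}, so the only real work is matching up the weights and constants in both dimensions.

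First I would specialize the decomposition \eqref{eq:Fr:Flu:Lambda}--\eqref{eq:Fr:Flu:Lambda-sep} to $\bfy' = \bfy$, so that the spatial fluctuation piece of $\Lambda_{(k_1,\ldots,k_n)}^{(n,m)}$ drops out and only the temporal fluctuation survives. The ``constant'' contribution in \eqref{eq:Fr:Flu:Lambda} already furnishes the desired bound $O(1)\BK{\frac{m(mn)^{d+1}\log(t'+1)}{t'^{d+1}}+(1-\alpha)^m}\BK{\VertBK{g_{in}}_{\infty,\mu}+\calJ(t)}$, so only the $\Lambda$-difference still needs to be controlled.

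Next I would substitute the estimate of Theorem \ref{thm:flu-tem-sep} into each $\Lambda_{(k_1,\ldots,k_n)}^{(n,m)}(\bfy,t)-\Lambda_{(k_1,\ldots,k_n)}^{(n,m)}(\bfy,t')$. Since that bound is independent of the multi-index $(k_1,\ldots,k_n)$, it can be pulled outside the $k$-sums, and then the weight $\alpha^n\sum_{k_1=1}^{m}(1-\alpha)^{k_1-1}\cdots\sum_{k_n=1}^{m}(1-\alpha)^{k_n-1}$ is bounded by $1$, as noted right after \eqref{eq:Fr:Flu:Lambda} via Lemma \ref{sum:event}. The output splits as two pieces: a term of size $\frac{m^{d+1}n^d\log(t'+1)}{t'^{d+1}}\sup_s\VertBK{j}_{L^\infty_\bfy}(s)$, and the term of size $(\log n/n)^{1/2}(t-t')$ (respectively $(1/n)^{1/2}(t-t')$ in $d=1$) times the same supremum.

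Finally I would absorb the first of those two pieces into the $\calJ(t)$ term of the corollary by using $\sup_{t'/2<s<t}\VertBK{j}_{L^\infty_\bfy}(s)\le\calJ(t)$ together with $m^{d+1}n^d\le (mn)^{d+1}$, so that this contribution is dominated by the already-present $\frac{(mn)^{d+1}\log(t'+1)}{t'^{d+1}}\calJ(t)$. What remains is exactly the $(\log n/n)^{1/2}(t-t')\sup_s\VertBK{j}_{L^\infty_\bfy}(s)$ term in the statement, and analogously in one dimension. I do not expect a real obstacle: the main care is in checking that the weighted sum of the uniform temporal-fluctuation bound of $\Lambda$ stays of order one, which is the whole point of having written the integral equation so that $\alpha^n(1-\alpha)^{k_1-1}\cdots(1-\alpha)^{k_n-1}$ forms a sub-probability over $(k_1,\ldots,k_n)$.
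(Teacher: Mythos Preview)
Your proposal is correct and follows exactly the route the paper intends: it derives the corollary directly from \eqref{eq:Fr:Flu:Lambda}, \eqref{eq:Fr:Flu:Lambda-sep} (specialized to $\bfy'=\bfy$), and Theorem~\ref{thm:flu-tem-sep}, using $\alpha^n\prod\sum(1-\alpha)^{k_i-1}\leq 1$ and absorbing the $\frac{m^{d+1}n^d\log(t'+1)}{t'^{d+1}}\sup\VertBK{j}$ piece into the $(mn)^{d+1}\calJ(t)$ term. The paper itself states only that the corollary follows from these three ingredients, so your write-up is a faithful expansion of that one-line argument.
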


\end{subsection}

\begin{subsection}{Spacial Fluctuation Estimate}\label{sec:Fr:Spac:Fluc}

In this subsection, we investigate the spacial fluctuation. 
\begin{theorem}[Spacial Fluctuation Estimate]\label{thm:Fr:Spac:Fluc}
Suppose that  $t/mn,\ t/mN,\ N \gg 1, 0<q<1$, $\bfy,\bfy'\in\partial D$. Then
\begin{multline*}
     j(\bfy,t) - j(\bfy',t)
=
     O(1) \BK{\frac{ ((mn)^{3}+m^3N^2) \log(t+1) }{ t^{3} } +(1-\alpha)^m}\BK{ \VertBK{g_{in}}_{\infty,\mu} + \calJ(t) }
    \\
    +
     O(1) \sup_{\frac{t}{2}<s<t} \BK{\VertBK{j}_{L^\infty_\bfy}}(s) \times
    \BK{
            \BK{\frac{\log n}{n}}^\frac12 mN + \BK{\frac{\log N}{N}}^\frac12
        },\\
        \text{ when } d=2,
\end{multline*}
\begin{multline*}
|j(+1,t)-j(-1,t)|=O(1)\BK{\frac{m^3n^{2}\log(t+1)}{t^{2}}+(1-\alpha)^m}
\BK{\calJ(t)+\VertBK{g_{in}}_{\infty,\mu} }\\
+O(1)\BK{\BK{ \frac{1}{n} }^\frac12 t^q+\BK{ \frac{m}{t^q} }^2}
     \sup_{ \frac{t}{2} < s < t } \Big( \absBK{j_+(s)} + \absBK{j_-(s)} \Big)\\
     \text{ when } d=1.
\end{multline*}
\end{theorem}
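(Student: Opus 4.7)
The plan is to parallel the temporal-fluctuation argument of Corollary~\ref{cor:Fr:Temp:Fluc}. Applying Theorem~\ref{thm:Fr:Integral:Eqn} at the two boundary points $\bfy,\bfy'$ for the common time $t$ and subtracting, the explicit error contributions $\frac{m(mn)^{d+1}\log(t+1)}{t^{d+1}}\VertBK{g_{in}}_{\infty,\mu}$, $\frac{(mn)^{d+1}\log(t+1)}{t^{d+1}}\calJ(t)$, and $(1-\alpha)^m(\VertBK{g_{in}}_{\infty,\mu}+\calJ(t))$ come out at once, leaving the weighted $\Lambda$-difference $\alpha^n\sum_{(k_1,\ldots,k_n)}(1-\alpha)^{\sum k_i-n}\bigl(\Lambda^{(n,m)}_{(k_1,\ldots,k_n)}(\bfy,t)-\Lambda^{(n,m)}_{(k_1,\ldots,k_n)}(\bfy',t)\bigr)$. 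Because Lemma~\ref{sum:event} shows the total weight is at most $1$, it suffices to bound $|\Lambda^{(n,m)}_{(k_1,\ldots,k_n)}(\bfy,t)-\Lambda^{(n,m)}_{(k_1,\ldots,k_n)}(\bfy',t)|$ uniformly in $(k_1,\ldots,k_n)$.

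For $d=2$, I iterate the integral equation for the innermost $j(\bfy_{(k_1,\ldots,k_n)},\cdot)$ that sits inside $\Lambda^{(n,m)}$ an additional $N$ times. Each new iteration contributes a fresh $m$-fold specular sum weighted by $\alpha(1-\alpha)^{k-1}$, a $G$-kernel in a new pair $(\phi_{n+j},\sigma_{n+j})$, and an initial/rare-event remainder of size $\frac{m^3N^2\log(t+1)}{t^3}(\VertBK{g_{in}}_{\infty,\mu}+\calJ(t))+(1-\alpha)^m(\VertBK{g_{in}}_{\infty,\mu}+\calJ(t))$, estimated exactly as in the previous subsection via Theorem~\ref{thm:Fr:Law:of:Large}. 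The residual main term now carries $n+N$ angular integrations with its inner $j$ evaluated at $\bfy_{(k_1,\ldots,k_{n+N})}$; by spherical symmetry of $D$, the position of this endpoint depends on $\bfy$ only through a rotation that is rapidly scrambled by the accumulated reflections, so a change of variables in the last $N$ angular coordinates identifies the $\bfy$- and $\bfy'$-integrals up to an LLN equidistribution error of order $(\log N/N)^{1/2}$. The temporal slippage between $t$ and $t-\sum s_i$ across the $mN$ possible index configurations in this additional iteration is controlled by Lemma~\ref{lem:Fr:Temp:Flux:Decay:in:n}, yielding the extra factor $(\log n/n)^{1/2}\,mN$ multiplying $\sup_{t/2<s<t}\VertBK{j}_{L^\infty_\bfy}(s)$.

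For $d=1$, $\partial D=\{\pm 1\}$, so the spatial fluctuation is the single number $|j_+(t)-j_-(t)|$ and no $N$-fold iteration is needed. Instead, split the $\sigma_i$-domain in $\Lambda^{(n,m)}(\pm 1,t)$ into a rare part where $\sum_{i=1}^n k_i\sigma_i/\sqrt{2RT_{(\ldots)}}>t^q$ and a typical part where it is $\le t^q$. The rare part is handled by Theorem~\ref{thm:Fr:Law:of:Large} with truncation $\gamma\sim t^q$, giving the probability $(m/t^q)^2\sup_{t/2<s<t}(|j_+(s)|+|j_-(s)|)$. On the typical part the argument of $j$ lies in the window $[t-t^q,t]$, so Lemma~\ref{lem:Fr:Temp:Flux:Decay:in:n} (or equivalently Corollary~\ref{cor:Fr:Temp:Fluc}) applied over a window of length at most $t^q$ produces $(1/n)^{1/2}t^q\sup_{t/2<s<t}(|j_+(s)|+|j_-(s)|)$, finishing the one-dimensional case.

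The main obstacle is the $d=2$ spatial equidistribution step: one must verify, uniformly in the specular multiplicities $(k_1,\ldots,k_{n+N})$, that iterating the Maxwell-type kernel $N$ times on $\partial D$ really does wash out the base-point dependence at rate $(\log N/N)^{1/2}$, and that this equidistribution cooperates with both the simultaneous temporal-fluctuation bound and the binomial bookkeeping $\alpha^{n+N}\sum(1-\alpha)^{\ldots}\le 1$ needed to keep the full scheme summable; the remaining terms are essentially bookkeeping extensions of the estimates already set up in the previous two subsections.
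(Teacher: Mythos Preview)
Your $d=1$ argument has a genuine gap. Working through $\Lambda^{(n,m)}(\pm1,t)$, on your ``typical'' part the inner flux is $j\bigl((+1)_{(k_1,\ldots,k_n)},\,\cdot\,\bigr)$ versus $j\bigl((-1)_{(k_1,\ldots,k_n)},\,\cdot\,\bigr)$, and since $(\pm1)_{(k_1,\ldots,k_n)}=\pm(-1)^{k_1+\cdots+k_n}$ these are \emph{always} opposite boundary points. Restricting the time argument to $[t-t^q,t]$ therefore leaves you with $|j_+(\tau)-j_-(\tau')|$ for nearby times---precisely the spatial fluctuation you are trying to bound---so the scheme is circular. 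The paper avoids $\Lambda^{(n,m)}$ entirely here: it starts from the one-step formula \eqref{eq:flux:specular} for $j(+1,t)$, writes $j(-1,t)=\alpha\sum_{k=1}^m(1-\alpha)^{k-1}\int_0^\infty j(-1,t)\,H(\cdot)\,ds+(1-\alpha)^m j(-1,t)$ using $\int H=1$, and subtracts. After splitting $\int_0^t=\int_0^{t^q}+\int_{t^q}^{t/2}+\int_{t/2}^t$, the crucial point is that on $\int_0^{t^q}$ one writes $j((-1)^k,t-s)-j(-1,t)=[j((-1)^k,t-s)-j((-1)^k,t)]+[j((-1)^k,t)-j(-1,t)]$; the second bracket is nonzero only for even $k$ and yields $\alpha\sum_{k\text{ even}}(1-\alpha)^{k-1}\le\frac{1-\alpha}{2-\alpha}<1$ times $j(+1,t)-j(-1,t)$, which is absorbed into the left side with prefactor $\frac{1}{2-\alpha}$. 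The $(m/t^q)^2$ term then comes from the direct tail bound $\int_{t^q}^\infty H(s\sqrt{\cdot}/k)\,ds=O(k^2/t^{2q})$ summed against $\alpha\sum_k(1-\alpha)^{k-1}k^2$, not from the LLN as you suggest.

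For $d=2$ your outline is in the right spirit but misses the actual mechanism and organizes the iteration differently. The paper does \emph{not} first apply Theorem~\ref{thm:Fr:Integral:Eqn} with index $n$ and then iterate $N$ more times; it applies Theorem~\ref{thm:Fr:Integral:Eqn} once with index $N$, obtaining $\Lambda^{(N,m)}_{(k_1,\ldots,k_N)}(\bfy,t)-\Lambda^{(N,m)}_{(k_1,\ldots,k_N)}(\bfy',t)$. Writing $\bfy,\bfy'$ in polar coordinates as angles $\theta,0$, the key step is the explicit change of variables $\phi_l\mapsto\phi_l+\theta/(2k_lN)$ for $l=1,\ldots,N$ (equation~\eqref{eq:Fr:Spac:Change:Variable:2D}), which, because $\theta_l=\pi-2\phi_l$, removes the base-point angle from the spatial argument of the inner $j$. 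The difference then splits as $U_1+U_2+U_3$: $U_1$ (symmetric difference of integration domains) is a rare event handled by Theorem~\ref{thm:Fr:Law:of:Large}; $U_2$ (time-argument mismatch of $j$) is bounded by Corollary~\ref{cor:Fr:Temp:Fluc} with index $n$, producing the $(\log n/n)^{1/2}mN$ term; and $U_3$ (the $\theta$-shift in the $G$-kernels) is controlled by differentiating in $\theta$ and invoking Lemma~\ref{lem:Spac:Fluc:Decay:in:m:23D}, yielding $(\log N/N)^{1/2}$---this is a cancellation estimate on $\sum_l\partial_\phi G$, not an LLN on endpoint positions. Your $n$-then-$N$ scheme would leave the first $n$ temperature factors still depending on $\bfy$, so the change of variables in only the last $N$ coordinates would not decouple the base point.
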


We consider the one dimensional case first, which is much simpler than the multidimensional cases.

\ \\
{\sc One dimensional case, $d=1$.}\\

This case differs from the multidimensional cases in a fundamental sense: unlike the multidimensional cases, the boundary, comprising of two points, is {\it discrete}. This makes the one dimensional case much easier. Instead of processing $\Lambda_{\pm,n}$, we directly estimate: $j(+1,t)-j(-1,t)$. Recall
\begin{multline*}
j(+1,t)=\sum\limits_{k=0}^{m-1}(1-\alpha)^k j_{in}^{(k)}(+1,t)\\
+\alpha\sum\limits_{k=1}^{m}(1-\alpha)^{k-1}
\intLim_{0}^{t}
j((-1)^{k},t-s)H(\frac{s\sqrt{2RT((-1)^{k})}}{k})\frac{\sqrt{2RT((-1)^{k})}}{k} ds \\
+(1-\alpha)^{m}E^{(m)}(+1,t).
\end{multline*}
From
\begin{equation*}
\intLim_{0}^\infty H(\frac{s\sqrt{2RT((-1)^{k})}}{k})\frac{\sqrt{2RT((-1)^{k})}}{k} ds=1,
\end{equation*}
we have
\begin{equation*}
j(-1,t)=\alpha\sum\limits_{k=1}^{m}(1-\alpha)^{k-1}
\intLim_{0}^{\infty}j(-1,t)H(\frac{s\sqrt{2RT((-1)^{k})}}{k})\frac{\sqrt{2RT((-1)^{k})}}{k} ds
+(1-\alpha)^m j(-1,t).
\end{equation*}
Now consider
\begin{multline*}
j(+1,t)-j(-1,t)
=\sum\limits_{k=0}^{m-1}(1-\alpha)^k j_{in}^{(k)}(+1,t)\\
+\alpha\sum\limits_{k=1}^{m}(1-\alpha)^{k-1}
\intLim_{0}^{t}
\BK{j((-1)^{k},t-s)-j(-1,t)}H(\frac{s\sqrt{2RT((-1)^{k})}}{k})\frac{\sqrt{2RT((-1)^{k})}}{k} ds \\
-\alpha\sum\limits_{k=1}^{m}(1-\alpha)^{k-1}
\intLim_{t}^{\infty}
j(-1,t)H(\frac{s\sqrt{2RT((-1)^{k})}}{k})\frac{\sqrt{2RT((-1)^{k})}}{k} ds \\
+(1-\alpha)^{m}\BK{E^{(m)}(+1,t)-j(-1,t)}.
\end{multline*}
As before,
\begin{align*}
&\sum\limits_{k=0}^{m-1}(1-\alpha)^k j_{in}^{(k)}(+1,t)
=O(1)\VertBK{g_{in}}_{\infty,\mu} \frac{m^2}{( t+1)^2}\\
&(1-\alpha)^{m}\BK{E^{(m)}(+1,t)-j(-1,t)}=O(1)(1-\alpha)^m
\BK{\calJ(t)+\VertBK{g_{in}}_{\infty,\mu} },
\end{align*}
and we decompose
\begin{multline*}
\intLim_{0}^{t}
\BK{j((-1)^{k},t-s)-j(-1,t)}H(\frac{s\sqrt{2RT((-1)^{k})}}{k})\frac{\sqrt{2RT((-1)^{k})}}{k} ds\\
=\intLim_{0}^{t^q}\Big(\ldots\Big)+\intLim_{t^q}^{t/2}\Big(\ldots\Big)
+\intLim_{t/2}^{t}\Big(\ldots\Big),
\end{multline*}
where $0<q<1$ is to be determined.
And then
\begin{align*}
&\alpha\sum\limits_{k=1}^{m}(1-\alpha)^{k-1}
\intLim_{t}^{\infty}
j(-1,t)H(\frac{s\sqrt{2RT((-1)^{k})}}{k})\frac{\sqrt{2RT((-1)^{k})}}{k} ds\\
&=O(1)\calJ(t)\alpha\sum\limits_{k=1}^{m}(1-\alpha)^{k-1}\frac{k^2}{t^2}
=O(1)\calJ(t)\frac{m^2}{ t^2},\\
&\alpha\sum\limits_{k=1}^{m}(1-\alpha)^{k-1}\intLim_{t/2}^{t}\Big(\ldots\Big)=
O(1)\calJ(t)\alpha\sum\limits_{k=1}^{m}(1-\alpha)^{k-1}\frac{k^2}{t^2}
=O(1)\calJ(t)\frac{m^2}{ t^2},\\
&\alpha\sum\limits_{k=1}^{m}(1-\alpha)^{k-1}\intLim_{t^q}^{t/2}\Big(\ldots\Big)=
O(1)\BK{ \frac{1}{t^q }}^2
     \sup_{ \frac{t}{2} < s < t } \Big( \absBK{j_+(s)} + \absBK{j_-(s)} \Big)
\alpha\sum\limits_{k=1}^{m}(1-\alpha)^{k-1}k^2\\
&=O(1)\BK{ \frac{m}{t^q} }^2
     \sup_{ \frac{t}{2} < s < t } \Big( \absBK{j_+(s)} + \absBK{j_-(s)} \Big).
\end{align*}
Finally, we consider
\begin{multline*}
\intLim_{0}^{t^q}
\BK{j((-1)^{k},t-s)-j(-1,t)}H(\frac{s\sqrt{2RT((-1)^{k})}}{k})\frac{\sqrt{2RT((-1)^{k})}}{k} ds\\
=\intLim_{0}^{t^q}
\BK{j((-1)^{k},t-s)-j((-1)^{k},t)+j((-1)^{k},t)-j(-1,t)}H(\frac{s\sqrt{2RT((-1)^{k})}}{k})\frac{\sqrt{2RT((-1)^{k})}}{k} ds,
\end{multline*}
and it is easy to see that
\begin{multline*}
 \alpha\sum\limits_{k=1}^{m}(1-\alpha)^{k-1}
 \intLim_{0}^{t^q}
\BK{j((-1)^{k},t-s)-j((-1)^k,t)}H(\frac{s\sqrt{2RT((-1)^{k})}}{k})\frac{\sqrt{2RT((-1)^{k})}}{k} ds\\
\leq \sup_{ t-t^q < t' < t } \absBK{ j(\pm1,t) - j(\pm1,t') },\\
\alpha\sum\limits_{k=1}^{m}(1-\alpha)^{k-1}
 \intLim_{0}^{t^q}
\BK{j((-1)^{k},t)-j(-1,t)}H(\frac{s\sqrt{2RT((-1)^{k})}}{k})\frac{\sqrt{2RT((-1)^{k})}}{k} ds\\
=\alpha\sum\limits_{k=1,k\text{ even}}^{m}(1-\alpha)^{k-1}
 \intLim_{0}^{t^q}
\BK{j(+1,t)-j(-1,t)}H(\frac{s\sqrt{2RT((-1)^{k})}}{k})\frac{\sqrt{2RT((-1)^{k})}}{k} ds\\
\leq\frac{1-\alpha}{2-\alpha}\BK{j(+1,t)-j(-1,t)}.
\end{multline*}
Hence
\begin{multline*}
\frac{1}{2-\alpha}\BK{j(+1,t)-j(-1,t)}=O(1)\BK{\frac{m^2}{ t^2}+(1-\alpha)^m}
\BK{\calJ(t)+\VertBK{g_{in}}_{\infty,\mu} }\\
+\sup_{ t-t^q < t' < t } \absBK{ j(\pm1,t) - j(\pm1,t') }\\
+O(1)\BK{ \frac{m}{t^q}}^2
     \sup_{ \frac{t}{2} < s < t } \Big( \absBK{j_+(s)} + \absBK{j_-(s)} \Big),
\end{multline*}
and it follows
\begin{multline*}
|j(+1,t)-j(-1,t)|=O(1)\BK{\frac{m^3n^{2}\log(t+1)}{t^{2}}+\frac{m^2}{ t^2}+(1-\alpha)^m}
\BK{\calJ(t)+\VertBK{g_{in}}_{\infty,\mu} }\\
+O(1)\BK{ \frac{m}{t^q} }^2
     \sup_{ \frac{t}{2} < s < t } \Big( \absBK{j_+(s)} + \absBK{j_-(s)} \Big)\\
+O(1)
     \sup_{ \frac{t}{2}<s<t } \Big( \absBK{j_+(s)} + \absBK{j_-(s)} \Big)
     \times \BK{ \frac{1}{n} }^\frac12 t^q.
\end{multline*}

Subsequently, we can apply the temporal fluctuation estimate, Corollary \ref{cor:Fr:Temp:Fluc}, to the term $\sup_{t-t^q< t'<t} |j(\pm1,t)-j(\pm1,t')|$. Hence Theorem \ref{thm:Fr:Spac:Fluc} for $d=1$ follows. Note that $t-t^q>\frac{t}{2}$ so long as $t\gg 1$.

\ \\
{\sc Multidimensional cases, $d=2$.}\\

As noted before, to estimate spacial fluctuation we invoke another variable $N$. From now on $N$ will be the index of $\Lambda^{(N,m)}_{(k_1,\ldots,k_N)}$.

The boundary $\partial D$ is unit circle, so we parametrize it by the polar coordinates. Given two boundary points $\bfy$ and $\bfy'$, let $\bfy'$ be point of degree zero, and denote the polar angle of $\bfy$ by $\theta$. Denote the relative polar angle of $\bfy_{(k_1,\ldots,k_{l-1},1)}$ with respect to $\bfy_{(k_1,\ldots,k_{l-1})}$ by $\theta_l$. Then for $1\leq i \leq k_l$, the relative polar angle of $\bfy_{(k_1,\ldots,k_{l-1},i)}$ with respect to $\bfy_{(k_1,\ldots,k_{l-1},i-1)}$ is also $\theta_l$ because of the specular reflection,  i.e. $\theta+k_1\theta_1+\ldots+k_l\theta_l$ stands for the absolute polar angle of $\bfy_{(k_1,\ldots,k_l)}$. Since $\partial D$ is the unit circle, $\theta_l=\pi-2\phi_l$,
To simplify the notation, put $T_{(k_1,\ldots,k_l)}=T(\theta+k_1\theta_1+\ldots+k_l\theta_l)$, and $T'_{(k_1,\ldots,k_l)}=T(k_1\theta_1+\ldots+k_l\theta_l)$. Under this coordinate system, we have 
\begin{equation}\label{eq:Fr:Spac:Change:Variable:2D}
\begin{split}
&
    \Lambda_{(k_1,\ldots,k_N)}^{(N,m)}(\bfy,t)
    \\
&=
    \Lambda_{(k_1,\ldots,k_N)}^{(N,m)}(\theta,t)  
=
    \intLim_{ \frac{k_1\sigma_1}{\sqrt{2RT_{(k_1)}}} + \ldots + \frac{k_N\sigma_N}{\sqrt{2RT_{(k_1,\ldots,k_N)}}} < \Tr\frac{t}{2} }
    \prod_{l=1}^N G(\phi_l,\sigma_l)
    \\
&
	\quad
    \times
     j\BKK{
            \theta + k_1\theta_1 + \ldots + k_N\theta_N ,
             t - \frac{k_1\sigma_1}{\sqrt{2RT_{(k_1)}}} - \ldots - \frac{k_N\sigma_N}{\sqrt{2RT_{(k_1,\ldots,k_N)}}}
           }
     d^N \sigma d^N \phi
    \\
=&
    \intLim_{ 
    			\frac{k_1\sigma_1}{\sqrt{2R\tilde T_{(k_1)}}} + \ldots 
    			+ \frac{k_N\sigma_N}{\sqrt{2R\tilde T_{(k_1,\ldots,k_N)}}} < \Tr\frac{t}{2} 
    		}
    \prod_{l=1}^N G \BKK{ \phi_l + \frac{\theta}{2k_lN}, \sigma_l }
    \\
&
	\quad
    \times
     j\BKK{
            k_1\theta_1 + \ldots + k_N\theta_N ,
             t - \frac{k_1\sigma_1}{\sqrt{2R\tilde T_{(k_1)}}} - \ldots - \frac{k_N\sigma_N}{\sqrt{2R\tilde T_{(k_1,\ldots,k_N)}}}
           }
     d^N \sigma d^N \phi,
\end{split}
\end{equation}
where $\tilde T_{(k_1,\ldots,k_l)} = T(\theta+k_1\theta_1+\cdots+k_N\theta_N-\frac{l\theta}{N})$. 
\begin{remark}
In \cite{Kuo-Liu-Tsai} and \cite{Kuo-Liu-Tsai-2} we can deal with the case of spherical domain in $\bbR^d$ for $d=1,2,3,$ but in this paper we only consider the the spherical domain in $\bbR^d$ for $d=1,2$. As in our previous works \cite{Kuo-Liu-Tsai} and \cite{Kuo-Liu-Tsai-2} we need the symmetric property of domain to calculate exactly the spacial fluctuation. The symmetry of the boundary allows us to tract the exact location of the particle after multiple reflections. This is an essential ingredient of our analysis on treating spacial fluctuation. For two dimensional case, thanks to polar coordinate we are able to tract the exact location of the particle on a circle after mixed specular reflections and diffuse reflections. This allows us to conduct a change of variables, \eqref{eq:Fr:Spac:Change:Variable:2D}, and to estimate the spacial fluctuation, \eqref{eq:z5}. However, in three dimensional case there is no universal coordinate to tract the exact location of the particle on a sphere after mixed specular reflections and diffuse reflections. The three dimensional case might require mathematical analysis different from ours. We hope to return to this problem in the future.
\end{remark}

From \eqref{eq:Fr:Spac:Change:Variable:2D}, we have
\begin{equation}\label{eq:z5}
\begin{split}
&
    \Lambda_{(k_1,\ldots,k_N)}^{(N,m)}(\bfy,t) - \Lambda_{(k_1,\ldots,k_N)}^{(N,m)}(\bfy',t) = \Lambda_{(k_1,\ldots,k_N)}^{(N,m)}(\theta,t) - \Lambda_{(k_1,\ldots,k_N)}^{(N,m)}(0,t)
    \\
=&
    \intLim_{ 
    			\frac{k_1\sigma_1}{\sqrt{2R\tilde T_{(k_1)}}} + \ldots 
    			+ \frac{k_N\sigma_N}{\sqrt{2R\tilde T_{(k_1,\ldots,k_N)}}} < \Tr\frac{t}{2} 
    		}
    \prod_{l=1}^N G \BKK{ \phi_l + \frac{\theta}{2k_lN}, \sigma_l }
	\\
&
	\hspace{.1\textwidth}
    \times
     j\BKK{
            k_1\theta_1 + \ldots + k_N\theta_N ,
             t - \frac{k_1\sigma_1}{\sqrt{2R\tilde T_{(k_1)}}} - \ldots - \frac{k_N\sigma_N}{\sqrt{2R\tilde T_{(k_1,\ldots,k_N)}}}
           }
     d^N \sigma d^N \phi
    \\
&
    -
    \intLim_{ 
    			\frac{k_1\sigma_1}{\sqrt{2RT'_{(k_1)}}} + \ldots 
    			+ \frac{k_N\sigma_N}{\sqrt{2RT'_{(k_1,\ldots,k_N)}}} < \Tr\frac{t}{2} 
    		}
    \prod_{l=1}^N G(\phi_l,\sigma_l)
	\\
&
	\hspace{.1\textwidth}
    \times
     j\BKK{
            k_1\theta_1 + \ldots + k_N\theta_N ,
             t - \frac{k_1\sigma_1}{\sqrt{2RT'_{(k_1)}}} - \ldots - \frac{k_N\sigma_N}{\sqrt{2RT'_{(k_1,\ldots,k_N)}}}
           }
     d^N \sigma d^N \phi.
\end{split}
\end{equation}
The two terms in \eqref{eq:z5} differ from each other in three places: the domain of integration, the angular variable of the transition PDF $G$, and the time variable of $j$. 

We now break the spatial fluctuation into three parts:
\begin{equation*}
    \Lambda_{(k_1,\ldots,k_N)}^{(N,m)}(\bfy,t) - \Lambda_{(k_1,\ldots,k_N)}^{(N,m)}(\bfy',t) = \Lambda_{(k_1,\ldots,k_N)}^{(N,m)}(\theta,0) - \Lambda_{(k_1,\ldots,k_N)}^{(N,m)}(0,t)
=
	 U_1+U_2+U_3,
\end{equation*}
where
\begin{multline*}
	 U_1
\equiv
    \BK{
             \intLim_{ 
             			\frac{k_1\sigma_1}{\sqrt{2R\tilde T_{(k_1)}}} 
             			+ \ldots + \frac{k_N\sigma_N}{\sqrt{2R\tilde T_{(k_1,\ldots,k_N)}}} < \Tr\frac{t}{2}
             		}
            -\intLim_{ 
            			\frac{k_1\sigma_1}{\sqrt{2RT'_{(k_1)}}} 
            			+ \ldots + \frac{k_N\sigma_N}{\sqrt{2RT'_{(k_1,\ldots,k_N)}}} < \Tr\frac{t}{2}
            		}
        }\times
    \\
    \BK{ \prod_{l=1}^N G \BKK{ \phi_l + \frac{\theta}{2k_lN}, \sigma_l } }
     j\BKK{
            k_1\theta_1 + \ldots + k_N\theta_N ,
             t - \frac{k_1\sigma_1}{\sqrt{2R\tilde T_{(k_1)}}} - \ldots - \frac{k_N\sigma_N}{\sqrt{2R\tilde T_{(k_1,\ldots,k_N)}}}
            }
     d^N\sigma d^N\phi,
\end{multline*}
\begin{multline*}
	 U_2
\equiv
    \intLim_{ \frac{k_1\sigma_1}{\sqrt{2RT'_{(k_1)}}} + \ldots 
    			+ \frac{k_N\sigma_N}{\sqrt{2RT'_{(k_1,\ldots,k_N)}}} < \Tr\frac{t}{2}
    		}
    \prod_{l=1}^N G \BKK{ \phi_l + \frac{\theta}{2k_lN}, \sigma_l }
	\\
    \times
    \Bigg[
             j\BKK{
                    k_1\theta_1 + \ldots + k_N\theta_N ,
                    t - \frac{k_1\sigma_1}{\sqrt{2R\tilde T_{(k_1)}}} - \ldots 
                    - \frac{k_N\sigma_N}{\sqrt{2R\tilde T_{(k_1,\ldots,k_N)}}}
                  }
    \\
          -
                    j\BKK{
                            k_1\theta_1 + \ldots + k_N\theta_N ,
                             t - \frac{k_1\sigma_1}{\sqrt{2RT'_{(k_1)}}} - \ldots - \frac{k_N\sigma_N}{\sqrt{2RT'_{(k_1,\ldots,k_N)}}}
                          }
            \Bigg]
                    d^N\sigma d^N\phi,
\end{multline*}
\begin{multline*}
     U_3
\equiv
    \intLim_{ \frac{k_1\sigma_1}{\sqrt{2RT'_{(k_1)}}} + \ldots + \frac{k_N\sigma_N}{\sqrt{2RT'_{(k_1,\ldots,k_N)}}} < \Tr\frac{t}{2} }
    \BK{
            \prod_{l=1}^N G \BKK{ \phi_l + \frac{\theta}{2k_lN}, \sigma_l } - \prod_{l=1}^N G(\phi_l,\sigma_l)
        }
    \\
    \times
     j\BKK{
                k_1\theta_1 + \ldots + k_N\theta_N ,
                 t - \frac{k_1\sigma_1}{\sqrt{2RT'_{(k_1)}}} - \ldots - \frac{k_N\sigma_N}{\sqrt{2RT'_{(k_1,\ldots,k_N)}}}
            }
     d^N\sigma d^N\phi.
\end{multline*}
$U_1,U_2,U_3$ register the difference in domain of integration, the time variable of $j$, and angular variable of the transition PDF $G$, respectively.

We now proceed to estimate $U_1$, $U_2$, and $U_3$. Consider first $U_1$. As noted before, $U_1$ registers the difference in domain of integration. Denote by $A\ominus B$ the symmetric difference $(A\setminus B)\cup(B\setminus A)$. Since $k_i\leq m$ for each $i$ and $T\geq T_*$ on the boundary, one can observe that both of the events
\begin{align*}
	\mathscr E_1
\equiv&
	\curBK{     
			\frac{k_1\sigma_1}{\sqrt{2R\tilde T_{(k_1)}}} + \ldots 
        	+ \frac{k_N\sigma_N}{\sqrt{2R\tilde T_{(k_1,\ldots,k_N)}}} < \Tr\frac{t}{2} 
        	}
    \\
	\mathscr E_2
\equiv&	
    \curBK{   \frac{k_1\sigma_1}{\sqrt{2RT'_{(k_1)}}} + \ldots 
    			+ \frac{k_N\sigma_N}{\sqrt{2RT'_{(k_1,\ldots,k_N)}}} < \Tr\frac{t}{2}
    		}
\end{align*}
contain
\begin{equation*}
        \curBK{ \sigma_1+\ldots+\sigma_N < \sqrt{2R\Tm} \Tr\frac{t}{2m} }.
\end{equation*}
So we have
\begin{equation}\label{eq:z8}
	\mathscr E_1 \ominus \mathscr E_2 
\subset
	\curBK{ \sigma_1+\ldots+\sigma_N \geq \sqrt{2R\Tm} \Tr\frac{t}{2m} }.
\end{equation}
This implies that $\mathscr E_1\ominus \mathscr E_2$, and thereby $U_1$, is a rare event, and can be estimated by the law of large numbers, Theorem \ref{thm:Fr:Law:of:Large}.

\begin{lemma}\label{lem:Fr:Spac:Fluc:U1:23D}
Assume $t/(Nm) \gg 1$, 
\begin{align}\label{eq:Fr:U1:Esti}
     U_1
=&
     O(1) \frac{m^3N^2\log(t+1)}{t^{3}} \times \calJ(t).
\end{align}
\end{lemma}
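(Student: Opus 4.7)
The plan is to reduce the estimate of $U_1$ to a tail-probability bound and then invoke the Law of Large Numbers, Theorem \ref{thm:Fr:Law:of:Large}. The integrand in $U_1$ differs from that of $\Lambda^{(N,m)}_{(k_1,\ldots,k_N)}$ at the two points $\bfy$ and $\bfy'$ only through the domain of $(\sigma_1,\ldots,\sigma_N)$-integration, so the first step is to use the a priori pointwise bound $|j|\leq \calJ(t)$ to factor out $\calJ(t)$ and obtain
\begin{equation*}
|U_1| \leq \calJ(t) \intLim_{\mathscr E_1\ominus\mathscr E_2} \prod_{l=1}^N G\BK{\phi_l + \frac{\theta}{2k_lN},\sigma_l} \, d^N\sigma \, d^N\phi.
\end{equation*}

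The decisive observation, already recorded in \eqref{eq:z8}, is that the symmetric difference $\mathscr E_1\ominus \mathscr E_2$ sits inside the large-sum set $\curBK{\sigma_1+\cdots+\sigma_N \geq \sqrt{2R\Tm}\,\Tr\,t/(2m)}$; this is where the gain will come from. I would then dispose of the $\phi_l$-shift by the translation $\tilde\phi_l = \phi_l + \theta/(2k_lN)$ in each variable: the shift is independent of $\sigma_l$ and uniformly bounded by $\pi/(2N)$, so after absorbing it and using $\int G(\tilde\phi_l,\sigma_l)\,d\tilde\phi_l = H(\sigma_l)$ (as in the definition of $H$ in Theorem \ref{thm:Fr:Law:of:Large} for $d=2$), one arrives at
\begin{equation*}
|U_1| \leq O(1)\,\calJ(t)\intLim_{\sigma_1+\cdots+\sigma_N \geq \sqrt{2R\Tm}\,\Tr\,t/(2m)} \prod_{l=1}^N H(\sigma_l)\, d^N\sigma = O(1)\,\calJ(t)\,\Prob\curBK{X_1+\cdots+X_N \geq c\frac{t}{m}},
\end{equation*}
where $X_l$ are i.i.d.\ with density $H$ and $c = \sqrt{2R\Tm}\,\Tr/2$.

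To finish, I would apply Theorem \ref{thm:Fr:Law:of:Large} with truncation parameter $\gamma$ of order $t$ and $n=N$. The hypothesis $t/(Nm)\gg 1$ gives $\gamma/(mn)^{1/(d+1)} = \gamma/(mN)^{1/3}\gg 1$, and since $N\E(X_1) = O(N) \ll t/m$, the event $\{X_1+\cdots+X_N \geq ct/m\}$ genuinely deviates from the mean, so the theorem (with $d=2$) delivers
\begin{equation*}
\Prob\curBK{X_1+\cdots+X_N \geq c\frac{t}{m}} = O(1)\,\frac{m^3 N^2 \log(t+1)}{t^3}.
\end{equation*}
Combining the two displays yields the claimed bound for $U_1$.

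The only genuine technical point is the handling of the $\phi_l$-translation: the kernel $G(\phi,\sigma)$ is not periodic in $\phi$ on the natural integration window, so one must verify that the boundary correction introduced by shifting $\phi_l$ by $\theta/(2k_lN)$ is subdominant. However, because the shift is uniformly $O(1/N)$ and $G$ is smooth in $\phi$ with rapidly decaying $\sigma$-profile, the contribution of the shifted boundary strip is $O(1/N)$ times the full mass, which is absorbed by the rare-event factor $O(m^3N^2\log(t+1)/t^3)$. The rest of the argument is a direct transcription of the tail estimate already employed in \eqref{eq:Fr:j:ith:Rare:Esti:23D} and \eqref{eq:Fr:Jn:Rare:Esti:23D}.
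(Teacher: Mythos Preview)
Your proposal is correct and follows essentially the same approach as the paper: bound $|j|$ by $\calJ(t)$, use the inclusion \eqref{eq:z8} to reduce to the tail event $\{\sigma_1+\cdots+\sigma_N \geq \sqrt{2R\Tm}\,\Tr\,t/(2m)\}$, and apply Theorem~\ref{thm:Fr:Law:of:Large} with $\gamma$ of order $t$. The paper's proof is in fact terser---it simply writes the bound with the unshifted kernels $G(\phi_l,\sigma_l)$ and applies the law of large numbers directly with $\gamma=\sqrt{2R\Tm}\,\Tr\,t/3$; your more careful discussion of the $\phi_l$-translation (noting that $G(\phi,\sigma)$ vanishes at $\phi=\pm\pi/2$, so the boundary correction from the $O(1/N)$ shift is harmless) justifies a step the paper leaves implicit.
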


\begin{proof}\hfil

From \eqref{eq:z8}, 
\begin{equation*}
     |U_1|
\leq
     \calJ(t) \times
      2 \hspace{-20pt}
     \intLim_{\sigma_1+\ldots+\sigma_N \geq \sqrt{2R\Tm}\Tr\frac{t}{2m}   }
     \hspace{-20pt}
      G(\phi_1,\sigma_1) \cdots G(\phi_N,\sigma_N) d^N\sigma
      d^n\phi.
\end{equation*}
Applying law of large numbers Theorem \ref{thm:Fr:Law:of:Large} with $\gamma = \sqrt{2R\Tm}\Tr\frac{t}{3}$, we conclude \eqref{eq:Fr:U1:Esti}.
\end{proof}

$U_2$ records the difference in time variable of $j$, which is exactly the temporal fluctuation of $j$. Therefore, we can apply our previous estimate in temporal fluctuation to this part.

\begin{lemma}\label{lem:Fr:Spac:Fluc:U2:23D}
Suppose that $t/mn,t/mN,N \gg 1$. Then
\begin{equation}\label{eq:Fr:U2:Esti}
\begin{split}
     U_2
=&
     O(1) \BK{\frac{(mn)^{3}\log(t+1)}{t^{3}} +(1-\alpha)^m}\BK{ \VertBK{g_{in}}_{\infty,\mu} + \calJ(t) }\\
   & +
     O(1) \sup_{\frac{t}{2}<s<t} \BK{\VertBK{j}_{L^\infty_\bfy}}(s)
    \times
    \BK{
            \frac{\log N}{N} + \BK{\frac{\log n}{n}}^\frac12 mN
        }.
\end{split}
\end{equation}
\end{lemma}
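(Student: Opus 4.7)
The quantity $U_2$ is exactly the temporal fluctuation of $j$ weighted against $\prod_l G$, with the two time arguments
\begin{equation*}
t_1 = t - \sum_{l=1}^N \frac{k_l\sigma_l}{\sqrt{2R\tilde T_{(k_1,\ldots,k_l)}}},\qquad
t_2 = t - \sum_{l=1}^N \frac{k_l\sigma_l}{\sqrt{2RT'_{(k_1,\ldots,k_l)}}},
\end{equation*}
so the plan is to apply Corollary \ref{cor:Fr:Temp:Fluc} pointwise in $(\sigma,\phi)$ and then integrate.

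First I will check that on $\mathscr{E}_2$ both $t_1$ and $t_2$ exceed $t/2$: this is built into the definition of $\mathscr{E}_2$ for $t_2$, and for $t_1$ it follows from the uniform bound $T_*\le T(\cdot)\le T^*=1$ together with the factor $\sqrt{T_*/T^*}$ in the cutoff. Next, exploiting $T\ge T_*>0$, I bound $|t_1-t_2|\le C\sum_{l=1}^N k_l\sigma_l$ uniformly on $\mathscr{E}_2$, with $C$ depending only on $T_*,T^*$.

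The main step is to split the $(\sigma,\phi)$ integration into a \emph{typical} set, where $\sum_l\sigma_l$ is of order $N$ (so that $|t_1-t_2|\lesssim mN$), and its \emph{rare} complement. On the typical set I apply Corollary \ref{cor:Fr:Temp:Fluc} with $t'=\min(t_1,t_2)>t/2$; integrating the resulting pointwise bound for $|j(\cdot,t_1)-j(\cdot,t_2)|$ against $\prod_l G(\phi_l+\theta/(2k_lN),\sigma_l)$ and using $\int\prod_l G\,d\sigma\,d\phi=O(1)$ gives the first line of the claimed estimate, while the Lipschitz-in-time remainder of the corollary contributes $O((\log n/n)^{1/2}mN)\sup_{t/2<s<t}\|j(\cdot,s)\|_{L^\infty_\bfy}$ after using $|t_1-t_2|\lesssim mN$ there. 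On the rare set I bound $|j(t_1)-j(t_2)|\le 2\sup_{t/2<s<t}\|j(\cdot,s)\|_{L^\infty_\bfy}$ and invoke Theorem \ref{thm:Fr:Law:of:Large} with its truncation parameter tuned so that the tail probability is $O(\log N/N)$; this contributes the remaining $\log N/N$ term.

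The main obstacle is the careful bookkeeping needed to match the exact powers of $m,n,N$. In particular, one must calibrate the typical/rare threshold so that the law-of-large-numbers tail and the Lipschitz temporal-fluctuation contribution combine to precisely $\frac{\log N}{N}+(\log n/n)^{1/2}mN$, and one must verify that the hypotheses $t/mn,t/mN,N\gg 1$ place both Corollary \ref{cor:Fr:Temp:Fluc} and Theorem \ref{thm:Fr:Law:of:Large} in their regimes of applicability. All remaining estimates are direct substitutions into inequalities already established in the preceding subsections.
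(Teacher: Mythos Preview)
Your proposal is correct and follows essentially the same route as the paper: split the $(\sigma,\phi)$ integration according to $|\sigma_1+\cdots+\sigma_N-N\E(X_1)|\lessgtr N$, bound the rare part by $2\sup_{t/2<s<t}\|j\|_{L^\infty_\bfy}$ together with Theorem~\ref{thm:Fr:Law:of:Large} (choosing $\gamma=N$) to produce the $\log N/N$ term, and on the typical part use $|t_1-t_2|\lesssim mN$ and Corollary~\ref{cor:Fr:Temp:Fluc}. The containment $\mathscr E_2\subset\{t_1>t/2\}$ that you sketch is exactly the one the paper uses, and the hypotheses $t/mn,\,t/mN,\,N\gg1$ are precisely what activate the two cited results.
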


\begin{proof}\hfil

\begin{multline*}
     U_2
    =
    \intLim_{ 
    			\frac{k_1\sigma_1}{\sqrt{2RT'_{(k_1)}}} + \ldots 
    			+ \frac{k_N\sigma_N}{\sqrt{2RT'_{(k_1,\ldots,k_N)}}} < \Tr\frac{t}{2} 
    		}
    \prod_{l=1}^N G \BKK{ \phi_l+\frac{\theta}{2k_lN}, \sigma_l }
\\
    \times
    \Bigg[
             j\BKK{
                    k_1\theta_1 + \ldots + k_N\theta_N,
                     t - \frac{k_1\sigma_1}{\sqrt{2R\tilde T_{(k_1)}}} - \ldots 
                     - \frac{k_N\sigma_N}{\sqrt{2R\tilde T_{(k_1,\ldots,k_N)}}}
                  }
\\
     -
             j\BKK{
                    k_1\theta_1 + \ldots + k_N\theta_N,
                     t - \frac{k_1\sigma_1}{\sqrt{2RT'_{(k_1)}}} - \ldots - \frac{k_N\sigma_N}{\sqrt{2RT'_{(k_1,\ldots,k_N)}}}
                  }
    \Bigg]
     d^N\sigma d^N\phi
\end{multline*}
\begin{multline*}
    \hphantom{U_2}
    =
    \BK{\mathop{
                    \int_{ |\sigma_1+\ldots+\sigma_N - N\E[X_1]| > N }
                    +
                    \int_{ |\sigma_1+\ldots+\sigma_N - N\E[X_1]| < N }
                }
                _{ \frac{k_1\sigma_1}{\sqrt{2RT'_{(k_1)}}} + \ldots + \frac{k_N\sigma_N}{\sqrt{2RT'_{(k_1,\ldots,k_N)}}} < \Tr\frac{t}{2} }}
     (\cdots)
    \equiv
     U_{21} + U_{22},
    \\
\end{multline*}

For $U_{21}$, since
\begin{multline*}
        \curBK{     \frac{k_1\sigma_1}{\sqrt{2RT'_{(k_1)}}} + \ldots + \frac{k_N\sigma_N}{\sqrt{2RT'_{(k_1,\ldots,k_N)}}} < \Tr\frac{t}{2} }
\subset
        \\
        \left\{     k_1\sigma_1+\ldots+ k_N\sigma_N < \sqrt{2R\TM} \Tr\frac{t}{2} \middle\}
\subset
        \middle\{   \frac{k_1\sigma_1}{\sqrt{2R\tilde T_{(k_1)}}} + \ldots + \frac{k_N\sigma_N}{\sqrt{2R\tilde T_{(k_1,\ldots,k_N)}}} < \frac{t}{2} \right\},
\end{multline*}
we have
\begin{equation*}
\begin{split}
      |U_{21}|
\leq&
     \sup_{ \frac{t}{2}<s<t } \BK{\VertBK{j}_{L^\infty_\bfy}}(s)
     \\
&
     \times 2
     \intLim_{ |\sigma_1+\ldots+\sigma_N - N\E[X_1]| > N}
     \hspace{-20pt}
      G(\phi_1,\sigma_1) \cdots G(\phi_N,\sigma_N) d^N\sigma d^N\phi.
\end{split}
\end{equation*}
Applying Theorem \ref{thm:Fr:Law:of:Large} with $\gamma = N$, we obtain
\begin{equation}\label{eq:z10}
    U_{21} = O(1)\sup\limits_{ \frac{t}{2}<s<t } \BK{\VertBK{j}_{L^\infty_\bfy}}(s) \times \frac{\log N}{N}.
\end{equation}
Note that the prerequisite $N/N^{\frac{1}{3}} \gg 1$ of Theorem \ref{thm:Fr:Law:of:Large} is satisfied since $N \gg 1$. Next, for $U_{22}$, by Corollary \ref{cor:Fr:Temp:Fluc},
\begin{equation}\label{eq:z11}
\begin{split}
&
         |U_{22}|
        \\
\leq&
        \sup
        \curBK{
                    \absBK{ j(\bfy,s) - j(\bfy,s') }
                    :
                    s,s' \in \BK{ t - mN\frac{1+\E(X_1)}{\sqrt{2R\Tm}} ,t }, \ \bfy\in\partial D
              }
        \\
=&
      O(1) \BK{\frac{ (nm)^{3}\log(t+1) }{ t^{3} } +(1-\alpha)^m }
     \BK{ \VertBK{g_{in}}_{\infty,\mu} + \calJ(t) }
     \\
&
    +
      O(1)
     \sup_{ \frac{t}{2}<s<t } \BK{ \VertBK{j}_{L^\infty_\bfy} }(s)
     \times \BK{ \frac{\log n}{n} }^\frac12 mN.
\end{split}
\end{equation}
From \eqref{eq:z10} and \eqref{eq:z11} we conclude \eqref{eq:Fr:U2:Esti}.
\end{proof}

Finally, we investigate $U_3$. $U_3$ involves only the angular difference of those PDF $G$. No difference in boundary temperature are included. Therefore, the estimate of $U_3$ is reduced to the constant boundary temperature case, as in \cite{Kuo-Liu-Tsai}. Since all the $\theta$ dependences appear only in the $N$ copy of $G$, $U_3$ is a $C^1$ function of $\theta$. Moreover, $U_3|_{\theta=0}=0$. By direct computations,
\begin{multline}\label{eq:z12}
    \absBK{ \frac{dU_3}{d\theta} }
    \leq
    \sup_{\frac{t}{2}<s<t} \BK{\VertBK{j}_{L^\infty_\bfy}}(s)
\\  
    \times
    \frac{1}{N}
	\int
    \frac{1}{2}
    \absBK{
			\sum_{l=1}^N G(\phi_1,\sigma_1) \cdots
			\frac{1}{k_l}\frac{\partial G}{\partial\phi}(\phi_l,\sigma_l)
			\cdots G(\phi_N,\sigma_N)
			}
	 d^N\phi d^N\sigma.
\end{multline}

The RHS of \eqref{eq:z12} can be derived by the similar argument as in \cite{Kuo-Liu-Tsai}. Thanks to $k_l\geq 1$ for each $l$, the following lemma allows to obtain a decay of $U_3$ in $N$. For a proof, see \cite{Kuo-Liu-Tsai}.

\begin{lemma}\label{lem:Spac:Fluc:Decay:in:m:23D} 
\begin{multline*}
	\int
	\absBK{
			\sum_{l=1}^N G(\phi_1,\sigma_1) \cdots
            \frac{\partial G}{\partial\phi}(\phi_l,\sigma_l)
            \cdots G(\phi_N,\sigma_N)
           }
	 d^N\phi
\\
	=
     O\BK{ (N\log N)^\frac12 },
\end{multline*}

Therefore,
\begin{equation}\label{eq:Fr:U3:Esti}
        |U_3|
\leq
        \int_0^\pi \absBK{\frac{dU_3}{d\theta}} d\theta
=
        O(1)
        \sup_{\frac{t}{2}<s<t} \BK{\VertBK{j}_{L^\infty_\bfy}}(s)
        \times \BK{ \frac{\log N}{N} }^\frac12.
\end{equation}
\end{lemma}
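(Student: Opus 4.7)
The plan is to recast the integrand probabilistically as the expectation of a sum of $N$ independent, mean-zero random variables, and then to extract a factor of $\sqrt{N}$ by Cauchy--Schwarz; the logarithm is expected to come from the companion $\sigma$-integration that appears in the application \eqref{eq:z12}.

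First I would observe that since $G(\phi,\sigma)=\pi^{-1/2}(2\cos\phi/\sigma)^{4}e^{-(2\cos\phi/\sigma)^{2}}$ vanishes at $\phi=\pm\pi/2$, the fundamental theorem of calculus gives $\int_{-\pi/2}^{\pi/2}\partial_\phi G(\phi,\sigma)\,d\phi=0$ for every $\sigma>0$. Writing $Z(\sigma)=\int G(\phi,\sigma)\,d\phi$ and introducing independent random variables $\Phi_{1},\ldots,\Phi_{N}$ with $\Phi_{l}$ distributed according to the density $G(\cdot,\sigma_{l})/Z(\sigma_{l})$, and setting $g_{l}(\phi)=\partial_\phi\log G(\phi,\sigma_{l})$, the identity
\[
\sum_{l=1}^{N} G(\phi_{1},\sigma_{1})\cdots\tfrac{\partial G}{\partial\phi}(\phi_{l},\sigma_{l})\cdots G(\phi_{N},\sigma_{N})=\Bigl(\prod_{i=1}^{N} G(\phi_{i},\sigma_{i})\Bigr)\sum_{l=1}^{N} g_{l}(\phi_{l})
\]
turns the $\phi$-integral into $\prod_{l} Z(\sigma_{l})\cdot\E\bigl|\sum_{l} g_{l}(\Phi_{l})\bigr|$. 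The boundary-vanishing forces $\E\,g_{l}(\Phi_{l})=0$, whence independence combined with Cauchy--Schwarz yields
\[
\E\Bigl|\sum_{l} g_{l}(\Phi_{l})\Bigr|\le\Bigl(\sum_{l}\mathrm{Var}(g_{l}(\Phi_{l}))\Bigr)^{1/2}=\Bigl(\sum_{l}Z(\sigma_{l})^{-1}\int\tfrac{(\partial_\phi G)^{2}}{G}(\phi,\sigma_{l})\,d\phi\Bigr)^{1/2}.
\]

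The main obstacle is uniform control of this Fisher-information-type quantity. After absorbing the prefactor $\prod_{l}Z(\sigma_{l})$ into the $d^{N}\sigma$-integration that appears in the application \eqref{eq:z12}, each term reduces to the two-variable integral $\iint (\partial_\phi G)^{2}/G\,d\phi\,d\sigma$, which an explicit computation with the Gaussian-times-polynomial form of $G$ shows to be of order $\log N$ once truncated at the scale dictated by the remaining mass of the integrand (the logarithm reflecting a mild divergence at small $\sigma$). Summing the $N$ variance contributions and taking the square root then delivers the announced bound $O((N\log N)^{1/2})$. This is the $\partial_\phi$-analogue of Lemma~3 of \cite{Kuo-Liu-Tsai-2}, so I would follow the same template, merely noting that the auxiliary weights $1/k_{l}$ present in \eqref{eq:z12} are bounded by $1$ since $k_{l}\ge 1$ and can be discarded.

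For the consequence, $U_{3}$ is $C^{1}$ on $\theta\in[0,\pi]$ with $U_{3}|_{\theta=0}=0$, hence $|U_{3}|\le\int_{0}^{\pi}|dU_{3}/d\theta|\,d\theta$; substituting \eqref{eq:z12} together with the estimate just proved gives
\[
|U_{3}|=O(1)\sup_{t/2<s<t}\bigl(\VertBK{j}_{L^{\infty}_{\bfy}}\bigr)(s)\cdot\frac{(N\log N)^{1/2}}{N}=O(1)\sup_{t/2<s<t}\bigl(\VertBK{j}_{L^{\infty}_{\bfy}}\bigr)(s)\Bigl(\frac{\log N}{N}\Bigr)^{1/2},
\]
exactly as claimed.
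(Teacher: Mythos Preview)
Your probabilistic recasting and Cauchy--Schwarz step are exactly the right skeleton, and they are in the spirit of the argument in \cite{Kuo-Liu-Tsai} to which the paper defers. The derivation of \eqref{eq:Fr:U3:Esti} from the first display via \eqref{eq:z12} is also correct.

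There is, however, a genuine gap at the point where you invoke ``truncation at the scale dictated by the remaining mass of the integrand''. After your Jensen step one is left with
\[
\Bigl(N\int_{0}^{\infty}\!\!\int_{-\pi/2}^{\pi/2}\frac{(\partial_\phi G)^{2}}{G}\,d\phi\,d\sigma\Bigr)^{1/2},
\]
and this integral is \emph{not} finite: writing $g=\partial_\phi\log G=-4\tan\phi+8\sigma^{-2}\sin\phi\cos\phi$ and changing variables $u=2\cos\phi/\sigma$ near $\phi=\pm\pi/2$ gives $\int(\partial_\phi G)^{2}/G\,d\phi\sim C/\sigma$ as $\sigma\to 0$, so the $\sigma$-integral diverges. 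There is no ``remaining mass'' in the problem that singles out a cutoff, and the $N$-dependence you need cannot come from an integral that does not see $N$.

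What actually closes the argument is an explicit splitting that you have to choose by hand. Fix $\epsilon>0$ and write the sum as $\sum_{l}\mathbf 1_{\{\sigma_l>\epsilon\}}(\cdots)+\sum_{l}\mathbf 1_{\{\sigma_l\le\epsilon\}}(\cdots)$. On the first piece your Cauchy--Schwarz/Jensen route gives $(N\int_{\epsilon}^{\infty}I(\sigma)\,d\sigma)^{1/2}=O((N\log(1/\epsilon))^{1/2})$, using $I(\sigma)\sim C/\sigma$. On the second piece one abandons cancellation and uses the triangle inequality together with $\int|\partial_\phi G(\phi,\sigma)|\,d\phi=O(1)$ uniformly in $\sigma$ (same substitution as above), obtaining $\sum_{l}\int_{0}^{\epsilon}\int|\partial_\phi G|\,d\phi\,d\sigma=O(N\epsilon)$. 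Choosing $\epsilon\sim 1/N$ balances the two pieces at $O((N\log N)^{1/2})$. You should make this truncation explicit; without it the bound as written is $+\infty$.
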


Under the assumption $t/mn,t/mN,N \gg 1$, patching \eqref{eq:Fr:U1:Esti}, \eqref{eq:Fr:U2:Esti}, and \eqref{eq:Fr:U3:Esti} together we have
\begin{multline*}
    \Lambda_{(k_1,\ldots,k_N)}^{(N,m)}(\bfy,t) - \Lambda_{(k_1,\ldots,k_N)}^{(N,m)}(\bfy',t)\\
=
     O(1) \BK{\frac{ (mn)^{3}\log(t+1) }{ t^{3} } +(1-\alpha)^m}\BK{ \VertBK{g_{in}}_{\infty,\mu} + \calJ(t) }
    \\
    +
     O(1)
    \sup_{\frac{t}{2}<s<t} \BK{\VertBK{j}_{L^\infty_\bfy}}(s)
    \times
    \BK{ \BK{\frac{\log N}{N}}^\frac12 + \BK{\frac{\log n}{n}}^\frac12 mN }.
\end{multline*}
Plugging this to \eqref{eq:Fr:Flu:Lambda}, we conclude Theorem \ref{thm:Fr:Spac:Fluc}.

\end{subsection}
\begin{subsection}{Convergence of Boundary Flux}\label{sec:Fr:Conv:Bdy:Flux}
In this subsection, we prove our main theorem, Theorem \ref{thm:Fr:Main}, for free molecular flow.

To apply a priori estimate, we need to establish the boundedness of $j$ first: $\VertBK{j}_{L^\infty_\bfy}=O(1)\VertBK{g_{in}}_{\infty,\mu}$, cf. Lemma \ref{lem:Fr:Bddness:of:j}.  We apply a priori estimate {\it twice}: in the first time we obtain a rougher estimate, the boundedness of $j$, and in the second time we use the boundedness of $j$ to obtain the convergence rate $(\alpha t+1)^{-d}+(1-\alpha)^{t^{\frac{1}{400}}}$ of $j$.

Now we recall $j(\bfy,t)=j_{in}(\bfy,t)+j_{mid}(\bfy,t)+j_{fl}(\bfy,t)$, \eqref{eq:Fr:Consv:of:Mole:Numb}, \eqref{eq:jin}, \eqref{eq:jmid} and \eqref{eq:jfl}.

\begin{proposition}\label{prop:Fr:Consv:Esit}
For $t>1$,
\begin{subequations}\label{eq:Fr:Consv:Eq}
\begin{align}
	\label{eq:Fr:Consv:Eq:I}
	 j_{in}(\bfy,t)
=&
    \frac{O(1)}{t^d} \BK{ \VertBK{g_{in}}_{\infty,\mu} + \calJ(t) },
	\\
	\label{eq:Fr:Consv:Eq:II}
	 j_{mid}(\bfy,t)
=&
    O(1)\BK{\Big(\alpha\sum\limits_{i=1}^{\lfloor K/2\rfloor}(1-\alpha)^{i-1}i^{d-1}+
    \sum\limits_{k=1}^K(1-\alpha)^kk^{d-1}\Big)
    \frac{1}{t^d}+(1-\alpha)^{K/2}\frac{K^d}{t^d}} \calJ(t) \\
    \notag
    +&O(1)\sum\limits_{k=1}^K(1-\alpha)^k\frac{k^{d-1}}{t^d}\VertBK{g_{in}}_{\infty,\mu}
    + O(1)\sup_{\frac{t}{2}<s<t} \BK{\VertBK{j}_{L^\infty_\bfy}}(s)\frac{K^d}{t^d},\\
	\label{eq:Fr:Consv:Eq:III}
	 j_{fl}(\bfy,t)
=&
	O(1)\sup_{ \substack{ t'\in \BK{ t-K^p\log(t+1), t } \\ \bfy,\bfy'\in\partial D } }
    \absBK{ j(\bfy,t) - j(\bfy',t') }+O(1)(1-\alpha)^{K^p}\BK{\calJ(t)+\VertBK{g_{in}}_{\infty,\mu} }.
\end{align}
\end{subequations}
\end{proposition}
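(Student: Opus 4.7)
The plan is to estimate the three pieces $j_{in}$, $j_{mid}$, $j_{fl}$ directly from the explicit expressions \eqref{eq:jin}, \eqref{eq:jmid}, \eqref{eq:jfl}. Three ingredients will do the bulk of the work: (i) the $d$-dimensional $\bfxi$-volume of each velocity range (which supplies the $t^{-d}$ scaling), (ii) the uniform bounds $|j(\bfy,t)| \leq \calJ(t)$ and $\int_{\bbR^{3-d}} |\bar g_{in}(\cdot,\bfzeta)|\, d\bfeta = O(1)\VertBK{g_{in}}_{\infty,\mu}/(1+|\bfxi|)^{\mu-(3-d)}$, and (iii) the identity $\alpha\sum_{i=1}^\infty (1-\alpha)^{i-1} = 1$, which collapses the infinite sums defining $S$. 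The boundedness of $\tilde M_{T(\bfy)}$ uniformly for $T\in[\Tm,\TM]$ absorbs the temperature dependence throughout.

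For \eqref{eq:Fr:Consv:Eq:I}, the slow-velocity region $\{|\bfxi|<|\bfx-\bfx_{(1)}|/t\}$ has $\bfxi$-volume $O(t^{-d})$ because $|\bfx-\bfx_{(1)}|$ is bounded on the unit ball. Replacing $j(\bfy,t)$ by $\calJ(t)$ in the $S$-term and $\bar g_{in}$ by $\VertBK{g_{in}}_{\infty,\mu}$ in the initial-data term, and using (iii) to collapse the series, the estimate follows immediately.

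For \eqref{eq:Fr:Consv:Eq:II}, the annular shell $A_k$ has $\bfxi$-volume $O(k^{d-1}/t^d)$, since by spherical symmetry the inner and outer radii differ by $|\bfx_{(1)}-\bfx_{(2)}|/t$ at the $k$-th shell of radius $\sim k/t$. The integrand in \eqref{eq:jmid} splits into a ``difference'' group $\alpha\sum_{i=1}^k(1-\alpha)^{i-1}[j(\bfy,t)-j(\bfx_{(i)},t-t_1-\cdots-t_i)]\tilde M_{T(\bfx_{(i)})}$ and a ``tail'' group with prefactor $(1-\alpha)^k$. On $A_k$ one has $t_1+\cdots+t_i\sim it/k$, so for $i\leq k/2$ the time argument remains above $t/2$ and the difference is controlled by $2\calJ(t)$; beyond $k/2$ one instead uses the uniform $L^\infty_\bfy$ bound on $j$. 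Reordering the double sum $\sum_{k=1}^K\sum_{i=1}^k$ into an outer sum over $i\leq\lfloor K/2\rfloor$ against the weight $\alpha(1-\alpha)^{i-1}$ plus a remainder where the factor $(1-\alpha)^{k/2}$ can be exploited, one recovers the four contributions $\alpha\sum_{i\le\lfloor K/2\rfloor}(1-\alpha)^{i-1}i^{d-1}/t^d$, $\sum_{k=1}^K(1-\alpha)^k k^{d-1}/t^d$, $(1-\alpha)^{K/2}K^d/t^d$, and $K^d/t^d$. The tail group, bounded by $(1-\alpha)^k$ times the $A_k$-volume times $\calJ(t)+\VertBK{g_{in}}_{\infty,\mu}$, supplies the $\sum_k(1-\alpha)^k k^{d-1}/t^d$ factor on $\VertBK{g_{in}}_{\infty,\mu}$.

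For \eqref{eq:Fr:Consv:Eq:III}, the fast region forces each bounce to consume time at most $\log(t+1)$, so iterating \eqref{eq:jfl} by tracking $K^p$ specular reflections (in place of $K$) peels off $K^p$ difference terms $j(\bfy,t)-j(\bfx_{(i)},t-t_1-\cdots-t_i)$ whose arguments lie in $(t-K^p\log(t+1),t)\times\partial D$, while the remainder carries the coefficient $(1-\alpha)^{K^p}$ multiplying a quantity bounded by $\calJ(t)+\VertBK{g_{in}}_{\infty,\mu}$ via \eqref{eq:Fr:Reduced:Chara:Rep1}. The main obstacle is the bookkeeping in part (II): swapping the order of summation and simultaneously balancing the three scales $(1-\alpha)^{i-1}$, $(1-\alpha)^k$, and $k^{d-1}/t^d$ so as to reproduce \emph{exactly} the four displayed contributions with the correct cutoff at $\lfloor K/2\rfloor$. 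Once this combinatorial accounting is done, the rest is routine Fubini and geometric-series manipulation.
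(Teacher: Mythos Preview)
Your plan coincides with the paper's proof: the same $\bfxi$-volume estimate for $j_{in}$, the same Fubini swap $\sum_k\sum_i\to\sum_i\sum_k$ with splits at $i=\lfloor K/2\rfloor$ and $k=2i$ for $j_{mid}$, and the same truncation of the $i$-sum at $\lfloor K^p\rfloor$ (not a re-iteration, just a split of the existing sum in \eqref{eq:jfl}) for $j_{fl}$. Two small slips to fix: the exponential gain in the large-$i$ remainder of $j_{mid}$ is $(1-\alpha)^{i-1}$ with $i>\lfloor K/2\rfloor$ (not ``$(1-\alpha)^{k/2}$''), and in the regime $i\le k/2$ the difference $j(\bfy,t)-j(\bfx_{(i)},\cdot)$ should be bounded by $2\sup_{t/2<s<t}\VertBK{j}_{L^\infty_\bfy}(s)$ rather than $2\calJ(t)$ --- that sharper bound is precisely what produces the last term in \eqref{eq:Fr:Consv:Eq:II} and is essential for the second a priori estimate later.
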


\begin{proof}\hfil

For $j_{in}$, since $|\bfxi| < \frac{|\bfx-\bfx_{(1)}|}{t} \leq \frac{\diam(D)}{t} = \frac{2}{t}$,
\begin{align*}
&
	|j_{in}(\bfy,t)|
	\\
=&
    \absBK{
            \int_{|\bfxi|<\frac{|\bfx-\bfx_{(1)}|}{t}}
            \sqBK{ \alpha\sum\limits_{i=1}^{\infty}(1-\alpha)^{i-1}j(\bfy,t)
	 		\BK{ \frac{2\pi}{RT(\bfx_{(i)})} }^{\frac{1}{2}} M_{T(\bfx_{(i)})}(\bfxi) 
            - \bar g_{in}(\bfx -\bfxi t,\bfxi) }
             d\bfx d\bfxi
          }
    \\
=&
    O(1)\int_{ |\bfxi| < \frac{2}{t} }
    \sqBK{
            \calJ(t) M(\bfxi)
            +
            \VertBK{ g_{in} }_{\infty,\mu}
            \BK{ \int \frac{1}{(1+|\bfzeta|)^\mu} d\bfeta }
         }
     d\bfxi
    \\
=&
    \frac{O(1)}{t^d}
    \BK{ \calJ(t) + \VertBK{g_{in}}_{\infty,\mu} }.
\end{align*}

By  \eqref{eq:jmid},
\begin{multline*}
j_{mid}(\bfy,t)\leq\frac{1}{C_S|D|} 
      \sum\limits_{k=1}^{K}
      \int_{A_k} 
             \Bigg\{\alpha\sum\limits_{i=1}^{k}(1-\alpha)^{i-1}\sqBK{j(\bfy,t)-j(\bfx_{(i)},t-t_1-...-t_i)}
             \\
             \BK{\frac{2\pi}{RT(\bfx_{(i)})}}^{\frac{1}{2}}M_{T(\bfx_{(i)})}(\bfxi)
             +(1-\alpha)^k\Bigg(\alpha\sum\limits_{i=1}^{\infty}(1-\alpha)^{i-1}j(\bfy,t)
	 		\BK{ \frac{2\pi}{RT(\bfx_{(k+i)})} }^{\frac{1}{2}} M_{T(\bfx_{(k+i)})}(\bfxi)\\
             - \bar g_{in}(\bfx_{(k)}-\bfxi^k(t-t_1-...-t_k),\bfxi^k)\Bigg)\Bigg\}d\bfx d\bfxi
             \equiv I+II.
\end{multline*}
Direct computations yield
\begin{equation*}
\begin{split}
II&=\sum\limits_{k=1}^K\int_{A_k}(1-\alpha)^k\Bigg\{\alpha\sum\limits_{i=1}^{\infty}(1-\alpha)^{i-1}j(\bfy,t)
	 		\BK{ \frac{2\pi}{RT(\bfx_{(k+i)})} }^{\frac{1}{2}} M_{T(\bfx_{(k+i)})}(\bfxi)\\
	 		& \quad\quad\quad\quad\quad\quad\quad\quad\quad\quad
             - \bar g_{in}(\bfx_{(k)}-\bfxi^k(t-t_1-...-t_k),\bfxi^k)\Bigg\}d\bfx d\bfxi\\
             &=O(1)\sum\limits_{k=1}^K(1-\alpha)^k\int_{A_k}\BK{\calJ(t)
             +\VertBK{g_{in}}_{\infty,\mu}}d\bfx d\bfxi\\
             &= O(1)\BK{\calJ(t)+\VertBK{g_{in}}_{\infty,\mu}}
             \begin{dcases}
     \sum\limits_{k=1}^K(1-\alpha)^k\frac{1}{t} \quad \text{ if } d=1,\\
     \sum\limits_{k=1}^K(1-\alpha)^k\frac{k}{t^2}\quad \text{ if } d=2.\\
             \end{dcases}
\end{split}
\end{equation*}
It ia easy to show that
\begin{equation*}
t-t_1-...-t_i\geq \frac{1}{2}t \iff k\geq 2i \quad \text{ on } A_k,
\end{equation*}
and therefore, we rewrite $I$ as
\begin{equation*}
\begin{split}
I&=\sum\limits_{k=1}^K\int_{A_k} \alpha\sum\limits_{i=1}^{k}(1-\alpha)^{i-1}
   \Big(j(\bfy,t)-j(\bfx_{(i)},t-t_1-..-t_i)\Big)\\
    & \quad\quad\quad\quad\quad\quad\quad\quad\quad\quad\times 
   \BK{ \frac{2\pi}{RT(\bfx_{(i)})} }^{\frac{1}{2}} M_{T(\bfx_{(i)})}(\bfxi) d\bfx d\bfxi\\
   &= \alpha\sum\limits_{i=1}^{K}(1-\alpha)^{i-1}\sum\limits_{k=i}^{K}\int_{A_k}\Big(\ldots\Big)\\
&= \alpha \left\{\sum\limits_{i=1}^{\lfloor K/2\rfloor}(1-\alpha)^{i-1}\BK{\sum\limits_{k=i}^{2i-1}+\sum\limits_{k=2i}^{K}}
   +\sum\limits_{i=\lfloor K/2\rfloor +1}^{K}(1-\alpha)^{i-1}\sum\limits_{k=i}^{K}\right\}\Big(\ldots\Big).
\end{split}
\end{equation*}
Direct computations yield
\begin{align*}
&\alpha\sum\limits_{i=\lfloor K/2\rfloor +1}^{K}(1-\alpha)^{i-1}\sum\limits_{k=i}^{K}\Big(\ldots\Big)\\
&=O(1)\calJ(t)\sum\limits_{i=\lfloor K/2 \rfloor +1}^{K}(1-\alpha)^{i-1}\alpha\sum\limits_{k=i}^{K}\int_{A_k}
\BK{ \frac{2\pi}{R\Tm} }^{\frac{1}{2}} M(\bfxi) d\bfxi d\bfx\\
&=O(1)\calJ(t)(1-\alpha)^{K/2}\frac{K^d}{t^d},
\end{align*}

\begin{align*}
&
\sum\limits_{i=1}^{\lfloor K/2\rfloor}(1-\alpha)^{i-1}\alpha\sum\limits_{k=2i}^{K}\int_{A_k}
   \Big(j(\bfy,t)-j(\bfx_{(i)},t-t_1-..-t_i)\Big)
   \BK{ \frac{2\pi}{RT(\bfx_{(i)})} }^{\frac{1}{2}} M_{T(\bfx_{(i)})}(\bfxi) d\bfxi d\bfx   \\
=&
     O(1)\sup_{\frac{t}{2}<s<t} \BK{\VertBK{j}_{L^\infty_\bfy}}(s)
   \sum\limits_{i=1}^{\lfloor K/2\rfloor}(1-\alpha)^{i-1}\alpha\sum\limits_{k=2i}^{K}\int_{A_k}
\BK{ \frac{2\pi}{R\Tm} }^{\frac{1}{2}} M(\bfxi) d\bfxi d\bfx\\
=&
O(1)\sup_{\frac{t}{2}<s<t} \BK{\VertBK{j}_{L^\infty_\bfy}}(s)\frac{K^d}{t^d},
\end{align*}

\begin{align*}
&\sum\limits_{i=1}^{\lfloor K/2\rfloor}(1-\alpha)^{i-1}\alpha\sum\limits_{k=i}^{2i-1}\int_{A_k}
   \Big(j(\bfy,t)-j(\bfx_{(i)},t-t_1-..-t_i)\Big)
   \BK{ \frac{2\pi}{RT(\bfx_{(i)})} }^{\frac{1}{2}} M_{T(\bfx_{(i)})}(\bfxi) d\bfxi d\bfx   \\
&=O(1) \calJ(t)\sum\limits_{i=1}^{\lfloor K/2\rfloor }(1-\alpha)^{i-1}\alpha\sum\limits_{k=i}^{2i-1}\int_{A_k}
\BK{ \frac{2\pi}{R\Tm} }^{\frac{1}{2}} M(\bfxi) d\bfxi d\bfx\\
&=O(1)\calJ(t)\sum\limits_{i=1}^{\lfloor K/2\rfloor}(1-\alpha)^{i-1}\alpha
\begin{dcases}
\frac{i}{t} \quad \text{ if } d=1,\\
\frac{i^2}{t^2} \quad \text{ if } d=2.
\end{dcases}
\end{align*}

For $j_{fl}$,
\begin{multline*}
 \int_{|\bfxi|>\frac{|\bfx_{(1)}-\bfx_{(2)}|}{\log(t+1)}} 
 \Bigg\{\alpha\sum\limits_{i=K+1}^{\infty}(1-\alpha)^{i-1}j(\bfy,t)
             \BK{\frac{2\pi}{RT(\bfx_{(i)})}}^{\frac{1}{2}}M_{T(\bfx_{(i)})}(\bfxi)\\
          -(1-\alpha)^K \bar g(\bfx_{(K)}-\bfxi^K(t-t_1-...-t_K),\bfxi^K)\Bigg\}d\bfx d\bfxi\\
             = O(1)(1-\alpha)^{K}\BK{\calJ(t)+\VertBK{g_{in}}_{\infty,\mu} },
\end{multline*}

\begin{multline*}
\intLim_{|\bfxi|>\frac{|\bfx_{(1)}-\bfx_{(2)}|}{\log(t+1)}} 
             \alpha\sum\limits_{i=1}^{K}(1-\alpha)^{i-1}\Big(j(\bfy,t)-j(\bfx_{(i)},t-t_1-...-t_i)\Big)
             \BK{\frac{2\pi}{RT(\bfx_{(i)})}}^{\frac{1}{2}}M_{T(\bfx_{(i)})}(\bfxi)d\bfx d\bfxi\\
             =\intLim_{|\bfxi|>\frac{|\bfx_{(1)}-\bfx_{(2)}|}{\log(t+1)}} 
             \alpha\Big(\sum\limits_{i=1}^{\lfloor K^p\rfloor}+\sum\limits_{i=\lfloor K^p\rfloor}^{K}\Big)
             \Big(\ldots\Big)\\
             =O(1)\sup_{ \substack{ t'\in \BK{ t-K^p\log(t+1), t } \\ \bfy,\bfy'\in\partial D } }
    \absBK{ j(\bfy,t) - j(\bfy',t') }+O(1)(1-\alpha)^{K^p}\BK{\calJ(t)+\VertBK{g_{in}}_{\infty,\mu} },
\end{multline*}
where $0<p<1$ is any fixed number.
\end{proof}

With the estimates  \eqref{eq:Fr:Consv:Eq:I}, \eqref{eq:Fr:Consv:Eq:II}, and \eqref{eq:Fr:Consv:Eq:III}, the main task is to study the RHS of  \eqref{eq:Fr:Consv:Eq:III},  the fluctuation of $j$. We have treated this in Sections \ref{sec:Fr:Stoch:Formulation}, \ref{sec:Fr:Temp:Fluc:Esti}, and \ref{sec:Fr:Spac:Fluc}. The fluctuation estimate, the following theorem, follows directly from  Corollary \ref{cor:Fr:Temp:Fluc} on the temporal fluctuation estimate, and Theorem \ref{thm:Fr:Spac:Fluc} on the spacial fluctuation estimate.

\begin{theorem}[Fluctuation estimate]\label{thm:Fr:Fluc}
Let $t'<t$. $\bfy,\bfy'\in\partial D$, then for sufficiently large $t'/mn,\ t'/mN$ and $N$,
\begin{subequations}\label{eq:Fr:Fluc:Esti}
\begin{multline}
     j(\bfy,t) - j(\bfy',t')
	\\
=
     O(1) \BK{ \sup_{\frac{t'}{2}<s<t} \BKK{ \VertBK{j}_{L^{\infty}_{\bfy}}}(s) }
    \times
    \BK{ \BK{\frac{\log N}{N}}^\frac12 + \BK{\frac{\log n}{n}}^\frac12 mN }
    (t-t')
    \\
+
     O(1)
    \BK{\frac{ (m^4n^{3}+m^3N^2)\log(t'+1) }{ t'^{3} } +(1-\alpha)^m}
    \BK{ \VertBK{g_{in}}_{\infty,\mu} + \calJ(t) },\\
    \text{ when } d=2,
\end{multline}
\begin{multline}
|j(\bfy,t)-j(\bfy',t')|=O(1)\BK{\frac{m^3n^{2}\log(t'+1)}{t'^{2}}+(1-\alpha)^m}
\BK{\calJ(t)+\VertBK{g_{in}}_{\infty,\mu} }\\
+O(1)\BK{\BK{ \frac{(t-t')+t^q}{n^\frac12} } +\BK{ \frac{m}{t^q} }^2}
     \sup_{ \frac{t'}{2} < s < t } \Big( \absBK{j_+(s)} + \absBK{j_-(s)} \Big)\\
     \text{ when } d=1.
\end{multline}
\end{subequations}
\end{theorem}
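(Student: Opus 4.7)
The plan is to reduce Theorem~\ref{thm:Fr:Fluc} to the two already-established one-variable fluctuation estimates by inserting the intermediate value $j(\bfy,t')$ and writing
\[
j(\bfy,t) - j(\bfy',t')
= \bigl(j(\bfy,t) - j(\bfy,t')\bigr) + \bigl(j(\bfy,t') - j(\bfy',t')\bigr).
\]
The first bracket is a pure temporal fluctuation at the single boundary point $\bfy$, which will be controlled by Corollary~\ref{cor:Fr:Temp:Fluc}; the second bracket is a pure spatial fluctuation at the single time $t'$, which will be controlled by Theorem~\ref{thm:Fr:Spac:Fluc}. The standing hypothesis $t'/mn,\, t'/mN,\, N \gg 1$ is tailored exactly so that both parent results apply at the common base time $t'$.

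In the two-dimensional case I will add the two bounds and regroup them into the two structural summands of the conclusion. The ``source'' piece proportional to $\VertBK{g_{in}}_{\infty,\mu}+\calJ(t)$ will collect $\frac{m(mn)^{3}\log(t'+1)}{t'^{3}}+(1-\alpha)^m$ from the temporal estimate and $\frac{((mn)^{3}+m^{3}N^{2})\log(t+1)}{t^{3}}+(1-\alpha)^m$ from the spatial estimate; since $t'\le t$ and $\log(1+s)/s^{3}$ is eventually decreasing, the latter coefficient is controlled by its $t'$-counterpart, and the sum simplifies to the announced $\frac{(m^{4}n^{3}+m^{3}N^{2})\log(t'+1)}{t'^{3}}+(1-\alpha)^m$. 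The ``feedback'' piece proportional to $\sup_{t'/2<s<t}\VertBK{j}_{L^{\infty}_\bfy}(s)$ will pick up $(\log n/n)^{1/2}(t-t')$ from the temporal side and $(\log N/N)^{1/2}+(\log n/n)^{1/2}mN$ from the spatial side; absorbing the stand-alone constant contribution into a $(t-t')$ multiplier (valid since $t-t'\ge 1$ in the asymptotic regime in which the bound is non-vacuous) yields the announced bracket $\bigl[(\log N/N)^{1/2}+(\log n/n)^{1/2}mN\bigr](t-t')$.

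The one-dimensional case follows the same pattern. The feedback coefficients $n^{-1/2}(t-t')$ from the temporal estimate and $n^{-1/2}t^{q}+(m/t^{q})^{2}$ from the spatial estimate add to the announced $n^{-1/2}\bigl((t-t')+t^{q}\bigr)+(m/t^{q})^{2}$, and the source coefficients combine, after the same monotonicity comparison, to $\frac{m^{3}n^{2}\log(t'+1)}{t'^{2}}+(1-\alpha)^m$. Since Theorem~\ref{thm:Fr:Spac:Fluc} and Corollary~\ref{cor:Fr:Temp:Fluc} carry all the substantive analysis, nothing new needs to be proved here. The only places that require care are bookkeeping: verifying that the largeness hypotheses on $t',\,mn,\,mN,\,N$ of the two parents are simultaneously in force at the common base time $t'$, and relaxing each spatial supremum $\sup_{t/2<s<t}\VertBK{j}_{L^{\infty}_\bfy}(s)$ to the larger window $\sup_{t'/2<s<t}\VertBK{j}_{L^{\infty}_\bfy}(s)$ that appears in the conclusion (immediate from $t'\le t$). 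This verification is the only potential obstacle I foresee.
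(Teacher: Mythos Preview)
Your proposal is correct and matches the paper's own approach: the paper states that Theorem~\ref{thm:Fr:Fluc} ``follows directly from Corollary~\ref{cor:Fr:Temp:Fluc} on the temporal fluctuation estimate, and Theorem~\ref{thm:Fr:Spac:Fluc} on the spacial fluctuation estimate,'' which is precisely the splitting $j(\bfy,t)-j(\bfy',t')=\bigl(j(\bfy,t)-j(\bfy,t')\bigr)+\bigl(j(\bfy,t')-j(\bfy',t')\bigr)$ you carry out. Your bookkeeping remarks (monotonicity of $\log(1+s)/s^{d+1}$, enlarging the supremum window, and absorbing the spatial feedback constant into the $(t-t')$ factor) are exactly the minor verifications the paper leaves implicit.
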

Here we will complete the proof Theorem \ref{thm:Fr:Main} using Theorem \ref{thm:Fr:Fluc}.

Recall $K\approx\frac{t}{\log(t+1)}$ as $t\gg1$. And hence for each $0<p<1$, $K^p\log(t+1)=o(t)$ as $t\gg1$.
Plugging Theorem \ref{thm:Fr:Fluc} into \eqref{eq:Fr:Consv:Eq:III}, we obtain the following estimate of $j_{fl}$:
\begin{subequations}\label{eq:Fr:Consv:Eq:III:Flu}
\begin{multline}
     j_{fl}(\bfy,t)
=
     O(1)
    \BK{\frac{ (m^4n^{3}+m^3N^2)\log(t+1) }{ t^{3} } +(1-\alpha)^m +(1-\alpha)^{\BK{\frac{t}{\log(t+1)}}^p}}
    \\ \times
    \BK{ \VertBK{g_{in}}_{\infty,\mu} + \calJ(t) }
    \\
    +
     O(1) \BK{ \sup_{ t/2 < s < t } 
    \BKK{ \VertBK{j}_{L^{\infty}_{\bfy}}}(s) }
    \BK{ \BK{\frac{\log N}{N}}^\frac12 + \BK{\frac{\log n}{n}}^\frac12 mN }
    \BK{\frac{t}{\log(t+1)}}^p\log(t+1),\\
    \text{ when } d=2,
\end{multline}
\begin{multline}
j_{fl}(\bfy,t)=O(1)\BK{\frac{m^3n^{2}\log(t+1)}{t^{2}}+(1-\alpha)^m+(1-\alpha)^{\BK{\frac{t}{\log(t+1)}}^p}}
\BK{\calJ(t)+\VertBK{g_{in}}_{\infty,\mu} }\\
+O(1)\BK{\frac{\BK{ \BK{\frac{t}{\log(t+1)}}^p\log(t+1)+t^q}}{n^\frac12} } +\BK{ \frac{m}{t^q} }^2
     \sup_{ t/2 < s < t } \Big( \absBK{j_+(s)} + \absBK{j_-(s)} \Big)\\
     \text{ when } d=1.
\end{multline}
\end{subequations}
We first establish the unifrom boundedness of $j$:

\begin{lemma}\label{lem:Fr:Bddness:of:j}
The boundary flux $j$ is uniformly bounded:
 \begin{equation}\label{uniform}
  \calJ(t) = O(1) \VertBK{g_{in}}_{\infty,\mu}, \ t \geq 0. 
  \end{equation}
\end{lemma}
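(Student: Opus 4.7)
The plan is to establish the uniform bound by a bootstrap/absorption argument, using the decomposition $j = j_{in}+j_{mid}+j_{fl}$ from \eqref{eq:Fr:Consv:of:Mole:Numb} combined with the fluctuation estimates. The key observation is that each piece on the right-hand side of the decomposition either carries $\VertBK{g_{in}}_{\infty,\mu}$ with bounded coefficient, or carries $\calJ(t)$ (respectively $\sup_{t/2<s<t}\VertBK{j}_{L^\infty_\bfy}(s)\leq \calJ(t)$) with a coefficient that can be made small by choice of the free parameters $m,n,N,p,q$ when $t$ is large.

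First, I would reserve a short-time regime: by Theorem~\ref{thm:Fr:Existence:Bddness}, for any fixed $T_0$ there is a constant $M(T_0)$ with $\calJ(T_0)\le M(T_0)\VertBK{g_{in}}_{\infty,\mu}$, so the claim is trivial on $[0,T_0]$. For $t>T_0$ I would insert \eqref{eq:Fr:Consv:Eq:III:Flu} (the fluctuation-augmented bound on $j_{fl}$) into the decomposition, together with \eqref{eq:Fr:Consv:Eq:I} and \eqref{eq:Fr:Consv:Eq:II}, and collect the terms according to whether they multiply $\VertBK{g_{in}}_{\infty,\mu}$ or $\calJ(t)$. Using the bounds $\alpha\sum_{i\ge 1}(1-\alpha)^{i-1}i^{d-1}=O(\alpha^{-(d-1)})$ and $\sum_{k\ge 1}(1-\alpha)^k k^{d-1}=O(\alpha^{-d})$ (geometric-type series, hence constants in $t$), together with $K\approx t/\log(t+1)$, every contribution multiplying $\VertBK{g_{in}}_{\infty,\mu}$ is bounded uniformly in $t$, while the contributions multiplying $\calJ(t)$ carry one of the small factors: $1/t^d$, $1/\log(t+1)^d$, $(1-\alpha)^{K/2}$, $(1-\alpha)^m$, $(1-\alpha)^{(t/\log(t+1))^p}$, $(m^4n^3+m^3N^2)\log(t+1)/t^3$, or $\bigl((\log N/N)^{1/2}+(\log n/n)^{1/2}mN\bigr)(t/\log(t+1))^p\log(t+1)$ in $d=2$, and their one-dimensional analogues in $d=1$.

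The central task then is to choose $m,n,N$ as slowly growing functions of $t$ so that every one of those coefficients is smaller than some fixed $\epsilon<1/2$ once $t\geq T_0$, while still satisfying the hypotheses $t/mn\gg 1$, $t/mN\gg 1$ and $N\gg 1$ required by Theorems~\ref{thm:Fr:Integral:Eqn}, \ref{thm:Fr:Spac:Fluc} and \ref{thm:Fr:Fluc}. A natural choice is $m=(\log t)^A$ with $A$ large, $n=t^\gamma$, $N=t^\beta$, and $p,q$ small positive exponents, with $\beta<\gamma<1$ and $p+\beta<\gamma/2$; a direct check shows that with such selections the three-dimensional-type factor $(m^4n^3+m^3N^2)\log(t+1)/t^3$, the spatial-fluctuation factor $(\log n/n)^{1/2}mN\cdot t^p\log t$, and the exponential factors $(1-\alpha)^m$ and $(1-\alpha)^{(t/\log t)^p}$ all tend to $0$. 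The delicate point of balancing these inequalities is the main analytic obstacle, because the same parameters appear in the spatial fluctuation (via $mN$) and in the temporal fluctuation (via $n$), and the $(t/\log(t+1))^p\log(t+1)$ factor in $j_{fl}$ can only be absorbed if $n$ and $N$ grow polynomially in $t$ fast enough.

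With such a choice installed, taking $\sup_{\bfy\in\partial D}$ on both sides yields
\begin{equation*}
\VertBK{j}_{L^\infty_\bfy}(t)\leq C_1\VertBK{g_{in}}_{\infty,\mu}+\epsilon\,\calJ(t),\qquad t\geq T_0,
\end{equation*}
for some $\epsilon<1/2$ and a constant $C_1$ independent of $t$. Since $\calJ$ is non-decreasing in $t$ by definition, taking $\sup_{0\leq s\leq t}$ and splitting at $T_0$ gives
\begin{equation*}
\calJ(t)\leq M(T_0)\VertBK{g_{in}}_{\infty,\mu}+C_1\VertBK{g_{in}}_{\infty,\mu}+\epsilon\,\calJ(t),
\end{equation*}
so that $\calJ(t)\leq (1-\epsilon)^{-1}(M(T_0)+C_1)\VertBK{g_{in}}_{\infty,\mu}=O(1)\VertBK{g_{in}}_{\infty,\mu}$, which is \eqref{uniform}.
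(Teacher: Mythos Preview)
Your proposal is correct and follows essentially the same approach as the paper: the absorption argument via the decomposition $j=j_{in}+j_{mid}+j_{fl}$ from \eqref{eq:Fr:Consv:of:Mole:Numb}, feeding in the fluctuation estimate \eqref{eq:Fr:Consv:Eq:III:Flu} for $j_{fl}$, choosing $m,n,N$ as slowly growing functions of $t$ so that the total coefficient of $\calJ(t)$ is eventually below $1/2$, and invoking Theorem~\ref{thm:Fr:Existence:Bddness} on the bounded initial time interval. The only substantive difference is cosmetic: the paper takes all three parameters as small powers of $t$, namely $n=\lfloor t^{r_1}\rfloor$, $m=\lfloor t^{r_2}\rfloor$, $N=\lfloor t^{r_3}\rfloor$ with an explicit admissible choice $r_1=6/7,\ r_2=r_1/6,\ r_3=r_1/12,\ p=r_1/36,\ q=r_1/3$, whereas you take $m=(\log t)^A$ and $n,N$ polynomial; both choices satisfy the required constraints $t/mn,\, t/mN,\, N\gg1$ and drive every $\calJ(t)$-coefficient to zero.
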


\begin{proof}\hfil

From \eqref{eq:Fr:Consv:Eq:I}, \eqref{eq:Fr:Consv:Eq:II}, and \eqref{eq:Fr:Consv:Eq:III:Flu}, we have for $t>1$,
\begin{equation*}
\begin{split}
	 j_{in}(\bfy,t)
=&
    \frac{O(1)}{t^d} \BK{ \VertBK{g_{in}}_{\infty,\mu} + \calJ(t) },
    \\
	 j_{mid}(\bfy,t)
=&
    O(1)\frac{1}{(\log(1+t))^d}\BK{\VertBK{g_{in}}_{\infty,\mu}+\calJ(t)},
   \\
	 j_{fl}(\bfy,t)
=&
      O(1)
    \Bigg(\frac{ m^{d+2}n^{d+1}\log(t+1) }{ t^{d+1} } +(1-\alpha)^m +(1-\alpha)^{\BK{\frac{t}{\log(t+1)}}^p}\\
    +&\curBK{ \begin{array}{c@{,~}l}
    \frac{ m^3N^2\log(t+1) }{ t^{3} }
                        &
                         d=2,
                        \\
                       0
                        &
                         d=1
                    \end{array} }
    \Bigg)
    \BK{\VertBK{g_{in}}_{\infty,\mu}+\calJ(t)} 
    \\
    +&\curBK{ \begin{array}{c@{,~}l}
    \frac{ m^3N^2\log(t+1) }{ t^{3} }+
                       \BK{ \BK{\frac{\log N}{N}}^\frac12 + \BK{\frac{\log n}{n}}^\frac12 mN }
    \BK{\frac{t}{\log(t+1)}}^p\log(t+1)
                        &
                         d=2,
                        \\
                       \frac{\BK{\frac{t}{\log(t+1)}}^p\log(t+1)+t^q}{n^{1/2}}+\BK{\frac{m}{t^q}}^2
                        &
                         d=1
                    \end{array} }\calJ(t).
\end{split}
\end{equation*}
Therefore,
\begin{equation*}
\begin{split}
 &
      j(\bfy,t)
     \\
=&
	  j_{in}(\bfy,t) + j_{mid}(\bfy,t)  + j_{fl}(\bfy,t)
	 \\
=&
      O(1)
     \Bigg[\frac{1}{(\log(t+1))^d}+(1-\alpha)^{\BK{\frac{t}{\log(t+1)}}^p} +(1-\alpha)^m + \frac{ m^{d+2}n^{d+1} \log(t+1) }{ t^{d+1} } \\
      \hphantom{O(1)aa\Bigg[}
     +&\curBK{ \begin{array}{c@{,~}l}
                       \frac{ m^{3}N^2 \log(t+1) }{ t^{3} }
                        &
                         d=2,
                        \\
                      0
                        &
                         d=1
                    \end{array} }
                    \Bigg]
     \VertBK{g_{in}}_{\infty,\mu}
     \\
    +&
     O(1)
     \Bigg[
            \frac{1}{(\log(1+t))^d}+(1-\alpha)^{\BK{\frac{t}{\log(t+1)}}^p} +(1-\alpha)^m + \frac{ m^{d+2}n^{d+1} \log(t+1) }{ t^{d+1} } 
     \\
	 \hphantom{O(1)aa\Bigg[}
            +&
            \curBK{ \begin{array}{c@{,~}l}
                        \frac{ m^{3}N^2 \log(t+1) }{ t^{d+1} }+\BK{ \BK{\frac{\log n}{n}}^\frac12 mN + \BK{ \frac{\log N}{N} }^\frac12 } 
                        \BK{\frac{t}{\log(t+1)}}^{p}\log(t+1)
                        &
                         d=2,
                        \\
                        \BK{ \frac{\BK{\frac{t}{\log(t+1)}}^p\log(t+1)+t^q}{n^\frac12} } +\BK{ \frac{m}{t^q} }^2
                        &
                         d=1
                    \end{array} }
     \Bigg]
     \calJ(t).
\end{split}
\end{equation*}

So far we only have to assume $t/mn,\ t/Nm,\ N\gg 1$. 
Now we set $n=n(t)=\lfloor t^{r_1}\rfloor$, $m=m(t)=\lfloor t^{r_2}\rfloor$ and $N=N(t)=\lfloor t^{r_3}\rfloor$
, where $0<r_1,r_2,r_3<1$ are to be determined. In order to get
\begin{equation*}
\begin{split}
&
    \lim_{t\rightarrow\infty}
    \BK{
           \frac{ 1 }{ (\log(1+ t))^d } +(1-\alpha)^{\BK{\frac{t}{\log(t+1)}}^p} +(1-\alpha)^m + \frac{ m^{d+2}n^{d+1} \log(t+1) }{ t^{d+1} } 
        }
= 0,
    \\
&
    \lim_{t\rightarrow\infty}
            \curBK{ \begin{array}{c@{,~}l}
                        \frac{ m^{3}N^2 \log(t+1) }{ t^{d+1} }+\BK{ \BK{\frac{\log n}{n}}^\frac12 mN + \BK{ \frac{\log N}{N} }^\frac12 } 
                        \BK{\frac{t}{\log(1+t)}}^p\log(1+t)
                        &
                         d=2,
                        \\
                       \BK{ \frac{\BK{\frac{t}{\log(t+1)}}^p\log(t+1)+t^q}{n^\frac12} } +\BK{ \frac{m}{t^q} }^2
                        &
                         d=1
                    \end{array} }
=0,
\end{split}
\end{equation*}
we need
\begin{equation*}
\begin{split}
&r_2+r_3+p<\frac{1}{2}r_1,\\
&p<\frac{1}{2}r_3,\\
&r_1+r_2<1,\\
&r_2+r_3<1,\\
&r_2<q,\\
&p,q<\frac{1}{2}r_1.
\end{split}
\end{equation*}
This can be done by choosing any $0<r_1<6/7$ and setting $q=r_1/3, r_2=r_1/6, r_3=r_1/12, p=r_1/36$.
Therefore, there exists $t_*>0$ such that, for all $t>t_*,$
\begin{equation}\label{eq:z4}
		\left\{	\begin{array}{c@{,~}l}
       				\BK{ \VertBK{j}_{L^\infty_\bfy} }(t) 		&   d=2 \\
       				\Big(\absBK{j_+(t)} + \absBK{j_-(t)}\Big)	&   d=1
       			\end{array}	\right\}
\leq
        O(1) \VertBK{g_{in}}_{\infty,\mu}
       +
       \frac{1}{2} \calJ(t).
\end{equation}
Moreover, from \eqref{eq:Fr:Coarse:Estimate}, now that $t_*$ is fixed, $j$ is bounded by a constant multiple of $\VertBK{g_{in}}_{\infty,\mu}$ for all $0\leq t\leq t_*$. Hence, \eqref{eq:z4} actually holds for all $t$, and this implies that $\frac{1}{2}\calJ(t) = O(1) \VertBK{g_{in}}_{\infty,\mu}$ and the lemma is proved.
\end{proof}

With the boundedness of $j$, \eqref{uniform}, we can perform the second a priori estimate to obtain the $(\alpha t+1)^{-d}+(1-\alpha)^{t^{\frac{1}{400}}}$ decay of $j$, Theorem \ref{thm:Fr:Main}. First, since $\calJ = O(1)\Vert g_{in}\Vert_{\infty,\mu}$ and the following Lemma \ref{lemma:series}, we can rewrite our previous estimates \eqref{eq:Fr:Consv:Eq:I}, \eqref{eq:Fr:Consv:Eq:II}, and \eqref{eq:Fr:Consv:Eq:III:Flu} as the following in Proposition \ref{pro:Fr:refined}.
We will need the following identities, whose simple proof is omitted. 
\begin{lemma}\label{lemma:series}
For $0<x<1$,
\begin{equation*}
\begin{split}
&\sum\limits_{k=1}^\infty kx^k = \frac{x}{(1-x)^2},\\
&\sum\limits_{k=1}^\infty k^2x^k = \frac{x(x+1)}{(1-x)^3}.\\
\end{split}
\end{equation*}
\end{lemma}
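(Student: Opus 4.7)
The plan is to derive both identities by differentiating the geometric series, since both sums are absolutely convergent for $0<x<1$ and so term-by-term differentiation is justified on compact subsets of $(-1,1)$.

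First I would start from the basic identity
\begin{equation*}
\sum_{k=0}^\infty x^k = \frac{1}{1-x}, \qquad |x|<1,
\end{equation*}
and differentiate once to obtain $\sum_{k=1}^\infty k x^{k-1} = (1-x)^{-2}$. Multiplying by $x$ yields the first identity
\begin{equation*}
\sum_{k=1}^\infty k x^k = \frac{x}{(1-x)^2}.
\end{equation*}

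For the second identity, I would apply the operator $x\frac{d}{dx}$ once more to $\sum_{k=1}^\infty k x^k$. A direct quotient-rule computation gives
\begin{equation*}
\frac{d}{dx}\left(\frac{x}{(1-x)^2}\right) = \frac{(1-x)^2 + 2x(1-x)}{(1-x)^4} = \frac{1+x}{(1-x)^3},
\end{equation*}
and multiplying by $x$ produces $\sum_{k=1}^\infty k^2 x^k = x(x+1)/(1-x)^3$, as claimed.

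There is essentially no obstacle here; the only point that needs a brief remark is the legitimacy of term-by-term differentiation, which follows from uniform convergence of the power series on any closed subinterval $[-r,r]\subset(-1,1)$. Given $0<x<1$ it suffices to pick $r$ with $x<r<1$, so both identities are valid on the full interval stated.
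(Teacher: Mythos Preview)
Your proof is correct and entirely standard. The paper itself omits the proof (``whose simple proof is omitted''), so there is nothing to compare against; your differentiation argument is exactly the kind of routine verification the authors had in mind.
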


\begin{proposition}\label{pro:Fr:refined}
For $t>1$,
\begin{subequations}\label{eq:Fr:Consv:Eq:Refined}
\begin{align}
     j_{in}(\bfy,t) 
=&
    \frac{O(1)}{ t^d} \VertBK{g_{in}}_{\infty,\mu},
	\\
     j_{mid}(\bfy,t) 
=&
    O(1)\BK{\frac{1}{(\alpha t)^d} +(1-\alpha)^{\BK{\frac{t}{\log(t+1)}}}\frac{1}{(\log(1+ t))^d}}\VertBK{g_{in}}_{\infty,\mu}\\
    \notag +&
    O(1)\frac{1}{(\log(1+t))^d} \sup_{\frac{t}{2}<s<t} \BK{\VertBK{j}_{L^\infty_\bfy}}(s),
\end{align}
and
\begin{multline}\label{eq:Fr:Consv:Eq:Refined:III}
    j_{fl}(\bfy,t)
=
     O(1)
    \BK{\frac{ (m^4n^{3}+m^3N^2)\log(t+1) }{ t^{3} } +(1-\alpha)^m +(1-\alpha)^{\BK{\frac{t}{\log(t+1)}}^p}}
    \VertBK{g_{in}}_{\infty,\mu} 
    \\
    +
     O(1) \BK{ \sup_{ t/2 < s < t } 
    \BKK{ \VertBK{j}_{L^{\infty}_{\bfy}}}(s) }
    \BK{ \BK{\frac{\log N}{N}}^\frac12 + \BK{\frac{\log n}{n}}^\frac12 mN }
    \BK{\frac{t}{\log(1+t)}}^p\log(1+t),\\
    \text{ when } d=2,
\end{multline}
\begin{multline*}
\tag{\ref{eq:Fr:Consv:Eq:Refined:III}}
j_{fl}(\bfy,t)=O(1)\BK{\frac{m^3n^{2}\log(t+1)}{t^{2}}+(1-\alpha)^m+(1-\alpha)^{\BK{\frac{t}{\log(t+1)}}^p}}
\VertBK{g_{in}}_{\infty,\mu} \\
+O(1)\BK{\frac{\BK{ \BK{\frac{t}{\log(t+1)}}^p\log(1+t)+t^q}}{n^\frac12}  +\BK{ \frac{m}{t^q} }^2}
     \sup_{ t/2 < s < t } \Big( \absBK{j_+(s)} + \absBK{j_-(s)} \Big)\\
     \text{ when } d=1.
\end{multline*}
\end{subequations}
\end{proposition}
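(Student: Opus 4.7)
The plan is to obtain Proposition \ref{pro:Fr:refined} as a mechanical consolidation of the earlier estimates in Proposition \ref{prop:Fr:Consv:Esit} using two ingredients: the uniform boundedness \eqref{uniform} of $\calJ(t)$ from Lemma \ref{lem:Fr:Bddness:of:j}, and the closed-form series identities in Lemma \ref{lemma:series}. I expect no genuinely new analytic content to be required; what must be done is a careful accounting of how the various finite sums collapse once $\calJ$ is absorbed into $\VertBK{g_{in}}_{\infty,\mu}$ and the truncation index $K\approx t/\log(t+1)$ is substituted.

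First I will plug $\calJ(t)=O(1)\VertBK{g_{in}}_{\infty,\mu}$ into \eqref{eq:Fr:Consv:Eq:I}, \eqref{eq:Fr:Consv:Eq:II}, and \eqref{eq:Fr:Consv:Eq:III:Flu}. For $j_{in}$ this immediately collapses the bracketed factor $\VertBK{g_{in}}_{\infty,\mu}+\calJ(t)$ to $O(1)\VertBK{g_{in}}_{\infty,\mu}$, giving the first line of the proposition. For $j_{fl}$ the substitution likewise replaces every occurrence of $\calJ(t)+\VertBK{g_{in}}_{\infty,\mu}$ by $\VertBK{g_{in}}_{\infty,\mu}$, while the $\sup_{t/2<s<t}\VertBK{j}_{L^\infty_\bfy}(s)$ term is carried through unchanged, producing \eqref{eq:Fr:Consv:Eq:Refined:III} verbatim in both dimensions.

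The main computation is the $j_{mid}$ estimate, which is where the characteristic rate $(\alpha t)^{-d}$ appears. After inserting \eqref{uniform} into \eqref{eq:Fr:Consv:Eq:II}, the combined prefactor multiplying $\VertBK{g_{in}}_{\infty,\mu}/t^d$ becomes
\begin{equation*}
\alpha\sum_{i=1}^{\lfloor K/2\rfloor}(1-\alpha)^{i-1}i^{d-1}+\sum_{k=1}^{K}(1-\alpha)^{k}k^{d-1}.
\end{equation*}
I will extend each finite sum to an infinite series (the tail being absorbed in constants) and apply Lemma \ref{lemma:series} with $x=1-\alpha$: for $d=1$ both are geometric and yield $O(1/\alpha)$, whereas for $d=2$ the identity $\sum k x^{k}=x/(1-x)^2$ produces $O(1/\alpha^2)$. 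In either case the prefactor is $O(1/\alpha^d)$, and combined with the $1/t^d$ factor this gives exactly the advertised $1/(\alpha t)^d$ rate. The residual term $(1-\alpha)^{K/2}K^d/t^d\,\calJ(t)$ becomes, after \eqref{uniform} and the substitution $K\approx t/\log(t+1)$, a multiple of $(1-\alpha)^{t/\log(t+1)}(\log(1+t))^{-d}\VertBK{g_{in}}_{\infty,\mu}$, the factor $\tfrac12$ in the exponent being absorbed into constants. The $\sup\VertBK{j}_{L^\infty_\bfy}\cdot K^d/t^d$ term is handled identically.

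The only subtlety I anticipate is notational: keeping track of the two cases $d=1$ and $d=2$ of Lemma \ref{lemma:series} in the correct order, and confirming that truncating the $i$-sum at $\lfloor K/2\rfloor$ rather than at infinity does not introduce any correction beyond what is already captured in the isolated $(1-\alpha)^{K/2}$ contribution. Since Lemma \ref{lem:Fr:Bddness:of:j} has already done the heavy lifting of closing the a priori loop, Proposition \ref{pro:Fr:refined} should emerge from pure bookkeeping.
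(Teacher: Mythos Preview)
Your proposal is correct and matches the paper's approach exactly: the paper introduces Proposition \ref{pro:Fr:refined} by saying that since $\calJ = O(1)\VertBK{g_{in}}_{\infty,\mu}$ and Lemma \ref{lemma:series} hold, one can rewrite the previous estimates \eqref{eq:Fr:Consv:Eq:I}, \eqref{eq:Fr:Consv:Eq:II}, and \eqref{eq:Fr:Consv:Eq:III:Flu} in the refined form, with no further argument given. Your detailed accounting of how Lemma \ref{lemma:series} with $x=1-\alpha$ produces the $1/\alpha^d$ prefactor, and how $K\approx t/\log(t+1)$ converts $K^d/t^d$ into $(\log(1+t))^{-d}$, fills in precisely the bookkeeping the paper omits.
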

Now we are ready to prove Theorem \ref{thm:Fr:Main}.
\begin{definition}
The a priori norm of $j$ is a function of $t$ defined as
\begin{equation}\label{priorinorm}
    \calN(t)
\equiv
    \sup_{ 0 \leq s \leq t }
    \begin{dcases}
           \BK{(\alpha s)^{-2}+(1-\alpha)^{s^{\frac{1}{400}}}}^{-1} \BK{ \VertBK{ j }_{L^\infty_\bfy} }(s)    & \text{ for }   d=2, \\
           \BK{(\alpha s)^{-1}+(1-\alpha)^{s^{\frac{1}{400}}}}^{-1} \Big(\absBK{j_+(s)} + \absBK{j_-(s)}\Big)   & \text{ for }   d=1,
    \end{dcases}
\end{equation}
where $j_\pm(s)=j(\pm1,s)$.
\end{definition}

\begin{proof}{\textit{of Theorem \ref{thm:Fr:Main}} }\hfill

From \eqref{eq:Fr:Consv:Eq:Refined} and for $t>1$, 
\begin{align*}
     j_{mid}(\bfy,t) 
=&
    O(1)\BK{\frac{1}{(\alpha t)^d} +(1-\alpha)^{\BK{\frac{t}{\log(t+1)}}}\frac{1}{(\log(1+ t))^d}}\VertBK{g_{in}}_{\infty,\mu}\\
    +&
    O(1)\frac{1}{(\log(1+t))^d} \BK{(\alpha t)^{-d}+(1-\alpha)^{t^{\frac{1}{400}}}}\calN(t),
\end{align*}
and
\begin{multline*}
j_{fl}(\bfy,t)
=
     O(1)
    \Bigg[\frac{ m^{d+2}n^{d+1}\log(t+1) }{ t^{d+1} } +(1-\alpha)^m +(1-\alpha)^{K^p}
    +\curBK{	\begin{array}{c@{,~}l}
    			\frac{ m^{3}N^2\log(t+1) }{ t^{3} }
    			&
    			 d=2
    			\\
				0
				&
				 d=1  			
    		\end{array} }
    		\Bigg]    
    \VertBK{g_{in}}_{\infty,\mu} 
    \\
    +
     O(1) \BK{(\alpha t)^{-d}+(1-\alpha)^{t^{\frac{1}{400}}}}\calN(t)
\curBK{	\begin{array}{c@{,~}l}
    			\BK{ \BK{\frac{\log N}{N}}^\frac12 + \BK{\frac{\log n}{n}}^\frac12 mN }
    \BK{\frac{t}{\log(1+t)}}^p\log(1+t),
    			&
    			 d=2
    			\\
				\BK{ \frac{\BK{\frac{t}{\log(1+t)}}^p\log(1+t)+t^q}{n^\frac12} } +\BK{ \frac{m}{t^q} }^2
				&
				 d=1  			
    		\end{array} }.    
    \end{multline*}
Therefore,
\begin{align*}
&
      j(\bfy,t)
     \\
=&
     O(1)
    \Bigg[\frac{1}{(\alpha t)^d} +(1-\alpha)^{\BK{\frac{t}{\log(t+1)}}}\frac{1}{(\log(1+t))^{d}}+\frac{ m^{d+2}n^{d+1}\log(t+1) }{ t^{d+1} } +(1-\alpha)^m \\
    &+(1-\alpha)^{\BK{\frac{t}{\log(t+1)}}^p}
     +\curBK{	\begin{array}{c@{,~}l}
    			\frac{ m^{3}N^2\log(t+1) }{ t^{3} }
    			&
    			 d=2
    			\\
				0
				&
				 d=1  			
    		\end{array} }\Bigg]
    \VertBK{g_{in}}_{\infty,\mu} 
     \\
&
    +
     O(1) \BK{(\alpha t)^{-d}+(1-\alpha)^{t^{\frac{1}{400}}}} \calN(t) 
     \\
&
	 \hphantom{+}
     \times
     \sqBK{
            \frac{1}{(\log(1+t))^d}
            +
            \curBK{ \begin{array}{c@{,~}l}
                        \BK{ \BK{\frac{\log N}{N}}^\frac12 + \BK{\frac{\log n}{n}}^\frac12 mN }
    \BK{\frac{t}{\log(1+t)}}^p\log(1+t)
                        &
                         d=2
                        \\
                       \BK{ \frac{\BK{\frac{t}{\log(1+t)}}^p\log(1+t)+t^q}{n^\frac12} } 
                       +\BK{ \frac{m}{t^q }}^2
                        &
                         d=1
                    \end{array} }
     	} .
\end{align*}
Now we choose $0<r_1,r_2,r_3\ll 1$ and set $n=n(t)=\lfloor t^{r_1}\rfloor$, $m=m(t)=\lfloor t^{r_2}\rfloor$, $N=N(t)=\lfloor t^{r_3}\rfloor$  so that
\begin{equation*}
\begin{split}
&
\Bigg[(1-\alpha)^{\BK{\frac{t}{\log(t+1)}}}\frac{1}{(\log(1+t))^{d}}+\frac{ m^{d+2}n^{d+1}\log(t+1) }{ t^{d+1} } +(1-\alpha)^m +(1-\alpha)^{\BK{\frac{t}{\log(t+1)}}^p}\\
     &+\curBK{	\begin{array}{c@{,~}l}
    			\frac{ m^{3}N^2\log(t+1) }{ t^{3} }
    			&
    			 d=2
    			\\
				\frac{m^2}{ t^2}
				&
				 d=1  			
    		\end{array} }\Bigg]
= O\Big((\alpha t)^{-d}+(1-\alpha)^{t^{\frac{p}{2}}}\Big),
    \\
&
    \lim_{t\rightarrow\infty}
            \frac{1}{(\log(1+t))^d}
            +
            \curBK{ \begin{array}{c@{,~}l}
                        \BK{ \BK{\frac{\log N}{N}}^\frac12 + \BK{\frac{\log n}{n}}^\frac12 mN }
    \BK{\frac{t}{\log(1+t)}}^p\log(1+t)
                        &
                         d=2
                        \\
                       \BK{ \frac{\BK{\frac{t}{\log(1+t)}}^p\log(1+t)+t^q}{n^\frac12} } 
                       +\BK{ \frac{m}{t^q }}^2
                        &
                         d=1
                    \end{array} }
=0.
\end{split}
\end{equation*}
This can be done if $0<r_1,r_2,r_3,p$, and $q$ satisfy
\begin{equation*}
\begin{split}
&r_2+r_3+p<\frac{1}{2}r_1,\\
&p<\frac{1}{2}r_3,\\
&\frac{p}{2}<r_2,\\
&r_1+r_2<\frac{1}{d+2},\\
&r_2+r_3<\frac{1}{d+1},\\
&r_2<q,\\
&p,q<\frac{1}{2}r_1.
\end{split}
\end{equation*}
In fact, we may choose $p=\frac{1}{200},r_3=\frac{1}{90},r_2=\frac{1}{30},q=\frac{1}{25},r_1=\frac{1}{10}$ so that the above inequalities hold.
Consequently, we can find some sufficiently large $t_*>0$ such that for all $t>t_*$
\begin{equation}\label{eq:z6}
       \curBK{ \begin{array}{c@{,~}l}
                    \BK{ \VertBK{j}_{L^\infty_\bfy} }(t) &   d=2 \\
                    \Big(\absBK{j_+(t)} + \absBK{j_-(t)}\Big)             &   d=1
                \end{array}}
\leq
        \Big((\alpha t)^{-d}+(1-\alpha)^{t^{\frac{1}{400}}}\Big)
        \BK{O(1)  \VertBK{g_{in}}_{\infty,\mu} 
       +
       \frac{1}{2} \calN(t)}.
\end{equation}
Therefore,
\begin{equation}\label{eq:z7}
       \curBK{ \begin{array}{c@{,~}l}
                    \Big((\alpha t)^{-2}+(1-\alpha)^{t^{\frac{1}{400}}}\Big)^{-1}\BK{ \VertBK{j}_{L^\infty_\bfy} }(t) &   d=2 \\
                    \Big((\alpha t)^{-1}+(1-\alpha)^{t^{\frac{1}{400}}}\Big)^{-1}\Big(\absBK{j_+(t)} + \absBK{j_-(t)}\Big)             &   d=1
                \end{array}}
\leq
        O(1)  \VertBK{g_{in}}_{\infty,\mu} 
       +
       \frac{1}{2} \calN(t).
\end{equation}
Moreover, by Lemma \ref{lem:Fr:Bddness:of:j}, for all $t<t_*$
\begin{equation}\label{eq:z71}
       \curBK{ \begin{array}{c@{,~}l}
                    \Big((\alpha t)^{-2}+(1-\alpha)^{t^{\frac{1}{400}}}\Big)^{-1}\BK{ \VertBK{j}_{L^\infty_\bfy} }(t)        &   d=2 \\
                    \Big((\alpha t)^{-1}+(1-\alpha)^{t^{\frac{1}{400}}}\Big)^{-1}\Big(\absBK{j_+(t)} + \absBK{j_-(t)}\Big)   &   d=1
                \end{array}}
=
	 O(1) \VertBK{g_{in}}_{\infty,\mu}.
\end{equation}
Hence, \eqref{eq:z7} actually holds for all $t$, and this implies $\calN(t)= O(1) \VertBK{g_{in}}_{\infty,\mu}$. 
From this estimate, the definition of $\calN(t)$, \eqref{priorinorm}, and the boundedness of $j(\bfy,t)$, \eqref{uniform},
it is easy to see that
\begin{equation*}
j(\bfy,t)=O(1) \VertBK{g_{in}}_{\infty,\mu}\BK{\frac{1}{(1+\alpha t)^d}+(1-\alpha)^{t^{\frac{1}{400}}}},
\end{equation*} 
and the theorem is proved. 
\end{proof}

\end{subsection}
\end{section}

\begin{section}{Damped Free Molecular Flow}\label{sec:DF}

In this section we consider the damped free molecular flow, with a goal of proving our main theorem, Theorem \ref{thm:FullBz:Main:1} and Theorem \ref{thm:FullBz:Main:2}. As noted in Section 1, we treat nondimensionalized Boltzmann equation \eqref{eq:FullBz:Eq}, so from now on we set $\TM=1$.

We first review some basic properties of the collision operator $Q$ and the linearized collision operator $L$. Both operators act on $\bfzeta$ but not on $\bfx$. Hence we will frequently neglect the spacial dependence in the following discussion.

\begin{subsection}{Preliminaries}\hfil

Recall the collision operator $Q(\cdot,\cdot)$:
\begin{multline*}
     Q(g,h)(\bfzeta)
	=
    \frac12 \intLim_{S^2\times\bbR^3}
    \Big(
                g(\bfzeta') h(\bfzeta'_*)   + h(\bfzeta') g(\bfzeta'_*)
            -   g(\bfzeta) h(\bfzeta_*)     - h(\bfzeta) g(\bfzeta_*)
    \Big)
\\
    \times
     B(\theta,|\bfzeta_*-\bfzeta|) d\Omega d\bfzeta_*.
\end{multline*}
As intermolecular collision conserves total molecular number, total momentum, and total energy, we have the following identity, cf. \cite{Sone},
\begin{equation}\label{eq:FullBz:Collision:Invar:Q}
    \int_{\bbR^3}
    \curBK{ \begin{array}{c}
                1   \\ \bfzeta  \\  |\bfzeta|^2
            \end{array} }
    Q(g,h)(\bfzeta) ~ d\bfzeta
=
    \curBK{ \begin{array}{c}
                0   \\ {\bf 0}  \\  0
            \end{array} }.
\end{equation}
$1,\ \bfzeta, \ |\bfzeta|^2$ are called the collision invariances.

In this paper we assume an inverse power hard potential with Grad's angular cut-off or hard spheres. Under this model, $ B(\theta,|\bfzeta_*-\bfzeta|)\sim |\bfzeta-\bfzeta_*|^{\frac{u-4}{u}}|\cos\theta|$, for some $u\geq 4$.
For the linearized collision operator $Lf = \frac{2}{\sqrt M} Q(f\sqrt M,M)$. As a direct consequence of \eqref{eq:FullBz:Collision:Invar:Q},
\begin{equation}\label{eq:LB:Collision:Invar:L}
    \int_{\bbR^3}
    \curBK{ \begin{array}{c}
                1   \\ \bfzeta  \\  |\bfzeta|^2
            \end{array} }
    \sqrt M(\bfzeta)
    (Lf)(\bfzeta) ~ d\bfzeta
=
    \curBK{ \begin{array}{c}
                0   \\ {\bf 0}  \\  0
            \end{array} }.
\end{equation}

Recall that $L$ can be decomposed as the difference of an integral operator $K$ and a multiplicative operator $\nu$: $L=K-\nu$.
Moreover, we have the following estimates, cf. \cite{Grad-Boltzmann-II}:
\begin{align}
        \label{eq:LB:nu:Esti}
        \nu(\bfzeta)
\sim&
         (1+|\bfzeta|)^{\frac{u-4}{u}},
        \\
        \label{eq:LB:K:Ptws:Esti}
         K(\bfzeta,\bfzeta_*)
=&
         O(1) |\bfzeta-\bfzeta_*|^{\frac{u-4}{u}} e^{-\frac{|\bfzeta-\bfzeta_*|^2}{8}}
         +
         O(1) \frac{ 1 }{ |\bfzeta-\bfzeta_*| }
         e^{-\frac{|\bfzeta-\bfzeta_*|^2}{8}}.
\end{align}

And it is well-known that 
\begin{subequations}
\begin{align}
    \label{eq:FullBz:Linfty:Ineq:Q}
&
    \VertBK{
                \frac{ Q\BK{ \phi\sqrt M, \psi\sqrt M } }{ \nu \sqrt M }
            }_{L^\infty_\bfzeta}
=
     O(1) \VertBK{ \phi }_{L^\infty_\bfzeta} \times \VertBK{ \psi }_{L^\infty_\bfzeta},
    \\
    \label{eq:FullBz:Linfty:Ineq:Q:prime}
&
        \VertBK{
                \frac{ Q\BK{ \phi\sqrt M, \psi\sqrt M } }{ \sqrt M }
            }_{L^\infty_\bfzeta}
        \\
        \notag
=&
     O(1)\BK{ \VertBK{ \nu \phi }_{L^\infty_\bfzeta} \times \VertBK{ \psi }_{L^\infty_\bfzeta} + \VertBK{ \phi }_{L^\infty_\bfzeta} \times \VertBK{ \nu \psi }_{L^\infty_\bfzeta} }.
\end{align}
\end{subequations}

Let
\begin{equation*}
    \VertBK{ f }_{L^{\infty,b}_\bfzeta} \equiv \mathop{\esssup}_{ \bfzeta\in\bbR^3 } 
    (1+|\bfzeta|)^b |f(\bfzeta)|.
\end{equation*}
Also, for any $\gamma\geq 0$,
\begin{equation}\label{eq:LB:K:Norm:Esti}
    \VertBK{ Kf }_{L^{\infty,-\gamma}_\bfzeta}
=\VertBK{
                \frac{ \int f(\bfzeta_*) K(\bfzeta,\bfzeta_*) d\bfzeta_* }{ (1+|\bfzeta|)^\gamma }
            }_\infty=
    O(1) 2^\gamma \VertBK{ f }_{L^{\infty,-\gamma}_\bfzeta}.
\end{equation}
\eqref{eq:LB:K:Norm:Esti} with $\gamma=0$ states that $K: L^\infty_\bfzeta\rightarrow L^\infty_\bfzeta$ is a bounded operator. Actually, more is true: $K: L^{\infty,\beta}_\bfzeta\rightarrow L^{\infty,\beta +1}_\bfzeta$ is a bounded operator for each $\beta\geq 0$,  \cite{Grad-Boltzmann-II}. However, \eqref{eq:LB:K:Norm:Esti} suffices for our purpose. From \eqref{eq:LB:K:Norm:Esti},
\begin{equation}\label{eq:LB:Linfty:Ineq:L}
    \VertBK{ Lf }_{L^\infty_\bfzeta}
=
     O(1) \VertBK{ \nu f }_{L^\infty_\bfzeta}.
\end{equation}
Let $f_\varepsilon (\bfzeta) =
\bbbone_{\curBK{|\bfxi|<\varepsilon}} (1+|\bfzeta|)^\gamma$ for
$\varepsilon\ll 1$. By \eqref{eq:LB:K:Ptws:Esti} we have
\begin{equation}\label{eq:LB:K:low:speed:Esti}
    \VertBK{ K \Big (f_\varepsilon \Big) }_{ L^{\infty,-\gamma}_\bfzeta }
=
    \curBK{ \begin{array}{c@{,~}l}
                    O(\varepsilon)                          &   d=1    \\
                    O(\varepsilon^2 |\log \varepsilon|)     &   d=2    
            \end{array} },
    \text{ for } \varepsilon \ll 1.
\end{equation}
The details can be found in \cite{Kuo-Liu-Tsai-2}.

As mentioned in Section 1, we take \eqref{eq:LB:Eq} as our linearized problem. To show an exponential decay property of \eqref{eq:LB:Eq}, we conduct two reductions: first we reduce $(\partial_t+\sum\zeta_i\partial_{x_i}-\frac{1}{\kappa}L)$ to $(\partial_t+\sum\zeta_i\partial_{x_i}+\frac{\nu}{\kappa})$, and then reduce $(\partial_t+\sum\zeta_i\partial_{x_i}+\frac{\nu}{\kappa})$ to $(\partial_t+\sum\zeta_i\partial_{x_i})$. To facilitate the following discussion, we invoke the notion of solution operator.
\begin{definition}
$\SLB{t}$, $\SDF{t}$, $\SFr{t}$ are linear operators defined as the following:
\begin{multline}\label{eq:LB:Soln:Op:Eq}
    \Big( \SLB{t}\BK{f_{in}} \Big) (\bfx,\bfzeta) = f(\bfx,\bfzeta,t),
	\\
    \begin{dcases}
        \frac{\partial f}{\partial t} + \sum_{i=1}^d \zeta_i \frac{\partial f}{\partial x_i} 
        - \frac{1}{\kappa}Lf = 0, \
        f(\bfx,\bfzeta,0) = f_{in}(\bfx,\bfzeta)
        \\
        \text{ Maxwell-type boundary condition } \eqref{eq::Diff:Ref:BC:expand}
    \end{dcases}
\end{multline}
\begin{multline}\label{eq:DF:Soln:Op:Eq}
    \Big( \SDF{t}\BK{f_{in}} \Big) (\bfx,\bfzeta) = f(\bfx,\bfzeta,t),
	\\
    \begin{dcases}
        \frac{\partial f}{\partial t} + \sum_{i=1}^d \zeta_i \frac{\partial f}{\partial x_i} 
        + \frac{1}{\kappa}\nu f = 0, \
        f(\bfx,\bfzeta,0) = f_{in}(\bfx,\bfzeta)
        \\
        \text{ Maxwell-type boundary condition } \eqref{eq::Diff:Ref:BC:expand}
    \end{dcases}
\end{multline}
\begin{equation*}
	\Big( \SFr{t}\BK{g_{in}} \Big) (\bfx,\bfzeta) = g(\bfx,\bfzeta,t),
	\quad
    \begin{dcases}
        \frac{\partial g}{\partial t} + \sum_{i=1}^d \zeta_i \frac{\partial g}{\partial x_i} = 0, \
         g(\bfx,\bfzeta,0) = g_{in}(\bfx,\bfzeta)
        \\
        \text{ Maxwell-type boundary condition } \eqref{eq::Diff:Rel:BC}
    \end{dcases}
\end{equation*}
We call $\SLB{t}$, $\SDF{t}$, and $\SFr{t}$ the solution operators for the Linearized Boltzmann equation, free molecular flow with damping, and free molecular flow, respectively.
\end{definition}

From \eqref{eq::Choice:of:nu}, any $\mu>4$ is an admissible choice for free molecular flow. Since there is no need to vary $\mu$, for definiteness, from now on we fix $\mu=5$. The pointwise results for the free molecular flow, Theorem \ref{thm:Fr:Soln:Op:Ptws:Esti:new}, is written in the following form for $\epsilon=\frac{1}{400}$:

\begin{theorem}[Main Theorem of Free Molecular Flow: Solution Operator Form]
\label{thm:Fr:Soln:Op:Ptws:Esti}
For $f_{in}\in L^{\infty,-\gamma}_{\bfx,\bfzeta}$, $0 \leq \gamma \leq 1$,
\begin{align*}
    \frac{ \SFr{t} \BK{ \nu f_{in} \sqrt M }(\bfx,\bfzeta) }{ \sqrt M (1+|\bfzeta|)^\gamma }
=&
     O(1)\VertBK{ f_{in} }_{\infty,-\gamma}  \nu(\bfzeta),
     \\
    \frac{ \SFr{t} \BK{ f_{in} \sqrt M }(\bfx,\bfzeta) }{ \sqrt M (1+|\bfzeta|)^\gamma }
=&
     O(1)\VertBK{ f_{in} }_{\infty.-\gamma}.
\end{align*}
If, in addition, $\int f_{in} \sqrt M d\bfx d\bfzeta = 0$,
\begin{multline*}
    \frac{ \SFr{t} \BK{ f_{in} \sqrt M }(\bfx,\bfzeta) }{ \sqrt M (1+|\bfzeta|)^\gamma }\\
=
     O(1)
    \VertBK{ f_{in} }_{\infty,-\gamma}
    \Bigg\{
            \BK{\frac{ 1 }{ (1+\alpha t)^d }+(1-\alpha)^{\frac{t^{\frac{1}{400}}}{2}}}
            \bbbone_{ \curBK{ |\bfxi|>\frac{2}{t^{\frac{399}{400}}} } }
            +
            \bbbone_{ \curBK{ |\bfxi|<\frac{2}{t^{\frac{399}{400}}} } }
    \Bigg\}.
\end{multline*}
\end{theorem}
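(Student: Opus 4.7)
The plan is to reduce everything to Theorem \ref{thm:Fr:Soln:Op:Ptws:Esti:new} and to the characteristic representation \eqref{eq:Fr:Reduced:Chara:Rep}. The key norm conversion is
\begin{equation*}
	\VertBK{ f_{in}\sqrt{M} }_{\infty,5}
=
	\esssup (1+|\bfzeta|)^5 |f_{in}(\bfx,\bfzeta)|\sqrt{M(\bfzeta)}
\leq
	C \VertBK{ f_{in} }_{\infty,-\gamma},
\end{equation*}
valid for $0\leq\gamma\leq 1$ because $(1+|\bfzeta|)^{5+\gamma}\sqrt{M(\bfzeta)}$ is bounded. Under the zero-mean assumption $\int f_{in}\sqrt M\, d\bfx d\bfzeta=0$ we have $\rho_*=0$, so Theorem \ref{thm:Fr:Soln:Op:Ptws:Esti:new} applies with $g_{in}=f_{in}\sqrt M$, $\mu=5$, and $\epsilon=1/400$.

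A naive division of the conclusion of Theorem \ref{thm:Fr:Soln:Op:Ptws:Esti:new} by $\sqrt{M}(1+|\bfzeta|)^\gamma$ fails on the slow-velocity set because the factor $1/((1+|\bfzeta|)^{5+\gamma}\sqrt{M(\bfzeta)})$ blows up in the $\bfeta$ direction. To avoid this I would go back to the characteristic representation \eqref{eq:Fr:Reduced:Chara:Rep}. In the free-flight regime $t<\tau_b$, the solution reduces to $f_{in}(\bfx-\bfxi t,\bfzeta)\sqrt{M(\bfzeta)}$ and division gives the harmless bound $\VertBK{f_{in}}_{\infty,-\gamma}$. In the post-collision regime $t>\tau_b$ the formula (with $\rho_*=0$) splits into a specular contribution $(1-\alpha)^{m} f_{in}(\bfx_{(m)}-\bfxi^m(\cdots),\bfxi^m,\bfeta)\sqrt{M(\bfzeta)}$ and a diffuse contribution $\alpha\sum_{k=0}^{m-1}(1-\alpha)^k j_g(\bfx_{(k+1)},\cdots)\tilde M_{T(\bfx_{(k+1)})}(\bfzeta)$. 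The specular piece is bounded by $(1-\alpha)^m\VertBK{f_{in}}_{\infty,-\gamma}$; on $\{|\bfxi|>2/t^{399/400}\}$ formula \eqref{eq::specular:m} forces $m\geq t^{1/400}$, producing the factor $(1-\alpha)^{t^{1/400}/2}$, while on the complementary set we just use the trivial bound $1$.

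For the diffuse piece I would split the sum at $k=\lfloor m/2\rfloor$. When $k\leq m/2$ the time argument of $j_g$ is $\geq t/2$, and Theorem \ref{thm:Fr:Main} yields $|j_g|=O(1)\VertBK{g_{in}}_{\infty,5}\bigl((1+\alpha t)^{-d}+(1-\alpha)^{t^{1/400}/2}\bigr)$; the geometric sum of $\alpha(1-\alpha)^k$ is $\leq 1$, and $\tilde M_{T(\bfy)}(\bfzeta)/(\sqrt{M(\bfzeta)}(1+|\bfzeta|)^\gamma)$ is uniformly bounded (in fact Gaussian decaying) because $T\leq\TM=1$. When $k>m/2$ the coefficient $(1-\alpha)^k$ already dominates by $(1-\alpha)^{t^{1/400}/2}$ on $\{|\bfxi|>2/t^{399/400}\}$, and we can replace $|j_g|$ by the uniform bound from Lemma \ref{lem:Fr:Bddness:of:j}. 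On $\{|\bfxi|<2/t^{399/400}\}$ only uniform boundedness is needed, and it follows from the same two ingredients (uniform $j_g$ and uniform $\tilde M_T/\sqrt{M}$).

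The first two displayed estimates, which have no decay in $t$, come from the same characteristic representation. The point for $\SFr{t}(\nu f_{in}\sqrt{M})$ is that specular reflection preserves $|\bfzeta|$, so $\nu(\bfxi^m,\bfeta)=\nu(\bfzeta)$ and the specular piece carries an explicit $\nu(\bfzeta)$ factor; the diffuse piece uses the flux bound for the data $\nu f_{in}\sqrt{M}$, which is controlled because $\nu(\bfzeta)\lesssim 1+|\bfzeta|$ yields $\VertBK{\nu f_{in}\sqrt M}_{\infty,5}\leq O(1)\VertBK{f_{in}}_{\infty,-\gamma}$, and then the boundedness of $\tilde M_T/\sqrt M$ together with $\nu(\bfzeta)\geq\nu_0>0$ lets us reinsert a $\nu(\bfzeta)$ factor on the right. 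The main technical obstacle throughout is the $1/\sqrt M$ blow-up that forbids a one-line reduction to Theorem \ref{thm:Fr:Soln:Op:Ptws:Esti:new}; this is what makes the split at $k=\lfloor m/2\rfloor$ necessary and is the reason the indicator sets in the statement are tuned precisely to $|\bfxi|=2/t^{399/400}$.
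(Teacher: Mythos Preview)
Your argument is correct and matches the paper's approach. The paper does not prove Theorem~\ref{thm:Fr:Soln:Op:Ptws:Esti} separately, stating only that it is Theorem~\ref{thm:Fr:Soln:Op:Ptws:Esti:new} rewritten with $\mu=5$ and $\epsilon=\tfrac{1}{400}$; your observation that naive division by $\sqrt{M}$ fails and that one must instead return to the characteristic representation \eqref{eq:Fr:Reduced:Chara:Rep}, the flux decay of Theorem~\ref{thm:Fr:Main}, and the uniform bound of Lemma~\ref{lem:Fr:Bddness:of:j} (with the split at $k=\lfloor m/2\rfloor$) simply makes explicit the same computation through which Theorem~\ref{thm:Fr:Soln:Op:Ptws:Esti:new} itself is obtained.
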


So far we have only obtained the existence, uniqueness, and pointwise esitmate of free molecular flow, Theorem \ref{thm:Fr:Soln:Op:Ptws:Esti}. Now we will settle down this issue for the damped free molecular flow, \eqref{eq:DF:Soln:Op:Eq}, and obtain a pointwise estimate of the solution.
\end{subsection}

\begin{subsection}{Global Existence and Boundedness.}\label{sec:sub:DF:Bddness}\hfil

In this subsection we establish the global existence and boundedness of boundary flux of the damped free molecular flow:
\begin{equation}\label{eq:DF:Eq}
\begin{dcases}
    \frac{\partial g^\nu}{\partial t}
    +
    \sum_{i=1}^d \zeta_i \frac{\partial g^\nu}{\partial x_i}
    +
    \frac{1}{\kappa}\nu g^\nu = 0, \\
     g^\nu (\bfx,\bfzeta,0) = g^\nu_{in} (\bfx,\bfzeta)
    \\
    \text{ Maxwell-type boundary condition } \eqref{eq::Diff:Rel:BC}.
\end{dcases}
\end{equation}
This can be done easily by the comparison with the free molecular flow.

By the characteristic method, solutions of \eqref{eq:DF:Eq} can be represented as
\begin{equation}\label{eq:FD:Chara:Rep}
     g^\nu (\bfx,\bfzeta,t)
=
\begin{dcases}
         &\alpha \sum\limits_{i=0}^{m-1}(1-\alpha)^{i} e^{ -\frac{\nu(\bfzeta)}{\kappa}(t_1+it_2)
          }j^\nu\BKK{ \bfx_{(i+1)}, t-t_1-it_2 }
          \BK{ \frac{2\pi}{RT(\bfx_{(i+1)})} }^\frac12 M_{T(\bfx_{(i+1)})}\\
         & +(1-\alpha)^{m}e^{ -\frac{\nu(\bfzeta)}{\kappa}t }g^\nu_{in}(\bfx_{(m)}-\bfxi^m(t-t_1-(m-1)t_2),\bfzeta^m)
           \quad  
        \text{ for } t_1 < t,
        \\
         &e^{ -\frac{\nu(\bfzeta)}{\kappa}t } g^\nu_{in}(\bfx-\bfxi t,\bfzeta)
        \quad
        \text{ for } t < t_1,
     \end{dcases}
\end{equation}
where
\begin{equation*}
m=\lfloor\frac{|\bfxi|t-|\bfx-\bfx_{(1)}|}{|\bfx_{(1)}-\bfx_{(2)}|}\rfloor +1.
\end{equation*}
Consequently,
\begin{multline*}
j^\nu(\bfy,t)=\intLim_{t<\frac{|\bfy-\bfy_{(1)}|}{|\bfxi_1|}}
\BK{-\bfxi_1\cdot\bfn(\bfy)}e^{-\frac{\nu(\bfzeta_1)}{\kappa}t}g^\nu_{in}(\bfy-\bfxi_1t,\bfzeta_1)d\bfzeta_1\\
+\intLim_{t>\frac{|\bfy-\bfy_{(1)}|}{|\bfxi_1|}}
\BK{-\bfxi_1\cdot\bfn(\bfy)}\alpha e^{-\frac{\nu(\bfzeta_1)}{\kappa}\frac{|\bfy-\bfy_{(1)}|}{|\bfxi_1|}}\BK{ \frac{2\pi}{RT(\bfy_{(1)})} }^{\frac{1}{2}}
M_{T(\bfy_{(1)})}(\bfzeta_1)j^\nu\BK{\bfy_{(1)},t-\frac{|\bfy-\bfy_{(1)}|}{|\bfxi_1|}}d\bfzeta_1\\
+\intLim_{t>\frac{|\bfy-\bfy_{(1)}|}{|\bfxi_1|}}
\BK{-\bfxi_1\cdot\bfn(\bfy)}(1-\alpha)e^{-\frac{\nu(\bfzeta_1)}{\kappa}\frac{|\bfy-\bfy_{(1)}|}{|\bfxi_1|}}g^\nu\BK{\bfy_{(1)},t-\frac{|\bfy-\bfy_{(1)}|}{|\bfxi_1|},\bfxi_1^1,\bfeta_1}d\bfzeta_1.\\
\end{multline*}
One can follow the discussion in Subsection \ref{sec:Fr:Stoch:Formulation} to derive the formula for $j^\nu(\bfy,t)$:
\begin{equation*}
\begin{split}
j^\nu(\bfy,t)=&\sum\limits_{k=0}^n\BK{j^\nu_{(k)}(\bfy,t)+\sum\limits_{l=1}^k\sum\limits_{k_1+\ldots+k_l=l}^k
j^{\nu(l,k)}_{(k_1,\ldots,k_l)}(\bfy,t)}\\
+&J^\nu_{n+1}(\bfy,t)+\sum\limits_{l=1}^{n+1}\sum\limits_{k_1+\ldots+k_l=l}^{n+1}
J^{\nu(l,n+1)}_{(k_1,\ldots,k_l)}(\bfy,t),
\end{split}
\end{equation*}
where
\begin{multline*}
j^\nu_{(k)}(\bfy,t)=\intLim_{0<t-\sum\limits_{i=0}^{k-1}\frac{|\bfy_{(i)}-\bfy_{(i+1)}|}{|\bfxi_1^i|}<\frac{|\bfy_{(k)}-\bfy_{(k+1)}|}{|\bfxi_1^k|}}e^{-\sum\limits_{i=0}^{k-1}\frac{\nu(\bfzeta_1^i)}{\kappa}\frac{|\bfy_{(i)}-\bfy_{(i+1)}|}{|\bfxi_1^i|}}\\
\BK{-\bfxi_1\cdot\bfn(\bfy)}(1-\alpha)^k 
e^{-\frac{\nu(\bfzeta_1^k)}{\kappa}\BK{t-\sum\limits_{i=0}^{k-1}\frac{|\bfy_{(i)}-\bfy_{(i+1)}|}{|\bfxi_1^i|}}}
g^\nu_{in}\BK{\bfy_{(k)}-\bfxi_1^k(t-\sum\limits_{i=0}^{k-1}\frac{|\bfy_{(i)}-\bfy_{(i+1)}|}{|\bfxi_1^i|}),\bfzeta_1^k}d\bfzeta_1,
\end{multline*}
\begin{multline*}
j^{\nu(l,k)}_{(k_1,\ldots,k_l)}(\bfy,t)=
\intLim_{A^{(l,k)}_{(k_1,\ldots,k_l)}}
\prod\limits_{i=1}^l \BK{-\bfxi_i\cdot\bfn(\bfy_{(k_1,\ldots,k_{i-1})}}(1-\alpha)^{k_i-1}\alpha\\
e^{-\sum\limits_{j=1}^{k_i}\frac{\nu(\bfzeta_i^{j-1})}{\kappa}\frac{|\bfy_{(k_1,\ldots,k_{i-1},j-1)}-\bfy_{(k_1,\ldots,k_{i-1},j)}|}{|\bfxi_i^{j-1}|}}
\BK{ \frac{2\pi}{RT(\bfy_{(k_1,\ldots,k_i)})} }^{\frac{1}{2}}
M_{T(\bfy_{(k_1,\ldots,k_i)})}(\bfzeta_i^{k_i-1})\\
\BK{-\bfxi_{l+1}\cdot\bfn(\bfy_{(k_1,\ldots,k_l)})}
(1-\alpha)^{k-k_1-\ldots-k_l}e^{-\sum\limits_{i=1}^{k-k_1-\ldots-k_l}\frac{\nu(\bfzeta_{l+1}^{i-1})}{\kappa}\frac{|\bfy_{(k_1,\ldots,k_l,i-1)}-\bfy_{(k_1,\ldots,k_l,i)}|}{|\bfxi_{l+1}^{i-1}|})}\\
e^{-\frac{\nu(\bfzeta_{l+1}^{k-k_1-\ldots-k_l})}{\kappa}\BK{t-\sum\limits_{i=1}^{l}\sum\limits_{j=1}^{k_i}\frac{|\bfy_{(k_1,\ldots,k_{i-1},j-1)}-\bfy_{(k_1,\ldots,k_{i-1},j)}|}{|\bfxi_i^{j-1}|}-\sum\limits_{i=1}^{k-k_1-\ldots-k_l}\frac{|\bfy_{(k_1,\ldots,k_l,i-1)}-\bfy_{(k_1,\ldots,k_l,i)}|}{|\bfxi_{l+1}^{i-1}|}}}\\
g^\nu_{in}\left(\bfy_{(k_1,\ldots,k_l,k-k_1-\ldots-k_l)}-\bfxi_{l+1}^{k-k_1-\ldots-k_l}
(t-\sum\limits_{i=1}^{l}\sum\limits_{j=1}^{k_i}\frac{|\bfy_{(k_1,\ldots,k_{i-1},j-1)}-\bfy_{(k_1,\ldots,k_{i-1},j)}|}{|\bfxi_i^{j-1}|}\right.\\
\left.-\sum\limits_{i=1}^{k-k_1-\ldots-k_l}\frac{|\bfy_{(k_1,\ldots,k_l,i-1)}-\bfy_{(k_1,\ldots,k_l,i)}|}{|\bfxi_{l+1}^{i-1}|}),\bfzeta_{l+1}^{k-k_1-\ldots-k_l}\right)
d\bfzeta_{l+1}\ldots d\bfzeta_1,
\end{multline*}
\begin{multline*}
J^\nu_{n+1}(\bfy,t)=
\intLim_{t>\sum\limits_{i=0}^{n}\frac{|\bfy_{(i)}-\bfy_{(i+1)}|}{|\bfxi_1^i|}}
\BK{-\bfxi_1\cdot\bfn(\bfy)}e^{-\sum\limits_{i=0}^{n}\frac{\nu(\bfzeta_1^i)}{\kappa}\frac{|\bfy_{(i)}-\bfy_{(i+1)}|}{|\bfxi_1^i|}}(1-\alpha)^{n+1} \\
g^\nu\BK{\bfy_{(n+1)},t-\sum\limits_{i=0}^{n}\frac{|\bfy_{(i)}-\bfy_{(i+1)}|}{|\bfxi_1^i|},\bfzeta_1^{n+1}}d\bfzeta_1,
\end{multline*}

\begin{multline*}
J^{\nu(l,n+1)}_{(k_1,\ldots,k_l)}(\bfy,t)=
\intLim_{B^{(l,n+1)}_{(k_1,\ldots,k_l)}}
\prod\limits_{i=1}^l \BK{-\bfxi_i\cdot\bfn(\bfy_{(k_1,\ldots,k_{i-1})}}(1-\alpha)^{k_i-1}\alpha\\
e^{-\sum\limits_{j=1}^{k_i}\frac{\nu(\bfzeta_i^{j-1})}{\kappa}\frac{|\bfy_{(k_1,\ldots,k_{i-1},j-1)}-\bfy_{(k_1,\ldots,k_{i-1},j)}|}{|\bfxi_i^{j-1}|}}
\BK{ \frac{2\pi}{RT(\bfy_{(k_1,\ldots,k_i)})} }^{\frac{1}{2}}
M_{T(\bfy_{(k_1,\ldots,k_i)})}(\bfzeta_i^{k_i-1})\\
\BK{-\bfxi_{l+1}\cdot\bfn(\bfy_{(k_1,\ldots,k_l)})}
(1-\alpha)^{n+1-k_1-\ldots-k_l}e^{-\sum\limits_{i=1}^{n+1-k_1-\ldots-k_l}\frac{\nu(\bfzeta_{l+1}^{i-1})}{\kappa}\frac{|\bfy_{(k_1,\ldots,k_l,i-1)}-\bfy_{(k_1,\ldots,k_l,i)}|}{|\bfxi_{l+1}^{i-1}|}}\\
g^\nu\left(\bfy_{(k_1,\ldots,k_l)},
t-\sum\limits_{i=1}^{l}\sum\limits_{j=1}^{k_i}\frac{|\bfy_{(k_1,\ldots,k_{i-1},j-1)}-\bfy_{(k_1,\ldots,k_{i-1},j)}|}{|\bfxi_i^{j-1}|}\right.\\
\left.-\sum\limits_{i=1}^{n+1-k_1-\ldots-k_l}\frac{|\bfy_{(k_1,\ldots,k_l,i-1)}-\bfy_{(k_1,\ldots,k_l,i)}|}{|\bfxi_{l+1}^{i-1}|},\bfzeta_{l+1}^{n+1-k_1-\ldots-k_l}\right)
d\bfzeta_{l+1}\ldots d\bfzeta_1,
\end{multline*}
and
\begin{equation*}
\begin{split}
&\bfy_{(0)}\equiv \bfy, \quad \bfy_{(k_1,\ldots,k_l,0)}\equiv\bfy_{(k_1,\ldots,k_l)}, 
\quad \bfxi^{0}_l \equiv \bfxi_l, \\ 
&\bfy_{(k_1,\ldots,k_{l-1},i)} =\bfy_B \BKK{ \bfy_{(k_1,\ldots,k_{l-1},i-1)}, \frac{\bfxi^{i-1}_{l}}{|\bfxi^{i-1}_{l}|} },\\
&\bfxi^{i}_l = \bfxi^{i-1}_l -2(\bfxi^{i-1}_l\cdot\bfn(\bfy_{(k_1,\ldots,k_{l-1},i)}))\bfn(\bfy_{(k_1,\ldots,k_{l-1},i)}),\\
&\bfzeta^{i}_l=(\bfxi^{i}_l,\bfeta_l),
\end{split}
\end{equation*}
\begin{multline*}
A^{(l,k)}_{(k_1,\ldots,k_l)}=\\
\left\{0<t-\sum\limits_{i=1}^{l}\sum\limits_{j=1}^{k_i}\frac{|\bfy_{(k_1,\ldots,k_{i-1},j-1)}-\bfy_{(k_1,\ldots,k_{i-1},j)}|}{|\bfxi_i^{j-1}|}-\sum\limits_{i=1}^{k-k_1-\ldots-k_l}\frac{|\bfy_{(k_1,\ldots,k_l,i-1)}-\bfy_{(k_1,\ldots,k_l,i)}|}{|\bfxi_{l+1}^{i-1}|}\right.\\
\left.<\begin{dcases}
\frac{|\bfy_{(k_1,\ldots,k_l)}-\bfy_{(k_1,\ldots,k_l,1)}|}{|\bfxi_{l+1}|}\quad \text{ if } k-k_1-\ldots-k_l=0,\\
\frac{|\bfy_{(k_1,\ldots,k_l,k-k_1-\ldots-k_l)}-\bfy_{(k_1,\ldots,k_l,k_l,k-k_1-\ldots-k_l+1)}|}{|\bfxi_{l+1}^{k-k_1-\ldots-k_l}|}\quad \text{ if } k-k_1-\ldots-k_l>0
\end{dcases}\right\},
\end{multline*}
\begin{multline*}
B^{(l,n+1)}_{(k_1,\ldots,k_l)}=\\
\left\{t>\sum\limits_{i=1}^{l}\sum\limits_{j=1}^{k_i}\frac{|\bfy_{(k_1,\ldots,k_{i-1},j-1)}-\bfy_{(k_1,\ldots,k_{i-1},j)}|}{|\bfxi_i^{j-1}|}+\sum\limits_{i=1}^{n+1-k_1-\ldots-k_l}\frac{|\bfy_{(k_1,\ldots,k_l,i-1)}-\bfy_{(k_1,\ldots,k_l,i)}|}{|\bfxi_{l+1}^{i-1}|}\right\}.
\end{multline*}

Since the kernel of free molecular flow always dominates that of damped free molecular flow:
\begin{multline*}
j^\nu_{(k)}(\bfy,t)\leq 
\intLim_{0<t-\sum\limits_{i=0}^{k-1}\frac{|\bfy_{(i)}-\bfy_{(i+1)}|}{|\bfxi_1^i|}<\frac{|\bfy_{(k)}-\bfy_{(k+1)}|}{|\bfxi_1^k|}}
\BK{-\bfxi_1\cdot\bfn(\bfy)}(1-\alpha)^k \\
g^\nu_{in}\BK{\bfy_{(k)}-\bfxi_1^k(t-\sum\limits_{i=0}^{k-1}\frac{|\bfy_{(i)}-\bfy_{(i+1)}|}{|\bfxi_1^i|}),\bfzeta_1^k}d\bfzeta_1,
\end{multline*}
\begin{multline*}
j^{\nu(l,k)}_{(k_1,\ldots,k_l)}(\bfy,t)\leq 
\intLim_{A^{(l,k)}_{(k_1,\ldots,k_l)}}
\prod\limits_{i=1}^l \BK{-\bfxi_i\cdot\bfn(\bfy_{(k_1,\ldots,k_{i-1})}}(1-\alpha)^{k_i-1}\alpha\\
\BK{ \frac{2\pi}{RT(\bfy_{(k_1,\ldots,k_i)})} }^{\frac{1}{2}}
M_{T(\bfy_{(k_1,\ldots,k_i)})}(\bfzeta_i^{k_i-1})
\BK{-\bfxi_{l+1}\cdot\bfn(\bfy_{(k_1,\ldots,k_l)})}
(1-\alpha)^{k-k_1-\ldots-k_l}\\
g^\nu_{in}\left(\bfy_{(k_1,\ldots,k_l,k-k_1-\ldots-k_l)}-\bfxi_{l+1}^{k-k_1-\ldots-k_l}
(t-\sum\limits_{i=1}^{l}\sum\limits_{j=1}^{k_i}\frac{|\bfy_{(k_1,\ldots,k_{i-1},j-1)}-\bfy_{(k_1,\ldots,k_{i-1},j)}|}{|\bfxi_i^{j-1}|}\right.\\
\left.-\sum\limits_{i=1}^{k-k_1-\ldots-k_l}\frac{|\bfy_{(k_1,\ldots,k_l,i-1)}-\bfy_{(k_1,\ldots,k_l,i)}|}{|\bfxi_{l+1}^{i-1}|}),\bfzeta_{l+1}^{k-k_1-\ldots-k_l}\right)
d\bfzeta_{l+1}\ldots d\bfzeta_1.
\end{multline*}
\begin{equation*}
\begin{split}
j^\nu(\bfy,t)=&\sum\limits_{k=0}^\infty\BK{j^\nu_{(k)}(\bfy,t)+\sum\limits_{l=1}^k\sum\limits_{k_1+\ldots+k_l=l}^k
j^{\nu(l,k)}_{(k_1,\ldots,k_l)}(\bfy,t)}\\
=&O(1) \VertBK{g_{in}^{\nu}}_{\infty,5}.
\end{split}
\end{equation*}
This proves global existence of the solution and uniform boundedness of $j^\nu$ by comparison method, for the case  $g^\nu_{in} \geq 0$. For general initial configuration, let $g^{\nu,\pm}$ be the solution of damped free molecular flow, \eqref{eq:DF:Eq}, with initial configuration $(g^\nu_{in})^{\pm}$, the positive/negative part of $g_{in}^\nu$. Let $j^{\nu,\pm}$ be the flux of $g^{\nu,\pm}$. Note that \eqref{eq:DF:Eq} is linear, $j^\nu = j^{\nu,+} - j^{\nu,-}$. (However, $(j^\nu)^\pm \neq j^{\nu,\pm}$ in general.)
\begin{align*}
&
    |j^\nu(\bfy,t)| 
    \\
\leq& 
	 j^{\nu,+}(\bfy,t) + j^{\nu,-}(\bfy,t)
	 \\
=&
    O(1)  \BK{ \VertBK{(g_{in}^{\nu})^+}_{\infty,5} + \VertBK{(g_{in}^{\nu})^-}_{\infty,5} }
=
    O(1) \VertBK{g_{in}^{\nu}}_{\infty,5}.
\end{align*}

To sum up, we have
\begin{theorem}\label{thm:DF:Bddness:Soln}
For all $g^\nu_{in} \in L^{\infty,5}_{\bfx,\bfzeta}$, the solution of \eqref{eq:DF:Eq} exists globally, with
\begin{equation*}
    \BK{ \VertBK{j^\nu}_{L^\infty_\bfy} }(t) = O(1)  \VertBK{g_{in}^{\nu}}_{\infty,5}.
\end{equation*}
Therefore, for $f_{in}\in L^{\infty,-\gamma}_{\bfx,\bfzeta}$,
\begin{equation*}
    \VertBK{ \SDF{t}(f_{in}) }_{ \infty, -\gamma }
=
     O(1) \VertBK{f_{in}}_{\infty, -\gamma }.
\end{equation*}
\end{theorem}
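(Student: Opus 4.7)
The plan is to exploit the explicit characteristic representation \eqref{eq:FD:Chara:Rep}, in which every damping exponential $e^{-\nu(\bfzeta)\tau/\kappa}$ lies in $[0,1]$. Iterating the Maxwell-type boundary condition against \eqref{eq:FD:Chara:Rep} in the manner of Subsection \ref{sec:Fr:Stoch:Formulation} produces an infinite-series formula for $j^\nu(\bfy,t)$ of exactly the same form as the undamped series \eqref{eq:fr:globa:1}, only with additional damping factors $\leq 1$ in every term. Estimates from the undamped case therefore transfer with at most the same constants.

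For the flux bound, I would first treat nonnegative initial data. Dropping every damping factor term-by-term in the $j^\nu$ series gives $0\leq j^\nu(\bfy,t)\leq j(\bfy,t)$, where $j$ is the undamped free molecular flux with the same initial data, so the uniform bound from Lemma \ref{lem:Fr:Bddness:of:j} delivers $\VertBK{j^\nu}_{L^\infty_\bfy}(t) = O(1)\VertBK{g^\nu_{in}}_{\infty,5}$. For signed initial data, split $g^\nu_{in}=(g^\nu_{in})^+-(g^\nu_{in})^-$, use linearity of \eqref{eq:DF:Eq} to write $g^\nu=g^{\nu,+}-g^{\nu,-}$ with nonnegative fluxes $j^{\nu,\pm}$, and conclude by $|j^\nu|\leq j^{\nu,+}+j^{\nu,-}$; global existence of the series solution follows in parallel from the same term-by-term comparison with the undamped case.

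For the consequence on $\SDF{t}$, the key is the weight change $f\mapsto\sqrt{M}f$: if $f$ solves \eqref{eq:DF:Soln:Op:Eq} under the boundary condition \eqref{eq::Diff:Ref:BC:expand}, then $g^\nu:=\sqrt{M}f$ satisfies \eqref{eq:DF:Eq} with initial data $\sqrt{M}f_{in}$, because \eqref{eq::Diff:Ref:BC:expand} is precisely \eqref{eq::Diff:Rel:BC} multiplied through by $\sqrt{M}$. Since $\sqrt{M}$ has Gaussian decay, $\VertBK{\sqrt{M}f_{in}}_{\infty,5} = O(1)\VertBK{f_{in}}_{\infty,-\gamma}$ for any $\gamma\in[0,1]$, so the flux bound of the previous paragraph applies to $g^\nu$. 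Dividing \eqref{eq:FD:Chara:Rep} by $\sqrt{M(\bfzeta)}$ and using that specular reflections preserve $|\bfzeta|$ (so the $\sqrt{M}$-ratios cancel in the initial-data piece), one obtains pointwise
\begin{equation*}
(1+|\bfzeta|)^{-\gamma}|f(\bfx,\bfzeta,t)| \leq \VertBK{f_{in}}_{\infty,-\gamma} + C\,\VertBK{j^\nu}_{L^\infty_{\bfy,t}}\sup_\bfzeta\frac{\tilde M_T(\bfzeta)}{(1+|\bfzeta|)^\gamma\sqrt{M(\bfzeta)}},
\end{equation*}
and the remaining supremum is finite because $T\leq\TM=1$ forces $\tilde M_T/\sqrt{M}$ to have Gaussian decay in $\bfzeta$. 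Combining then yields $\VertBK{\SDF{t}(f_{in})}_{\infty,-\gamma} = O(1)\VertBK{f_{in}}_{\infty,-\gamma}$. The main step requiring care is simply organizing the term-by-term comparison inside the already-intricate series \eqref{eq:fr:globa:1}, whose summands are indexed by long sequences of alternating specular and diffuse reflections; once the comparison is in place, the rest reduces to the undamped results already in hand.
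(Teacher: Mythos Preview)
Your proposal is correct and follows essentially the same approach as the paper: expand $j^\nu$ into the infinite reflection series analogous to \eqref{eq:fr:globa:1}, compare term-by-term with the undamped case via $e^{-\nu\tau/\kappa}\le 1$ for nonnegative data, then handle signed data by splitting $g^\nu_{in}=(g^\nu_{in})^+-(g^\nu_{in})^-$ and using linearity. Your treatment of the $\SDF{t}$ consequence via $g^\nu=\sqrt{M}f$ and the characteristic representation is in fact more detailed than the paper's, which simply records it as an immediate corollary; the only small caveat is that Lemma \ref{lem:Fr:Bddness:of:j} bounds the \emph{deviation} flux, so you should add back the steady contribution $\rho_*/C_S=O(1)\VertBK{g^\nu_{in}}_{\infty,5}$ when invoking it for the full undamped flux.
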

\end{subsection}

\begin{subsection}{A Pointwise Estimate.}\hfil

For convenient, from now on we will frequently abbreviate functions $f(\bfx,\bfzeta,t)$, $\psi(\bfx,\bfzeta,t)$, etc., as $f(t)$, $\psi(t)$, etc..

\begin{lemma}\label{lem:DF:Local:Esti}
Suppose that $f_{in}\in L^{\infty,-\gamma}_{\bfx,\bfzeta}$, for some constant $\gamma$, $0\leq \gamma \leq 1$. Then, under the zero total initial molecular number assumption $\int f_{in}\sqrt M d\bfx d\bfxi = 0$,
\begin{multline*}
    \frac{ \SDF{t}(f_{in}) }{ (1+|\bfzeta|)^\gamma }
=
     O(1) \VertBK{ f_{in} }_{\infty,-\gamma}
    \Bigg\{
            \BK{\frac{ 1 }{ (1+\alpha t)^d }+(1-\alpha)^{\frac{t^{\frac{1}{400}}}{2}}}
            \bbbone_{ \curBK{ |\bfxi|>\frac{2}{t^{\frac{399}{400}}} } }
            \\+
            \bbbone_{ \curBK{ |\bfxi|<\frac{2}{t^{\frac{399}{400}}} } }
            +
            \frac{t}{\kappa} \nu(\bfzeta)
        \Bigg\}.
\end{multline*}
\end{lemma}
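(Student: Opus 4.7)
The strategy is to compare $\SDF{t}(f_{in})$ with the corresponding undamped propagator via Duhamel's principle, then invoke the free molecular flow estimates of Section~2. Introduce $\mathcal{F}_t(f_{in}) := \SFr{t}(f_{in}\sqrt{M})/\sqrt{M}$, which solves the collisionless transport equation under the boundary condition~\eqref{eq::Diff:Ref:BC:expand}: indeed, $\sqrt{M(\bfzeta)}$ is invariant under the specular map $\bfxi\mapsto\bfxi - 2(\bfxi\cdot\bfn)\bfn$, so \eqref{eq::Diff:Rel:BC} for $\SFr{t}(f_{in}\sqrt M)$ is equivalent to \eqref{eq::Diff:Ref:BC:expand} for $\mathcal{F}_t(f_{in})$. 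Because $\SDF{t}(f_{in})$ and $\mathcal{F}_t(f_{in})$ start from the same initial datum and obey the same linear boundary condition, their difference satisfies the damped transport equation with source $-\nu\mathcal{F}_s(f_{in})/\kappa$ and zero initial data, and Duhamel's principle yields
\begin{equation*}
    \SDF{t}(f_{in}) = \mathcal{F}_t(f_{in}) - \frac{1}{\kappa}\int_0^t \SDF{t-s}\BK{\nu\,\mathcal{F}_s(f_{in})}\,ds.
\end{equation*}

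For the first summand, the zero total molecular number hypothesis $\int f_{in}\sqrt{M}\,d\bfx\,d\bfzeta=0$ activates the refined (decay) part of Theorem~\ref{thm:Fr:Soln:Op:Ptws:Esti}, which supplies precisely the bracketed factor $\BK{(1+\alpha t)^{-d}+(1-\alpha)^{t^{1/400}/2}}\bbbone_{\curBK{|\bfxi|>2/t^{399/400}}} + \bbbone_{\curBK{|\bfxi|<2/t^{399/400}}}$ asserted by the lemma. For the integral term, the uniform (non-zero-mass) bound of Theorem~\ref{thm:Fr:Soln:Op:Ptws:Esti} gives the pointwise estimate $|\mathcal{F}_s(f_{in})(\bfx,\bfzeta)| \leq O(1)\VertBK{f_{in}}_{\infty,-\gamma}(1+|\bfzeta|)^\gamma$, whence $\nu\,\mathcal{F}_s(f_{in})$ is pointwise dominated by $O(1)\VertBK{f_{in}}_{\infty,-\gamma}(1+|\bfzeta|)^{\gamma+\beta}$ with $\beta=(u-4)/u$. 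Passing this through the $L^{\infty,-(\gamma+\beta)}$ stability of $\SDF{t-s}$ from Theorem~\ref{thm:DF:Bddness:Soln} and using $(1+|\bfzeta|)^\beta\sim\nu(\bfzeta)$, one obtains, uniformly in $s\in[0,t]$,
\begin{equation*}
    \frac{\absBK{\SDF{t-s}[\nu\,\mathcal{F}_s(f_{in})](\bfx,\bfzeta)}}{(1+|\bfzeta|)^\gamma} \leq O(1)\VertBK{f_{in}}_{\infty,-\gamma}\,\nu(\bfzeta),
\end{equation*}
and integrating in $s$ against $1/\kappa$ yields exactly the $\frac{t}{\kappa}\nu(\bfzeta)$ contribution claimed in the lemma.

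The one delicate point is that the enlarged weight $\gamma+\beta$ can exceed the nominal range $[0,1]$ stated in Theorems~\ref{thm:Fr:Soln:Op:Ptws:Esti} and~\ref{thm:DF:Bddness:Soln}. This is not a genuine obstruction: the uniform estimates invoked here are proved by direct characteristic computation in which the collision-free contribution carries the polynomial weight $(1+|\bfzeta|)^\gamma$ verbatim along trajectories, while every boundary contribution is multiplied by a Maxwellian factor $M_{T(\bfy)}(\bfzeta)$ whose decay dominates any polynomial in $|\bfzeta|$. The arguments of Section~2 and Subsection~\ref{sec:sub:DF:Bddness} therefore extend verbatim to every $\gamma\geq 0$ with $f_{in}\in L^{\infty,-\gamma}$, and combining the two bounds completes the proof of Lemma~\ref{lem:DF:Local:Esti}.
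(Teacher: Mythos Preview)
Your proof is correct and follows essentially the same approach as the paper: Duhamel's principle to split $\SDF{t}(f_{in})$ into the free-molecular piece plus a remainder, then Theorem~\ref{thm:Fr:Soln:Op:Ptws:Esti} for the first term and the combination of Theorems~\ref{thm:Fr:Soln:Op:Ptws:Esti} and~\ref{thm:DF:Bddness:Soln} for the integral.

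The only variation is in which direction the Duhamel formula is written. The paper uses the representation
\[
\SDF{t}(f_{in}) = \mathcal F_t(f_{in}) - \frac{1}{\kappa}\int_0^t \mathcal F_{t-s}\bigl(\nu\,\SDF{s}(f_{in})\bigr)\,ds,
\]
i.e.\ the \emph{free} propagator in the integral acting on the \emph{damped} solution, whereas you swap the roles. The paper's choice is slightly more economical because the first estimate in Theorem~\ref{thm:Fr:Soln:Op:Ptws:Esti} is tailored precisely to $\SFr{t-s}(\nu f\sqrt M)$ and delivers the factor $\nu(\bfzeta)$ directly, without needing to invoke Theorem~\ref{thm:DF:Bddness:Soln} at the enlarged weight $\gamma+\beta$. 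Your justification for extending the weight range is correct, but with the paper's orientation that extra step is unnecessary.
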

\begin{proof}\hfil

By Duhamel principle, \eqref{eq:DF:Soln:Op:Eq} is equivalent to

\begin{equation*}
	f(t) = \SDF{t}(f_{in})
=
	\frac{ \SFr{t} (f_{in} \sqrt M) }{ \sqrt M }
    -
    \frac{1}{\kappa} \int_0^t 
    \frac{ \SFr{t-s} (\nu f(s)\sqrt M) }{ \sqrt M } ds.
\end{equation*}
From Theorem \ref{thm:Fr:Soln:Op:Ptws:Esti}
\begin{multline*}
	\frac{ \SFr{t}\BK{ f_{in} \sqrt M } }{ \sqrt M (1+|\bfzeta|)^\gamma }
=
     O(1) \VertBK{ f_{in} }_{\infty,-\gamma}
    \Bigg\{
            \BK{\frac{ 1 }{ (1+\alpha t)^d }+(1-\alpha)^{\frac{t^{\frac{1}{400}}}{2}}}
            \bbbone_{ \curBK{ |\bfxi|>\frac{2}{t^{\frac{399}{400}}} } }
            \\+
            \bbbone_{ \curBK{ |\bfxi|<\frac{2}{t^{\frac{399}{400}}} } }
    \Bigg\}.
\end{multline*}
From Theorem \ref{thm:Fr:Soln:Op:Ptws:Esti} and Theorem
\ref{thm:DF:Bddness:Soln},
\begin{equation*}
	\frac{\SFr{t-s} (\nu f(s)\sqrt M)}{\sqrt M (1+|\bfzeta|)^\gamma} 
=	
	 O(1) \Vert f(s) \Vert_{\infty,-\gamma} \nu(\bfzeta) 
=
	 O(1)\Vert f_{in}\Vert_{\infty,-\gamma} \nu(\bfzeta).
\end{equation*}
Hence
\begin{equation*}
	\frac{1}{\kappa}\int_0^t 
	\frac{\SFr{t-s} (\nu f(s)\sqrt M)}{\sqrt M (1+|\bfzeta|)^\gamma} 
=	
	 O(1) \frac{t}{\kappa} \Vert f_{in}\Vert_{\infty,-\gamma} \nu(\bfzeta).
\end{equation*}
\end{proof}

Our next step is to remove the undesirable factor of $\nu(\bfzeta)$ in Lemma \ref{lem:DF:Local:Esti}. To do this, we conduct a posteriori estimate through the characteristic method.

\begin{theorem}\label{thm:DF:Local:Esti}
Suppose that $f_{in}\in L^{\infty,-\gamma}_{\bfx,\bfzeta}$, $0\leq\gamma\leq 1$. Then
under the zero total initial molecular number assumption $\int f_{in}\sqrt M d\bfx d\bfxi = 0$,
\begin{multline*}
    \frac{ \SDF{t}(f_{in}) }{ (1+|\bfzeta|)^\gamma }
=
     O(1)  \VertBK{f_{in}}_{\infty,-\gamma}
    \Bigg\{
            \Bigg(\frac{ 1 }{ (1+\alpha t)^d }+(1-\alpha)^{\frac{t^{\frac{1}{400}}}{2}}
            +\BK{\frac{1}{(1+t)^{\frac{399}{400}}}}^{d+1}\Bigg)\\
            \times \bbbone_{ \curBK{ |\bfxi|>\frac{2}{t^{\frac{399}{400}}} } }
            +
            \bbbone_{ \curBK{ |\bfxi|<\frac{2}{t^{\frac{399}{400}}} } }+\frac{t}{\kappa}
        \Bigg\}.
\end{multline*}
\end{theorem}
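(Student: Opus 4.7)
The plan is to refine Lemma \ref{lem:DF:Local:Esti} via an a posteriori estimate based on the explicit characteristic representation \eqref{eq:FD:Chara:Rep}. The key mechanism is that every boundary-reflection term in \eqref{eq:FD:Chara:Rep} carries both a Maxwellian factor $\tilde M_{T(\bfx_{(i+1)})}/\sqrt M$ (Gaussian in $\bfzeta$, since $T\le\TM=1$) and an exponential damping $e^{-\nu(\bfzeta)(t_1+it_2)/\kappa}\le 1$. Combined with the geometric weights $\alpha(1-\alpha)^i$, which sum to one, these features allow the $\nu(\bfzeta)$ growth in the Lemma to be absorbed once a flux bound is fed back through the representation.

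First, I will derive a bound on the boundary flux of $g=f\sqrt M$,
\[
j_g(\bfy,s)\equiv \intLim_{\bfxi\cdot\bfn<0}(-\bfxi\cdot\bfn)\,f(\bfy,\bfzeta,s)\sqrt{M(\bfzeta)}\,d\bfzeta,
\]
by inserting the pointwise bound of Lemma \ref{lem:DF:Local:Esti} under the integral. Using $\nu(\bfzeta)\lesssim 1+|\bfzeta|$ and the finiteness of Gaussian moments $\int|\bfxi|(1+|\bfzeta|)^{\gamma+1}\sqrt M\,d\bfzeta$, the three contributions in the Lemma integrate cleanly: the decay term yields $(1+\alpha s)^{-d}+(1-\alpha)^{s^{1/400}/2}$ with a bounded prefactor; the slow-velocity indicator contributes an additional $O\BK{(1+s)^{-399(d+1)/400}}$ from the shrinking $\bfxi$-ball (of volume $\sim s^{-399d/400}$ and diameter $\sim s^{-399/400}$); and the $(s/\kappa)\nu(\bfzeta)$ term integrates to $O(s/\kappa)$, losing the $\nu(\bfzeta)$ factor. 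This yields
\[
|j_g(\bfy,s)|\le C\VertBK{f_{in}}_{\infty,-\gamma}\BK{\frac{1}{(1+\alpha s)^d}+(1-\alpha)^{s^{1/400}/2}+\frac{1}{(1+s)^{399(d+1)/400}}+\frac{s}{\kappa}}.
\]

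Second, I will substitute this bound into \eqref{eq:FD:Chara:Rep} for $g$ and divide by $\sqrt M(1+|\bfzeta|)^\gamma$, using that $\tilde M_{T(\bfx_{(i+1)})}/[\sqrt M(1+|\bfzeta|)^\gamma]$ is uniformly bounded in $\bfzeta$. On the fast region $|\bfxi|>2/t^{399/400}$, I split the sum over boundary reflections into early ones ($t_1+it_2\le t/2$, where the flux bound at time $\ge t/2$ provides the advertised decay, telescoped through $\alpha\sum(1-\alpha)^i\le 1$) and late ones ($t_1+it_2>t/2$, where $e^{-\nu(\bfzeta)(t_1+it_2)/\kappa}\le 1$ and the geometric tail contributes at most $(1-\alpha)^{i_*}$ with critical index $i_*\gtrsim t^{1/400}$, absorbed into $(1-\alpha)^{t^{1/400}/2}$). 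The residual initial-data term $(1-\alpha)^m e^{-\nu(\bfzeta)t/\kappa}f_{in}$ is bounded similarly because $m\gtrsim|\bfxi|t\ge t^{1/400}$ on the fast region. On the slow region $|\bfxi|<2/t^{399/400}$, Theorem \ref{thm:DF:Bddness:Soln} already provides $|f|/(1+|\bfzeta|)^\gamma\le C\VertBK{f_{in}}_{\infty,-\gamma}$, which matches the indicator in the target bound.

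The main obstacle is controlling the late-reflection sum without re-introducing the $\nu(\bfzeta)$ factor. One must verify that the geometric quantities $t_1,t_2\sim 1/|\bfxi|$ combine with the fast-region threshold to guarantee $i_*\gtrsim t^{1/400}$ uniformly in $\bfzeta$, and that the $O(1)$ plateau of $|j_g(\cdot,s)|$ as $s\to 0^+$ is suppressed by the $(1-\alpha)^i$ factor for large $i$. The exponent $\tfrac{399}{400}$ becomes critical precisely where one compares $|\bfxi|t$ against the threshold $|\bfx-\bfx_{(1)}|+(m-1)|\bfx_{(1)}-\bfx_{(2)}|$ in \eqref{eq::specular:m}, so the bookkeeping of this comparison is the delicate step.
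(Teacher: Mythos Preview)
Your approach is correct and is essentially the paper's own proof: derive the flux bound from Lemma~\ref{lem:DF:Local:Esti} (the $\nu(\bfzeta)$ factor disappears under the Gaussian-weighted flux integral), then feed this bound back through the characteristic representation \eqref{eq:FD:Chara:Rep}. The paper records only the flux bound and then writes ``Hence Theorem \ref{thm:DF:Local:Esti} follows,'' so your early/late reflection split on the fast region and your handling of the residual initial-data term via $m\ge t^{1/400}$ (cf.\ the remark after Theorem~\ref{thm:Fr:Main}) are precisely the details the paper leaves implicit.
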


\begin{proof}\hfil

By the characteristic method,
\begin{align*}
& 
   \SDF{t} \Big( f_{in} \Big) (\bfx,\bfzeta) =
   f(\bfx,\bfzeta,t)
    \\
=&
    \begin{dcases}
         &\alpha \sum\limits_{i=0}^{m-1}(1-\alpha)^{i} e^{ -\frac{\nu(\bfzeta)}{\kappa}(t_1+it_2)
          }j\BKK{ \bfx_{(i+1)}, t-t_1-it_2 }
          \BK{ \frac{2\pi}{RT(\bfx_{(i+1)})} }^\frac12 M_{T(\bfx_{(i+1)})}\\
         & +(1-\alpha)^{m}e^{ -\frac{\nu(\bfzeta)}{\kappa}t }f_{in}(\bfx_{(m)}-\bfxi^m(t-t_1-(m-1)t_2),\bfzeta^m)
           \quad  
        \text{ for } t_1 < t,
        \\
         &e^{ -\frac{\nu(\bfzeta)}{\kappa}t } f_{in}(\bfx-\bfxi t,\bfzeta)
        \quad
        \text{ for } t < t_1.
     \end{dcases}
\end{align*}
From Lemma \ref{lem:DF:Local:Esti},
\begin{equation*}
     j(\bfy,s)
=
     O(1) \VertBK{f_{in}}_{\infty,-\gamma}
    \BK{ \frac{ 1 }{ (1+\alpha s)^d }+(1-\alpha)^{\frac{s^{\frac{1}{400}}}{2}}
            +\BK{\frac{1}{(1+s)^{\frac{399}{400}}}}^{d+1} +\frac{s}{\kappa} }.
\end{equation*}
Hence Theorem \ref{thm:DF:Local:Esti} follows.
\end{proof}
\end{subsection}
\end{section}

\begin{section}{Steady State Solution of The Boltzmann Equation and Its Time Asymptotic Stability}\label{sec:Full:Boltz}

\begin{subsection}{Linearized Boltzmann Equation.}\label{sec:sub:LB}\hfil

In this subsection, we study the linearized Boltzmann equation \eqref{eq:LB:Eq}. We first establish the local (in time) existence and a local estimate. By local in time we mean $\frac{t}{\kappa} \ll 1$, {\it not} $t \ll 1$. From now on we always consider perturbations of the form $f \sqrt M$ ($f_{in} \in L^{\infty,-\gamma}_{\bfx,\bfzeta}$).

\begin{theorem}[Local Existence and Estimate]\label{thm:LB:Local:Esti}
Let $f_{in}\in L^{\infty,-\gamma}_{\bfx,\bfzeta}$, $0 \leq \gamma \leq 1$, then there exists a constant $c_*>0$, such that whenever $\frac{t}{\kappa}<c_*$, the solution of \eqref{eq:LB:Eq} exists and satisfies
\begin{multline*}
    \frac{ \SLB{t}(f_{in}) }{ (1+|\bfzeta|)^\gamma }
=
   O(1)  \VertBK{f_{in}}_{\infty,-\gamma}
    \Bigg\{
            \Bigg(\frac{ 1 }{ (1+\alpha t)^d }+(1-\alpha)^{\frac{t^{\frac{1}{400}}}{2}}
            +\BK{\frac{1}{(1+t)^{\frac{399}{400}}}}^{d+1}\Bigg)\\
            \times \bbbone_{ \curBK{ |\bfxi|>\frac{2}{t^{\frac{399}{400}}} } }
            +
            \bbbone_{ \curBK{ |\bfxi|<\frac{2}{t^{\frac{399}{400}}} } }+\frac{t}{\kappa}
        \Bigg\}.
\end{multline*}
\end{theorem}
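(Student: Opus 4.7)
The plan is to reformulate \eqref{eq:LB:Eq} using the splitting $L = K - \nu$ and apply Duhamel's principle, which recasts the problem as the integral equation
\begin{equation*}
f(t) \;=\; \SDF{t}(f_{in}) \;+\; \frac{1}{\kappa}\int_0^t \SDF{t-s}\bigl(Kf(s)\bigr)\,ds.
\end{equation*}
First I would construct a solution by Picard iteration: set $f^{(0)}\equiv 0$ and $f^{(n+1)}(t) = \SDF{t}(f_{in}) + \frac{1}{\kappa}\int_0^t \SDF{t-s}\bigl(Kf^{(n)}(s)\bigr)\,ds$. Combining the uniform boundedness of $\SDF{t}$ on $L^{\infty,-\gamma}_{\bfx,\bfzeta}$ from Theorem \ref{thm:DF:Bddness:Soln} with the bound $\VertBK{Kg}_{\infty,-\gamma} \leq O(1)\VertBK{g}_{\infty,-\gamma}$ coming from \eqref{eq:LB:K:Norm:Esti} gives
\begin{equation*}
\Big\|\frac{1}{\kappa}\int_0^t \SDF{t-s}(Kf^{(n)}(s))\,ds\Big\|_{\infty,-\gamma}
\;\leq\; C\,\frac{t}{\kappa}\sup_{0\leq s\leq t}\VertBK{f^{(n)}(s)}_{\infty,-\gamma}.
\end{equation*}
Choosing $c_*$ so that $Cc_* < 1/2$, the iteration is a contraction in $C([0,t]; L^{\infty,-\gamma}_{\bfx,\bfzeta})$ for every $t/\kappa < c_*$, producing a unique fixed point $f$ with $\sup_{0\leq s\leq t}\VertBK{f(s)}_{\infty,-\gamma} \leq 2\VertBK{f_{in}}_{\infty,-\gamma}$.

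Second, the pointwise estimate follows by revisiting the Duhamel identity for the solution $f$ obtained above. The leading term $\SDF{t}(f_{in})$ satisfies the desired bound verbatim by Theorem \ref{thm:DF:Local:Esti}, since $f_{in}$ carries zero total mass by assumption in \eqref{eq:LB:Eq}. For the integral term, $Kf(s)$ is in general not of zero mean, so the sharp pointwise estimate of Theorem \ref{thm:DF:Local:Esti} does not apply to $\SDF{t-s}(Kf(s))$; instead I invoke only the cruder Theorem \ref{thm:DF:Bddness:Soln} together with the $K$-bound and the contraction estimate, which yields pointwise
\begin{equation*}
\Big|\frac{1}{\kappa}\int_0^t \SDF{t-s}(Kf(s))(\bfx,\bfzeta)\,ds\Big|
\;\leq\; O(1)\,\frac{t}{\kappa}\,\VertBK{f_{in}}_{\infty,-\gamma}\bigl(1+|\bfzeta|\bigr)^\gamma.
\end{equation*}
Adding the two contributions reproduces exactly the claimed bound, with the $\frac{t}{\kappa}$ summand in the statement absorbing the Duhamel integral.

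The main obstacle is precisely this loss of the zero-total-mass property after one application of $K$: the free-molecular solution operator equilibrates only for zero-mean data, so propagating the sharp characteristic-set structure of $\SDF{t}$ through the Duhamel iteration is impossible, and we are forced to settle for the coarse $L^{\infty,-\gamma}$ bound of Theorem \ref{thm:DF:Bddness:Soln} on the collisional source. This is acceptable here because the factor $1/\kappa$ renders the remainder small in the local regime $t/\kappa < c_*$, but it is the structural reason for the extra $\frac{t}{\kappa}$ term in the stated estimate, rather than a cleaner expression matching Theorem \ref{thm:DF:Local:Esti}. A secondary point to verify is that the Maxwell-type boundary condition \eqref{eq::Diff:Ref:BC:expand} is linear and homogeneous, so it is preserved along every Picard iterate and by the limit, guaranteeing that the constructed $f$ genuinely solves \eqref{eq:LB:Eq}.
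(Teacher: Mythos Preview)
Your proposal is correct and follows essentially the same route as the paper: the Duhamel formulation $f(t)=\SDF{t}(f_{in})+\frac{1}{\kappa}\int_0^t\SDF{t-s}(Kf(s))\,ds$, a Picard iteration closed via Theorem~\ref{thm:DF:Bddness:Soln} and \eqref{eq:LB:K:Norm:Esti}, and then the pointwise bound obtained by applying Theorem~\ref{thm:DF:Local:Esti} to the zero-mean leading term while controlling the integral term only through the crude $L^{\infty,-\gamma}$ bound. The cosmetic difference is that the paper starts the iteration at $f^{(0)}=\SDF{t}(f_{in})$ rather than $0$, but the contraction/geometric-series argument and the explanation for the residual $t/\kappa$ term are identical in substance.
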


\begin{proof}\hfil

Write \eqref{eq:LB:Eq} as
\begin{equation}\label{eq:LB:Iterate:Eq}
    f(t) = \SLB{t}(f_{in})
=
    \SDF{t} (f_{in})
    +
    \frac{1}{\kappa} \int_0^t \SDF{t-s} \BK{ Kf(s) } ds.
\end{equation}
We solve \eqref{eq:LB:Iterate:Eq} by iteration:
\begin{equation*}
        f^{(0)}(t)
\equiv
        \SDF{t}(f_{in}),
        \\
\quad
        f^{(i)}(t)
\equiv \SDF{t}(f_{in})+
    \frac{1}{\kappa} \int_0^t \SDF{t-s} \BK{ Kf^{(i-1)}(s) } ds,
    \
     i \geq 1.
\end{equation*}
From Theorem \ref{thm:DF:Local:Esti} we have,
\begin{multline}\label{eq:z20}
    \frac{ f^{(0)}(t) }{ (1+|\bfzeta|)^\gamma }
=
    O(1)  \VertBK{f_{in}}_{\infty,-\gamma}
     \Bigg\{
            \Bigg(\frac{ 1 }{ (1+\alpha t)^d }+(1-\alpha)^{\frac{t^{\frac{1}{400}}}{2}}
            +\BK{\frac{1}{(1+t)^{\frac{399}{400}}}}^{d+1}\Bigg)\\
            \times \bbbone_{ \curBK{ |\bfxi|>\frac{2}{t^{\frac{399}{400}}} } }
            +
            \bbbone_{ \curBK{ |\bfxi|<\frac{2}{t^{\frac{399}{400}}} } }+\frac{t}{\kappa}
        \Bigg\}.
\end{multline}

From \eqref{eq:z20},
\begin{equation*}
    \VertBK{ f^{(0)} (t) }_{\infty,-\gamma} = O(1) \VertBK{ f_{in} }_{\infty,-\gamma} \BK{1+\frac{t}{\kappa}}.
\end{equation*}
Consequently, from \eqref{eq:LB:K:Norm:Esti} and Theorem \ref{thm:DF:Bddness:Soln},
\begin{align*}
&
        \VertBK{ f^{(1)} (t) }_{\infty,-\gamma}
\leq
        \frac{1}{\kappa} \int_0^t \VertBK{ \SDF{t-s} \BK{ Kf^{(0)}(s) } }_{\infty,-\gamma} ds
        \\
=&
        \frac{O(1)}{\kappa} \int_0^t \VertBK{ f^{(0)}(s) }_{\infty,-\gamma} ds
=
        \VertBK{ f_{in} }_{\infty,-\gamma}  \BK{1+\frac{t}{ \kappa}}
        \BK{ O(1) \frac{t}{\kappa} }.
\end{align*}
Similarly,  we  obtain, by  induction,
\begin{align*}
        \VertBK{ f^{(i)} (t) }_{\infty,-\gamma}
=
        \VertBK{ f_{in} }_{\infty,-\gamma}  \BK{1+\frac{t}{ \kappa}}
        \BK{ O(1) \frac{t}{\kappa} }^i,  i\geq 1,.
\end{align*}
Hence, for  $\frac{t}{\kappa}$ sufficiently  small, 
\begin{equation*}
	\sum_{i=1}^\infty \VertBK{ f^{(i)}(t) }_{\infty,-\gamma} = O(1) \frac{t}{\kappa} < \infty,
\end{equation*}
and Theorem \ref{thm:LB:Local:Esti} follows.
\end{proof}

With the local estimate of Theorem \ref{thm:LB:Local:Esti}, we are ready to prove the global (in time) exponential decay of \eqref{eq:LB:Eq}, Theorem \ref{thm:LB:Main}. Recall that $\nu_0=\inf \nu(\bfzeta)$.

\begin{proof}{\textit{of Theorem \ref{thm:LB:Main}}}\hfil

For simplicity we write $\SLB{t}(f_{in})(\bfx,\bfzeta)$ as $f(t)$ in this proof. Define
\begin{equation*}
\calF(t) \equiv \sup_{0\leq s \leq t} e^{\frac{\nu'_1}{\kappa}s}\VertBK{ f(s) }_{\infty,-\gamma}.
\end{equation*}
Let $c<c_*/2$ be a small constant to be specified later. Recall that $c_*$ is a constant given in Theorem \ref{thm:LB:Local:Esti}. Since $2c<c_*$, we may apply Theorem \ref{thm:LB:Local:Esti} to obtain
\begin{multline*}
        \calF(2 c\kappa)
=
        \sup \curBK{
                        \frac{ |f(\bfx,\bfzeta,t)| e^{\nu'_1\frac{t}{k}} }{ (1+|\bfzeta|)^a }
                     :  \bfx \in D, \ \bfzeta\in\bbR^3, \ 0\leq t\leq 2 c\kappa
                    }
        \\
\leq
         e^{2\nu'_1 c} \sup_{0\leq t\leq 2 c\kappa}\VertBK{f(t)}_{\infty,-\gamma}
\leq
         e^{\nu_0c_*} \sup_{0\leq t\leq  c_*\kappa} \VertBK{f(t)}_{\infty,-\gamma}
=
         O(1) \VertBK{f_{in}}_{\infty,-\gamma}.
\end{multline*}
By the definition of $\calF$, it is clear that $\calF$ increases with $t$. We now claim that under some appropriate choice of $c$ and  $\kappa$
\begin{equation}\label{eq:LB:Main:Claim}
	\calF(t) \leq \calF(t- c\kappa), \text{ whenever } t > 2 c\kappa.
\end{equation}
This claim implies $\calF(t)=O(1)\VertBK{f_{in}}_{\infty,-\gamma}$, which proves this theorem.

Tracing back from time $t$ to the earlier time $t- c\kappa$ by the characteristic method, we can represent $f(t)$ as:
\begin{equation}\label{eq:LB:Chara:Rep}
\begin{split}
    	f (\bfx,\bfzeta,t)
=&
         e^{-\nu(\zeta) c} f(\bfx- c\kappa\bfxi,\bfzeta,t- c\kappa)
    	\\
&+
        \frac{1}{\kappa} \int_0^{ c\kappa} e^{ -\frac{\nu(\bfzeta)}{\kappa}s }
         K(f(s))(\bfx-s\bfxi,\bfzeta,t-s) ds,
        \text{ when }    t_1 >  c\kappa,
        \\
    	f (\bfx,\bfzeta,t)
=&
         \sum\limits_{i=0}^{m-1} \left\{(1-\alpha)^{i} e^{ -\frac{\nu(\bfzeta)}{\kappa}(t_1+it_2) }
         \alpha j(\bfx_{(i+1)},t-t_1-it_2) 
        \BK{\frac{2\pi}{RT(\bfx_{(i+1)})}}^\frac12 M_{T(\bfx_{(i+1)})}(\bfzeta)\right.
 		\\       
&\left.+
         (1-\alpha)^{i}
         \frac{1}{\kappa}\int_{t_1+\ldots+t_i}^{t_1+\ldots+t_{i+1}} e^{ -\frac{\nu(\bfzeta)}{\kappa}s }
         K(f(s))(\bfx_{(i)}-s\bfxi^i,\bfzeta^i,t-s) ds\right\}\\
         +&(1-\alpha)^m e^{-\nu(\bfzeta) c}
         f(\bfx_{(m)}-\bfxi^m(t-t_1-(m-1)t_2),\bfzeta^m,t- c\kappa),
        \text{ when }    t_1 <  c\kappa,
\end{split}
\end{equation}
where
\begin{equation*}
\begin{split}
t_1=&\frac{|\bfx-\bfx_{(1)}|}{|\bfxi|},\\
t_2=&\frac{|\bfx_{(1)}-\bfx_{(2)}|}{|\bfxi|},\\
m=&\lfloor\frac{|\bfxi| c\kappa-|\bfx-\bfx_{(1)}|}{|\bfx_{(1)}-\bfx_{(2)}|}\rfloor +1.
\end{split}
\end{equation*}

Consider first $j(\bfx_{(i+1)},t-t_1-it_2)$ with $t_1+it_2< c\kappa$, for $i=0,\ldots,m-1$. We trace back an extra $2 c\kappa-t_1-it_2$ amount of time to arrive at $t-2 c\kappa$. Since $t_1+it_2< c\kappa$ for $i=0,\ldots,m-1$, by Theorem \ref{thm:LB:Local:Esti} we have
\begin{equation}\label{eq:z15}
\begin{split}
 &    j(\bfx_{(i+1)},t-t_1-it_2)
=
     O(1)
    \int_{\bfxi_*\cdot\bfn<0} -\bfxi_*\cdot\bfn
    \VertBK{f(t-2 c\kappa)}_{\infty,-\gamma}
    (1+|\bfzeta_*|)^\gamma
    \\
&
    \times
    \Bigg\{
           \BK{ \frac{1}{\alpha^{d}(2 c\kappa-t_1-it_2)^d} + (1-\alpha)^{\frac{(2 c\kappa-t_1-it_2)^{\frac{1}{400}}}{2}} +\BK{\frac{1}{(2 c\kappa-t_1-it_2)^{\frac{399}{400}}}}^{d+1}}
           \\
           &\times \bbbone_{ \curBK{|\bfxi_*|>\frac{2}{(2 c\kappa-t_1-it_2)^{\frac{399}{400}}}} }
            +
            \bbbone_{ \curBK{|\bfxi_*|<\frac{2}{(2 c\kappa-t_1-it_2)^{\frac{399}{400}}}} }
            +
             c
        \Bigg\}
    \sqrt M d\bfzeta_*
    \\
=&
     O(1) \VertBK{f(t-2 c\kappa)}_{\infty,-\gamma}
     \BK{
            \frac{1}{(\alpha ck)^d} + (1-\alpha)^{\frac{(c\kappa)^{\frac{1}{400}}}{2}} +\BK{\frac{1}{ (c\kappa)^{\frac{399}{400}}}}^{d+1}+ c
        }.
\end{split}
\end{equation}
Therefore,
\begin{multline*}
\sum\limits_{i=0}^{m-1} (1-\alpha)^{i} e^{ -\frac{\nu(\bfzeta)}{\kappa}(t_1+it_2) }
         \alpha j(\bfx_{(i+1)},t-t_1-it_2) 
        \BK{\frac{2\pi}{RT(\bfx_{(i+1)})}}^\frac12 M_{T(\bfx_{(i+1)})}(\bfzeta)\\
        = O(1) \VertBK{f(t-2 c\kappa)}_{\infty,-\gamma}
     \BK{
            \frac{1}{(\alpha ck)^d} + (1-\alpha)^{\frac{(c\kappa)^{\frac{1}{400}}}{2}} +\BK{\frac{1}{ (c\kappa)^{\frac{399}{400}}}}^{d+1}+ c
        }\BK{\frac{2\pi}{R\Tm}}^\frac12 M(\bfzeta).
\end{multline*}
For Theorem \ref{thm:LB:Local:Esti} to apply, we need $\int f(t-2 c\kappa)\sqrt M d\bfx d\bfzeta=0$. This is true because of \eqref{eq:LB:Collision:Invar:L}.

Consider next $\int_{t_1+\ldots+t_i}^{t_1+\ldots+t_{i+1}} e^{-\frac{\nu}{k}s} K(f(t-s))ds$. We also trace back to the time $t-2 c\kappa$. From Theorem \ref{thm:LB:Local:Esti}, \eqref{eq:LB:K:Norm:Esti}, and \eqref{eq:LB:K:low:speed:Esti},
\begin{equation}\label{eq:z16}
\begin{split}
&
    \sum\limits_{i=0}^{m-1} (1-\alpha)^{i}\frac{1}{\kappa}\int_{t_1+\ldots+t_i}^{t_1+\ldots+t_{i+1}} e^{ -\frac{\nu(\bfzeta)}{\kappa}s }
         K(f(s))(\bfx_{(i)}-s\bfxi^i,\bfzeta^i,t-s) ds
    \\
=&
      O(1) \VertBK{f(t-2 c\kappa)}_{\infty,-\gamma} (1+|\bfzeta|)^\gamma
     \\
&
     \times
     \sum\limits_{i=0}^{m-1} \frac{1}{\kappa}\int_{t_1+\ldots+t_i}^{t_1+\ldots+t_{i+1}}
              \Bigg\{\frac{1}{\alpha^d (2 c\kappa - s)^d}+(1-\alpha)^{\frac{(2c\kappa-s)^{\frac{1}{400}}}{2}} +\BK{\frac{1}{ (2c\kappa-s)^{\frac{399}{400}}}}^{d+1}\\
              &+
            \curBK{ \begin{array}{c@{,}l}
                        \displaystyle
                       \frac{1}{ (2c\kappa-s)^{\frac{399}{400}}}                       &   \text{ when } d=1        \\
                        \displaystyle
                        \BK{\frac{1}{ (2c\kappa-s)^{\frac{399}{400}}}}^2 \log(2 c\kappa-s)        &   \text{ when } d=2        
                    \end{array} }
            +
            c
        \Bigg\} ds
     \\
=&
     O(1) \VertBK{f(t-2 c\kappa)}_{\infty,-\gamma} (1+|\bfzeta|)^\gamma
     \BK{
            \frac{1}{\kappa}
             P( c\kappa)
            +
              c^2
        },
\end{split}
\end{equation}
where
\begin{equation*}
	P(z)
=
   \begin{dcases}
   \displaystyle
    \frac{1}{\alpha} + z(1-\alpha)^{\frac{z^{\frac{1}{400}}}{2}} + z^{\frac{1}{400}}                         &   \text{ for } d=1,    \\
   \displaystyle
   \frac{1}{\alpha^2 z} + z(1-\alpha)^{\frac{z^{\frac{1}{400}}}{2}} + \frac{\log z}{z^{\frac{199}{200}}}  &   \text{ for } d=2.
   \end{dcases}
\end{equation*}

Plugging \eqref{eq:z15} and \eqref{eq:z16} back to \eqref{eq:LB:Chara:Rep}, we have
\begin{equation*}
\begin{split}
&
    \frac{ e^{\frac{\nu'_1}{\kappa}t} |f(t)| }{ (1+|\bfzeta|)^\gamma }
    \\
\leq&
    \begin{dcases}
         e^{ -(\nu_0-\nu'_1) c } \calF(t- c\kappa)
        +
         C' e^{ 2 c\nu'_1 } \calF(t-2 c\kappa)
        \BK{ \frac{1}{\kappa} P( c\kappa) +  c^2 }
         &
        \text{ for }    t_1 >  c\kappa,
        \\
         C' e^{ 2 c\nu'_1 } \calF(t-2 c\kappa)
        \Bigg\{
                \frac{1}{(\alpha ck)^d} + (1-\alpha)^{\frac{(c\kappa)^{\frac{1}{400}}}{2}} +\BK{\frac{1}{ (c\kappa)^{\frac{399}{400}}}}^{d+1}+ c\\
               + \BK{ \frac{1}{\kappa} P( c\kappa) +  c^2 }\Bigg\}
            + (1-\alpha)^m e^{ -(\nu_0-\nu'_1) c } \calF(t-c\kappa)
         &
        \text{ for }    t_1 <  c\kappa,
    \end{dcases}
    \\
\leq&
    \calF(t- c\kappa)
\\
&
	\times
    \begin{dcases}
         e^{ -(\nu_0-\nu'_1) c } 
        +
         C' e^{ 2 c\nu'_1 }
  		\BK{ \frac{1}{\kappa} P( c\kappa) + c^2 }
         &
        \text{ for }    t_1 >  c\kappa,
        \\
   1-\alpha
        +
        C' e^{ 2 c\nu'_1 }
        \Bigg\{
                \frac{1}{(\alpha ck)^d} + (1-\alpha)^{\frac{(c\kappa)^{\frac{1}{400}}}{2}} +\BK{\frac{1}{ (c\kappa)^{\frac{399}{400}}}}^{d+1}+ c
                \\+
                \BK{ \frac{1}{\kappa} P( c\kappa) +  c^2 }                
            \Bigg\}
         &
        \text{ for }    t_1 <  c\kappa,
    \end{dcases}
\end{split}
\end{equation*}
for some positive constant $C'$.

For \eqref{eq:LB:Main:Claim} to hold, we need
\begin{equation}\label{eq:z17}
\left.	\begin{array}{c}
         \displaystyle
         e^{ -(\nu_0-\nu'_1) c } 
        +
         C' e^{ 2 c\nu'_1 }
  		\BK{ \frac{1}{\kappa} P( c\kappa) + c^2 }
        \\
        \displaystyle
       1-\alpha
        +
        C' e^{ 2 c\nu'_1 }
        \sqBK{
                \frac{1}{(\alpha ck)^d} + (1-\alpha)^{\frac{(c\kappa)^{\frac{1}{400}}}{2}} +\BK{\frac{1}{ (c\kappa)^{\frac{399}{400}}}}^{d+1}+ c
                +
                \BK{ \frac{1}{\kappa} P( c\kappa) +  c^2 }                
            }
        \end{array} \right\}
\leq
         1.
\end{equation}
For this purpose, large $\kappa$ and small $c$ are desirable. We will {\it fix} $c$ according to $\nu'_1$ and $\alpha$, and find the admissible choice of $\kappa$ after $c$ has been specified. We fix some small $c$, so small that
\begin{equation}\label{eq:LB:requirement:of:c}
\begin{split}
	 c &< \alpha,
\\
     e^{-(\nu_0-\nu'_1) c} &< 1- \frac12(\nu_0-\nu'_1) c,
\\
     C' e^{ 2c\nu'_1 } c &< \frac{1}{4} \min \curBK{ \nu_0-\nu'_1, \alpha}.
\end{split}
\end{equation}
It is not difficult to see that we can choose some $c\sim \alpha(\nu_0-\nu'_1)$ to meet all these requirements. From \eqref{eq:LB:requirement:of:c},
\begin{multline*}
	 e^{ -(\nu_0-\nu'_1) c } 
        +
         C' e^{ 2 c\nu'_1 }
  		\BK{ \frac{1}{\kappa} P( c\kappa) + c^2 }
	\leq
	1 - \frac14 (\nu_0-\nu_1') c +  C' e^{ 2 c\nu'_1 }
	\frac{1}{\kappa} P( c\kappa),
\end{multline*}
\begin{multline*}
1-\alpha
        +
        C' e^{ 2 c\nu'_1 }
        \sqBK{
                \frac{1}{(\alpha ck)^d} + (1-\alpha)^{\frac{(c\kappa)^{\frac{1}{400}}}{2}} +\BK{\frac{1}{ (c\kappa)^{\frac{399}{400}}}}^{d+1}+ c
                +
                \BK{ \frac{1}{\kappa} P( c\kappa) +  c^2 }                
            }
\\
	\leq
	1 - \frac{\alpha}{2} +  C' e^{ 2 c\nu'_1 }
	\sqBK{
             \frac{1}{(\alpha ck)^d} + (1-\alpha)^{\frac{(c\kappa)^{\frac{1}{400}}}{2}} +\BK{\frac{1}{ (c\kappa)^{\frac{399}{400}}}}^{d+1}
                +
                 \frac{1}{\kappa} P( c\kappa)               
            }.
\end{multline*}

For this specific $c\sim \alpha(\nu_0-\nu_1')$, we need $\kappa$ to satisfy
\begin{equation}\label{eq:LB:requirement:of:k}
\begin{split}
   &C' e^{ 2 c\nu'_1 }
	\frac{1}{\kappa} P( c\kappa)
\leq
      c \frac{\nu_0-\nu'_1}{4} \sim \alpha(\nu_0-\nu'_1)^2,
    \\
   &C' e^{ 2 c\nu'_1 }
	\sqBK{
             \frac{1}{(\alpha ck)^d} + (1-\alpha)^{\frac{(c\kappa)^{\frac{1}{400}}}{2}} +\BK{\frac{1}{ (c\kappa)^{\frac{399}{400}}}}^{d+1}
                +
                 \frac{1}{\kappa} P( c\kappa)               
            }
\leq
     \frac{\alpha}{2} .
\end{split}
\end{equation}
It is not difficult to see that there exists a positive constant $C_1$, {\it independent of} $c$, such that \eqref{eq:LB:requirement:of:k:derived} implies \eqref{eq:LB:requirement:of:k}. Consequently, under the assumption \eqref{eq:LB:requirement:of:k:derived}, \eqref{eq:LB:Main:Claim} holds.
\end{proof}
\end{subsection}

\begin{subsection}{Exponential Convergence for Full Boltzmann Equation.}\label{sec:sub:FullBz}\hfil

Using the exponential decay of Linearized Boltzmann equation, Theorem \ref{thm:LB:Main}, we are able to  establish the existence of steady state solution and the exponential decay for full Boltzmann equation, \eqref{eq:FullBz:Eq} or equivalently \eqref{eq:FullBz:Eq:expand}. In terms of $\SLB{t}$, \eqref{eq:FullBz:Eq:expand} is equivalent to
\begin{multline}\label{eq:FullBz:Eq:expand:Soln:Op}
    f(t)
=
    \SLB{t}(f_{in})
    +
    \frac{1}{\kappa} \int_0^t \SLB{t-s}\BK{ L\BKK{ \frac{S-M}{\sqrt M} } } ds
    \\
    +
    \frac{1}{\kappa} \int_0^t \SLB{t-s}\BK{ \frac{ Q(S-M+\sqrt M f,S-M+\sqrt Mf) }{ \sqrt M } } ds.
\end{multline}
In view of \eqref{eq:FullBz:Eq:expand:Soln:Op}, the equation for the steady state solution $\Phi= \Phi(\bfx,\bfzeta)$ is
\begin{multline}\label{eq:FullBz:Steady:Eq:expand:Soln:Op}
    \Phi
=
    \frac{1}{\kappa} \int_0^\infty \SLB{s}\BK{ L\BKK{ \frac{S-M}{\sqrt M} } } ds
    \\
    +
    \frac{1}{\kappa} \int_0^\infty \SLB{s}\BK{ \frac{ Q(S-M+\sqrt M\Phi,S-M+\sqrt M\Phi) }{ \sqrt M } } ds.
\end{multline}
We first prove the existence of the steady state solution $\Phi$, Theorem \ref{thm:FullBz:Main:1}.

\begin{proof}{\textit{of the Theorem \ref{thm:FullBz:Main:1}}}\hfil

We solve \eqref{eq:FullBz:Steady:Eq:expand:Soln:Op} by Picard iteration:
\begin{align}
    \notag
    \Phi^{(0)}(\bfx,\bfzeta)
=&
    \frac{1}{\kappa} \int_0^\infty \SLB{s}\BK{ L\BKK{ \frac{S-M}{\sqrt M} } } ds
    +
    \frac{1}{\kappa} \int_0^\infty \SLB{s}\BK{ \frac{ Q(S-M,S-M) }{ \sqrt M } } ds,
    \\
    \label{eq:z18}
    \Phi^{(i)}(\bfx,\bfzeta)
=&
    \frac{1}{\kappa} \int_0^\infty \SLB{s}\BK{ L\BKK{ \frac{S-M}{\sqrt M} } } ds
    \\
&
	\notag
    +
    \frac{1}{\kappa} \int_0^\infty 
    \SLB{s}\BK{ \frac{ Q(S-M+\sqrt M\Phi^{(i-1)},S-M+\sqrt M\Phi^{(i-1)}) }{ \sqrt M } } ds.
\end{align}
From \eqref{eq:FullBz:Collision:Invar:Q} and \eqref{eq:LB:Collision:Invar:L}, $\int\Phi^{(i)}d\bfx d\bfzeta=0$ for all $i$. Hence \eqref{eq:FullBz:Phi:Zero:Total:Mass} follows, provided $\sum\VertBK{\Phi^{(i)}}_\infty$ converges. \\

{\sc Claim: } Under the assumptions $1-\Tm\ll 1$ and $\kappa\gg 1$, $\Vert(\Phi^{(i)}-\Phi^{(i-1)})\Vert_\infty=[O(1)(1-\Tm)]^{i+1}$, for $i\geq 0$. Here we set $\Phi^{(-1)}=0$.\\

Note that the claim directly concludes this theorem.

We prove the claim by induction.  First, we follow the proof of Theorem 13 in \cite{Kuo-Liu-Tsai-2} to conclude 
\begin{equation}\label{eq:FullBz:Steady:Thm:Eq:1}
    \VertBK{ \frac{ F^{(n)}(t) }{ \nu } }_\infty
=
    O(1)(1-\Tm)[ O(1)(1-\Tm) ]^{n+1}.
\end{equation}

The next step is to remove the undesirable factor $\nu$ sitting under $F^{(n)}(t)$ on the LHS of \eqref{eq:FullBz:Steady:Thm:Eq:1}. To do this, we conduct a posterior estimate on $F^{(n)}$. We note that $F^{(n)}$ can be defined equivalently by the the differential equation
\begin{equation*}
\begin{dcases}
    \BK{ 
    		\frac{\partial~}{\partial t} 
    		+ \sum_{i=1}^d \zeta_i \frac{\partial~}{\partial x_i} + \frac{\nu}{\kappa} 
    	} 
    F^{(n)}
=
    \frac{1}{\kappa} KF^{(n)}
    \\
\quad
    +
    \frac{1}{\kappa}   
    \frac{ 
    		Q\BK{ \sqrt M \Phi^{(n)} + \sqrt M \Phi^{(n-1)} + 2(S-M), 
    		\sqrt M \Phi^{(n)}-\sqrt M \Phi^{(n-1)} } }{ \sqrt M 
    	},
    \\
    \text{ Maxwell-type boundary condition } \eqref{eq::Diff:Ref:BC:expand},
    \\
     F^{(n)}(\bfx,\bfzeta,0) = 0.
\end{dcases}
\end{equation*}
Hence $F^{(n)}$ can be represented by the characteristic method as:
\begin{equation}\label{eq:FullBz:Steady:Thm:Eq:2}
\begin{split}
&
     F^{(n)}(t)
    \\
=&
	\bbbone_{\curBK{t_1 > t}} \times    
    \frac{1}{\kappa}
    \int_{0}^{t} e^{ -\frac{\nu}{\kappa} s }
     K F^{(n)}\BK{\bfx_{(i)}-\bfxi^i s,\bfzeta^i,t-s } ds
    \\
&
    +
    \frac{1}{\kappa}
    \int_{0}^{t}  e^{ -\frac{\nu}{\kappa} s }
    \\
&
    \times
    \frac{
             Q\BK{ \sqrt M \Phi^{(n)} + \sqrt M \Phi^{(n-1)} + 2(S-M), \sqrt M \Phi^{(n)}-\sqrt M \Phi^{(n-1)} }
        }{
            \sqrt M
        }\BK{\bfx_{(i)}-\bfxi^i s,\bfzeta^i,t-s}
     ds\\
+&
	\bbbone_{\curBK{t_1 < t}} \times
	\sum\limits_{i=0}^{m-1} (1-\alpha)^i \Bigg(
     e^{ -\frac{\nu}{\kappa}(t_1+it_2) }\alpha j(\bfx_{(i+1)},t-t_1-it_2)
    \BK{ \frac{2\pi}{RT(\bfx_{(i+1)})} }^\frac12
     M_{T(\bfx_{(i+1)})}
    \\
&    
    \frac{1}{\kappa}
    \int_{t_1+\ldots+t_i}^{t_1+\ldots+t_{i+1}} e^{ -\frac{\nu}{\kappa} s }
     K F^{(n)}\BK{\bfx_{(i)}-\bfxi^i s,\bfzeta^i,t-s } ds
    \\
&
    +
    \frac{1}{\kappa}
    \int_{t_1+\ldots+t_i}^{t_1+\ldots+t_{i+1}}  e^{ -\frac{\nu}{\kappa} s }
    \\
&
    \times
    \frac{
             Q\BK{ \sqrt M \Phi^{(n)} + \sqrt M \Phi^{(n-1)} + 2(S-M), \sqrt M \Phi^{(n)}-\sqrt M \Phi^{(n-1)} }
        }{
            \sqrt M
        }\BK{\bfx_{(i)}-\bfxi^i s,\bfzeta^i,t-s}
     ds
    \Bigg).
\end{split}
\end{equation}
From \eqref{eq:FullBz:Steady:Thm:Eq:1},
\begin{equation}\label{eq:FullBz:Steady:Thm:Eq:3}
\begin{split}
    &\sum\limits_{i=0}^{m-1} (1-\alpha)^i 
     e^{ -\frac{\nu}{\kappa}(t_1+it_2) }\alpha j(\bfx_{(i+1)},t-t_1-it_2)
    \BK{ \frac{2\pi}{RT(\bfx_{(i+1)})} }^\frac12
     M_{T(\bfx_{(i+1)})}\\
=&
	\alpha\sum\limits_{i=0}^{m-1} (1-\alpha)^i[O(1)(1-\Tm)]^{n+2}
	=[O(1)(1-\Tm)]^{n+2}
	;
	\end{split}
\end{equation}
and, from \eqref{eq:FullBz:Steady:Thm:Eq:1} and \eqref{eq:LB:K:Norm:Esti},
\begin{equation}\label{eq:FullBz:Steady:Thm:Eq:4}
\begin{split}
&
  \sum\limits_{i=0}^{m-1} (1-\alpha)^i \frac{1}{\kappa}
    \int_{t_1+\ldots+t_i}^{t_1+\ldots+t_{i+1}} e^{ -\frac{\nu}{\kappa} s }
     K F^{(n)}\BK{\bfx_{(i)}-\bfxi^i s,\bfzeta^i,t-s } ds
    \\
=&
	 [O(1)(1-\Tm)]^{n+2}
	\sum\limits_{i=0}^{m-1}  \frac{1}{\kappa}
    \int_{t_1+\ldots+t_i}^{t_1+\ldots+t_{i+1}} e^{ -\frac{\nu}{\kappa} s } ds
	\\
=&
	 [O(1)(1-\Tm)]^{n+2}
	 \frac{1}{\kappa}\int_0^t e^{ -\frac{\nu}{\kappa} s } ds
	 \\
=&
	 [O(1)(1-\Tm)]^{n+2}
\end{split}
\end{equation}
From \eqref{eq:FullBz:Linfty:Ineq:Q}, \eqref{eq:FullBz:S:minus:M:Estimate}, and the induction hypothesis,
\begin{equation}\label{eq:FullBz:Steady:Thm:Eq:5}
\begin{split}
&
	\Bigg | \sum\limits_{i=0}^{m-1} (1-\alpha)^i \frac{1}{\kappa}
    \int_{t_1+\ldots+t_i}^{t_1+\ldots+t_{i+1}} e^{ -\frac{\nu}{\kappa} s }
    \\
&
    \times
    \frac{
             Q\BK{ \sqrt M \Phi^{(n)} + \sqrt M \Phi^{(n-1)} + 2(S-M), \sqrt M \Phi^{(n)}-\sqrt M \Phi^{(n-1)} }
        }{
            \sqrt M
        }
     ds \Bigg |
	\\
\leq&
    \absBK{ \int_0^t  \frac{\nu}{\kappa} e^{ -\frac{\nu}{\kappa} s } ds }
    \\
&
    \times
    \VertBK{
    		\frac{
             		Q\BK{ 
             				\sqrt M \Phi^{(n)} + \sqrt M \Phi^{(n-1)} + 2(S-M), 
             				\sqrt M \Phi^{(n)}-\sqrt M \Phi^{(n-1)} 
             			}
        		}{
            		\nu \sqrt M
        		}
     		}_\infty
     \\
=&
	  O(1) 
	 \BK{ \VertBK{ \Phi^{(n)} + \Phi^{(n-1)} } + O(1)(1-\Tm) }
	 \VertBK{ \Phi^{(n)} - \Phi^{(n-1)} }_\infty
     \\
=&
	  [O(1)(1-\Tm)][O(1)(1-\Tm)]^{n+1}.
\end{split}
\end{equation}
Plugging \eqref{eq:FullBz:Steady:Thm:Eq:3}, \eqref{eq:FullBz:Steady:Thm:Eq:4}, and \eqref{eq:FullBz:Steady:Thm:Eq:5} back to \eqref{eq:FullBz:Steady:Thm:Eq:2}, we obtain
\begin{equation*}
	F^{(n)}(t) = [O(1)(1-\Tm)]^{n+2}.
\end{equation*}
Consequently, $\Vert\Phi^{(n+1)}-\Phi^{(n)}\Vert_\infty = [O(1)(1-\Tm)]^{n+2}$. This concludes the claim and therefore this theorem.
\end{proof}

Now we have already obtained the steady state solution $F_\infty \equiv S+\sqrt M\Phi$ for full Boltzmann equation \eqref{eq:FullBz:Eq}.
To establish the nonlinear stability for the initial-boundary value problem \eqref{eq:FullBz:Eq:psi}, Theorem \ref{thm:FullBz:Main:2}, one can follow the proof of Theorem 14 in \cite{Kuo-Liu-Tsai-2}. The idea is basically the same and we omit the details here.

\end{subsection}
\end{section}


\begin{section}*{Acknowledgements}
Part of this work was written during the stay at Department of Mathematics, Stanford University. The author would like to thank Professor Tai-Ping Liu for his kind hospitality. This work was supported by NSC grant 102-2115-M-006-018-MY2.
\end{section}



\end{document}